\newtheorem*{proposition*}{Proposition}
\newtheorem*{theorem*}{Theorem}
\newtheorem*{conjecture*}{Conjecture}
\newtheorem*{claim*}{Claim}
\newtheorem*{lemma*}{Lemma}
\newtheorem*{corollary*}{Corollary}
\newtheorem{theorem}{Theorem}[section]
\newtheorem{proposition}[theorem]{Proposition}
\newtheorem{lemma}[theorem]{Lemma}
\newtheorem{corollary}[theorem]{Corollary}
\newtheorem*{definition*}{Definition}
\newtheorem{definition}{Definition}[section]
\newtheorem*{assumption*}{Assumption}
\newtheorem*{remark*}{Remark}
\newtheorem{remark}{Remark}[section]
\newcommand{\R}{\mathbb{R}}
\newcommand{\s}{\mathbb{S}}
\newcommand{\ep}{\epsilon}
\newcommand{\ls}{\lesssim}
\newcommand{\f}{\frac}
\newcommand{\rd}{\partial}
\newcommand{\mfe}{\mathfrak e}
\newcommand{\mfm}{\mathfrak m}
\newcommand{\alp}{\alpha}
\newcommand{\bt}{\beta}
\newcommand{\nab}{\nabla}
\newcommand{\Omg}{\Omega}
\newcommand{\Om}{\Omega}
\newcommand{\CH}{\mathcal C\mathcal H^+}
\newcommand{\de}{\delta}
\numberwithin{equation}{section}
\numberwithin{theorem}{section}
\begin{document}

\title{The interior of dynamical extremal black holes in \\spherical symmetry}

\author[1]{Dejan Gajic\thanks{dejan.gajic@imperial.ac.uk}}
\author[2]{Jonathan Luk\thanks{jluk@stanford.edu}}
\affil[1]{\small  Department of Mathematics, South Kensington campus, Imperial College London, London SW7 2AZ, UK \vskip.1pc \ }
\affil[2]{\small  Department of Mathematics, Stanford University, Stanford CA 94305-2125, USA\vskip.1pc \ }

\maketitle

\begin{abstract}
We study the nonlinear stability of the Cauchy horizon in the interior of \textbf{extremal} Reissner--Nordstr\"om black holes under \textbf{spherical symmetry}. We consider the Einstein--Maxwell--Klein--Gordon system such that the charge of the scalar field is appropriately small in terms of the mass of the background extremal Reissner--Nordstr\"om black hole. Given spherically symmetric characteristic initial data which approach the event horizon of extremal Reissner--Nordstr\"om sufficiently fast, we prove that the solution extends beyond the Cauchy horizon in $C^{0,\f 12}\cap W^{1,2}_{loc}$, in contrast to the subextremal case (where generically the solution is $C^0\setminus (C^{0,\f 12}\cap W^{1,2}_{loc})$). In particular, there exist non-unique spherically symmetric extensions which are moreover solutions to the Einstein--Maxwell--Klein--Gordon system. Finally, in the case that the scalar field is chargeless and massless, we additionally show that the extension can be chosen so that the scalar field remains Lipschitz.
\end{abstract}

\section{Introduction}

In this paper, we initiate the study of the interior of dynamical extremal black holes. The Penrose diagram corresponding to maximal analytic extremal Reissner--Nordstr\"om and Kerr spacetimes is depicted in Figure~\ref{fig:fullspacetime}. In particular,  if one restricts to a globally hyperbolic subset with an (incomplete) asymptotically flat Cauchy hypersurface (cf.~the region $D^+(\Sigma)$ in Figure~\ref{fig:fullspacetime}), then these spacetimes possess smooth \emph{Cauchy horizons} $\CH$, whose stability property is the main object of study of this paper.

\begin{figure}
\centering{
\includegraphics[width=2in]{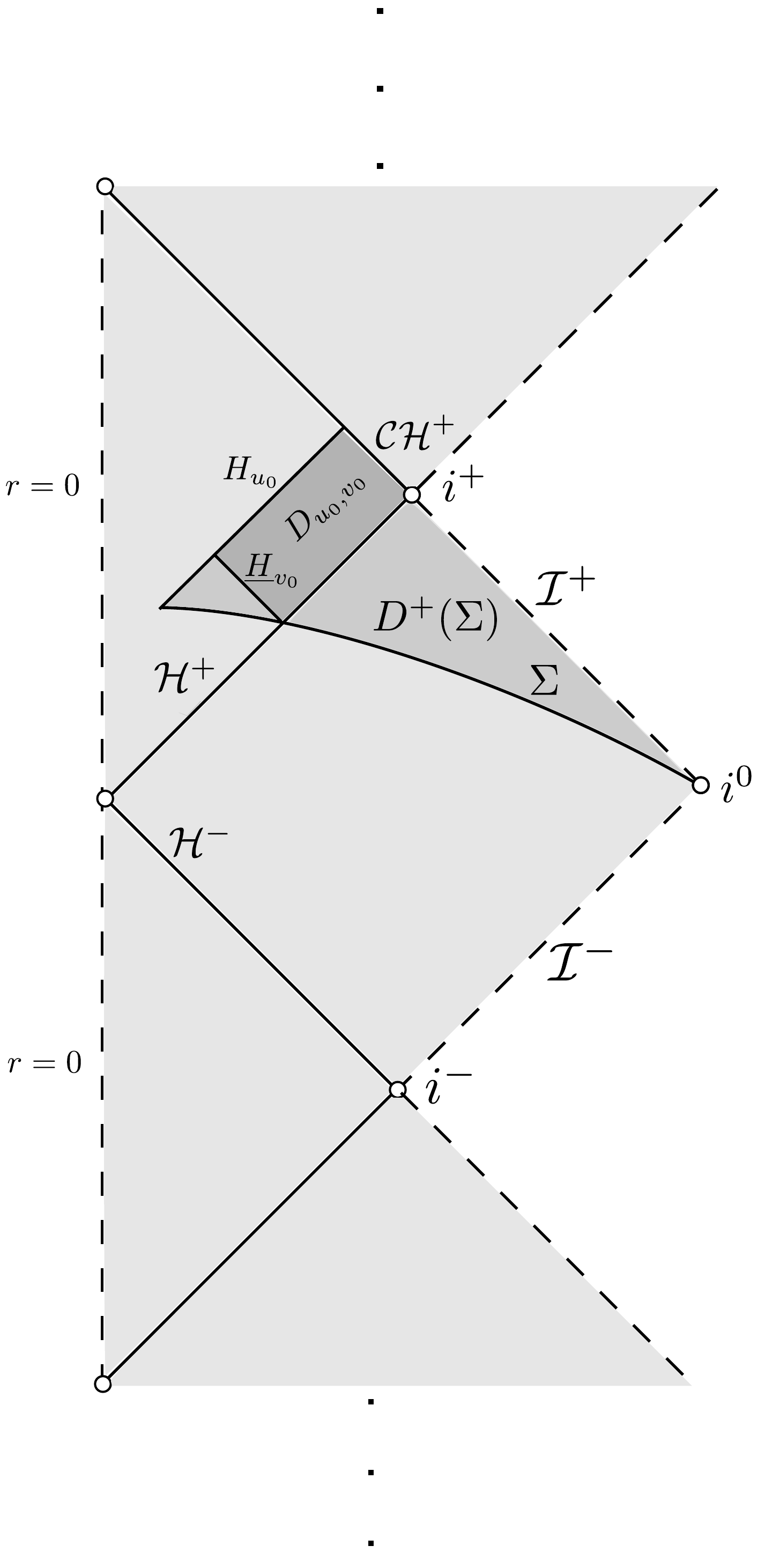}}
\caption{\label{fig:fullspacetime} Maximal analytically extended extremal Reissner--Nordstr\"om.}
\end{figure}

Since the pioneering work of Poisson--Israel \cite{PI1} and the seminal work of Dafermos \cite{D1, D2} in the spherically symmetric setting, we now have a rather complete understanding of the interior of dynamical black holes which approach \emph{subextremal} limits along the event horizon, at least regarding the \emph{stability} of the Cauchy horizons. The works \cite{CGNS2,D1,D2,D3,Fra,Hintz,Kommemi,LWeaknull} culminated in the recent work of Dafermos--Luk \cite{DL}, which proves the $C^0$ stability of the Kerr Cauchy horizon without any symmetry assumptions, i.e.~they show that whenever the exterior region of a black hole approaches a subextremal, strictly rotating Kerr exterior, then maximal Cauchy evolution can be extended across a non-trivial piece of Cauchy horizon as a Lorentzian manifold with continuous metric. Moreover, it is expected that for a generic subclass of initial data, the Cauchy horizon is an \emph{essential weak null singularity}, so that there is no extension beyond the Cauchy horizon as a weak solution to the Einstein equations (see \cite{D2,Gleeson,LO.instab,LO.interior,LS,VDM} for recent progress and discussions). 

On the other hand, much less is known about dynamical black holes which become extremal along the event horizon. Mathematically, the only partial progress was made for a related linear problem, namely the study of the linear scalar wave equation on extremal black hole backgrounds. For the linear scalar wave equation, the first author established \cite{Gajic, Gajic2} that in the extremal case, the Cauchy horizon is \emph{more stable} than its subextremal counterpart. In particular, the solutions to linear wave equations are not only bounded, as in the subextremal case, but they in fact obey higher regularity bounds which \emph{fail} in the subextremal case (see Section~\ref{sec:previous} for a more detailed discussion). Extrapolating from the linear result, it may be conjectured that in the interior of a black hole which approaches an extremal black hole along the event horizon, not only does the solution remain continuous up to the Cauchy horizon as in the subextremal case, but in fact there are non-unique extensions beyond the Cauchy horizon \emph{as weak solutions}. This picture, if true, would also be consistent with the numerical study of this problem by Murata--Reall--Tanahashi \cite{MRT}.

In this paper, we prove that this picture holds in a simple nonlinear setting. More precisely, we study the Einstein--Maxwell--Klein--Gordon system of equations with spherically symmetric initial data (see Section~\ref{sec:geometry} for further discussions on the system). We solve for a quintuple $(\mathcal M, g, \phi,A,F)$, where $(\mathcal M,g)$ is a Lorentzian metric, $\phi$ is a complex valued function on $\mathcal M$, $A$ and $F$ are real $1$- and $2$-forms on $\mathcal M$ respectively. The system of equations is as follows:
\begin{equation}\label{EMCSFS}
\left\{
\begin{aligned}
&Ric_{\mu\nu}-\f12 g_{\mu\nu} R=8\pi(\mathbb T^{(sf)}_{\mu\nu}+\mathbb  T^{(em)}_{\mu\nu}),\\
&\mathbb  T^{(sf)}_{\mu\nu}=\f 12 D_\mu\phi \overline{D_\nu\phi} + \f 12\overline{D_\mu\phi} D_\nu\phi-\f 12 g_{\mu\nu} ((g^{-1})^{\alp\beta}D_\alp\phi\overline{D_{\beta}\phi}+\mfm^2|\phi|^2),\\
&\mathbb  T^{(em)}_{\mu\nu}=(g^{-1})^{\alp\bt}F_{\mu\alp}F_{\nu\bt}-\f 14 g_{\mu\nu}(g^{-1})^{\alp\bt}(g^{-1})^{\gamma\sigma}F_{\alp\gamma}F_{\bt\sigma},\\
&(g^{-1})^{\alp\bt}D_\alp D_\bt\phi=\mfm^2\phi,\\
&F=dA,\\
&(g^{-1})^{\alpha\mu}\nab_\alpha F_{\mu\nu}=2\pi i \mfe (\phi \overline{D_\nu\phi} - \overline{\phi} D_\nu \phi).
\end{aligned}
\right.
\end{equation}
Here, $\nab$ denotes the Levi--Civita connection associated to the metric $g$, and $Ric$ and $R$ denote the Ricci tensor and the Ricci scalar, respectively. We also use the notation $D_{\alp} = \nab_\alp + i\mfe A_\alp$, and $\mfm\geq 0$, $\mfe\in \mathbb R$ are fixed constants. The extremal Reissner--Nordstr\"om solution (cf.~Section~\ref{sec:geometry}) is a special solution to \eqref{EMCSFS} with a vanishing scalar field $\phi$.

In the following we will restrict the parameters so that $|\mfe|$ is sufficiently small in terms of $M$. More precisely, we assume
\begin{equation}\label{par.con}
1-\left(10+5\sqrt{6}-3\sqrt{9+4\sqrt{6}}\right)|\mathfrak{e}|M> 0.
\end{equation}

Under the assumption \eqref{par.con}, our main result can be stated informally as follows (we refer the reader to Theorem~\ref{thm:main} for a precise statement):
\begin{theorem}\label{thm:main.intro}
Consider the characteristic initial value problem to \eqref{EMCSFS} with spherically symmetric smooth characteristic initial data on two null hypersurfaces transversely intersecting at a $2$-sphere. Assume that one of the null hypersurfaces is affine complete and that the data approach the event horizon of extremal Reissner--Nordstr\"om at a sufficiently fast rate. 

Then, the solution to \eqref{EMCSFS} arising from such data, when restricted to a sufficiently small neighborhood of timelike infinity (i.e.~a neighborhood of $i^+$ in Figure~\ref{fig:fullspacetime}), satisfies the following properties:
\begin{itemize}
\item It possesses a non-trivial Cauchy horizon.
\item The scalar field, the metric, the electromagnetic potential (in an appropriate gauge) and the charge can be extended in (spacetime) $C^{0,\f 12}\cap W^{1,2}_{loc}$ up to the Cauchy horizon. Moreover, the Hawking mass \eqref{Hawking.mass} can be extended continuously up to the Cauchy horizon.
\item The metric converges to that of extremal Reissner--Nordstr\"om towards timelike infinity and the scalar field approaches $0$ towards timelike infinity in an appropriate sense.
\end{itemize}
Moreover, the maximal globally hyperbolic solution is \textbf{future extendible} (non-uniquely) as a spherically symmetric solution to \eqref{EMCSFS}.
\end{theorem}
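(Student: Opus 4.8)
Here is the plan I would follow.

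The plan is to set up a double-null foliation adapted to extremal Reissner--Nordstr\"om near $i^+$ and run a bootstrap argument. First I would fix a gauge: write $g = -\Omega^2 \, du\, dv + r^2 \sg$ in the interior, with $\Omega, r, \phi$ and the charge $Q$ (defined through the Maxwell equation) as the unknowns, and specify the characteristic data on an outgoing cone $\{v = v_0\}$ (affine complete, approaching the event horizon) and an ingoing cone $\{u = u_0\}$. The key structural difference from the subextremal case is the behavior of $\partial_v r$ and the redshift/blueshift: on the extremal event horizon the surface gravity vanishes, so the relevant exponential weights degenerate, and the decay one must assume (and propagate) is polynomial rather than exponential. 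I would introduce weighted norms measuring $|\phi|, |\partial_u\phi|, |\partial_v\phi|, |Q - e|, |\Omega^2 - \Omega^2_{ERN}|, |r - r_{ERN}|$ against powers of $|u|$ (with $v$ bounded) capturing the assumed rate of approach to extremal Reissner--Nordstr\"om along the event horizon.

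The main body of the argument proceeds through the following steps. (i) Local existence and propagation of regularity for the characteristic IVP for \eqref{EMCSFS} in the double-null gauge — standard, so I would cite it. (ii) The bootstrap: assume the weighted bounds on a region $[u_0, U] \times [v_0, V]$, then use the Raychaudhuri equations (the $\partial_u$ and $\partial_v$ equations for $\partial_u r$ and $\partial_v r$), the wave equation for $r$, the equation for $\partial_u\partial_v \log\Omega$, the Klein--Gordon equation for $\phi$, and the Maxwell equation $\partial_v Q = -2\pi i \mfe r^2 (\cdots)$ (and its $\partial_u$ counterpart), integrating each transport equation from the initial cones to improve the constants, provided $|\mfe| M$ is small — this is exactly where \eqref{par.con} enters, to close the competition between the charge-driven growth and the decay. (iii) Conclude that the solution exists on a full "rectangle" up to $v = v_{\CH}$, that $r$ is bounded below away from $0$ there, and that $\Omega^2, r, \phi, Q$ extend in $C^{0}$ to $\{v = v_{\CH}\}$. (iv) Upgrade to $C^{0,\f12} \cap W^{1,2}_{loc}$: from the wave equation and the weighted bounds, $\int |\partial_v\phi|^2 \, dv < \infty$ and $\int |\partial_u\phi|^2\, du<\infty$ on compact sets, hence $\phi \in W^{1,2}_{loc}$ and (by Cauchy--Schwarz in the $v$ integration of $\partial_v\phi$) $\phi \in C^{0,\f12}$; similarly for the metric coefficients via their transport equations sourced by quadratic matter terms. (v) For the extendibility statement: having continuous $(\Omega^2, r, \phi, Q)$ on the closed rectangle including $\{v = v_{\CH}\}$, prescribe arbitrary (smooth, spherically symmetric, suitably compatible) characteristic data on an ingoing cone emanating from a sphere on $\{v = v_{\CH}\}$ and on the continuation of $\{v = v_{\CH}\}$ beyond it — matching the traces of the solution — and solve \eqref{EMCSFS} forward; since the data can be chosen in more than one way (e.g.\ different $\phi$ off the horizon) one gets non-unique $C^{0,\f12}\cap W^{1,2}_{loc}$ extensions, each a genuine (weak, and away from the gluing cone classical) solution of the system. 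The fact that $r$ is bounded away from $0$ at $\CH$ is what makes this gluing nondegenerate.

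The hard part, and the place \eqref{par.con} is forced, is step (ii): closing the bootstrap for $\partial_v\phi$ and $Q$. In the chargeless massless case the relevant quantity $r\partial_v\phi$ (or the Maxwell field) satisfies a transport equation whose inhomogeneity is controlled purely by the already-established pointwise bounds, and one gets a clean estimate; but with charge $\mfe \neq 0$ the Klein--Gordon and Maxwell equations are coupled, and the $\mfe A_v \phi$ terms feed back with a coefficient that can, a priori, overwhelm the polynomial decay coming from the extremal horizon unless $|\mfe| M$ is small enough — the precise numerology of \eqref{par.con} comes from optimizing the exponents in this coupled transport system (essentially requiring the "effective" growth rate to stay below the decay rate dictated by the extremal geometry). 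A secondary technical difficulty is making the weights uniform up to $\{v = v_{\CH}\}$: one must check that no weight degenerates there, i.e.\ that the ingoing direction remains "good" all the way to the Cauchy horizon, which follows from the lower bound on $r$ and the smallness of the matter fields but requires care in ordering the estimates so that $r$-bounds are available before they are used in the $\Omega$ and $\phi$ estimates.

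One final remark on the non-uniqueness: since the metric is only $C^{0,\f12}$, "solution" beyond $\CH$ should be understood in the weak (distributional) sense for the Einstein equations, which is meaningful precisely because $g \in C^0$ and $\Gamma, \partial\phi \in L^2_{loc}$; the constructed extensions are classical solutions in the interior of the extended region and satisfy the equations weakly across the gluing null cone, and distinct admissible choices of data there yield isometrically distinct developments, establishing future extendibility is non-unique.
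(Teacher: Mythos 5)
Your overall architecture (double-null gauge, bootstrap on weighted norms, low-regularity characteristic IVP to build the extension) matches the paper, but the analytic core of your step (ii) --- closing the bootstrap for $\phi$, $Q$ and $\Omega$ by ``integrating each transport equation from the initial cones'' --- is precisely the na\"ive approach that the paper identifies as \emph{failing} in the extremal interior. With the natural weights $|u|^2$, $v^2$ and $\Omega^2\sim (v+|u|)^{-2}$, integrating the transport equations for $r\partial_v\phi$ or for $\partial_u\partial_v\log\Omega$ produces bulk terms of the schematic size $\int v\,\Omega^2\,du \sim \int v(v+|u'|)^{-2}\,du'$, which is logarithmically divergent; no smallness of $|\mathfrak{e}|M$ rescues this, since the divergence already appears for the \emph{decoupled} Klein--Gordon equation with positive mass, which is effectively what $\log(\Omega/\Omega_0)$ satisfies (cf.\ \eqref{Om.Klein.Gordon}). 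The paper's resolution is an $L^2$ energy estimate with the multiplier $u^2\partial_u+v^2\partial_v$, exploiting the cancellation $|\partial_v(v^2\Omega_0^2)+\partial_u(u^2\Omega_0^2)|\lesssim (v+|u|)^{-2+\beta}$ (each term separately is only of order $(v+|u|)^{-1}$ and reproduces the log divergence), combined with a \emph{renormalised} energy in which the infinite contribution of the background charge is subtracted ($Q^2-M^2$ in place of $Q^2$). The restriction \eqref{par.con} does not arise from ``optimizing exponents in a coupled transport system'' but from the explicit constants in a Hardy-type inequality needed to show that this renormalised energy --- which is not manifestly non-negative --- is coercive and controls $\int u^2|D_u\phi|^2\,du+\int v^2|D_v\phi|^2\,dv$. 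Only in the case $\mathfrak{m}=\mathfrak{e}=0$ can the scalar field be handled by characteristics (that is how the Lipschitz improvement of Theorem~\ref{thm:Lipschitz} is obtained), and even then $\Omega$ still requires the energy method.

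Two secondary points. The affine-complete hypersurface carrying the decay hypothesis is the outgoing one $\{U=0\}$ (the event horizon, $v\in[v_0,\infty)$, with $\int v^{2+\alpha}|D_v\phi|^2\,dv<\infty$), while $\{v=v_0\}$ is the finite ingoing cone; your description inverts the roles of the two cones and measures the approach to extremal Reissner--Nordstr\"om in the wrong variable. Also, your extension in step (v) is only claimed as a weak solution across the gluing cone, whereas the paper constructs extensions that are \emph{strong} solutions in a class (spherically symmetric, with $\partial_V$-derivatives in $L^2$) for which local well-posedness holds (Definition~\ref{def:strong.solutions} and Proposition~\ref{prop:gen.wave}); this stronger notion is what makes the non-uniqueness assertion a statement about solutions of \eqref{EMCSFS} rather than merely about distributional extensions, and it also explains why the $W^{1,2}$ (rather than $W^{1,1}$) bound on $\partial_V\phi$ is essential: without it $\partial_V r$ blows up and Raychaudhuri cannot be continued (Remark~\ref{rmk:L1sol}).
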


\begin{remark}[Solutions with regularity below $C^{2}$]\label{rmk:low.regularity}
The extensions of the solution we construct has regularity below (spacetime) $C^{2}$ and as such do not make sense as classical solutions. As is well-known, however, the Einstein equations admit a \textbf{weak formulation} which makes sense already if the metric is in (spacetime) $C^0 \cap W^{1,2}_{loc}$ and the stress-energy-momentum tensor is in spacetime $L^1_{loc}$ \cite{GerochTraschen}. The weak formulation can be recast geometrically as follows: given a smooth $(3+1)$-dimensional manifold $\mathcal M$, a $C^0_{loc} \cap W^{1,2}_{loc}$ Lorentzian metric $g$ and an $L^1_{loc}$ symmetric $2$-tensor $T$, we say that the Einstein equations $Ric(g) - \f 12 g R(g) = 8\pi T$ is satisfied weakly if for all smooth and compactly supported vector fields $X,Y$, 
$$\int_{\mathcal M} ((\nab_\mu X)^\mu (\nab_\nu Y)^\nu - (\nab_\mu X)^\nu (\nab_\nu Y)^\mu) = 8\pi \int_{\mathcal M} (T(X,Y) - \f12 g(X,Y) \mbox{tr}_g T).$$
It is easy to check that any classical solution is indeed a weak solution in the sense above. Moreover, the extensions that we construct in Theorem~\ref{thm:main.intro} have more than sufficient regularity to be interpreted in the sense above. 

However, in our setting we do not need to use the notion in \cite{GerochTraschen}. Instead, we introduce a stronger notion of solutions, defined on a quotient manifold for which we quotiented out the spherical symmetry; see Definition~\ref{def:strong.solutions} in Section~\ref{sec:extension}. This class of solutions --- even though they are not classical solutions --- should be interpreted as \textbf{strong solutions} (instead of just weak solutions) since a well-posedness theory can be developed for them\footnote{\label{fn:LR}In fact, in order to develop a well-posedness theory for strong solutions, one can even drop the assumption of spherical symmetry, but instead require additional regularity along the ``spherical directions'' with respect to an appropriately defined double null foliation gauge; see \cite{LR2} for details.}; see Section~\ref{sec:extension}.
\end{remark}
 
\begin{remark}[Contrast with the subextremal case]\label{rmk:compare}
Like in the subextremal case, the solution extends in $C^0$ to the Cauchy horizon. However, the $C^{0,\f 12}\cap W^{1,2}_{loc}$ extendibility, the finiteness of the Hawking mass, as well as the extendibility as spherically symmetric solution stand in \textbf{contrast} to the subextremal case. In particular, according to the results of \cite{LO.interior, LO.massinflation} (see also \cite{D2}), there are solutions which asymptote to subextremal Reissner--Nordstr\"om black holes in the exterior region such that the Hawking mass blows up at the Cauchy horizon, and the solution cannot be extended as a spherically symmetric solution to the Einstein--Maxwell--scalar field system\footnote{Though the estimates in \cite{LO.interior} strongly suggest that the scalar field ceases to be in $W^{1,2}_{loc}$ for any $C^0$ extension of the spacetime, this remains an open problem unless spherical symmetry is imposed. In particular, it is not known whether the solutions constructed in \cite{LO.interior, LO.massinflation} can be extended as weak solutions to the Einstein--Maxwell--scalar field system if no spherical symmetry assumption is imposed.}.
\end{remark}

\begin{remark}[Regularity of the metric and extensions as solutions to \eqref{EMCSFS}] The fact that we can extend the solutions beyond the Cauchy horizon is intimately connected to the regularity of the solutions up to the Cauchy horizon. In particular this relies on the fact the metric, the scalar field and the electromagnetic potential remain in (spacetime) $C^0\cap W^{1,2}_{loc}$. In fact, the solutions are at a level of regularity for which the Einstein equations are still locally \underline{well-posed}\footnote{Note that in general the Einstein equations are not locally well-posed with initial data only in $C^0\cap W^{1,2}$. Nevertheless, when there is spherical symmetry (away from the axis of symmetry), or at least when there is additional regularity in the spherical directions (cf.~Footnote~\ref{fn:LR}), one can indeed develop a local well-posedness theory with such low regularity.}. One can therefore construct extensions beyond the Cauchy horizon by solving appropriate characteristic initial value problems; see Section~\ref{sec:extension}.

In this connection, note that we emphasised in the statement of the theorem that the solution can be extended beyond the Cauchy horizon as a spherically symmetric solution to \eqref{EMCSFS}. The emphasis on the spherical symmetry of the extension is made mostly to contrast with the situation in the subextremal case (cf.~Remark~\ref{rmk:compare}). This should not be understood as implying that the extensions necessarily are spherically symmetric: In fact, with the bounds that we establish in this paper, one can in principle construct using the techniques in \cite{LR2} extensions (still as solutions to \eqref{EMCSFS}) without any symmetry assumptions (cf.~Footnote~\ref{fn:LR}).
\end{remark}

\begin{remark}[Assumptions on the event horizon]
The assumptions we impose on the event horizon are consistent with the expected late-time behavior of the solutions in the exterior region of the black hole, at least in the $\mfe=\mfm=0$ case if one extrapolates from numerical results \cite{MRT}. In particular, the transversal derivative of the scalar field is not required to decay along the event horizon, and is therefore consistent with the Aretakis instability \cite{Aretakis2}. Of course, in order to completely understand the structure of the interior, one needs to prove that the decay estimates along the event horizon indeed hold for general dynamical solutions approaching these extremal black holes. This remains an open problem.
\end{remark}

\begin{remark}[Range of parameters of $\mfe$ and $M$]
Our result only covers a limited range of parameters of the model; see \eqref{par.con}. This restriction comes from a Hardy-type estimate used to control the renormalised energy (cf.~Sections~\ref{sec:ideas} and \ref{sec:coercive.phi}) and we have not made an attempt to obtain the sharp range of parameters.
\end{remark}

\begin{remark}[The $\mfm=\mfe=0$ case and and higher regularity for $\phi$]
In the special case $\mfm=\mfe=0$, the analysis is simpler and we obtain a stronger result; namely, we show that the scalar field in fact is Lipschitz up to the Cauchy horizon (cf.~Theorem~\ref{thm:Lipschitz}). 
\end{remark}

\begin{remark}[The $\mfm\neq 0$, $\mfe\neq 0$ case]\label{rmk:cases}
While the result we obtain in the $\mfm\neq 0$, $\mfe\neq 0$ case is weaker, the general model allows for the charge of the Maxwell field to be non-constant, and serves as a better model problem for the stability of the extremal Cauchy horizon without symmetry assumptions. Another reason that we do not restrict ourselves to the simpler $\mfm=\mfe=0$ case is that in the $\mfm=\mfe=0$ case, extremal black holes do not naturally arise dynamically:
\begin{itemize}
\item There are no one-ended black holes with non-trivial Maxwell field with regular data on $\mathbb R^3$ since in that setting the Maxwell field necessarily blows up at the axis of symmetry.
\item In the two-ended case, given future-admissible\footnote{The future-admissibility condition can be thought of as an analogue of the physical ``no anti-trapped surface'' assumption in the one-ended case.} (in the sense of \cite{D3}) initial data, the solution always approaches \underline{subextremal} black holes in each connected component of the exterior region \cite{Kommemi, LO.interior}.
\end{itemize}
On the other hand, if $\mfe\neq 0$, then in principle there are no such obstructions\footnote{Nevertheless, it is an open problem to construct a dynamical black hole with regular data that settles down to an extremal black hole.}.
\end{remark}

\begin{remark}[Geometry of the black hole interior]
One feature of the black hole interior of extremal Reissner--Nordstr\"om is that it is free of radial trapped surfaces---a stark contrast to the subextremal case (where every sphere of symmetry is the black hole interior is trapped!). Let us note that this feature has sometimes been taken as the defining feature of spherically symmetric extremal black holes; see for instance \cite{Israel}. We will not use this definition in this paper, and when talking about ``extremal black holes'', we will only be referring to black holes which converge to a stationary extremal black hole along the event horizon as in Theorem~\ref{thm:main.intro}. Indeed, while our estimates imply that for the solutions in Theorem~\ref{thm:main.intro}, the geometry of the black hole interior is close to that of extremal Reissner--Nordstr\"om, it remains an open problem in the general case whether the black hole interior contains any radial trapped surface.\footnote{Note however that in the $\mfe=\mfm=0$ case, if we assume in addition that $\rd_U r<0$ everywhere along the event horizon, we can in principle modify the monotonicity argument of Kommemi \cite{Kommemi} in establishing the subextremality of two-ended black holes (see Remark~\ref{rmk:cases}) to show that the interior in the extremal case is free of radial trapped surfaces. Indeed, the argument of \cite{Kommemi} exactly proceeds by (1) showing that there are no interior trapped surfaces in the interior of extremal black holes and (2) establishing a contradiction with the future-admissibility condition. See also the appendix of \cite{LO.interior}.}
\end{remark}

The fact that the extremal Cauchy horizons are ``more stable'' than their subextremal counterparts can be thought of as related to the vanishing of the surface gravity in the extremal case. Recall that in both the extremal and subextremal charged Reissner--Nordstr\"om spacetime, there is a global infinite blue shift effect such that the frequencies of signals sent from the exterior region into the black hole are shifted infinitely to the blue~\cite{penrose1968battelle}. As a result, this gives rise to an instability mechanism. Indeed, Sbierski~\cite{Sbi.1} showed\footnote{This statement is technically not explicitly proven in \cite{Sbi.1}, but it follows from the result there together with routine functional analytic arguments.} that for the linear scalar wave equation on \emph{both} extremal and subextremal Reissner--Nordstr\"om spacetime, there exist \emph{finite energy} Cauchy data which give rise to solutions that are not $W^{1,2}_{loc}$ at the Cauchy horizon. On the other hand, as emphasised in \cite{Sbi.1}, this type of considerations do not take into account the \emph{strength} of the blue shift effect and do not give information on the behaviour of the solutions arising from more localised data. Heuristically, for localised data, one needs to quantify the amplification of the fields by a ``local'' blue shift effect at the Cauchy horizon, whose strength can be measured by the surface gravity. In this language, what we see in Theorem~\ref{thm:main.intro} is a manifestation of the vanishing of the local blue shift effect at the extremal limit.

This additional stability of the extremal Cauchy horizon due to the vanishing of the surface gravity may at the first sight seem to make the problem simpler than its subextremal counterpart. Ironically, from the point of view of the analysis, the local blue shift effect in the subextremal case in fact provides a way to prove stable estimates! This method fails at the extremal limit.

To further illustrate this, first note that in the presence of the blue shift effect, one necessarily proves \emph{degenerate} estimates. By exploiting the geometric features of the interior of subextremal black holes, the following can be shown: when proving degenerate energy-type estimates, by choosing the weights in the estimates appropriately, one can prove that the bulk spacetime integral terms (up to a small error) have a good sign, and can be used to control the error terms (cf.~discussions in the introduction of \cite{DL}). As a consequence, one can in fact obtain a stability result for a large class of nonlinear wave equations with a null condition, irrespective of the precise structure of the linearized system. This observation is also at the heart of the work \cite{DL}.

In the extremal case, however, it is not known how to obtain a sufficiently strong coercive bulk spacetime integral term when proving energy estimates. Moreover, if one naively attempts to control the spacetime integral error terms by using the boundary flux terms of the energy estimate and Gr\"onwall's lemma, one encounters a logarithmic divergence. To handle the spacetime integral error terms, we need to use more precise structures of the equations, and we will show that there is a cancellation in the weights appearing in the bulk spacetime error terms. This improvement of the weights then allows the bulk spacetime error terms to be estimated using the boundary flux terms and a suitable adaptation\footnote{In fact, using the smallness parameters in the problem, this will be implemented without explicitly resorting to Gr\"onwall's lemma.} of Gr\"onwall's lemma. In particular, we need to use the fact that (1) a renormalised energy can be constructed to control the scalar field and the Maxwell field \emph{simultaneously}, and that (2) the equations for the matter fields and the equations for the geometry are ``sufficiently decoupled'' (cf.~Section~\ref{sec:ideas}). These structures seem to be specific to the spherically symmetric problem: to what extent this is relevant to the general problem of stability of extremal Cauchy horizons without symmetry assumptions remains to be seen.

The study of the stability properties of subextremal Cauchy horizons is often motivated by the \emph{strong cosmic censorship conjecture}. The conjecture states that solutions arising from generic asymptotically flat initial data are \emph{inextendible} as suitably regular Lorentzian manifolds. In particular, the conjecture, if true, would imply that \emph{smooth} Cauchy horizons, which are present in both extremal and subextremal Reissner--Nordstr\"om spacetimes, are not generic in black hole interiors. As we have briefly discussed above, there are various results establishing this in the subextremal case; see for example \cite{D2,LO.interior,LO.massinflation,VDM}. In fact, one expects that generically, if a solution approaches a subextremal black hole at the event horizon, then the spacetime metric does not admit $W^{1,2}_{loc}$ extensions beyond the Cauchy horizon; see discussions in \cite{DL}. On the other hand, our result shows that at least in our setting, this does not occur for extremal Cauchy horizons. Nevertheless, since one expects that generic dynamical black hole solutions are non-extremal, our results, which only concern black holes that become extremal in the limit, are in fact irrelevant to the strong cosmic censorship conjecture. In particular, provided that extremal black holes are indeed non-generic as is expected, the rather strong stability that we prove in this paper does not pose a threat to cosmic censorship.

Finally, even though our result establishes the $C^{0,\f 12}\cap W^{1,2}_{loc}$ stability of extremal Cauchy horizons in spherical symmetry, it still leaves open the possibility of some higher derivatives of the scalar field or the metric blowing up (say, the $C^k$ norm blows up for some $k\in \mathbb N$). Whether this occurs or not for generic data remains an open problem.

\subsection{Previous results on the linear wave equation}\label{sec:previous}

In this section, we review the results established in \cite{Gajic, Gajic2} concerning the behaviour of solutions to the linear wave equation $\Box_g\phi=0$ in the interior of extremal black holes. The results concern the following cases:
\begin{itemize}
\item general solutions on extremal Reissner--Nordstr\"om,
\item general solutions on extremal Kerr--Newman \emph{with sufficiently small specific angular momentum},
\item \emph{axisymmetric} solutions on extremal Kerr.
\end{itemize}
In each of these cases, the following results are proven (in a region sufficiently close to timelike infinity):
\begin{enumerate}[(A)]
\item $\phi$ is bounded and continuously extendible up to the Cauchy horizon.
\item $\phi$ is $C^{0,\alp}$ up to the Cauchy horizon for all $\alp \in (0,1)$.
\item $\phi$ has finite energy and is $W^{1,2}_{loc}$  up to the Cauchy horizon.
\end{enumerate}

As we mentioned earlier, these results are in contrast with the subextremal case. (A) holds also for subextremal Reissner--Nordstr\"om and Kerr \cite{Fra, Hintz}, (B) is \textbf{false}\footnote{This result is not explicitly stated in the literature, but can be easily inferred given the sharp asymptotics for generic solutions in \cite{AAG} and the blowup result in \cite{D2} appropriately adapted to the linear setting.} on subextremal Reissner--Nordstr\"om (\cite{D2, AAG}) and (C) is \textbf{false} on both subextremal Reissner--Nordstr\"om and Kerr \cite{LO.instab, LS} (In fact, in subextremal Reissner--Nordstr\"om, generic solutions fail to be in $W^{1,p}_{loc}$ for all $p>1$; see \cite{D2, AAG, Gleeson}).

At this point, it is not clear whether the estimates in \cite{Gajic, Gajic2} are sharp. In the special case of spherically symmetric solutions on extremal Reissner--Nordstr\"om, \cite{Gajic} proves that the solution is in fact $C^1$ up to the Cauchy horizon. Moreover, if one assumes more precise asymptotics along the event horizon (motivated by numerics), then it is shown that spherically symmetric solutions are $C^2$.

Our results in the present paper can be viewed as an extension of those in \cite{Gajic} to a nonlinear setting. In particular, we show that even in the nonlinear (although only spherically symmetric) setting, $\phi$ still obeys (A) and (C), and satisfies (B) in the subrange $\alp\in (0,\f 12]$. Moreover, the metric components, the electromagnetic potential, and the charge, in appropriate coordinate systems and gauges, verify similar bounds.

\subsection{Ideas of the proof}\label{sec:ideas}

\textbf{Model linear problems.} The starting point of the analysis is to study \emph{linear systems of wave equations} on fixed extremal Reissner--Nordstr\"om background. A simple model of such a system is the following (where $a,b,c,d\in \mathbb R$):
\begin{equation}\label{eq:model}
\Box_{g_{eRN}}\phi = a \phi+ b\psi,\quad \Box_{g_{eRN}}\psi = c\psi+d\phi.
\end{equation}
It turns out that in the extremal setting, we still lack an understanding of solutions to such a model system in general. (This is in contrast to the subextremal case, where the techniques of \cite{Fra, DL} show that solutions to the analogue of \eqref{eq:model} are globally bounded for any fixed $a,b,c,d\in \mathbb R$.) 

Instead, we can only handle some subcases of \eqref{eq:model}. Namely, we need $a\geq 0$, $c\geq 0$ and $b=d=0$. Put differently, this means that we can only treat \emph{decoupled} Klein--Gordon equations with \emph{non-negative mass}. Remarkably, as we will discuss later, although the linearized equations of \eqref{EMCSFS} around extremal Reissner--Nordstr\"om are more complicated than decoupled Klein--Gordon equations with non-negative masses, one can find a structure in the equations so that the ideas used to handle special subcases of \eqref{eq:model} can also apply to the nonlinear problem at hand.

\textbf{Estimates for linear fields using ideas in \cite{Gajic}.} The most simplified case of \eqref{eq:model} the linear wave equation with zero potential $\Box_{g_{eRN}}\phi=0$. In the interior of extremal Reissner--Nordstr\"om spacetime, this has been treated by the first author in \cite{Gajic}. The work \cite{Gajic} is based on the vector field multiplier method, which obtains $L^2$-based energy estimates for the derivatives of $\phi$. The vector field multiplier method can be summarized as follows: Consider the stress-energy-momentum tensor
$$\mathbb T_{\mu\nu} = \rd_\mu\phi\rd_\nu\phi -\f 12 g_{\mu\nu}(g^{-1})^{\alp\bt}\rd_\alp\phi\rd_\bt\phi.$$
For a well-chosen vector field $V$, one can then integrate the following identity for the current $\mathbb T_{\mu\nu}V^\nu$
$$\nab^{\mu} (\mathbb T_{\mu\nu}V^\nu) = \f 12 \mathbb T_{\mu\nu}(\nab^{\mu} V^\nu+\nab^\nu V^\mu)$$
to obtain an identity relating a spacetime integral and a boundary integral.

When $V$ is casual, future-directed and Killing, the above identity yields a coercive conservation law. In the interior of extremal Reissner--Nordstr\"om $\rd_t = \f 12(\rd_v+\rd_u)$ (cf.~definition of $(u,v)$ coordinates in Section~\ref{sec:geometry}) is one such vector field. This vector field, however, is too degenerate near the event horizon and the Cauchy horizon, and one expects the corresponding estimates to be of limited use in a nonlinear setting. A crucial observation in \cite{Gajic} is that the vector field $V=|u|^2\rd_u+v^2\rd_v$ (in Eddington--Finkelstein double null coordinates, cf.~Section~\ref{sec:geometry}) can give a useful, stronger, estimate. More precisely, $V=|u|^2\rd_u+v^2\rd_v$ has the following properties:
\begin{enumerate}
\item $V$ is a non-degenerate vector field at both the event horizon and the Cauchy horizon. 
\item $V$ is causal and future-directed. Hence, together with 1., this shows that the current associated to $V$, when integrated over null hypersurfaces, corresponds to non-degenerate energy.
\item Moreover, although $V$ is not Killing, $\nab^{\mu} V^\nu+\nab^\nu V^\mu$ has a crucial cancellation\footnote{More precisely, 
$$\nab^{v} V^u+\nab^u V^v=-\f 12\Om^{-2}(\rd_u u^2 + \rd_v v^2 + \Om^{-2}(u^2\rd_u+v^2\rd_v)\Om^2)\ls \Om^{-2},$$
whereas each single term, e.g., $\Om^{-2}\rd_v v^2 \sim v\Om^{-2}$, behaves worse as $|u|,v\to \infty$. Without this cancellation, the estimate exhibits a logarithmic divergence.\label{footnote.cancellation}} so that the spacetime error terms can be controlled by the boundary integrals. 
\end{enumerate}
These observations allow us to close the estimate and to obtain non-degenerate $L^2$ control for the derivatives of $\phi$. Furthermore, the boundedness of such energy implies that
\begin{equation}\label{eq:phi.intro.ptwise}
|\phi|(u,v) \ls \mbox{Data}(v) + |u|^{-\f 12},
\end{equation}
where, provided the data term decays in $v$, $\phi \to 0$ as $|u|,v\to \infty$. Moreover, using the embedding $W^{1,2}\hookrightarrow C^{0,\f 12}$ in $1$-D, we also conclude that $\phi \in C^{0,\f 12}$.

In spherical symmetry, it is in fact possible to control the solution to the linear wave equation up to the Cauchy horizon using only the method of characteristics\footnote{In fact, as in shown in \cite{Gajic}, the method of characteristics, when combined with the energy estimates, yields more precise estimates when the initial data are assumed to be spherically symmetric. While \cite{Gajic} does not give a proof of the estimates in the spherically symmetric case purely based on the method characteristics, such a proof can be inferred from the proof of Theorem~\ref{thm:Lipschitz} in Section~\ref{sec:imp.est}.}. Here, however, there is an additional twist to the problem. We will need to control solutions to the \emph{Klein--Gordon} equation with \emph{non-zero mass}\footnote{As we will discuss below, the need to consider the Klein--Gordon equation with non-zero mass stems not only from our desire to include mass in the matter field in \eqref{EMCSFS}, but when attempting to control the metric components, one naturally encounters a Klein--Gordon equation with positive mass.}:
$$\Box_{g_{eRN}}\phi = \mfm^2\phi.$$
For this scalar equation, however, whenever the mass is non-vanishing, using the method of characteristics and na\"ive estimates leads to potential logarithmic divergences. Nonetheless, if $\mfm^2>0$, the argument above which makes use of the vector field multiplier method can still be applied. In this case, one defines instead the stress-energy-momentum tensor as follows:
$$\mathbb T_{\mu\nu} = \rd_\mu\phi\rd_\nu\phi -\f 12 g_{\mu\nu}((g^{-1})^{\alp\bt}\rd_\alp\phi\rd_\bt\phi+\mfm^2\phi^2).$$
As it turns out, the observations 1., 2., and 3. still hold in the $\mathfrak{m}^2>0$ case for the vector field $V=|u|^2\rd_u+v^2\rd_v$. In particular, there is a crucial cancellation in the bulk term as above, which removes the logarithmically non-integrable term and allows one to close the argument. Again, this consequently yields also decay estimates and $C^{0,\f 12}$ bounds for $\phi$.

Let us recap what we have achieved for the model problem \eqref{eq:model}. The discussions above can be used to deal fully with the case $a\geq 0$, $c\geq 0$ and $b=d=0$. If $a<0$ or $c<0$, one still has a cancellation in the bulk spacetime term, but the boundary terms are not non-negative. If, on the other hand, $b\neq 0$ or $d\neq 0$, then in general one sees a bulk term which is exactly borderline and leads to logarithmic divergence.

\textbf{Renormalised energy estimates for the matter fields.} In order to attack our problem at hand, the first step is to understand the propagation of the matter field even without coupling to gravity. In other words, we need to control the solution to the Maxwell--charged Klein--Gordon system in the interior of \emph{fixed} extremal Reissner--Nordstr\"om. (A special case of this, when $\mfm=\mfe=0$ is exactly what has been studied in \cite{Gajic}.) 

One difficulty that arises in controlling the matter fields is that when $\mfe\neq 0$, the energy estimates for the scalar field couple with estimates for the Maxwell field. If one na\"ively estimates each field separately, while treating the coupling as error terms, one encounters logarithmically divergent terms similar to those appearing when controlling \eqref{eq:model} for $b\neq 0$ or $d\neq 0$. Instead, we prove coupled estimates for the scalar field and the Maxwell field simultaneously. 

In order to prove coupled energy estimates, a natural first attempt would be to use the full stress-energy-momentum tensor (i.e.~the sum $\mathbb T=\mathbb T^{(sf)}+\mathbb T^{(em)}$) and consider the current $\mathbb T_{\mu\nu}V^\nu$, where $V=|u|^2\rd_u+v^2\rd_v$ as in \cite{Gajic}. However, since the charge is expected to asymptote to a \emph{non-zero} value (as it does initially along the event horizon), this energy is \emph{infinite}! 

Instead, we \emph{renormalise} the energy to take out the infinite contribution from the background charge. We are then faced with two new issues:
\begin{itemize}
\item the renormalised energy is not manifestly non-negative
\item additional error terms are introduced.
\end{itemize}
Here, it turns out that one can use a Hardy-type inequality to show that the renormalised energy is coercive. (This is the step for which we need a restriction on the parameters of the problem.) Moreover, the additional spacetime error terms that are introduced in the energy estimates also exhibit the cancellation described in Footnote~\ref{footnote.cancellation} on page~\pageref{footnote.cancellation}.

\textbf{Estimates for the metric components.} Having understood the uncoupled Maxwell--Klein--Gordon system, we now discuss the problem where the Maxwell--Klein--Gordon system is coupled with the Einstein equations. First, we write the metric in \emph{double null} coordinates:
$$g = -\Om^2(u,v) \, dudv + r^2(u,v) \sigma_{\mathbb S^2},$$
where $\sigma_{\mathbb S^2}$ is the standard round metric on $\mathbb S^2$ (with radius $1$). In such a gauge, the metric components $r$ and $\Om$ satisfy nonlinear wave equations with $\phi$ and $\rd\phi$ as sources. In addition, $r$ satisfies the Raychaudhuri equations, which can be interpreted as constraint equations.

We control $r$ directly using the method of characteristics. As noted before, using the method of characteristics for wave equations with non-trivial zeroth order terms\footnote{The wave equation for $r$ indeed has such zeroth order terms, cf.~\eqref{eq:transpeqr}.} leads to potentially logarithmically divergent terms. To circumvent this, we use both the wave equation and the Raychaudhuri equations satisfied by $r$: using different equations in different regions of spacetime, one can show that using the method of characteristics that
$$|r-M|\ls v^{-1}+|u|^{-1}, \quad |\rd_v r|\ls v^{-2},\quad |\rd_u r|\ls |u|^{-2}.$$

For $\Omega$, instead of controlling it directly, we bound the difference $\log\Omega - \log \Omega_0$, where $\Omega_0$ corresponds to the metric component of the background extremal Reissner--Nordstr\"om spacetime. We will control it using the wave equation satisfied by $\Omega$ (cf.~\eqref{eq:waveqOmega}). Again, as is already apparent in the discussion of \eqref{eq:model}, to obtain wave equation estimates, we need to use the structure of the equation. Using the estimates for $\phi$ and $r$, the equation for $\log\Omega - \log \Omega_0$ can be thought of as follows (modulo terms that are easier to deal with and are represented by $\dots$):
\begin{equation}\label{Om.Klein.Gordon}
\rd_u \rd_v \log \f{\Om}{\Om_0} = -\f 14 M^2 (e^{\log \Om^2}-e^{\log \Om_0^2})+\dots 
\end{equation}
Thus, when $\Om$ is close to $\Om_0$, \eqref{Om.Klein.Gordon} can be viewed as a nonlinear perturbation of the Klein--Gordon equation with positive mass, which is moreover essentially decoupled from the other equations. Hence, as long as we can control the error terms and justify the approximation \eqref{Om.Klein.Gordon}, we can handle this equation using suitable modifications of the ideas discussed before. In particular, an appropriate modification of \eqref{eq:phi.intro.ptwise} implies that $\Om \to \Om_0$ in a suitable sense as $|u|,v\to \infty$.

Finally, revisiting the argument for the energy estimates for Maxwell--Klein--Gordon, one notes that it can in fact be used to control solutions to the Maxwell--Klein--Gordon system on a \emph{dynamical} background such that $r$ and $\log \Om$ approach their Reissner--Nordstr\"om values with a sufficiently fast polynomial rate as $|u|,v\to \infty$. In particular, the estimates we described above for the metric components is sufficient for us to set up a bootstrap argument to simultaneously control the scalar field and the geometric quantities.

Note that in terms of regularity, we have closed the problem at the level of the (non-degenerate) $L^2$ norm of first derivatives of the metric components and scalar field. As long as $r>0$, this is the level of regularity for which well-posedness holds in spherical symmetry. It follows that we can also construct an extension which is a solution to \eqref{EMCSFS}.

\subsection{Structure of the paper}

The remainder of the paper is structured as follows. In \textbf{Section~\ref{sec:geom}}, we will introduce the geometric setup and discuss \eqref{EMCSFS} in spherical symmetry. In \textbf{Section~\ref{sec:geometry}}, we discuss the geometry of the interior of the extremal Reissner--Nordstr\"om black hole. In \textbf{Section~\ref{sec:data}}, we introduce the assumptions on the characteristic initial data. In \textbf{Section~\ref{sec:main.thm}}, we give the statement of the main theorem (Theorem~\ref{thm:main}, see also Theorem~\ref{thm:Lipschitz}). In \textbf{Section~\ref{sec:bootstrap}}, we begin the proof of Theorem~\ref{thm:main} and set up the bootstrap argument. In \textbf{Section~\ref{sec:pointwise}}, we prove the pointwise estimates. In \textbf{Section~\ref{sec:energy}}, we prove the energy estimates. In \textbf{Section~\ref{sec:stability}}, we close the bootstrap argument and show that the solution extends up to the Cauchy horizon. In \textbf{Section~\ref{sec:extension}}, we complete the proof of Theorem~\ref{thm:main} by constructing a spherically symmetric solution which extends beyond the Cauchy horizon. In \textbf{Section~\ref{sec:imp.est}}, we prove additional estimates in the case $\mfm=\mfe=0$.

\subsection*{Acknowledgements} Part of this work was carried out when both authors were at the University of Cambridge. J. Luk is supported by a Sloan fellowship, a Terman fellowship, and the NSF grant DMS-1709458.

\section{Geometric preliminaries}\label{sec:geom}

\subsection{Class of spacetimes}\label{doublenull}
In this paper, we consider \emph{spherically symmetric spacetimes} $(\mathcal M,g)$ with $\mathcal M = \mathcal Q\times \mathbb S^2$ such that the metric $g$ takes the form
$$g=g_{\mathcal Q}+r^2(d\theta^2+\sin^2\theta\,d\varphi^2),$$
 where $(\mathcal Q,g_{\mathcal Q})$ is a smooth $(1+1)$-dimensional Lorentzian spacetime and $r:\mathcal Q\to \mathbb R_{>0}$ is smooth and can geometrically be interpreted as the area radius of the orbits of spherical symmetry. We assume that $(\mathcal Q,g_{\mathcal Q})$ admits a global double null foliation\footnote{Note that for sufficiently regular $g_{\mathcal Q}$, the metric can always be put into double null coordinates \emph{locally}. Hence the assumption is only relevant for global considerations. We remark that the interior of extremal Reissner--Nordstr\"om spacetimes can be written (globally) in such a system of coordinates (see Section~\ref{sec:geometry}) and so can spacetimes that arise from spherically symmetric perturbations of the interior of extremal Reissner--Nordstr\"om, which we consider in this paper.}, so that we write the metric $g$ in double null coordinates as follows:
\begin{equation*}
g=-\Omega^2(u,v)dudv+r^2(u,v)(d\theta^2+\sin^2\theta\,d\varphi^2),
\end{equation*}
for some smooth and strictly positive function $\Om^2$ on $\mathcal Q$.

\subsection{The Maxwell field and the scalar field}

We will assume that both the Maxwell field $F$ and the scalar field $\phi$ in \eqref{EMCSFS} are spherically symmetric. For $\phi$, this means that $\phi$ is constant on each spherical orbit, and can be thought of as a function on $\mathcal{Q}$.

For the Maxwell field, spherical symmetry means that there exists a function $Q$ on $\mathcal{Q}$, so that the Maxwell field $F$ takes the following form
$$F = \f{Q}{2({\bm\pi}^* r)^2}{\bm\pi}^*(\Om^2 du\wedge dv),$$
where ${\bm \pi}$ denotes the projection map ${\bm\pi}:\mathcal M\to \mathcal Q$. We will call $Q$ the \emph{charge} of the Maxwell field.

\subsection{The system of equations}

In this subsection, we write down the symmetry-reduced equations in a double null coordinate system as in Section~\ref{doublenull} (see \cite{Kommemi} for details). 
Before we write down the equations, we introduce the following notation for the covariant derivative operator with respect to the 1-form $A$:
\begin{equation*}
D_{\mu}\phi=\partial_{\mu}\phi+i\mathfrak{e} A_{\mu}\phi.
\end{equation*}

\subsubsection{Propagation equations for the metric components}

\begin{align}
\label{eq:transpeqr}
r\partial_u\partial_v r=&-\frac{1}{4}\Omega^2-\partial_ur \partial_vr+\mathfrak{m}^2\pi r^2 \Omega^2 |\phi|^2+\frac{1}{4}\Omega^2 r^{-2} Q^2,\\
\label{eq:waveqOmega}
r^2\partial_u\partial_v\log \Omega=&-2\pi r^2(D_u\phi \overline{D_v\phi}+\overline{D_u\phi}D_v\phi)-\frac{1}{2}\Omega^2 r^{-2} Q^2+\frac{1}{4}\Omega^2+\partial_ur\partial_vr.
\end{align}

\subsubsection{Propagation equations for the scalar field and electromagnetic tensor}
\begin{align}
\label{eq:phi1}
D_uD_v\phi+D_v D_u\phi=&-\frac{1}{2}\mathfrak{m}^2\Omega^2\phi-2r^{-1}(\partial_ur D_v\phi+\partial_v r D_u \phi),\\
\label{eq:phi2}
D_uD_v\phi-D_vD_u \phi=&\:\frac{1}{2} r^{-2}\Omega^2i \mathfrak{e}Q\cdot \phi,\\
\label{eq:Q1}
\partial_uQ=&\:2\pi i r^2\mathfrak{e} (\phi \overline{D_u\phi}-\overline{\phi}D_u\phi),\\
\label{eq:Q2}
\partial_vQ=&\:-2\pi i r^2\mathfrak{e} (\phi \overline{D_v\phi}-\overline{\phi}D_v\phi).
\end{align}
Furthermore, we can express
\begin{equation}\label{QintermsofA}
Q=2r^2\Omega^{-2}(\partial_uA_v-\partial_vA_u).
\end{equation}

\subsubsection{Raychaudhuri's equations}
\begin{align}
\label{eq:raychu}
\partial_u(\Omega^{-2}\partial_u r)=&-4\pi r\Omega^{-2}|D_u\phi|^2,\\
\label{eq:raychv}
\partial_v(\Omega^{-2}\partial_v r)=&-4\pi r\Omega^{-2}|D_v\phi|^2.
\end{align}

\subsection{Hawking mass}

Define the Hawking mass $m$ by
\begin{equation}\label{Hawking.mass}
m :=\f{r}2 (1-g_{\mathcal Q}(\nabla r,\nabla r)) = \f{r}2 \left(1+\f{4\rd_u r\rd_v r}{\Om^2}\right).
\end{equation}
By \eqref{eq:transpeqr}, \eqref{eq:raychu} and \eqref{eq:raychv},
\begin{equation}\label{eq:dum}
\rd_u m = -8\pi \f{r^2(\rd_v r)}{\Om^2}|D_u\phi|^2+2(\rd_u r)\mfm^2 \pi r^2|\phi|^2+\f 12 \f{(\rd_u r)Q^2}{r^2},
\end{equation}
\begin{equation}\label{eq:dvm}
\rd_v m = -8\pi \f{r^2(\rd_u r)}{\Om^2}|D_v\phi|^2+2(\rd_v r)\mfm^2 \pi r^2|\phi|^2+\f 12 \f{(\rd_v r)Q^2}{r^2}.
\end{equation}

\subsection{Global gauge transformations}\label{sec:global.gauge}
Consider the following global gauge transformation induced by the function $\chi: \mathcal{D}\to \R$, $\mathcal{D}\subset \R^2$:
\begin{align*}
\widetilde{\phi}(u,v)=&\:e^{-i\mathfrak{e}\chi(u,v)}\phi(u,v),\\
\widetilde{A}_{\mu}(u,v)=&\:A_{\mu}(u,v)+\partial_{\mu}\chi(u,v),
\end{align*}
with $\mu=u,v$. Let us denote $\widetilde{D}=d+i\mathfrak{e} \widetilde{A}$. Then,
\begin{equation*}
\widetilde{D}_{\mu}\widetilde{\phi}=e^{-i\mathfrak{e}\chi}D_{\mu}{\phi}.
\end{equation*}
As a result we conclude that the following norms are (globally) \emph{gauge-invariant}: $|\phi|=|\widetilde{\phi}|$ and $|D_{\mu}\phi|=|\widetilde{D}_{\mu}\widetilde{\phi}|$.

In most of this paper, the choice of gauge will not be important. We will only explicitly choose a gauge when discussing local existence or when we need to construct an extension of the solution. Instead, most of the time we will estimate the gauge invariant quantities $|\phi|$ and $|D_{\mu}\phi|$. For this purpose, let us note that we have the following estimates regarding these quantities:
\begin{lemma}
\label{lm:fundthmcalc}
The following estimates hold:
\begin{equation}
\label{eq:fundthmcalcphi1}
|\phi|(u,v)\leq |\phi|(u_1,v)+ \int_{u_1}^u |{D}_u {\phi}|(u',v)\,du'
\end{equation}
and
\begin{equation}
\label{eq:fundthmcalcphi2}
|\phi|(u,v)\leq |\phi|(u,v_1)+ \int_{v_1}^v |{D}_v {\phi}|(u,v')\,dv'.
\end{equation}
\end{lemma}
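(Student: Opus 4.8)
The plan is to deduce both inequalities from the fundamental theorem of calculus applied to a regularisation of $|\phi|$, the regularisation being necessary because $|\phi|$ fails to be differentiable precisely on its zero set.

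First I would record the pointwise algebraic identity
\[
\rd_u(|\phi|^2)=\rd_u(\phi\overline{\phi})=(\rd_u\phi)\overline{\phi}+\phi\,\overline{(\rd_u\phi)}=(D_u\phi)\overline{\phi}+\phi\,\overline{(D_u\phi)}=2\,\mathrm{Re}\bigl(\overline{\phi}\,D_u\phi\bigr),
\]
where the third equality uses $D_u\phi=\rd_u\phi+i\mfe A_u\phi$ together with the fact that $A_u$ is real, so that the two contributions $\mp i\mfe A_u|\phi|^2$ cancel. In particular $|\rd_u(|\phi|^2)|\le 2|\phi|\,|D_u\phi|$, and the analogous identity holds with $u$ replaced by $v$. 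Note that this cancellation is just the infinitesimal form of the gauge invariance of $|\phi|$ recorded in Section~\ref{sec:global.gauge}.

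Next, for $\ep>0$ set $f_\ep:=\sqrt{|\phi|^2+\ep^2}$, which is smooth and strictly positive on $\mathcal Q$ (recall $\phi$, $A$, hence $|\phi|^2$, are smooth). From the identity above,
\[
\rd_u f_\ep=\f{\mathrm{Re}(\overline{\phi}\,D_u\phi)}{\sqrt{|\phi|^2+\ep^2}},\qquad\text{so}\qquad |\rd_u f_\ep|\le \f{|\phi|}{\sqrt{|\phi|^2+\ep^2}}\,|D_u\phi|\le |D_u\phi|.
\]
Integrating this bound in $u'$ between $u_1$ and $u$ along the curve $\{v=\text{const}\}$, and using the fundamental theorem of calculus for the smooth function $f_\ep$, gives
\[
f_\ep(u,v)\le f_\ep(u_1,v)+\int_{u_1}^{u}|D_u\phi|(u',v)\,du'.
\]
Finally, letting $\ep\downarrow 0$ and using $0\le f_\ep-|\phi|\le \ep$ (so $f_\ep\to|\phi|$ uniformly), the displayed inequality passes to the limit and yields \eqref{eq:fundthmcalcphi1}. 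The estimate \eqref{eq:fundthmcalcphi2} follows by the identical argument with the roles of $u$ and $v$ interchanged.

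There is essentially no substantial obstacle here; the only subtlety — and the reason for introducing $f_\ep$ — is that $|\phi|$ is merely Lipschitz and not $C^1$ near its zero set, so the fundamental theorem of calculus cannot be applied to it directly, while the gauge potential $A_u$ must be seen to drop out of $\rd_u(|\phi|^2)$ so that the right-hand side is expressed through the gauge-invariant quantity $|D_u\phi|$.
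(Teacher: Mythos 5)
Your proof is correct, but it takes a different route from the paper's. The paper's proof exploits the global gauge freedom of Section~\ref{sec:global.gauge}: it picks $\chi$ so that $A_u=0$, so that $\widetilde D_u\widetilde\phi=\rd_u\widetilde\phi$, applies the fundamental theorem of calculus to the smooth \emph{complex-valued} function $\widetilde\phi$ itself, takes absolute values via the triangle inequality, and then translates back using the gauge invariance of $|\phi|$ and $|D_u\phi|$. Because the FTC is applied to $\widetilde\phi$ rather than to $|\widetilde\phi|$, no regularisation is needed there. You instead stay in the original gauge and work directly with the gauge-invariant quantity $|\phi|^2$: your identity $\rd_u(|\phi|^2)=2\,\mathrm{Re}(\overline\phi\,D_u\phi)$ (with the correct cancellation of the real $A_u$ terms) is precisely the infinitesimal counterpart of the gauge invariance the paper invokes, and the $f_\ep=\sqrt{|\phi|^2+\ep^2}$ regularisation correctly handles the failure of differentiability of $|\phi|$ at its zeros. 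What your approach buys is that it never needs to construct the gauge transformation and is manifestly gauge-invariant line by line; what the paper's approach buys is brevity, since passing to the gauge with $A_u=0$ lets one avoid the $\ep$-limit entirely. Both arguments are complete and yield \eqref{eq:fundthmcalcphi1}, with \eqref{eq:fundthmcalcphi2} following symmetrically.
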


\begin{proof}
We can always pick $\chi$ such that  $A_u=0$ and $\widetilde{D}_u\widetilde{\phi}=\partial_u\widetilde{\phi}$. This fact, together with the fundamental theorem of calculus and the gauge-invariance property above, imply 
\begin{equation*}
|\phi|(u,v)=|\widetilde{\phi}|(u,v)\leq |\widetilde{\phi}|(u_1,v) + \int_{u_1}^u |\widetilde{D}_u \widetilde{\phi}|(u',v)\,du'= |\phi|(u_1,v)+ \int_{u_1}^u |{D}_u {\phi}|(u',v)\,du',
\end{equation*}
which implies \eqref{eq:fundthmcalcphi1}.

Similarly, by choosing $\chi$ such that $A_v=0$, we obtain \eqref{eq:fundthmcalcphi2}.\qedhere
\end{proof}

\section{Interior of extremal Reissner--Nordstr\"om black holes}\label{sec:geometry}

The \emph{interior region of the extremal Reissner--Nordstr\"om solution} with mass $M>0$ is the Lorentzian manifold $(\mathcal M_{eRN},g_{eRN})$, where $\mathcal M_{eRN} = (0,M)_r\times (-\infty,\infty)_t\times \mathbb S^2$ and the metric $g_{\textnormal{eRN}}$ in the $(t,r,\theta,\varphi)$ coordinate system is given by
$$g_{\textnormal{eRN}} = -\Omg_0^2 dt^2 + \Omg_0^{-2} dr^2 + r_0^2(d\theta^2+\sin^2\theta\,d\varphi^2),$$
where 
$$\Omg_0 = \left(1-\f{M}{r_0}\right).$$

We define the Eddington--Finkelstein $r^*$ coordinate (as a function of $r$) by
\begin{equation}\label{def:r*}
r^*= \f{M^2}{M-r}+2M\log (M-r) + r,
\end{equation}
and define the Eddington--Finkelstein double-null coordinates by
\begin{equation}\label{def:uv}
u= t-r^*,\quad v =t+r^*.
\end{equation}
In Eddington--Finkelstein double-null coordinates $(u,v,\theta,\varphi)$, the metric takes the form as in Section~\ref{doublenull}:
\begin{equation*}
g_{\textnormal{eRN}}=-\Omega^2_0(u,v)\,dudv+r_0^2(u,v)(d\theta^2+\sin^2\theta\,d\varphi^2),
\end{equation*}
where $r_0$ is defined implicitly by \eqref{def:r*} and \eqref{def:uv} and $\Omega^2_0(u,v) = (1-\f{M}{r_0(u,v)})^2$.

For the purpose of this paper, we do not need the explicit expressions for $r_0$ and $\Om_0$ as functions of $(u,v)$, but it suffices to have some simple estimates. Since we will only be concerned with the region of the spacetime close to timelike infinity $i^+$ (see Figure~\ref{fig:fullspacetime})\footnote{Formally, it is the ``$2$-sphere at $u=-\infty$, $v=\infty$''.}, we will assume $v\geq 1$ and $u\leq -1$. In this region, we have the following estimates (the proof is simple and will be omitted): 
\begin{lemma}\label{RN.est}
For $v\geq 1$ and $u\leq -1$, there exists $C>0$ (depending on $M$) such that for $v\geq 1$ and $u\leq -1$,
$$|r_0-M|(u,v)\leq \f{C}{(v+|u|)},\quad |\rd_v r_0|(u,v)+|\rd_u r_0|(u,v)\leq\f{C}{(v+|u|)^2}.$$
Given any $\beta>0$, we can find a constant $C_{\beta}>0$ (depending on $M$ and $\beta$) such that for $v\geq 1$ and $u\leq -1$,
\begin{equation}\label{eq:Om0.est}
\left|\Omega_0-\frac{2M}{v+|u|}\right|(u,v)\leq C_{{\beta}}(v+|u|)^{-2+{\beta}},
\end{equation}
and
\begin{equation}
\label{eq:weightestOmega0}
|\partial_v(v^2\Omega_0^2)+\partial_u(u^2\Omega_0^2)|(u,v)\leq C_{{\beta}} (v+|u|)^{-2+{\beta}}.
\end{equation}
\end{lemma}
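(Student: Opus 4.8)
The plan is to reduce everything to one-variable asymptotics of $r_0$ as a function of $r^*$. In the region $v\geq 1$, $u\leq -1$ we have $|u|=-u$, hence $v+|u|=v-u=2r^*$, and both $r_0$ and $\Omega_0$ depend on $(u,v)$ only through $r^*=\tfrac{v-u}{2}$. Differentiating \eqref{def:r*} gives the convenient identity $\tfrac{dr^*}{dr}=\tfrac{M^2}{(M-r)^2}-\tfrac{2M}{M-r}+1=\tfrac{r^2}{(M-r)^2}=\Omega_0^{-2}>0$, so $r\mapsto r^*$ is strictly increasing on $(0,M)$ with $r^*\to+\infty$ as $r\to M^-$, and thus $r_0(r^*)\to M^-$ as $r^*\to\infty$. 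Since $r^*=\tfrac{v-u}{2}\geq 1$ in our region and every quantity appearing in the lemma is bounded on any compact $r^*$-interval, it suffices to prove the bounds for $r^*$ large.

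First I would invert \eqref{def:r*}. Setting $y:=M-r_0\in(0,M)$, relation \eqref{def:r*} reads $r^*=\tfrac{M^2}{y}+2M\log y+M-y=:F(y)$, with $F'(y)=-\tfrac{(M-y)^2}{y^2}<0$. A short squeezing argument with $F$ (comparing $F$ evaluated at $\tfrac{M^2}{2r^*}$ and $\tfrac{M^2}{r^*}$) shows $\tfrac{M^2}{2r^*}\leq y\leq \tfrac{M^2}{r^*}$ for $r^*$ large, which already yields the first estimate $|r_0-M|\leq \tfrac{C}{v+|u|}$; feeding $\log y=-\log r^*+O(1)$ back into $\tfrac{M^2}{y}=r^*+y-M-2M\log y$ upgrades this to
\[
M-r_0=\frac{M^2}{r^*}+O\!\left(\frac{\log r^*}{(r^*)^2}\right),\qquad r^*\to\infty.
\]
The derivative bounds are then immediate since $\partial_v r_0=\tfrac12\tfrac{dr_0}{dr^*}=\tfrac12\Omega_0^2$ and $\partial_u r_0=-\tfrac12\Omega_0^2$, with $\Omega_0^2=\tfrac{(M-r_0)^2}{r_0^2}\lesssim(M-r_0)^2\lesssim(v+|u|)^{-2}$. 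For \eqref{eq:Om0.est}, writing $\Omega_0=\tfrac{M-r_0}{r_0}=\tfrac{M-r_0}{M}+O((M-r_0)^2)$ and inserting the refined expansion gives $\Omega_0=\tfrac{M}{r^*}+O(\tfrac{\log r^*}{(r^*)^2})=\tfrac{2M}{v+|u|}+O(\tfrac{\log(v+|u|)}{(v+|u|)^2})$, and since $\log(v+|u|)\lesssim_\beta(v+|u|)^\beta$ for every $\beta>0$ this is exactly \eqref{eq:Om0.est}.

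For \eqref{eq:weightestOmega0}, since $\Omega_0^2$ is a function of $r^*$ alone and $\partial_v r^*=\tfrac12=-\partial_u r^*$, a direct computation gives
\[
\partial_v(v^2\Omega_0^2)+\partial_u(u^2\Omega_0^2)=(u+v)\left(2\Omega_0^2+r^*\tfrac{d}{dr^*}\Omega_0^2\right).
\]
Using $\tfrac{d}{dr^*}\Omega_0^2=\Omega_0^2\cdot\tfrac{d}{dr_0}\big(\tfrac{(M-r_0)^2}{r_0^2}\big)=-\tfrac{2M(M-r_0)^3}{r_0^5}$ one finds $2\Omega_0^2+r^*\tfrac{d}{dr^*}\Omega_0^2=\tfrac{2(M-r_0)^2}{r_0^2}\big(1-\tfrac{r^*M(M-r_0)}{r_0^3}\big)$, and multiplying \eqref{def:r*} by $M-r_0$ yields the exact identity $r^*(M-r_0)=M^2+(2M\log(M-r_0)+r_0)(M-r_0)$, whence $\tfrac{r^*M(M-r_0)}{r_0^3}=1+O(\tfrac{\log r^*}{r^*})$. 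Therefore $2\Omega_0^2+r^*\tfrac{d}{dr^*}\Omega_0^2=O(\tfrac{\log r^*}{(r^*)^3})$, and multiplying by $|u+v|\leq v+|u|=2r^*$ gives $O(\tfrac{\log(v+|u|)}{(v+|u|)^2})\lesssim_\beta(v+|u|)^{-2+\beta}$. The only step that is not pure bookkeeping is the asymptotic inversion of the transcendental relation \eqref{def:r*}; once $M-r_0=\tfrac{M^2}{r^*}+O(\tfrac{\log r^*}{(r^*)^2})$ is in hand, everything reduces to the chain rule together with absorbing logarithms into the $(v+|u|)^\beta$ loss, so there is no genuine obstacle.
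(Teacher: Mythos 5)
Your proof is correct and complete; the paper omits the proof of this lemma entirely (``the proof is simple and will be omitted''), and your argument --- inverting \eqref{def:r*} asymptotically to get $M-r_0=\tfrac{M^2}{r^*}+O\bigl(\tfrac{\log r^*}{(r^*)^2}\bigr)$ and then exploiting the exact identity $r^*(M-r_0)=M^2+(2M\log(M-r_0)+r_0)(M-r_0)$ to exhibit the cancellation behind \eqref{eq:weightestOmega0} --- is precisely the intended one. The only cosmetic point is that the paper's $\Omega_0=1-M/r_0$ is negative for $r_0<M$, so, as you implicitly do, one should read $\Omega_0$ as the positive root $(M-r_0)/r_0$ throughout.
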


\subsection{Regular coordinates}

We would like to think of $\mathcal M_{eRN}$ as as having the ``event horizon'' and the ``Cauchy horizon'' as null boundaries, which are formally the boundaries $\{u=-\infty\}$ and $\{v=\infty\}$ respectively. To properly define them, we will introduce double null coordinate systems which are regular at the event horizon and at the Cauchy horizon respectively. We will also use these coordinate systems later in the paper
\begin{itemize}
\item to pose the characteristic initial value problem near the event horizon; and
\item to extend the solution up to the Cauchy horizon.
\end{itemize}

\subsubsection{Regular coordinates at the event horizon}\label{sec:coord.EH}
Define $U$ by the relation
\begin{equation}\label{U.def}
\f{dU}{du} = \Omega_0^2(u,1)=\left(1-\frac{M}{r_0(u,1)}\right)^2,\quad U(-\infty) = 0.
\end{equation}
By Lemma \ref{RN.est}, there exists a constant $C$ (depending on $M$), so that we can estimate
\begin{equation}
\label{U.behaviour}
0\leq \f{dU}{du}\leq C(1+|u|)^{-2}.
\end{equation}

Define the \emph{event horizon} as the boundary $\{U=0\}$. We will abuse notation to denote the event horizon as both the boundary in the quotient manifold $\{(U,v):U=0\}\subset \mathcal Q$ and the original manifold $\{(U,v):U=0\}\times \s^2 \subset \mathcal M_{eRN}$ (cf.\ Section~\ref{sec:geom}).

After denoting $u(U)$ as the inverse of $u\mapsto U$, we abuse notation to write $r_0(U,v)=r_0(u(U),v)$ and $\hat{\Omega}_0$ is defined by
$$\hat{\Omega}_0^2(U,v) =  \Omega_0^{-2}(u(U),1) \Omega_0^2(u(U),v),$$
the extremal Reissner--Nordstr\"om metric takes the following form in the $(U,v,\theta,\varphi)$ coordinate system
\begin{equation*}
g_{\textnormal{eRN}}=-\hat{\Omega}^2_0(U,v)\,dU dv+r_0^2(U,v)(d\theta^2+\sin^2\theta\,d\varphi^2),
\end{equation*}
In particular, by \eqref{eq:Om0.est}, it holds that
\begin{equation}\label{hat.Om0}
\hat{\Omega}_0(0,v) = 1
\end{equation}
for all $v$. Additionally, we have, for all $v$,
$$r_0(0,v) = M.$$
Hence, in the $(U,v,\theta,\varphi)$ coordinate system, $\hat{\Omega}_0(U,v)$ and $r_0(U,v)$ extend continuously (in fact smoothly) to the event horizon. Moreover, for every $v\geq 1$ and $u(U)\leq -1$, $\hat{\Omega}_0^2(U,v)$ is bounded above and below as follows:
\begin{equation}\label{Omhat0.bounds}
\f{2}{v+1}\leq \hat{\Om}_0(U,v) \leq 1.
\end{equation}

\subsubsection{Regular coordinates at the Cauchy horizon}\label{sec:coord.CH}
Define $V$ by the relation
\begin{equation}\label{V.def}
\f{dV}{dv} = \Omega_0^{2}(-1,v)=\left(1-\frac{M}{r_0(-1,v)}\right)^2 ,\quad V(\infty) = 0.
\end{equation}
By Lemma \ref{RN.est}, there exists a constant $C$ (depending on $M$), so that we can estimate
\begin{equation}
\label{V.behaviour}
0\leq \f{dV}{dv}\leq C(1+v)^{-2}.
\end{equation}

Define the \emph{Cauchy horizon} as the boundary $\{V=0\}$. (Again, this is to be understood either as $\{(u,V):V=0\}\subset \mathcal Q$ or the original manifold $\{(u,V):V=0\}\times \subset \mathcal M_{eRN}$.) After denoting $v(V)$ as the inverse of $v\mapsto V$, we abuse notation to write $r_0(u,v(V))=r_0(u,v(v))$ and $\widetilde{\Omega}_0$ is defined by
$$\widetilde{\Omega}_0^2(u,v(V)) = \Omega_0^{-2}(-1,v(V))\Omega_0^2(u,v(V)),$$ 
the extremal Reissner--Nordstr\"om metric takes the following form in the $(u,V,\theta,\varphi)$ coordinate system
\begin{equation*}
g_{\textnormal{eRN}}=-\widetilde{\Omega}^2_0(u,V)\,dudV+r_0^2(u,V)(d\theta^2+\sin^2\theta\,d\varphi^2),
\end{equation*}
In analogy with Section~\ref{sec:coord.EH}, it is easy to see that $\widetilde{\Omega}^2_0$ and $r_0$ extend smoothly to the Cauchy horizon.

\section{Initial data assumptions}\label{sec:data}

We will consider the characteristic initial value problem for \eqref{EMCSFS} with initial data given on two transversally intersecting null hypersurfaces, which in the double null coordinates $(U,v)$ are denoted by
$$H_0:=\{(U,v):U=0\},\quad \underline{H}_{v_0}:=\{(U,v):v=v_0\}.$$
Here, the $(U,v)$ coordinates should be thought of as comparable to the Reissner--Nordstr\"om $(U,v)$ coordinates in Section~\ref{sec:coord.EH}; see Section~\ref{sec:null.comment} for further comments.

The initial data consist of $(\phi,r,\Omega,Q)$ on both $H_0$ and $\underline{H}_{v_0}$, subject to the equations \eqref{eq:Q1} and \eqref{eq:raychu} on $\underline{H}_{v_0}$, as well as the equations \eqref{eq:Q2} and \eqref{eq:raychv} on $H_0$.

We impose the following \emph{gauge conditions} on the initial hypersurfaces $\underline{H}_{v_0}$ and $H_0$:
\begin{equation}\label{eq:gauge.cond}
\hat{\Omega}(U,v_0)=\hat{\Omega}_0(U,v_0) \mbox{ for $U\in [0,U_0]$,}\quad \hat{\Omega}(0, v)=\hat{\Omega}_0(0,v)=1 \mbox{ for $v\in [v_0,\infty)$,}
\end{equation}
which can be thought of as a normalisation condition for the null coordinates. 

The initial data for $(\phi,r,\Omega,Q)$ will be prescribed in Sections~\ref{sec:data.phi}--\ref{sec:data.dur}, but before that, we will give some remarks in Sections~\ref{sec:null.comment} and \ref{sec:data.comment}: in \textbf{Section~\ref{sec:null.comment}}, we discuss our conventions on null coordinates; in \textbf{Section~\ref{sec:data.comment}}, we discuss which parts of the data are freely prescribable and which parts are determined by the constraints. We then proceed to discuss the initial data and the bounds that they satisfy. In \textbf{Section~\ref{sec:data.phi}}, we discuss the data for $\phi$; in \textbf{Section~\ref{sec:data.r}}, we discuss the data for $r$; in \textbf{Section~\ref{sec:data.Q}}, we discuss the data for $Q$; in \textbf{Section~\ref{sec:data.dur}}, we discuss the data for $\rd_U r$ on $H_0$.

\subsection{A comment about the use of the null coordinates}\label{sec:null.comment}

In the beginning of Section~\ref{sec:data}, we normalised the null coordinates $(U,v)$ on the initial hypersurfaces by the condition \eqref{eq:gauge.cond} so that they play a similar role to the $(U,v)$ coordinates on extremal Reissner--Nordstr\"om spacetimes introduced in Section~\ref{sec:coord.EH}. This set of null coordinates has the advantage of being \emph{regular near the event horizon} and therefore it is easy to see that the Einstein--Maxwell--Klein--Gordon system is locally well-posed with the prescribed initial data. 

However, in the remainder of the paper, it will be useful to pass to other sets of null coordinates. For this we introduce the following convention. \textbf{We use all of the coordinate systems $(U,v)$, $(u,v)$ and $(u,V)$, where $u$ and $V$ are defined (as functions of $U$ and $v$ respectively) by \eqref{U.def} and \eqref{V.def}.}

All the data will be prescribed in the $(U,v)$ coordinate system and we will prove estimates for $\phi$, $r$ and $Q$ in these coordinates. Nevertheless, using \eqref{U.def}, they imply immediately also estimates in the $(u,v)$ coordinate system, and it is those estimates that will be used in the later parts of the paper.

\subsection{A comment about freely prescribable data}\label{sec:data.comment}

Since the initial data need to satisfy \eqref{eq:Q1}, \eqref{eq:Q2}, \eqref{eq:raychu} and \eqref{eq:raychv}, not all of the data are freely prescribed. Instead, we have freely prescribable and constrained data:
\begin{itemize}
\item A normalisation condition for $u$ and $v$ can be specified. In our case, we specify the condition \eqref{eq:gauge.cond}.
\item $\phi$ on $H_0$ and $\underline{H}_0$ can be prescribed freely.
\item $r$ and $Q$ can then be obtained by solving \eqref{eq:Q1}, \eqref{eq:Q2}, \eqref{eq:raychu} and \eqref{eq:raychv} with appropriate initial conditions, namely,
\begin{itemize}
\item $r$ and $Q$ are to approach their corresponding values in extremal Reissner--Nordstr\"om with mass $M>0$, i.e.
$$\lim_{v\to \infty} r(0,v) = \lim_{v\to \infty} Q(0,v) = M.$$
\item $(\rd_U r)(0,v_0)$ can be freely prescribed, cf.~\eqref{eq:initdur}.
\end{itemize}
\end{itemize}

We remark that in order to fully specify the initial data, it only remains to pick a gauge condition for $A$. For this purpose, it will be most convenient to set $A_U(U,v_0)=0$ and $A_v(0,v)=0$. (This can always be achieved as each of these are only set to vanish on \emph{one hypersurface}, cf. discussions following \eqref{A.gauge.con}.) Nevertheless, the choice of gauge will not play a role in the rest of this subsection, since all the estimates we will need for $\phi$ and its derivatives can be phrased in terms of the \emph{gauge invariant} quantities $|\phi|$, $|D_v\phi|$ and $|D_u\phi|$.

\subsection{Initial data for $\phi$}\label{sec:data.phi}

We assume that there exists constants $\mathcal{D}_{\rm i}$ and $\mathcal{D}_{\rm o}$ such that
\begin{align}
\label{eq:idataphi} 
\int_{0}^{U_0} |D_{U} \phi |^2(U,v_0)\,dU\leq &\:\mathcal{D}_{\rm i} ,\\
\label{eq:odataphi}
\int_{v_0}^{\infty}v'^{2+\alpha}|D_{v}\phi|^2(0,v')\,dv'\leq &\: \mathcal{D}_{\rm o},
\end{align}
where we will take $\alpha>0$. We additionally assume that 
\begin{equation}\label{eq:philimit}
\lim_{v\to \infty}\phi(0,v)=0.
\end{equation}

\begin{lemma}\label{lm:phiinit}
The following estimate holds:
\begin{equation}
\label{eq:oinitphi}
|\phi|(0,v) \leq \sqrt{\mathcal{D}_{\rm o}}v^{-\frac{1}{2}-\frac{\alpha}{2}}.
\end{equation}
\end{lemma}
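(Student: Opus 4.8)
The plan is to deduce the pointwise bound \eqref{eq:oinitphi} for $|\phi|(0,v)$ purely from the flux bound \eqref{eq:odataphi} together with the limit \eqref{eq:philimit}, using only the gauge-invariant fundamental theorem of calculus from Lemma~\ref{lm:fundthmcalc} and the Cauchy--Schwarz inequality. The key point is that \eqref{eq:philimit} lets us integrate \emph{inward} from $v'=\infty$, so that no contribution from a finite base point $v_0$ appears; the decay is then entirely controlled by the weighted $L^2$ norm of $D_v\phi$.

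First I would apply \eqref{eq:fundthmcalcphi2} of Lemma~\ref{lm:fundthmcalc} on the hypersurface $H_0$ (i.e.\ at $U=0$), but with the roles of the endpoints exchanged and the lower endpoint sent to infinity. Concretely, since $\lim_{v'\to\infty}\phi(0,v')=0$, the fundamental theorem of calculus applied to $|\phi|$ (in the gauge where $A_v=0$, exactly as in the proof of Lemma~\ref{lm:fundthmcalc}) gives
\begin{equation*}
|\phi|(0,v)\leq \int_v^\infty |D_v\phi|(0,v')\,dv'.
\end{equation*}
Next I would insert the weight $v'^{1+\frac{\alpha}{2}}$ and its reciprocal and apply Cauchy--Schwarz:
\begin{equation*}
\int_v^\infty |D_v\phi|(0,v')\,dv' \leq \left(\int_v^\infty v'^{2+\alpha}|D_v\phi|^2(0,v')\,dv'\right)^{\frac12}\left(\int_v^\infty v'^{-2-\alpha}\,dv'\right)^{\frac12}.
\end{equation*}
The first factor is bounded by $\sqrt{\mathcal D_{\rm o}}$ by \eqref{eq:odataphi} (the integral over $[v,\infty)$ being no larger than the integral over $[v_0,\infty)$, since $v\geq v_0$). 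The second factor is a routine explicit computation: $\int_v^\infty v'^{-2-\alpha}\,dv' = \frac{1}{1+\alpha}v^{-1-\alpha}$, whose square root is $(1+\alpha)^{-1/2}v^{-\frac12-\frac{\alpha}{2}}\leq v^{-\frac12-\frac{\alpha}{2}}$. Combining the three estimates yields $|\phi|(0,v)\leq \sqrt{\mathcal D_{\rm o}}\,v^{-\frac12-\frac{\alpha}{2}}$, which is \eqref{eq:oinitphi}.

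There is essentially no serious obstacle here; the only points requiring a little care are (i) justifying that one may run the fundamental theorem of calculus from $v'=\infty$, which is legitimate precisely because of the assumed limit \eqref{eq:philimit} together with the fact that $D_v\phi\in L^2(v'^{2+\alpha}dv')$ already forces $D_v\phi\in L^1(dv')$ near infinity (so $\phi(0,\cdot)$ has a limit and the integral representation is valid), and (ii) keeping everything gauge-invariant, which is handled exactly as in the proof of Lemma~\ref{lm:fundthmcalc} by choosing $\chi$ so that $A_v=0$ on $H_0$ and using $|D_v\phi|=|\widetilde D_v\widetilde\phi|=|\partial_v\widetilde\phi|$. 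If one wishes to avoid any appeal to the limit at infinity being attained in a strong sense, an alternative is to first prove $|\phi|(0,v)\leq |\phi|(0,v_1)+\int_v^{v_1}|D_v\phi|(0,v')\,dv'$ for finite $v_1>v$ via \eqref{eq:fundthmcalcphi2}, bound the integral uniformly in $v_1$ by Cauchy--Schwarz as above, and then let $v_1\to\infty$ using \eqref{eq:philimit} to kill the boundary term.
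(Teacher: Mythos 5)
Your proof is correct and is essentially identical to the paper's: both integrate $|D_v\phi|$ inward from $v'=\infty$ using \eqref{eq:philimit} and the gauge-invariant fundamental theorem of calculus from Lemma~\ref{lm:fundthmcalc}, then apply Cauchy--Schwarz with the weight $v'^{1+\alpha/2}$ and invoke \eqref{eq:odataphi}. The extra remarks on justifying the limit at infinity and on dropping the harmless factor $(1+\alpha)^{-1/2}$ are fine but not needed beyond what the paper already does.
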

\begin{proof}
By \eqref{eq:philimit}, \eqref{eq:fundthmcalcphi2} and Cauchy--Schwarz inequality, we have that
\begin{equation*}
|\phi|(0,v)\leq \int_v^{\infty} |D_v\phi|(0,v')\,dv'\leq \sqrt{\int_{v}^{\infty} v'^{-2-\alpha}dv'}\cdot \sqrt{\int_{v}^{\infty} v'^{2+\alpha}|D_v\phi|(0,v')dv'},
\end{equation*}
so we can conclude using \eqref{eq:odataphi}.\qedhere
\end{proof}

\subsection{Initial data for $r$}\label{sec:data.r}
We assume that
\begin{equation}\label{eq:rlimit}
\lim_{v\to \infty}r(0,v)=M
\end{equation}
and we prescribe freely $\partial_Ur(0,v_0)$. Let us assume that
\begin{equation}
\label{eq:initdur}
\rd_U r(0,v_0)<0,\quad |\partial_Ur|(0,v_0)\leq M\mathcal{D}_{\rm i}.
\end{equation}

We use the equations \eqref{eq:raychu} and \eqref{eq:raychv} as constraint equations for the variable $r$ along $H_0=\{U=0\}$ and $\underline{H}_{v_0}=\{v=v_0\}$.

\subsubsection{Initial data for $r$ on $H_0$}
We obtain along ${H}_0$:
\begin{equation}\label{raychv.ODE}
\partial_{v}^2r(0,v)=-4\pi  r(0,v)|D_v\phi|^2,
\end{equation}
The above ODE can be solved to obtain $r(0,v)$.  

\begin{lemma}\label{lm:r.on.H0}
There exists a unique smooth solution to \eqref{raychv.ODE} satisfying \eqref{eq:rlimit}. Moreover, if $v_0$ satisfies the inequality
\begin{equation}\label{v0.first.est}
\mathcal{D}_{\rm o}v^{-1}_0\leq \f 1{8\pi},
\end{equation}
then the following estimates hold for $v\geq v_0$:
$$\frac{M}{2}\leq r(0,v)\leq M,\quad |r-M|(0,v)\leq 4 \pi M \mathcal{D}_{\rm o} v^{-1},\quad |\rd_v r|(0,v)\leq 4 \pi M \mathcal{D}_{\rm o} v^{-2}. $$
\end{lemma}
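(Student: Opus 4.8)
The plan is to treat \eqref{raychv.ODE} as a second-order linear ODE for $r(0,\cdot)$ on $[v_0,\infty)$ with a prescribed limiting condition at $v=\infty$, and to run a contraction/continuity argument. First I would recast the problem in integrated form. Since we want $r(0,v)\to M$ and (anticipating) $\rd_v r(0,v)\to 0$ as $v\to\infty$, integrate \eqref{raychv.ODE} twice from $v$ to $\infty$: formally,
\[
\rd_v r(0,v) = \int_v^\infty 4\pi r(0,v')|D_v\phi|^2(0,v')\,dv', \qquad r(0,v) = M - \int_v^\infty \rd_v r(0,v'')\,dv''.
\]
This identifies the unique candidate solution as a fixed point of the map $r\mapsto \mathcal{T}[r]$ defined by the right-hand side, acting on the complete metric space $X = \{ r \in C^0([v_0,\infty)) : \tfrac M2 \le r \le M \}$ with the sup norm. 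The data input is \eqref{eq:odataphi}, which after Lemma~\ref{lm:phiinit} controls $\int_{v}^\infty |D_v\phi|^2(0,v')\,dv' \le \mathcal{D}_{\rm o} v^{-2-\alpha} \cdot(\text{const})$ — but actually the cleaner bound is just $\int_{v_0}^\infty |D_v\phi|^2(0,v')\,dv' \le \mathcal D_{\rm o} v_0^{-2-\alpha} \le \mathcal D_{\rm o}$; for the stated estimate we really only need $\int_v^\infty |D_v\phi|^2 \le \mathcal D_{\rm o} v^{-2}$, which follows from \eqref{eq:odataphi} since $v'^{2+\alpha}\ge v^2$ on the domain of integration when $v\ge 1$.

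Next I would verify the hypotheses of the contraction mapping theorem. For self-mapping: if $\tfrac M2 \le r \le M$ then $0 \le \rd_v r(0,v) \le 4\pi M \int_v^\infty |D_v\phi|^2 \le 4\pi M \mathcal D_{\rm o} v^{-2}$, hence $0 \le M - r(0,v) \le 4\pi M \mathcal D_{\rm o} \int_v^\infty v''^{-2}\,dv'' = 4\pi M \mathcal D_{\rm o} v^{-1} \le 4\pi M \mathcal D_{\rm o} v_0^{-1} \le M/2$ precisely when $\mathcal D_{\rm o} v_0^{-1} \le \tfrac{1}{8\pi}$, which is exactly \eqref{v0.first.est}; this also shows $r \ge M/2$, so $\mathcal T$ maps $X$ into $X$. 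For the contraction: $|\mathcal T[r_1] - \mathcal T[r_2]|(0,v) \le 4\pi \|r_1-r_2\|_\infty \int_v^\infty (v-v')\,|D_v\phi|^2\,\cdots$ — more carefully, the double integral of $|D_v\phi|^2$ against the measure $dv'\,dv''$ over $v \le v' $, $v\le v'' \le v'$ is bounded by $\int_{v_0}^\infty v' |D_v\phi|^2(0,v')\,dv' \le v_0^{-1-\alpha}\mathcal D_{\rm o}$, so the Lipschitz constant of $\mathcal T$ is $\le 4\pi \mathcal D_{\rm o} v_0^{-1-\alpha} < 1$ under \eqref{v0.first.est}. This gives existence and uniqueness of the fixed point $r(0,\cdot) \in X$; the pointwise bounds on $|r-M|$ and $|\rd_v r|$ then come from re-reading the self-mapping estimates applied to the fixed point itself. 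Smoothness is a bootstrap: $r(0,\cdot)\in C^0$ and the integral formula gives $\rd_v r \in C^0$, then \eqref{raychv.ODE} gives $\rd_v^2 r$ in terms of smooth data, and one differentiates repeatedly (using smoothness of $\phi$ on $H_0$ from Section~\ref{sec:data.phi}).

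The main subtlety — not really an obstacle, but the point requiring care — is justifying that the fixed point of $\mathcal T$ is genuinely \emph{the} solution satisfying \eqref{eq:rlimit}, i.e.\ that imposing the asymptotic condition at $v=\infty$ is equivalent to the double-integral equation, and that no solution with a nonzero limiting derivative $\rd_v r(0,\infty)$ can also satisfy $r(0,v)\to M$. This follows because $\rd_v r(0,\cdot)$ is monotone (by \eqref{raychv.ODE}, $\rd_v^2 r \le 0$ since $r>0$, so $\rd_v r$ is nonincreasing) and if it had a negative limit then $r(0,v)\to +\infty$, contradicting any finite limit, while a positive limit is impossible once $r$ is bounded above — so $\rd_v r(0,v)\to 0$ is forced, and then the Fundamental Theorem of Calculus upgrades the ODE to the integral equation. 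One should also double-check that the improper integrals converge, which they do since $|D_v\phi|^2$ is integrable on $[v_0,\infty)$ by \eqref{eq:odataphi}. Everything else is routine estimation with the constant $4\pi M \mathcal D_{\rm o}$ tracked explicitly.
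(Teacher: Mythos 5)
Your proposal is correct, and it packages the argument slightly differently from the paper. The paper's proof takes existence and uniqueness as a ``standard ODE argument'' and devotes itself to a priori estimates on the solution: it first shows $\lim_{v\to\infty}\rd_v r(0,v)$ exists and must vanish (else $r$ could not have a finite limit), deduces $\rd_v r\ge 0$ and hence $r\le M$ from the concavity $\rd_v^2 r\le 0$, and then integrates \eqref{raychv.ODE} twice from infinity against the weighted data bound \eqref{eq:odataphi} to get exactly the stated decay rates. You instead set up the twice-integrated equation as a fixed-point problem for a map $\mathcal T$ on $\{M/2\le r\le M\}$ and verify self-mapping and a contraction constant $\le 1/2$ under \eqref{v0.first.est}; this buys you existence and uniqueness (within that class) and the quantitative bounds in one stroke, at the cost of having to separately argue that any smooth solution of \eqref{raychv.ODE} with \eqref{eq:rlimit} actually satisfies the integral equation --- which you do, via the same monotonicity observation the paper uses. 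The estimates themselves (the bound $\int_v^\infty|D_v\phi|^2\le \mathcal D_{\rm o}v^{-2}$, the resulting $4\pi M\mathcal D_{\rm o}v^{-2}$ and $4\pi M\mathcal D_{\rm o}v^{-1}$ bounds, and the role of \eqref{v0.first.est} in keeping $r\ge M/2$) are identical to the paper's. One trivial slip: if $\rd_v r(0,v)$ had a \emph{negative} limit then $r(0,v)\to-\infty$, not $+\infty$; either sign contradicts $r\to M$, so the conclusion stands.
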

\begin{proof}
Existence and uniqueness can be obtained using a standard ODE argument. We will focus on proving the estimates.

First, observe that by integrating \eqref{raychv.ODE}, and using the assumptions $r(0,v)\to M$ and \eqref{eq:odataphi}, it follows that the limit $\lim_{v\to\infty} (\rd_v r)(0,v)$ exists. Now using again the assumption $r(0,v)\to M$, we deduce that 
\begin{equation}\label{dvr.limit.data}
\lim_{v\to\infty}(\rd_v r)(0,v) = 0.
\end{equation}
Together with \eqref{raychv.ODE} this implies that
\begin{equation*}
\partial_vr(0,v)\geq 0.
\end{equation*}
Since $r(0,v)\to M$, we can then bound
\begin{equation}
\label{eq:upperbinitr}
r(0,v)\leq M.
\end{equation}

By \eqref{raychv.ODE} and \eqref{dvr.limit.data},
\begin{equation*}
|\partial_{v}r(0,v)|\leq 4\pi \sup_{v_0\leq v<\infty}r(0,v)\cdot \int_{v}^{\infty} |D_{v} \phi |^2(u,v')\,dv'.
\end{equation*}
We deduce, using \eqref{eq:odataphi} and \eqref{eq:upperbinitr}, that
\begin{equation}
\label{eq:initdvr}
|\partial_{v}r(0,v)|\leq 4\pi M \mathcal{D}_{\rm o}v^{-2-\alp},
\end{equation}
and therefore
\begin{equation}\label{eq:initr}
|r(0,v)-M|\leq \f{4\pi M}{1+\alp} \mathcal{D}_{\rm o}v^{-1-\alp}.
\end{equation}
for all $v_0\leq v <\infty$. In particular, given \eqref{v0.first.est}, it holds that for all $v\in [v_0,\infty)$,
\begin{equation}\label{eq:initr.lowerbound}
r(0,v)\geq \f{M}{2}.
\end{equation}
The estimates stated in the lemma hence follow from \eqref{eq:upperbinitr}, \eqref{eq:initdvr}, \eqref{eq:initr} and \eqref{eq:initr.lowerbound}.
\end{proof}

\subsubsection{Initial data for $r$ on $\underline{H}_0$}

We similarly obtain along $\underline{H}_0$:
\begin{equation}\label{raychu.ODE}
\partial_{U}(\hat{\Omega}^{-2}(U,v_0) \rd_U r(U,v_0))=-4\pi  \hat{\Omega}^{-2}(U,v_0)r(U,v_0)|D_U\phi|^2.
\end{equation}
The above ODE can be solved to obtain $r(U,v_0)$. 

\begin{lemma}\label{lm:r.on.Hb0}
There exists a unique smooth solution to \eqref{raychu.ODE} satisfying \eqref{eq:rlimit}. Moreover, if $U_0$, $v_0$ satisfy the inequality
\begin{equation}\label{U0.first.est}
\mathcal{D}_{\rm o}v^{-1}_0\leq \f 1{8\pi},\quad M \mathcal D_{\rm i} v_0^2 U_0\leq \f{1}{36\pi},
\end{equation}
then the following estimates hold for $U\in [0,U_0]$:
$$\f{M}{4}\leq r(U,v_0)\leq \f{5M}{4},\quad |\rd_U r|(U,v_0)\leq 9\pi M \mathcal D_{\rm i}v_0^2.$$
\end{lemma}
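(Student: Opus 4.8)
The plan is to regard \eqref{raychu.ODE} as a first-order linear ODE system along $\underline H_0$ and to propagate the desired bounds by a short continuity argument in $U$; the only quantitative subtlety is the factor $v_0^2$ introduced by $\hat\Omega_0^{-2}$ near the event horizon, which the smallness assumption \eqref{U0.first.est} is tailored to absorb.

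\emph{Set-up and well-posedness.} Write $w(U):=\hat\Omega^{-2}(U,v_0)\,\rd_U r(U,v_0)$. The gauge condition \eqref{eq:gauge.cond} makes $\hat\Omega(\cdot,v_0)=\hat\Omega_0(\cdot,v_0)$ a smooth, strictly positive function on $[0,U_0]$, so \eqref{raychu.ODE} is equivalent to the linear system $\rd_U w=-4\pi\hat\Omega_0^{-2}(U,v_0)\,r\,|D_U\phi|^2$, $\rd_U r=\hat\Omega_0^2(U,v_0)\,w$, with smooth coefficients ($\phi$, hence $|D_U\phi|^2$, being smooth on $\underline H_0$). Standard ODE theory then yields a unique smooth solution on all of $[0,U_0]$ once $(r,w)(0)$ is fixed. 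I take $r(0,v_0)$ to be the value furnished by Lemma~\ref{lm:r.on.H0} --- which is what ``satisfying \eqref{eq:rlimit}'' means here, since \eqref{eq:rlimit} pins down $r$ along $H_0$ --- and $w(0)=\rd_U r(0,v_0)$ because $\hat\Omega_0(0,v_0)=1$ by \eqref{hat.Om0}. Hence, by \eqref{eq:initdur} and Lemma~\ref{lm:r.on.H0} (whose hypothesis is the first inequality in \eqref{U0.first.est}),
$$w(0)<0,\qquad |w(0)|\le M\mathcal{D}_{\rm i},\qquad \tfrac M2\le r(0,v_0)\le M.$$

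\emph{A priori estimates and continuity argument.} Let $U^\ast\in(0,U_0]$ be maximal with $r(\cdot,v_0)>0$ on $[0,U^\ast)$; this is well-defined by continuity since $r(0,v_0)\ge M/2$. On $[0,U^\ast)$ one has $\rd_U w=-4\pi\hat\Omega_0^{-2}r|D_U\phi|^2\le 0$, so $w$ is non-increasing, hence $w\le w(0)<0$ and therefore $\rd_U r=\hat\Omega_0^2 w<0$; thus $r(\cdot,v_0)$ is decreasing on $[0,U^\ast)$ and the upper bound $r\le r(0,v_0)\le M\le\tfrac{5M}{4}$ is automatic. Using $\hat\Omega_0^{-2}(\cdot,v_0)\le\tfrac{(v_0+1)^2}{4}\ls v_0^2$ from \eqref{Omhat0.bounds}, the bound $r\le M$ just obtained, and the data bound \eqref{eq:idataphi}, integration of the $w$-equation gives $|w(U)|\le|w(0)|+4\pi v_0^2 M\mathcal{D}_{\rm i}\le 9\pi M\mathcal{D}_{\rm i}v_0^2$ on $[0,U^\ast)$, and since $|\rd_U r|=\hat\Omega_0^2|w|\le|w|$ this is already the claimed bound $|\rd_U r|(U,v_0)\le 9\pi M\mathcal{D}_{\rm i}v_0^2$. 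Integrating once more and invoking the second inequality in \eqref{U0.first.est},
$$|r(U,v_0)-r(0,v_0)|\le\int_0^{U_0}|\rd_U r|(U',v_0)\,dU'\le 9\pi M\mathcal{D}_{\rm i}v_0^2 U_0\le\tfrac M4,$$
so $r(U,v_0)\ge r(0,v_0)-\tfrac M4\ge\tfrac M4>0$ on $[0,U^\ast)$; by continuity this persists at $U^\ast$, so maximality forces $U^\ast=U_0$, and all the stated bounds then hold on $[0,U_0]$. The argument is essentially bookkeeping; the one point to watch is the $v_0^2$ loss coming from $\hat\Omega_0^{-2}$ near the event horizon, and the smallness of $U_0$ built into \eqref{U0.first.est} is precisely what is needed to close it.
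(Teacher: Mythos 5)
Your proof is correct, and the quantitative core is the same as the paper's: integrate the Raychaudhuri constraint \eqref{raychu.ODE} once in $U$, bound $\hat\Omega_0^{-2}(\cdot,v_0)\lesssim v_0^2$ via \eqref{Omhat0.bounds}, use \eqref{eq:idataphi} and \eqref{eq:initdur} to get $|\rd_U r|\le 9\pi M\mathcal D_{\rm i}v_0^2$, then integrate again and invoke \eqref{U0.first.est}. The one place you genuinely diverge is in how the a priori upper bound on $r$ (needed inside the integral) is obtained: the paper runs a bootstrap with the assumption $r\le 2M$ and improves it, whereas you exploit the sign structure --- $w=\hat\Omega^{-2}\rd_U r$ is non-increasing by Raychaudhuri as long as $r>0$, and $w(0)<0$ by \eqref{eq:initdur}, so $r$ is monotonically decreasing --- together with a continuity argument on the positivity of $r$. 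Your route is a touch cleaner and even yields the sharper upper bound $r\le M$ (the stated $\tfrac{5M}{4}$ is then slack), at the mild cost of actually using the sign condition $\rd_U r(0,v_0)<0$, which the paper's bootstrap does not need for this particular lemma. Both arguments close identically from \eqref{U0.first.est}.
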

\begin{proof}
As in Lemma~\ref{lm:r.on.Hb0}, since existence and uniqueness is standard, we focus on the estimates. To this end, we introduce a bootstrap argument. Introduce the following bootstrap assumption
\begin{equation}\label{data.Hb0.bootstrap}
r(U,v_0) \leq 2M.
\end{equation}
Integrating \eqref{raychu.ODE} and using \eqref{eq:gauge.cond},
\begin{equation*}
|\hat{\Omega}^{-2}(U,v_0)\partial_{U}r(U,v_0)-\partial_{U}r(0,v_0)|\leq 4\pi \sup_{0\leq U'\leq U_0} \hat{\Omega}^{-2}(U',v_0) r(U',v_0)\cdot \int_{0}^{U} |D_{U} \phi |^2(U'',v_0)\,dU''.
\end{equation*}
By \eqref{Omhat0.bounds}, \eqref{eq:idataphi} and \eqref{eq:initdur}, this implies
\begin{equation}\label{eq:durinit}
|\partial_{U}r(U,v_0)|\leq M\mathcal D_{\rm i} + 8\pi M \cdot\left(\f{v_0+1}{2}\right)^2\cdot \mathcal D_{\rm i} = M \mathcal D_{\rm i}(1+2\pi(v_0+1)^2)\leq 9 \pi M \mathcal D_{\rm i} v_0^2.
\end{equation}
Integrating in $U$, this yields (for $U\in [0,U_0]$),
\begin{equation*}
|r(U,v_0)-r(0,v_0)|\leq 9 \pi M \mathcal D_{\rm i} v_0^2 U_0,
\end{equation*}
which implies, using Lemma~\ref{lm:r.on.H0} (or more precisely \eqref{eq:upperbinitr} and \eqref{eq:initr.lowerbound}),
$$\f {M} 2 - 9 \pi M \mathcal D_{\rm i} v_0^2 U_0\leq r(U,v_0)\leq M + 9 \pi M \mathcal D_{\rm i} v_0^2 U_0.$$
Hence, by \eqref{U0.first.est}, it holds that
\begin{equation}\label{eq:rinit.Hb.est}
\f{{M}}{4} \leq r(U,v_0)\leq \f {5M}4,
\end{equation}
and we have improved the bootstrap assumption \eqref{data.Hb0.bootstrap}. This closes the bootstrap argument, and the desired estimates follow from \eqref{eq:durinit} and \eqref{eq:rinit.Hb.est}.\qedhere
\end{proof}

\subsection{Initial data for $Q$}\label{sec:data.Q}
In view of the equations \eqref{eq:Q1} and \eqref{eq:Q2} which have to be satisfied on the initial hypersurfaces, it suffices to impose $Q$ on one initial sphere.
We assume that
\begin{equation}\label{eq:Qlimit}
\lim_{v\to \infty}Q(0,v)=M.
\end{equation}

\begin{lemma}\label{lm:Qinit}
Assume \eqref{v0.first.est} holds. Then the following estimate holds on $H_0$:
\begin{equation}
\label{eq:oinitQ}
|Q(0,v)-M|(0,v)\leq 4\pi |\mfe|M \mathcal{D}_{\rm o}v^{-1-\alpha}.
\end{equation}
\end{lemma}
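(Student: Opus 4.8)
The plan is to derive the estimate for $Q(0,v)-M$ directly from the propagation equation \eqref{eq:Q2} along $H_0 = \{U=0\}$, integrating from $v=\infty$ (where the limit is pinned by \eqref{eq:Qlimit}) down to the desired value of $v$. First I would observe that along $H_0$ we work in the gauge $A_v(0,v)=0$, so that $D_v\phi = \rd_v\phi$ there; in any case the quantity on the right-hand side of \eqref{eq:Q2}, namely $\mathrm{Im}(\overline\phi D_v\phi)$, is gauge invariant and bounded by $|\phi||D_v\phi|$. Thus \eqref{eq:Q2} gives
\begin{equation*}
|\rd_v Q|(0,v) \leq 4\pi |\mfe| r^2(0,v)\,|\phi|(0,v)\,|D_v\phi|(0,v).
\end{equation*}

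Next I would insert the already-established bounds: $r(0,v)\leq M$ from Lemma~\ref{lm:r.on.H0}, $|\phi|(0,v)\leq \sqrt{\mathcal D_{\rm o}}\,v^{-\f12-\f\alpha2}$ from Lemma~\ref{lm:phiinit}, and for the $|D_v\phi|$ factor I would \emph{not} use a pointwise bound (none is assumed) but keep it under the integral and apply Cauchy--Schwarz against the weight $v^{2+\alpha}$ from \eqref{eq:odataphi}. Concretely,
\begin{equation*}
|Q(0,v)-M| \leq \int_v^\infty |\rd_v Q|(0,v')\,dv' \leq 4\pi |\mfe| M \sqrt{\mathcal D_{\rm o}} \int_v^\infty v'^{-\f12-\f\alpha2}\,|D_v\phi|(0,v')\,dv',
\end{equation*}
and then by Cauchy--Schwarz the last integral is at most $\left(\int_v^\infty v'^{-3-2\alpha}\,dv'\right)^{1/2}\left(\int_v^\infty v'^{2+\alpha}|D_v\phi|^2(0,v')\,dv'\right)^{1/2}$. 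The second factor is $\leq \sqrt{\mathcal D_{\rm o}}$ by \eqref{eq:odataphi}, and the first is $\lesssim v^{-1-\alpha}$ (here one uses $\alpha>0$ so the exponent $-3-2\alpha < -1$). Combining yields $|Q(0,v)-M|\leq C\pi|\mfe| M \mathcal D_{\rm o} v^{-1-\alpha}$; tracking the constant from $\int_v^\infty v'^{-3-2\alpha}dv' = \frac{v^{-2-2\alpha}}{2+2\alpha}\leq \frac{v^{-2-2\alpha}}{2}$ gives the stated constant $4\pi$ (after using $v^{-1-\alpha}\cdot v^{-1-\alpha} = v^{-2-2\alpha}$ and absorbing the $\frac{1}{\sqrt2}$ factors).

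Before running this argument I would justify that $\lim_{v\to\infty}Q(0,v)=M$ together with the integrability of $\rd_v Q$ makes the integral representation $Q(0,v)-M = -\int_v^\infty \rd_v Q(0,v')dv'$ legitimate: the bound above shows $\rd_v Q \in L^1([v_0,\infty))$, so the fundamental theorem of calculus applies and the limit is attained. The role of the hypothesis \eqref{v0.first.est} is precisely to guarantee (via Lemma~\ref{lm:r.on.H0}) that $r(0,v)\leq M$ on the whole range $[v_0,\infty)$, which is what lets us pull the $r^2 \leq M^2$ bound out of the integral uniformly. I do not anticipate a real obstacle here; the only mild subtlety is being careful that the Cauchy--Schwarz split uses the $v^{2+\alpha}$-weighted data norm \eqref{eq:odataphi} rather than attempting a pointwise bound on $D_v\phi$, and bookkeeping the numerical constant so that it comes out as $4\pi$ rather than something slightly larger — which works because of the slack in $\frac{1}{2+2\alpha}\leq \frac12$ and $r\leq M$ (not $\leq 2M$).
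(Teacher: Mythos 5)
Your proposal is correct and follows essentially the same route as the paper: integrate \eqref{eq:Q2} from $v=\infty$ using \eqref{eq:Qlimit}, bound $r^2\leq M^2$ via Lemma~\ref{lm:r.on.H0}, insert the pointwise decay of $|\phi|$ from Lemma~\ref{lm:phiinit}, and apply Cauchy--Schwarz against the $v^{2+\alpha}$-weighted data norm \eqref{eq:odataphi}. The only cosmetic difference is that you keep the pointwise bound on $|\phi|(0,v')$ inside the Cauchy--Schwarz split rather than taking a supremum first (yielding a marginally better constant), and note that $r^2\leq M^2$ really produces a factor $M^2$ rather than $M$ --- but that discrepancy is already present in the lemma's statement, so it is not a defect of your argument.
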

\begin{proof}
Using \eqref{eq:Q2}, \eqref{eq:upperbinitr} and Cauchy--Schwarz inequality we estimate
\begin{equation*}
\begin{split}
|Q(0,v)-M|(0,v)\leq &\: 4\pi |\mfe|\sup_{v''\in [v_0,\infty} r^2(0,v'') \int_{v}^{\infty} |\phi|\cdot|D_v\phi|(0,v')\,dv'\\
\leq &\:4\pi |\mfe|M^2\sup_{v\leq v'<\infty}|\phi|(0,v')\cdot \sqrt{\int_{v}^{\infty} v'^{-2-\alpha}dv'}\cdot \sqrt{\int_{v}^{\infty} v'^{2+\alpha}|D_v\phi|^2(0,v')dv'},
\end{split}
\end{equation*}
so we can use \eqref{eq:odataphi} and \eqref{eq:oinitphi} to conclude \eqref{eq:oinitQ}.
\end{proof}

\subsection{$\partial_Ur$ along $H_0$}\label{sec:data.dur}
The function $\rd_u r$ along $H_0$ is not freely prescribable, but is dictated by \eqref{eq:transpeqr} and the freely prescribable data for $(\rd_U r)(0,v_0)$ (which obeys \eqref{eq:initdur}). We will need the following estimate for $\partial_Ur$ along $H_0$. 
\begin{lemma}
Suppose \eqref{v0.first.est} holds. Then there exists a constant $C>0$ depending only on $M$ and $\mfm$ such that for every $v\in [v_0,\infty)$,
\begin{equation}
\label{eq:durhorizon}
|\partial_Ur(0,v)|\leq C(\mathcal{D}_{\rm o}+\mathcal{D}_{\rm i}).
\end{equation}
\end{lemma}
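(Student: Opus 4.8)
The plan is to integrate the transport equation \eqref{eq:transpeqr} along $H_0 = \{U=0\}$ in the $v$-direction. Rewriting \eqref{eq:transpeqr} in the $(U,v)$ coordinate system and using the gauge condition $\hat{\Omega}(0,v)=\hat\Omega_0(0,v)=1$ from \eqref{eq:gauge.cond}, together with the fact that $\hat\Omega_0(0,v)=1$ and $r_0(0,v)=M$ in extremal Reissner--Nordstr\"om, one sees that $\rd_v(\rd_U r)(0,v)$ is governed by an expression of the form
$$\rd_v(\rd_U r)(0,v) = -\f 14 \hat\Omega^2 - \f{(\rd_U r)(\rd_v r)}{r} + \mfm^2\pi r^2 \hat\Omega^2|\phi|^2 + \f 14 \hat\Omega^2 r^{-2}Q^2$$
on $H_0$, where all quantities are evaluated at $(0,v)$. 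First I would treat this as a linear ODE in $\rd_U r$ along $H_0$, of the schematic form $\rd_v(\rd_U r) = a(v)\,\rd_U r + b(v)$, with $a(v) = -(\rd_v r)/r$ and $b(v)$ collecting the remaining (zeroth-order in $\rd_U r$) terms.

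The key inputs are the initial-data estimates already established: from Lemma~\ref{lm:r.on.H0} we have $M/2 \le r(0,v) \le M$, $|\rd_v r|(0,v) \le 4\pi M \mathcal D_{\rm o} v^{-2}$; from Lemma~\ref{lm:phiinit} we have $|\phi|(0,v) \le \sqrt{\mathcal D_{\rm o}}\,v^{-1/2-\alpha/2}$, so $|\phi|^2 \lesssim \mathcal D_{\rm o} v^{-1-\alpha}$; from Lemma~\ref{lm:Qinit} we have $|Q - M|(0,v) \le 4\pi|\mfe| M \mathcal D_{\rm o} v^{-1-\alpha}$, so $Q^2$ is bounded by a constant depending on $M$ (plus a decaying correction); and $0 \le \hat\Omega^2 \le 1$ by \eqref{Omhat0.bounds} (or directly from the gauge condition, $\hat\Omega(0,v)=1$). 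Thus $|b(v)| \le C(M,\mfm)(1 + \mathcal D_{\rm o})$, uniformly in $v$, and $\int_{v_0}^\infty |a(v)|\,dv \le \int_{v_0}^\infty 4\pi M \mathcal D_{\rm o}\, r^{-1} v^{-2}\,dv \le 8\pi \mathcal D_{\rm o} v_0^{-1} \le 1$ by \eqref{v0.first.est}.

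Then I would run Gr\"onwall's inequality (or just directly integrate, using the integrating factor $\exp(-\int a)$, which is bounded above and below by universal constants by the previous line) on the \emph{finite} interval $[v_0, v]$, starting from the prescribed bound $|\rd_U r|(0,v_0) \le M\mathcal D_{\rm i}$ from \eqref{eq:initdur}. This gives
$$|\rd_U r|(0,v) \le e^{\int_{v_0}^v |a|}\left( |\rd_U r|(0,v_0) + \int_{v_0}^v |b(v')|\,dv' \right),$$
but the naive bound $\int_{v_0}^v |b| \le C(1+\mathcal D_{\rm o})(v - v_0)$ grows linearly in $v$, which is too weak. The main obstacle is therefore the apparent linear growth coming from the $-\f14\hat\Omega^2$ term in $b(v)$ (and potentially the $\f14 \hat\Omega^2 r^{-2}Q^2$ term). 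To handle this I would \emph{not} estimate $\hat\Omega^2$ by $1$ crudely, but instead exploit the near-cancellation $-\f14\hat\Omega^2 + \f14\hat\Omega^2 r^{-2}Q^2 = \f14 \hat\Omega^2 r^{-2}(Q^2 - r^2)$: since both $r \to M$ and $Q \to M$ along $H_0$ with rate $v^{-1-\alpha}$ (from Lemmas~\ref{lm:r.on.H0} and \ref{lm:Qinit}), we get $|Q^2 - r^2|(0,v) \lesssim (M)(\mathcal D_{\rm o} + \mathcal D_{\rm i}) v^{-1-\alpha}$, hence this combined term is integrable in $v$ with integral $\lesssim_{M} (\mathcal D_{\rm o}+\mathcal D_{\rm i}) v_0^{-\alpha}$. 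The remaining pieces of $b(v)$, namely $-(\rd_U r)(\rd_v r)/r$ and $\mfm^2\pi r^2\hat\Omega^2|\phi|^2$, are then each $\lesssim v^{-2}$ (the first after using the bootstrap/Gr\"onwall bound on $\rd_U r$ itself) or $\lesssim \mathcal D_{\rm o} v^{-1-\alpha}$, both integrable. With the genuinely integrable $b$, Gr\"onwall on $[v_0,\infty)$ closes and yields $|\rd_U r|(0,v) \le C(M,\mfm)(\mathcal D_{\rm o} + \mathcal D_{\rm i})$, which is \eqref{eq:durhorizon}. (If one prefers to avoid a circular use of $\rd_U r$ in the $-(\rd_U r)(\rd_v r)/r$ term, one sets up a short bootstrap with assumption $|\rd_U r|(0,v) \le A(\mathcal D_{\rm o}+\mathcal D_{\rm i})$ and improves it, using that $\int |a| \le 1$ with room to spare.)
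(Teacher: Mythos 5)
Your proposal is correct and is essentially the paper's argument: the paper integrates $\partial_v(r\partial_U r)$ along $H_0$, and your integrating factor $\exp(-\int a)$ with $a=-(\partial_v r)/r$ is exactly $r/r(0,v_0)$, so the two formulations coincide; moreover $\int|a|\lesssim \mathcal D_{\rm o}v_0^{-1}$ is small by \eqref{v0.first.est}, so Gr\"onwall closes trivially. You also identify the same crucial cancellation $-\tfrac14\hat\Omega^2+\tfrac14\hat\Omega^2 r^{-2}Q^2=\tfrac14\hat\Omega^2 r^{-2}(Q^2-r^2)$, split via $(Q^2-M^2)+(M^2-r^2)$, together with the $v^{-1-\alpha}$ decay of $Q-M$ and $r-M$ (the point behind the paper's remark that $\alpha>0$ is used crucially), which renders the source integrable.
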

\begin{proof}
By \eqref{eq:transpeqr} we have that
\begin{equation*}
\begin{split}
\partial_v(r\partial_Ur)(0,v)=&\:4M^2\mathfrak{m}^2\pi r^2 |\phi|^2+M^2r^{-2}( Q^2-r^2)\\
=&\:4M^2\mathfrak{m}^2\pi r^2 |\phi|^2+M^2r^{-2}( Q^2-M^2)+M^2r^{-2}( M^2-r^2).
\end{split}
\end{equation*}
Hence,
\begin{equation*}
\begin{split}
|(r\partial_Ur)(0,v)-(r\partial_Ur)(0,v_0)|\lesssim&\: \left|\int_{v_0}^{\infty}\mathfrak{m}^2r^2 |\phi|^2+\frac{1}{4\pi}M^{-2}( Q^2-M^2)+\frac{1}{4\pi}M^{-2}(M^2-r^2)\,dv' \right|\lesssim \:\mathcal{D}_{\rm o},
\end{split}
\end{equation*}
where we have used Lemmas~\ref{lm:phiinit}, \ref{lm:r.on.H0} and \ref{lm:Qinit} (and we crucially used that $\alpha>0$). Together with \eqref{eq:initdur}, we can therefore conclude \eqref{eq:durhorizon}.

\end{proof}

\section{Statement of the main theorem}\label{sec:main.thm}

We are now ready to give a precise statement of the main theorem. Let us recall (from Section~\ref{sec:null.comment}) that we also consider the coordinate system $(u,v)$, where $u(U)$ is defined via the relation \eqref{U.def}. It will be convenient from this point onwards to use the $u$ (instead of $U$) coordinate.
\begin{theorem}\label{thm:main}
Suppose 
\begin{itemize}
\item the parameters $M$ and $\mathfrak e$ obey\footnote{Note that $$\left(10+5\sqrt{6}-3\sqrt{9+4\sqrt{6}}\right)\sim 9.24\dotsc$$.} 
\begin{equation}\label{par.con.thm}
1-\left(10+5\sqrt{6}-3\sqrt{9+4\sqrt{6}}\right)|\mathfrak{e}|M>0.
\end{equation}
\item the initial data are smooth and satisfy \eqref{eq:gauge.cond}, \eqref{eq:idataphi}, \eqref{eq:odataphi}, \eqref{eq:philimit}, \eqref{eq:rlimit}, \eqref{eq:initdur} and \eqref{eq:Qlimit} for some finite $\mathcal{D}_{\rm o}$ and $\mathcal D_{\rm i}$. 
\end{itemize}
Then for $|u_0|$ sufficiently large depending on $M$, $\mfm$, $\mfe$, $\alp$, $\mathcal{D}_{\rm o}$ and $\mathcal D_{\rm i}$ and $v_0$ sufficiently large depending on $M$, $\mfm$, $\mfe$, $\alp$ and $\mathcal{D}_{\rm o}$ (but not $\mathcal D_{\rm i}$!), the following holds:
\begin{itemize}
\item (Existence of solution) There exists a unique smooth, spherically symmetric solution to \eqref{EMCSFS} in the double null coordinate system in $(u,v)\in (-\infty,u_0]\times [v_0,\infty)$.
\item (Extendibility to the Cauchy horizon) In an appropriate coordinate system, a Cauchy horizon can be attached to the solution so that the metric, the scalar field and the Maxwell field extend continuously to it.
\item (Quantitative estimates) The following estimates hold for all $(u,v)\in (-\infty,u_0)\times [v_0,\infty)$ for some implicit constant depending on $M$, $\mfm$ and $\mfe$ (which shows that the solution is close to extremal Reissner--Nordstr\"om in an appropriate sense):
$$|\phi|(u,v)\ls  \mathcal{D}_{\rm o} v^{-\f 12 - \f{\alp}{2}} + (\mathcal{D}_{\rm o} + \mathcal D_{\rm i}) |u|^{-\f 12},\quad |r-M|(u,v)\ls  \mathcal{D}_{\rm o} v^{-1} + (\mathcal{D}_{\rm o} + \mathcal D_{\rm i})|u|^{-1}, $$
$$|\Om^2-\Om_0^2|(u,v)\ls |u|^{-\f 12}(|u|+v)^{-2},$$
$$|\rd_u r|(u,v)\ls (\mathcal{D}_{\rm o} + \mathcal D_{\rm i})|u|^{-2},\quad |\rd_v r|(u,v)\ls (\mathcal{D}_{\rm o} + \mathcal D_{\rm i})v^{-2}, $$
$$\int_{v_0}^\infty v^2|\rd_v\phi|^2(u,v)\, dv + \int_{-\infty}^{u_0} u^2|\rd_u\phi|^2(u,v)\, du\ls \mathcal{D}_{\rm o}+\mathcal D_{\rm i},$$
\begin{equation*} 
\int_{-\infty}^{u_0} u^2 \left(\partial_u\left(\log \frac{\Omega}{\Omega_0} \right)\right)^2(u,v)\,du+\int_{v_0}^{v_{\infty}} v^2 \left(\partial_v\left(\log \frac{\Omega}{\Omega_0} \right)\right)^2(u,v)\,dv\leq \f 12.
\end{equation*}
\item (Extendibility as a spherically symmetric solution) The solution can be extended non-uniquely in (spacetime) $C^{0,\f 12}\cap W^{1,2}_{loc}$ beyond the Cauchy horizon as a spherically symmetric solution to the Einstein--Maxwell--Klein--Gordon system.
\end{itemize}
\end{theorem}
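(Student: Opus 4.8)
The plan is to run a single bootstrap argument on the region $(u,v)\in(-\infty,u_0]\times[v_0,\infty)$, having first translated the $(U,v)$ initial data of Section~\ref{sec:data} into the $(u,v)$ gauge via \eqref{U.def}, and simultaneously control the gauge-invariant matter quantities $|\phi|,|D_u\phi|,|D_v\phi|$, the charge $Q$, the area radius $r$ and the conformal factor $\Om$. Local existence is the easy ingredient: at the level of regularity $C^0\cap W^{1,2}_{loc}$ with $r$ bounded away from $0$, the spherically symmetric system \eqref{EMCSFS} is well-posed, so it suffices to propagate quantitative bounds. I would take as bootstrap assumptions enlarged versions of all the estimates in the ``Quantitative estimates'' bullet and aim to recover them with the stated constants; a continuity argument then gives that the solution exists on the whole of $(-\infty,u_0]\times[v_0,\infty)$.

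Step 1 (pointwise estimates by the method of characteristics). Using the gauge-invariant bounds of Lemma~\ref{lm:fundthmcalc} together with the transport equations \eqref{eq:Q1}--\eqref{eq:Q2} for $Q$, the Raychaudhuri equations \eqref{eq:raychu}--\eqref{eq:raychv} and the equation \eqref{eq:transpeqr} for $r$, I would integrate out from the initial hypersurfaces, feeding in the initial-data estimates of Lemmas~\ref{lm:phiinit}, \ref{lm:r.on.H0}, \ref{lm:r.on.Hb0}, \ref{lm:Qinit} and the bound \eqref{eq:durhorizon}, to obtain $|r-M|\ls v^{-1}+|u|^{-1}$, $|\rd_v r|\ls v^{-2}$, $|\rd_u r|\ls |u|^{-2}$, smallness of $|Q-M|$, and $|\phi|\ls\mathcal{D}_{\rm o}v^{-1/2-\alp/2}+(\mathcal{D}_{\rm o}+\mathcal D_{\rm i})|u|^{-1/2}$. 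As anticipated in Section~\ref{sec:ideas}, the zeroth-order terms in the wave equation for $r$ force one to use \emph{both} the wave equation and the Raychaudhuri constraints, in different subregions of $\mathcal Q$, to avoid logarithmic losses; and the pointwise bound for $|\phi|$ relies on the non-degenerate energy flux produced in Step 2.

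Step 2 (renormalised energy estimates), which I regard as the heart of the argument. I would apply the vector field multiplier $V=|u|^2\rd_u+v^2\rd_v$ to the Maxwell--Klein--Gordon subsystem on the dynamical background. Because $Q\to M\neq 0$, the naive energy associated to $\mathbb T^{(sf)}+\mathbb T^{(em)}$ is infinite, so one renormalises by subtracting the background charge contribution; the price is that the renormalised energy is no longer manifestly coercive — here a Hardy-type inequality restores coercivity, and it is exactly this step that forces the parameter restriction \eqref{par.con.thm} — and that new bulk error terms appear. The decisive point is the cancellation recorded in Footnote~\ref{footnote.cancellation}: the dangerous weights such as $v\Om^{-2}$ cancel in the deformation tensor of $V$ (and in the renormalisation terms), leaving bulk errors that are merely borderline integrable and can be absorbed by the boundary fluxes using the smallness of $|u_0|^{-1}$, $v_0^{-1}$ and of the data, without a genuine Gr\"onwall loop. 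With $r$ and $\phi$ now controlled, the wave equation \eqref{eq:waveqOmega} for $\log(\Om/\Om_0)$ reduces, modulo lower-order terms, to the massive Klein--Gordon-type equation \eqref{Om.Klein.Gordon}, essentially decoupled from the rest; the same $V$-multiplier argument (with the massive stress tensor, which retains the cancellation of Step 2) gives the weighted $L^2$ bound on $\rd_{u,v}\log(\Om/\Om_0)$ and hence, by a one-dimensional Sobolev argument, the pointwise estimate $|\Om^2-\Om_0^2|\ls|u|^{-1/2}(|u|+v)^{-2}$. This closes all bootstrap assumptions with the stated constants and completes the existence and quantitative statements.

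Step 3 (extension to and beyond the Cauchy horizon). Passing to the coordinate $(u,V)$ of Section~\ref{sec:coord.CH}, the weighted energy bounds show that $\phi$, $r$, $\Om$ (hence the metric) and $Q$ extend in (spacetime) $C^{0,1/2}\cap W^{1,2}_{loc}$ up to $\{V=0\}$; the embedding $W^{1,2}\hookrightarrow C^{0,1/2}$ in one dimension upgrades continuity to H\"older-$\tfrac12$, and \eqref{Hawking.mass} together with the bounds on $r,\Om$ gives continuity of the Hawking mass. To extend \emph{beyond} the Cauchy horizon, I would fix a convenient gauge for $A$, pose a characteristic initial value problem with data on $\{V=0\}$ together with an ingoing null cone issuing from it, at the regularity just established, and invoke the low-regularity well-posedness theory for the spherically symmetric system developed in Section~\ref{sec:extension}; the freedom in prescribing this characteristic data yields the asserted non-uniqueness. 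The principal obstacle throughout is Step 2: constructing a renormalised energy that is simultaneously coercive (the Hardy estimate, whence \eqref{par.con.thm}) and whose bulk error terms — after the weight cancellation — can be controlled by the boundary fluxes without incurring the logarithmic divergence that would defeat any naive Gr\"onwall argument.
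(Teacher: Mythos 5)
Your proposal follows essentially the same route as the paper: the same bootstrap on the weighted energies of $\phi$ and $\log(\Omega/\Omega_0)$ together with $|r-M|$, the same pointwise estimates combining the wave equation and Raychaudhuri for $r$ in different regions, the same renormalised energy with the multiplier $|u|^2\partial_u+v^2\partial_v$ whose coercivity rests on a Hardy inequality (the source of \eqref{par.con.thm}) and whose bulk errors close via the weight cancellation, the same treatment of $\log(\Omega/\Omega_0)$ as a perturbed positive-mass Klein--Gordon equation, and the same low-regularity characteristic well-posedness to extend beyond the Cauchy horizon. The only cosmetic difference is that the paper poses the extension data on hypersurfaces $\{V=V_n\}$ with $V_n\to 0$ and exploits a uniform existence time, rather than directly on the Cauchy horizon, but this does not alter the argument.
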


\begin{remark}[$v_0\geq 1$, $u_0\leq -1$.]
Without loss of generality, we will from now on assume that $v_0\geq 1$ and $u_0\leq -1$.
\end{remark}

\begin{remark}[Validity of the estimates in Section~\ref{sec:data}]
Recall that in Section~\ref{sec:data}, some of the estimates that were proven depend on the assumptions \eqref{v0.first.est} and \eqref{U0.first.est}. From now on, we take $v_0$ sufficiently large and $u_0$ sufficiently negative (in a manner allowed by Theorem~\ref{thm:main}) so that \eqref{v0.first.est} and \eqref{U0.first.est} hold.
\end{remark}

\begin{remark}[Relaxing the largeness of $v_0$]
Note that $v_0$ is assumed to be large depending on $M$, $\mfm$, $\mfe$, $\alp$ and $\mathcal{D}_{\rm o}$ so that we restrict our attention to a region where the geometry is close to that of extremal Reissner--Nordstr\"om. However, in general, if we are given data with $v_{0,i}=1$ (say) and $\mathcal{D}_{\rm o}$ not necessarily small, we can do the following:
\begin{enumerate}
\item First, find a $v_0$ (sufficiently large) such that $v_0$ is sufficiently large depending on $M$, $\mfm$, $\mfe$, $\alp$ and $\mathcal{D}_{\rm o}$ in a way that is required by Theorem~\ref{thm:main}. 
\item Solve a \underline{finite} characteristic initial value problem in $(-\infty,u_0]\times [v_{0,i},v_0]$ for some $u_0$ sufficiently negative. (Such a problem can always be solved for $u_0$ sufficiently negative. This can be viewed as a restatement of the fact that for \underline{local} characteristic initial value problems, one only needs the smallness of \underline{one} characteristic length, as long as the other characteristic length is \underline{finite}, cf.~\cite{L.local}.)
\item Now, let $\mathcal D_{\rm i}$ be the size of the \emph{new data} on $\{v=v_0\}$ which is obtained from the previous step. By choosing $u_0$ smaller if necessary, it can be arranged so that $|u_0|$ is large in terms of $M$, $\mfm$, $\mfe$ and $\mathcal{D}_{\rm o} + \mathcal D_{\rm i}$ in a way consistent with Theorem~\ref{thm:main}.
\item Theorem~\ref{thm:main} can now be applied to obtain a solution in $(-\infty,u_0]\times [v_0,\infty)$ such that the conclusions of Theorem~\ref{thm:main} hold.
\end{enumerate}
\end{remark}

In the case $\mfe=\mfm=0$, we obtain the following additional regularity of the scalar field:
\begin{theorem}\label{thm:Lipschitz}
In the case $\mfe=\mfm=0$, suppose that in addition to the assumptions of Theorem~\ref{thm:main}, the following pointwise bounds hold for the initial data:
\begin{equation}\label{eq:extra.assump}
\sup_{u\in (-\infty,u_0]} |u|^2|\rd_u \phi|(u,v_0) + \sup_{v\in [v_0,\infty)} v^2 |\rd_v\phi|(-\infty,v)<\infty.
\end{equation}
Then, taking $u_0$ more negative if necessary, in the $(u,V)$ coordinate system (see \eqref{Vvrelation}), the scalar field is Lipschitz up to the Cauchy horizon.
\end{theorem}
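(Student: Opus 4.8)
The plan is to upgrade the $W^{1,2}$-type bounds already available from Theorem~\ref{thm:main} to pointwise $L^\infty$ control of $\partial_v\phi$ and $\partial_u\phi$ in the Cauchy-horizon-regular coordinates, since Lipschitz regularity up to $\{V=0\}$ is exactly the statement that $\partial_u\phi$ and $\partial_V\phi$ are bounded. When $\mfe=\mfm=0$, equations \eqref{eq:phi1}--\eqref{eq:phi2} collapse to the single linear wave equation $\partial_u\partial_v\phi = -r^{-1}(\partial_u r\,\partial_v\phi + \partial_v r\,\partial_u\phi)$, which conveniently can be written as a first-order transport equation for the quantities $r\,\partial_v\phi$ and $r\,\partial_u\phi$: indeed $\partial_u(r\partial_v\phi) = -\partial_v r\,\partial_u\phi$ and $\partial_v(r\partial_u\phi) = -\partial_u r\,\partial_v\phi$. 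So the first step is to set $\Phi_v := r\partial_v\phi$, $\Phi_u := r\partial_u\phi$ and integrate these transport equations along the characteristics.

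The key estimates we already have are $|r-M|\ls(|u|+v)^{-1}$, $|\partial_u r|\ls|u|^{-2}$, $|\partial_v r|\ls v^{-2}$, plus the extra data assumption \eqref{eq:extra.assump} which gives $|\Phi_u|(u,v_0)\ls|u|^{-2}$ and $|\Phi_v|(-\infty,v)\ls v^{-2}$ — more precisely one should first pass from the $(u,v)$ coordinates to $(u,V)$, where $\partial_V\phi = (dv/dV)\partial_v\phi$ and $dv/dV = \Omega_0^{-2}(-1,v)\sim (1+v)^2/(4M^2)$, so that $|r\partial_V\phi|(u,v_0)$ and the corresponding data are again bounded. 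The second step is then a bootstrap/Grönwall argument: assume $|\Phi_v|(u,v)\ls v^{-2}$ and $|r\partial_u\phi|(u,v)\ls|u|^{-2}$ on a region, integrate $\partial_u(r\partial_v\phi) = -\partial_v r\,\partial_u\phi$ in $u$ from $v_0$-data using $|\partial_v r|\ls v^{-2}$ and the assumed bound on $\partial_u\phi$ to recover (and improve) the bound on $\Phi_v$, and symmetrically integrate $\partial_v(r\partial_u\phi) = -\partial_u r\,\partial_v\phi$ in $v$; the double integral $\int |\partial_v r||\partial_u\phi|\,du \ls \int v^{-2}|u'|^{-2}du' \ls v^{-2}$ converges because of the strong decay in both variables, so there is no logarithmic loss here (this is exactly why the massless chargeless case is easier). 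Converting back, $|\partial_V\phi|\ls (dv/dV)v^{-2}\ls M^{-2}$ stays bounded up to $V=0$, and $|\partial_u\phi|\ls|u|^{-2}$ as well, which gives the Lipschitz bound in $(u,V)$.

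The step that needs the most care, and which I expect to be the main (if mild) obstacle, is the interchange of the two transport estimates near the Cauchy horizon: one has to make sure the region can be exhausted by a nested family of characteristic rectangles on which the bootstrap constants do not degenerate as $V\to 0$, i.e.\ that the implicit constants in $|\Phi_v|\ls v^{-2}$ are uniform up to the Cauchy horizon and do not depend on how close one is to $\{V=0\}$. This is where one uses that $v^{-2}$ is integrable in $v$ all the way to $v=\infty$ (equivalently $dv/dV$ times $v^{-2}$ is integrable in $V$ down to $V=0$), so the Grönwall factor $\exp(\int (\text{coefficients}))$ is bounded. One should also verify that the coefficient $r^{-1}$ stays bounded (true since $r\to M>0$, using $|r-M|\ls(|u|+v)^{-1}$ and taking $|u_0|$ large), which is why the theorem allows taking $u_0$ more negative. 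Once uniform bounds on $\partial_u\phi$ and $\partial_V\phi$ are in hand, the continuous extension of $\phi$ across $\{V=0\}$ (already given by Theorem~\ref{thm:main}) together with these bounds immediately yields that the extension is Lipschitz in a neighborhood of the Cauchy horizon in the $(u,V)$ chart.
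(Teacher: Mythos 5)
Your proposal is correct and is essentially the paper's own argument: the same rewriting of the wave equation as the transport identities $\rd_u(r\rd_v\phi)=-\rd_v r\,\rd_u\phi$ and $\rd_v(r\rd_u\phi)=-\rd_u r\,\rd_v\phi$, the same bootstrap on $|u|^2|\rd_u\phi|$ and $v^2|\rd_v\phi|$ closed via the decay of $\rd_u r$ and $\rd_v r$ (with smallness from $|u_0|^{-1}$ and $v_0^{-1}$ rather than an explicit Gr\"onwall factor), and the same conversion $\rd_V\phi=(dv/dV)\rd_v\phi$ to conclude Lipschitz regularity up to $\{V=0\}$.
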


\section{The main bootstrap argument}\label{sec:bootstrap}

\subsection{Setup of the bootstrap}
We will assume that 
\begin{itemize}
\item there exists a smooth solution $(\phi,\Omega,r,A)$ to the system of equations \eqref{eq:transpeqr}--\eqref{eq:raychv} in the rectangle $D_{U_0,[v_0,v_{\infty})}=\{(U,v)\,|\, 0\leq U\leq U_0, \,v_0\leq v< v_{\infty}\}$, such that 
\item the initial gauge conditions are satisfied, i.e.~$\hat{\Omega}^2(U,v_0)=\hat{\Omega}_0^2(U,v_0)$ and $\hat{\Omega}^2(0,v)=\hat{\Omega}_0^2(0,v)$, and
\item the initial conditions for $\phi$, $r$, $Q$ are attained.
\end{itemize}
On this region, we will moreover assume that certain \emph{bootstrap assumptions} hold (cf.~Section~\ref{sec:BA}). Our goal will then be to improve these bootstrap assumptions, which then by continuity, implies that the above three properties hold in for all $v\geq v_0$, i.e.~in the region $D_{U_0,v_0}=D_{U_0,[v_0,\infty)}$.

Recall again that we often use the $(u,v)$ instead the $(U,v)$ coordinates. Abusing notation, we will also write
\begin{equation*}
D_{u_0,[v_0,v_{\infty})}=D_{U_0,[v_0,v_{\infty})}=\{(u,v)\,|\, -\infty<u\leq u_0:=u(U_0),\, v_0\leq v\leq v_{\infty}\}.
\end{equation*}

\subsection{Bootstrap assumptions}\label{sec:BA}

Fix $\eta>0$ sufficiently small (depending only on $\mfe$ and $M$) so that 
\begin{equation}\label{eta.def}
1-\left(10+5\sqrt{6}-3\sqrt{9+4\sqrt{6}}\right)(1+\eta)|\mathfrak{e}|M> 0.
\end{equation}
(Such an $\eta$ exists in view of \eqref{par.con.thm}.) Define
\begin{equation}\label{mu.def}
\mu=\left(1-\left(10+5\sqrt{6}-3\sqrt{9+4\sqrt{6}}\right)(1+\eta)|\mathfrak{e}|M\right).
\end{equation}
Let us make the following bootstrap assumptions for the quantities $(\phi,\Omega,r)$ in $D_{U_0,v_{\infty}}$, for some $\mathcal A_\phi\geq 1$ to be chosen later: 
\begin{align}
\label{eq:bootstrapOmega}\tag{A1}
&\sup_{v\in [v_0,v_\infty]}\int_{-\infty}^{u_0} u'^2 \left(\partial_u\left(\log \frac{\Omega}{\Omega_0} \right)\right)^2(u',v)\,du'+\sup_{u\in(-\infty,u_0]}\int_{v_0}^{v_{\infty}} v'^2 \left(\partial_v\left(\log \frac{\Omega}{\Omega_0} \right)\right)^2(u,v')\,dv'\leq \: M,\\
\label{eq:bootstrapphi} \tag{A2}
&\sup_{v\in [v_0,v_\infty]}\int_{-\infty}^{u_0} u'^2 |D_u \phi |^2(u',v)\,du'+\sup_{u\in(-\infty,u_0]}\int_{v_0}^{v_{\infty}}v'^2|D_v\phi|^2(u,v')\,dv'\leq \:\mathcal A_\phi (\mathcal{D}_{\rm o}+\mathcal D_{\rm i}),\\
\label{eq:bootstrapr} \tag{A3}
&\sup_{u\in(-\infty,u_0],\,v\in [v_0,v_\infty]}|r-M|(u,v)\leq \: \frac{M}{2}.
\end{align}
Our goal will be to show that under these assumptions, for $|u_0|$ sufficiently large depending on $M$, $\mfm$, $\mfe$, $\alp$, $\eta$, $\mathcal{D}_{\rm o}$ and $\mathcal D_{\rm i}$ and $v_0$ sufficiently large depending on $M$, $\mfm$, $\mfe$, $\alp$, $\eta$ and $\mathcal{D}_{\rm o}$,
\begin{itemize}
\item the estimate \eqref{eq:bootstrapOmega} can be improved so that the RHS can be replaced by $\f {M}2$;
\item the estimate \eqref{eq:bootstrapphi} can be improved so that the RHS can be replaced by $C(\mathcal{D}_{\rm o}+\mathcal D_{\rm i})$, where $C$ is a constant depending only on $M$, $\mfm$, $\mfe$, $\alp$ and $\eta$;
\item the estimate \eqref{eq:bootstrapr} can be improved to $|r-M| \leq \frac{M}{4}$.
\end{itemize}

\subsection{Conventions regarding constants}

In closing the bootstrap argument, the main source of smallness will come from choosing $|u_0|$ and $v_0$ appropriately large. We remark on our conventions regarding the constants that will be used in the bootstrap argument.
\begin{itemize}
\item All the implicit constants (either in the form of $C$ or $\ls$) are allowed to depend on the parameters $M$, $\mfm$, $\mfe$ and $\alp$. In particular, they are allowed to depend on $\eta$ (defined in \eqref{eta.def}) and $\mu$ (defined in \eqref{mu.def}). There will be places where the exact values of these parameters matter (hence the corresponding restriction in Theorem~\ref{thm:main}): at those places the constants will be explicitly written.
\item $|u_0|$ is taken to be large depending on $M$, $\mfm$, $\mfe$, $\alp$, $\eta$, $\mathcal{D}_{\rm o}$ and $\mathcal D_{\rm i}$, and $v_0$ is taken to be large depending on $M$, $\mfm$, $\mfe$, $\alp$, $\eta$ and $\mathcal{D}_{\rm o}$. In particular, we will use
$$\mathcal{D}_{\rm o} v_0^{-\f 1{10}} \ll 1,\quad (\mathcal{D}_{\rm o} + \mathcal D_{\rm i}) |u_0|^{-\f 1{10}} \ll 1$$
without explicit comments, where by $\ll 1$, we mean that it is small with respect to the constants appearing in the argument that depend on $M$, $\mfm$, $\mfe$, $\alp$ and $\eta$.
\item $\mathcal A_\phi\geq 1$ will eventually be chosen to be large depending $M$, $\mfm$, $\mfe$ and $\alp$, \textbf{but not on $\mathcal{D}_{\rm o}$ and $\mathcal D_{\rm i}$}. In particular, we will also use
\begin{equation}\label{const.adm}
\mathcal A_\phi^3\mathcal{D}_{\rm o} v_0^{-\f 1{10}} \ll 1,\quad \mathcal A_\phi^3(\mathcal{D}_{\rm o} + \mathcal D_{\rm i}) |u_0|^{-\f 1{10}} \ll 1
\end{equation}
without explicit comments.
\end{itemize}

\section{Pointwise estimates}\label{sec:pointwise}

\begin{proposition}
For all $-\infty<u\leq u_0$, $v_0\leq v\leq v_{\infty}$ we have that:
\begin{align}
\label{eq:Linftyphi}
|\phi|(u,v)\lesssim &\: \sqrt{\mathcal{D}_{\rm o}}v^{-\frac{1}{2}-\f{\alp}{2}}+ \mathcal A_{\phi}^{\f 12}\sqrt{\mathcal{D}_{\rm o}+\mathcal D_{\rm i}}|u|^{-\frac{1}{2}},\\
\label{eq:LinftyQ}
 |Q-M|(u,v)\lesssim &\: \mathcal{D}_{\rm o}v^{-1-\alp}+ \mathcal A_{\phi}(\mathcal{D}_{\rm o}+\mathcal D_{\rm i}) |u|^{-1},\\ 
\label{eq:LinftyOmegav3}
\left|\Omega-\Omega_0\right|(u,v)\lesssim&\:  |u|^{-\frac{1}{2}}(v+|u|)^{-1},\\
\label{eq:LinftyOmegav4}
\left|\Omega^2-\Omega_0^2\right|(u,v)\lesssim&\:  |u|^{-\frac{1}{2}}(v+|u|)^{-2},\\
\label{eq:LinftyOmegav1}
\left|\Omega-\frac{2M}{(v+|u|)}\right|(u,v)\lesssim&\:  |u|^{-\frac{1}{2}}(v+|u|)^{-1},\\
\label{eq:LinftyOmegav2}
\left|\Omega^2-\frac{4M^2}{(v+|u|)^2}\right|(u,v)\lesssim&\:  |u|^{-\frac{1}{2}}(v+|u|)^{-2}.
\end{align}
In particular, 
\begin{align}
\label{eq:estrangeQ}
\frac{1}{2}M\leq |Q|(u,v)\leq&\: \frac{3}{2}M,\\
\label{eq:estrangeOmega}
 2M^2(v+|u|)^{-2}\leq&\: \Omega^2(u,v)\leq 6M^2(v+|u|)^{-2}.
\end{align}
\end{proposition}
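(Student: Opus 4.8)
The plan is to derive all the pointwise bounds from the bootstrap assumptions \eqref{eq:bootstrapOmega}--\eqref{eq:bootstrapr} together with the data estimates of Section~\ref{sec:data}, using the fundamental theorem of calculus in the gauge-invariant form of Lemma~\ref{lm:fundthmcalc} and Cauchy--Schwarz. First I would establish \eqref{eq:Linftyphi}: integrate $|D_u\phi|$ from the event horizon. Splitting $|\phi|(u,v) \le |\phi|(u_1, v) + \int_{u_1}^{u} |D_u\phi|(u',v)\, du'$ and sending $u_1 \to -\infty$, one uses \eqref{eq:bootstrapphi} and the weight $\int_{-\infty}^{u_0} u'^{-2}\, du' \lesssim |u|^{-1}$ on the relevant tail to get the $|u|^{-1/2}$ contribution; the $v$-decaying contribution comes from $|\phi|(0,v)$, which by Lemma~\ref{lm:phiinit} is $\lesssim \sqrt{\mathcal D_{\rm o}} v^{-1/2 - \alpha/2}$. (Here one needs to check that $|\phi|$ along $\{U=0\}$ is controlled, which is \eqref{eq:oinitphi}, and that the coordinate change $U \mapsto u$ does not spoil the weights, using \eqref{U.behaviour}.)

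Next, for \eqref{eq:LinftyQ}, I would integrate \eqref{eq:Q1} in $u$ from the event horizon: $\partial_u Q = 2\pi i r^2 \mfe(\phi \overline{D_u\phi} - \overline\phi D_u\phi)$, so $|Q - Q(0,v)| \lesssim \int_{-\infty}^{u} r^2 |\phi| |D_u\phi|\, du'$, and then apply Cauchy--Schwarz with the already-established bound \eqref{eq:Linftyphi} on $\sup |\phi|$, the bound \eqref{eq:bootstrapphi} on $\int u'^2 |D_u\phi|^2$, and $r \le \tfrac{3M}{2}$ from \eqref{eq:bootstrapr}. The data term $|Q(0,v) - M|$ is \eqref{eq:oinitQ}. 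The bound \eqref{eq:estrangeQ} follows immediately once $|u_0|, v_0$ are large (using the smallness conventions \eqref{const.adm}).

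For the $\Omega$ estimates \eqref{eq:LinftyOmegav3}--\eqref{eq:LinftyOmegav2}, the key observation is that \eqref{eq:LinftyOmegav4} follows from \eqref{eq:LinftyOmegav3} (and vice versa) once we know $\Omega, \Omega_0 \sim (v+|u|)^{-1}$, and similarly \eqref{eq:LinftyOmegav1}, \eqref{eq:LinftyOmegav2} follow from \eqref{eq:LinftyOmegav3}, \eqref{eq:LinftyOmegav4} together with Lemma~\ref{RN.est} (specifically \eqref{eq:Om0.est}); so the heart of the matter is \eqref{eq:LinftyOmegav3}. To get this I would use the bootstrap assumption \eqref{eq:bootstrapOmega} on $\log(\Omega/\Omega_0)$: write $\log\frac{\Omega}{\Omega_0}(u,v) = \log\frac{\Omega}{\Omega_0}(u, v_0) + \int_{v_0}^v \partial_v\log\frac{\Omega}{\Omega_0}(u,v')\,dv'$, where the first term vanishes (or is small) by the gauge condition \eqref{eq:gauge.cond}, and bound the integral by Cauchy--Schwarz using \eqref{eq:bootstrapOmega} and $\int_{v_0}^v v'^{-2}\,dv' \lesssim v_0^{-1}$, giving $|\log\frac{\Omega}{\Omega_0}| \lesssim M^{1/2} v_0^{-1/2}$; an analogous integration in $u$ from the event horizon (using the gauge normalisation $\hat\Omega(0,v) = \hat\Omega_0(0,v)$) gives the $|u|^{-1/2}$ gain. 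Then $|\Omega - \Omega_0| \le \Omega_0 |e^{\log(\Omega/\Omega_0)} - 1| \lesssim \Omega_0 \cdot |u|^{-1/2} \lesssim |u|^{-1/2}(v+|u|)^{-1}$ by \eqref{eq:Om0.est}. Finally \eqref{eq:estrangeOmega} follows by combining \eqref{eq:LinftyOmegav2} with \eqref{eq:Om0.est} and taking $|u_0|, v_0$ large.

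The main obstacle I anticipate is bookkeeping with the three coordinate systems: the bootstrap assumptions and data are naturally stated partly in $(U,v)$ and partly in $(u,v)$, and one must be careful that the weights $u'^2$ versus $U'^2$ transform correctly under \eqref{U.def} (using \eqref{U.behaviour} and the fact that $u \sim U^{-1}$ for large $|u|$) so that the integrated quantities genuinely decay like $|u|^{-1/2}$ rather than, say, a power of $U$. A secondary subtlety is ensuring that the contributions from the ``corner'' $\{v = v_0\}$ and $\{U=0\}$ are controlled by the data estimates of Section~\ref{sec:data} rather than by the (not-yet-improved) bootstrap constants, so that the argument is not circular; this is where the explicit smallness $\mathcal D_{\rm o} v_0^{-1/10} \ll 1$ and $(\mathcal D_{\rm o} + \mathcal D_{\rm i})|u_0|^{-1/10}\ll 1$ enter to absorb error terms, and where one must keep track of the powers of $\mathcal A_\phi$ to make sure the final constants in \eqref{eq:Linftyphi}--\eqref{eq:LinftyQ} (which are allowed to carry $\mathcal A_\phi$) are consistent with later closing the bootstrap.
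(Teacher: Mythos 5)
Your proposal is correct and takes essentially the same route as the paper: the fundamental theorem of calculus in the gauge-invariant form of Lemma~\ref{lm:fundthmcalc} plus Cauchy--Schwarz against the bootstrap fluxes for $\phi$ and $Q$, and, for $\Omega$, the integration of $\partial_u\log(\Omega/\Omega_0)$ in $u$ from the event horizon (where the gauge condition kills the boundary term) to get the crucial $|u|^{-1/2}$ bound, followed by exponentiation and Lemma~\ref{RN.est}. The $v$-integration you mention first only yields $v_0^{-1/2}$ and is superfluous; the $u$-integration you then invoke is exactly the paper's argument.
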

\begin{proof}
\textbf{Proof of \eqref{eq:Linftyphi}.} By \eqref{eq:fundthmcalcphi1} and the Cauchy--Schwarz inequality, we obtain
\begin{equation}
\label{eq:fundcalc1}
|\phi|(u,v)\leq |\phi|(-\infty,v)+ |u|^{-\frac{1}{2}} \sqrt{\int_{-\infty}^u u'^2 |{D}_u {\phi}|^2(u',v)\,du'}.
\end{equation}
Using \eqref{eq:oinitphi} and the bootstrap assumption \eqref{eq:bootstrapphi} to control the first and second term respectively, we obtain \eqref{eq:Linftyphi}.

\textbf{Proof of \eqref{eq:LinftyQ}.} Using the estimate \eqref{eq:Linftyphi} for $\phi$, the bootstrap assumption \eqref{eq:bootstrapr} for $r$ together with \eqref{eq:Q1}, we obtain a pointwise estimate for $|Q-M|$:
\begin{equation*}
\begin{split}
|Q-M|(u,v)\leq&\: |Q-M|(-\infty,v)+\int_{-\infty}^u|\partial_uQ|(u',v)\,du'\\
\leq &\: |Q-M|(-\infty,v)+4\pi |\mathfrak{e}|\int_{-\infty}^ur^2|\phi||D_u\phi|(u',v)\,du'\\
\leq &\: |Q-M|(-\infty,v)+4\pi |\mathfrak{e}||u|^{-\frac{1}{2}}\sup_{-\infty<u'\leq u}r^2|\phi|(u',v)\cdot  \sqrt{\int_{-\infty}^uu^2|D_u\phi|^2(u',v)\,du'}\\
\leq &\: |Q-M|(-\infty,v)+C\sqrt{\mathcal{A}_{\phi}(\mathcal{D}_{\rm o}+\mathcal D_{\rm i})}|u|^{-\frac{1}{2}}\left(|u|^{-\frac{1}{2}}\sqrt{\mathcal{A}_{\phi}(\mathcal{D}_{\rm o}+\mathcal D_{\rm i})}+v^{-\frac{1}{2}-\f{\alp}2}\sqrt{\mathcal{D}_{\rm o}}\right).
\end{split}
\end{equation*}
Using \eqref{eq:oinitQ} and Young inequality, we therefore conclude that
\begin{equation*}
|Q-M|(u,v) \ls \mathcal{D}_{\rm o}v^{-1-\alp}+ \mathcal{A}_{\phi}(\mathcal{D}_{\rm o}+\mathcal D_{\rm i}) |u|^{-1}.
\end{equation*}

\textbf{Proof of \eqref{eq:LinftyOmegav3}, \eqref{eq:LinftyOmegav4}, \eqref{eq:LinftyOmegav1} and \eqref{eq:LinftyOmegav2}.} By our choice of initial gauge \eqref{eq:gauge.cond}, we have that
\begin{equation*}
\log \frac{\Omega}{\Omega_0}(-\infty,v)=\log \frac{\hat{\Omega}}{\hat{\Omega}_0}(U=0,v)=0,
\end{equation*}
so we can estimate using \eqref{eq:bootstrapOmega}
\begin{align*}
\left|\log \frac{\Omega}{\Omega_0}\right|(u,v)\leq  |u|^{-\frac{1}{2}} \sqrt{\int_{-\infty}^{u} u'^2 \left(\partial_u\left(\log \frac{\Omega}{\Omega_0} \right)\right)^2(u',v')\,du'}\leq M^{\f 12}|u|^{-\f 12}.
\end{align*}
Using \eqref{eq:bootstrapOmega} and the simple inequality $|e^\vartheta - 1|\leq |\vartheta| e^{|\vartheta|}$, we have
$$\left|\f{\Omega}{\Omega_0} -1\right|(u,v) \leq M^{\f 12}|u|^{-\f 12} \max\left\{\f{\Omega}{\Omega_0}, \f{\Omega_0}{\Omega}\right\}(u,v),\quad \left|\f{\Omega_0}{\Omega} -1\right|(u,v) \leq M^{\f 12}|u|^{-\f 12} \max\left\{\f{\Omega}{\Omega_0}, \f{\Omega_0}{\Omega}\right\}(u,v).$$
We now consider two cases. Suppose $\f{\Omega}{\Omega_0}(u,v)>1$ for some $(u,v)$, we have
$$\left|\f{\Omega}{\Omega_0} -1\right|(u,v)\leq M^{\f 12}|u|^{-\f 12}\left(\f{\Omega}{\Omega_0}-1\right)(u,v) + M^{\f 12}|u|^{-\f 12},$$
which, after choosing $u_0$ to satisfy $M^{\f 12}|u_0|^{-\f 12}\leq \f 12$, implies
\begin{equation}\label{Om.diff.case.1}
\left|\f{\Omega}{\Omega_0} -1\right|(u,v)\leq 2M^{\f 12}|u|^{-\f 12}.
\end{equation}
Multiplying \eqref{Om.diff.case.1} by $\Om_0$ in particular implies 
\begin{equation}\label{eq:Om.diff.est.basic}
|\Omega-\Omega_0|(u,v)\leq 4M^{\f 12}|u|^{-\frac{1}{2}}\Omega_0(u,v).
\end{equation}
in this case.
On the other hand, if $\f{\Omega}{\Omega_0}(u,v)<1$ for some $(u,v)$, we have by a similar argument that for $M^{\f 12}|u_0|^{-\f 12}\leq \f 12$,
\begin{equation*}
\left|\f{\Omega_0}{\Omega} -1\right|(u,v)\leq 2M^{\f 12}|u|^{-\f 12}.
\end{equation*}
This then implies
$$|\Om_0-\Om|(u,v)\leq 2M^{\f 12}|u|^{-\f 12} |\Om-\Om_0|(u,v) + 2M^{\f 12}|u|^{-\f 12}\Om_0(u,v).$$
Choosing $M^{\f 12}|u_0|^{-\f 12}\leq \f 14$ implies that we also have \eqref{eq:Om.diff.est.basic} in this case. Using \eqref{eq:Om.diff.est.basic} and \eqref{eq:Om0.est}, we conclude \eqref{eq:LinftyOmegav3}.

For \eqref{eq:LinftyOmegav4}, we use \eqref{eq:Om.diff.est.basic} twice to obtain
$$|\Omega^2-\Omega_0^2|(u,v)\leq 4M^{\f 12}|u|^{-\frac{1}{2}}\Omega_0(\Om+\Om_0)\leq 16M|u|^{-1}\Omega_0^2 + 8 M^{\f 12}|u|^{-\frac{1}{2}}\Omega_0^2,$$
which, after choosing $|u_0|$ to be sufficiently large and using \eqref{eq:Om0.est}, implies \eqref{eq:LinftyOmegav4}.

Finally, \eqref{eq:LinftyOmegav1} and \eqref{eq:LinftyOmegav2} follow from \eqref{eq:LinftyOmegav3}, \eqref{eq:LinftyOmegav4} and \eqref{eq:Om0.est} (with $\beta = \f 12$).

\textbf{Proof of \eqref{eq:estrangeQ} and \eqref{eq:estrangeOmega}.} \eqref{eq:estrangeQ} is an immediate consequence of \eqref{eq:LinftyQ} while \eqref{eq:estrangeOmega} is an immediate consequence of \eqref{eq:LinftyOmegav2}. \qedhere

\end{proof}

\begin{proposition}\label{prop:rest}
The following estimates hold:
\begin{align}
\label{dur.est}|\partial_ur|(u,v)\lesssim &\:\mathcal A_{\phi}(\mathcal{D}_{\rm o}+\mathcal{D}_{\rm i})|u|^{-2},\\
\label{dvr.est}|\partial_vr|(u,v)\lesssim &\:\mathcal{D}_{\rm o}v^{-2}+\mathcal A_{\phi}(\mathcal{D}_{\rm o}+\mathcal{D}_{\rm i})\min\{v^{-2},|u|^{-2}\},\\
\label{r.est}|r(u,v)-M|\lesssim &\: \mathcal{D}_{\rm o}v^{-1}+\mathcal A_{\phi}(\mathcal{D}_{\rm o}+\mathcal{D}_{\rm i})|u|^{-1},\\
\label{r.est.2}|r(u,v)-r_0(u,v)|\lesssim &\: \mathcal{D}_{\rm o} v^{-1}+\mathcal A_{\phi}(\mathcal{D}_{\rm o}+\mathcal{D}_{\rm i})|u|^{-1}+ (v+|u|)^{-1}.
\end{align}
In particular,
\begin{equation}\label{r.BS.improve}
|r(u,v)-M|\leq \f M 4,
\end{equation}
which improves the bootstrap assumption \eqref{eq:bootstrapr}.
\end{proposition}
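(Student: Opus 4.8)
The plan is to prove the estimates \eqref{dur.est}--\eqref{r.est.2} by the method of characteristics, using the Raychaudhuri equations \eqref{eq:raychu}, \eqref{eq:raychv} as monotone ODEs for $\Omega^{-2}\partial_u r$ and $\Omega^{-2}\partial_v r$, and the transport equation \eqref{eq:transpeqr} for $r\partial_u\partial_v r$, feeding in the pointwise bounds for $\phi$, $Q$, $\Omega$ already established in the previous Proposition together with the bootstrap assumptions \eqref{eq:bootstrapphi}, \eqref{eq:bootstrapr}. Throughout, the initial data bounds from Lemmas~\ref{lm:r.on.H0}, \ref{lm:r.on.Hb0} and the estimate \eqref{eq:durhorizon} provide the ``boundary values'' on $H_0$ and $\underline H_{v_0}$.

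First I would estimate $\partial_u r$. Integrating \eqref{eq:raychu} in $u$ from the event horizon and using that the right-hand side has a definite sign, one gets $\Omega^{-2}\partial_u r(u,v) \leq \Omega^{-2}\partial_u r(-\infty,v)$, i.e.\ $\partial_u r < 0$ wherever it is initially negative; combined with \eqref{eq:durhorizon} and the $\Omega^2$ bounds \eqref{eq:estrangeOmega} this already controls the sign and gives $|\Omega^{-2}\partial_u r|(u,v) \lesssim 1 + \int_{-\infty}^u 4\pi r \Omega^{-2}|D_u\phi|^2\, du'$. The integral is bounded using $\Omega^{-2} \lesssim (v+|u'|)^2$, $r\lesssim M$ and the bootstrap $\int u'^2 |D_u\phi|^2\, du' \lesssim \mathcal A_\phi(\mathcal D_{\rm o}+\mathcal D_{\rm i})$, but one must be a little careful to land on the weight $|u|^{-2}$: since $\Omega^2 \sim (v+|u|)^{-2}$, we get $|\partial_u r|(u,v)\lesssim (v+|u|)^{-2}\big(1 + \mathcal A_\phi(\mathcal D_{\rm o}+\mathcal D_{\rm i})\big) \lesssim \mathcal A_\phi(\mathcal D_{\rm o}+\mathcal D_{\rm i})|u|^{-2}$, which is \eqref{dur.est}. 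The analogous argument along $v$-characteristics from $\underline H_{v_0}$, using \eqref{eq:raychv}, the data bound $|\partial_v r|(u,v_0)\lesssim \mathcal A_\phi \mathcal D_{\rm i} v_0^2$ from Lemma~\ref{lm:r.on.Hb0} (after converting to the $(u,v)$ gauge), and the bootstrap in the $v$-direction, gives $|\partial_v r| \lesssim (v+|u|)^{-2}(\mathcal D_{\rm o} + \mathcal A_\phi(\mathcal D_{\rm o}+\mathcal D_{\rm i}))$; to obtain the sharper $\min\{v^{-2},|u|^{-2}\}$ one additionally uses that for $v\lesssim |u|$ this is $\sim v^{-2}$, while the $H_0$ data bound from Lemma~\ref{lm:r.on.H0} ($|\partial_v r|(0,v)\lesssim \mathcal D_{\rm o}v^{-2}$) together with integrating \eqref{eq:raychu} controls $\partial_v(\Omega^{-2}\partial_v r)$—actually one should integrate the mixed equation \eqref{eq:transpeqr} or use commutation to transfer the decay. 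This dichotomy (which characteristic to integrate along, $H_0$ vs $\underline H_{v_0}$) is the place requiring the most care, and mirrors the ``using different equations in different regions'' remark in Section~\ref{sec:ideas}.

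Next, \eqref{r.est} follows by integrating $\partial_u r$ in $u$ from $H_0$: $|r(u,v)-r(0,v)| \leq \int_{-\infty}^u |\partial_u r|\, du' \lesssim \mathcal A_\phi(\mathcal D_{\rm o}+\mathcal D_{\rm i})|u|^{-1}$, and then Lemma~\ref{lm:r.on.H0} gives $|r(0,v)-M|\lesssim \mathcal D_{\rm o}v^{-1}$; adding these yields \eqref{r.est}. (One could equally integrate $\partial_v r$ from $\underline H_{v_0}$; consistency of the two is automatic.) Then \eqref{r.est.2} is immediate from \eqref{r.est}, the triangle inequality, and the extremal Reissner--Nordstr\"om bound $|r_0-M|\lesssim (v+|u|)^{-1}$ from Lemma~\ref{RN.est}. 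Finally \eqref{r.BS.improve} follows from \eqref{r.est} once $v_0$ is large enough (depending on $M,\mathfrak m,\mathfrak e,\alpha,\eta,\mathcal D_{\rm o}$) and $|u_0|$ is large enough (depending additionally on $\mathcal D_{\rm i}$), using the admissibility conventions \eqref{const.adm} of Section~\ref{sec:bootstrap} to absorb the $\mathcal A_\phi$ factor; this is precisely the strict improvement of \eqref{eq:bootstrapr} from $M/2$ to $M/4$.

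I expect the main obstacle to be the sharp $\min\{v^{-2},|u|^{-2}\}$ decay in \eqref{dvr.est}: a naive one-directional integration of Raychaudhuri only gives one of the two weights cleanly, and obtaining both requires either commuting the transport equation \eqref{eq:transpeqr} appropriately or splitting the spacetime region $\{v\lesssim |u|\}$ versus $\{|u|\lesssim v\}$ and integrating from the more favorable initial hypersurface in each, while tracking that the error terms in \eqref{eq:transpeqr} (the $\mathfrak m^2 \pi r^2\Omega^2|\phi|^2$ and $\frac14 \Omega^2 r^{-2}Q^2$ terms, as well as $\partial_u r\,\partial_v r$) all carry enough $\Omega^2\sim (v+|u|)^{-2}$ decay to be integrable with room to spare. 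The other estimates are comparatively routine given the pointwise bounds of the preceding Proposition.
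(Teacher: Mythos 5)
Your overall architecture (Raychaudhuri for $\partial_u r$, integration from $H_0$ for $r-M$, triangle inequality with Lemma~\ref{RN.est} for $r-r_0$) matches the paper, and you correctly identify \eqref{dvr.est} as the delicate estimate. However, the concrete argument you sketch for $\partial_v r$ has a genuine gap. First, the ``data bound $|\partial_v r|(u,v_0)\lesssim \mathcal A_\phi\mathcal D_{\rm i}v_0^2$ from Lemma~\ref{lm:r.on.Hb0}'' does not exist: that lemma controls $\partial_U r(U,v_0)$, the \emph{transversal} derivative along $\underline H_{v_0}$, and no estimate for $\partial_v r$ on $\underline H_{v_0}$ is provided anywhere in Section~\ref{sec:data} (it is determined only by integrating \eqref{eq:transpeqr} from the corner). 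Second, and more seriously, integrating \eqref{eq:raychv} in $v$ from $\underline H_{v_0}$ cannot give your claimed intermediate bound $|\partial_v r|\lesssim (v+|u|)^{-2}(\cdots)$: the error integral is $\int_{v_0}^{v}\Omega^{-2}|D_v\phi|^2\,dv'$ with $\Omega^{-2}\sim (v'+|u|)^2$, and on the portion $v'\le |u|$ one has $(v'+|u|)^2\sim |u|^2\gg v'^2$, so comparing with the energy $\int v'^2|D_v\phi|^2\,dv'$ costs an unbounded factor $|u|^2/v_0^2$; after multiplying back by $\Omega^2(u,v)$ this yields \emph{no} decay when $v\le|u|$. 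This is exactly why the paper's proof abandons Raychaudhuri there: in $\{v\le|u|\}$ it integrates $\partial_u(r\partial_v r)=\frac14\Omega^2r^{-2}(Q^2-M^2)+\frac14\Omega^2r^{-2}(M^2-r^2)+\mathfrak m^2\pi r^2\Omega^2|\phi|^2$ in $u$ from $H_0$ (where Lemma~\ref{lm:r.on.H0} gives $|\partial_vr|(0,v)\lesssim\mathcal D_{\rm o}v^{-2}$), and only in $\{v\ge|u|\}$ does it use \eqref{eq:raychv} — crucially integrated from the \emph{diagonal} $\{v=|u|\}$ (where the first-region bound supplies the starting value), not from $\underline H_{v_0}$, so that $\Omega^{-2}\lesssim (v'+|u|)^2\lesssim v'^2$ throughout the domain of integration. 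Your closing paragraph gestures at this dichotomy but assigns the wrong starting hypersurface and does not commit to which equation is used where; as written, the step fails.

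A smaller imprecision: in the $\partial_u r$ estimate your displayed chain claims $|\partial_ur|\lesssim (v+|u|)^{-2}(1+\mathcal A_\phi(\mathcal D_{\rm o}+\mathcal D_{\rm i}))$. That intermediate bound is stronger than what the argument yields and is not correct: converting $\Omega^{-2}(u',v)\lesssim(v'+|u'|)^2$ into the weight $|u'|^2$ of the bootstrap energy costs the factor $\bigl(\tfrac{v+|u|}{|u|}\bigr)^2$ (using that $|u'|\mapsto\bigl(\tfrac{v+|u'|}{|u'|}\bigr)^2$ is decreasing), and after multiplying by $\Omega^2(u,v)\lesssim(v+|u|)^{-2}$ one lands on $|u|^{-2}$, not $(v+|u|)^{-2}$. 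The final estimate \eqref{dur.est} is nonetheless correct once this bookkeeping is done properly.
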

\begin{proof}
\textbf{Proof of \eqref{dur.est}.} By \eqref{eq:raychu}, \eqref{eq:bootstrapr}, \eqref{eq:estrangeOmega} and \eqref{eq:durhorizon}, we can estimate
\begin{equation*}
\begin{split}
|\partial_ur|(u,v)\leq &\: \Omega^2(u,v)|\partial_Ur|(-\infty,v)+C\Omega^2(u,v)\int_{-\infty}^{u}\Omega^{-2}(u',v)|D_u\phi|^2(u',v)\,du'\\
\leq &\: C(\mathcal{D}_{\rm o}+\mathcal{D}_{\rm i})|u|^{-2}+C(v+|u|)^{-2}\int_{-\infty}^{u}\frac{(v+|u'|)^2}{|u'|^2}|u'|^2|D_u\phi|^2(u',v)\,du'
\end{split}
\end{equation*}
Note that for $|u'|\geq |u|$, we have that
\begin{equation*}
\left(\frac{v+|u'|}{|u'|}\right)^2=\left(1+\frac{v}{|u'|}\right)^2\leq \left(1+\frac{v}{|u|}\right)^2=\left(\frac{v+|u|}{|u|}\right)^2,
\end{equation*}
which we can use to further estimate
\begin{equation*}
|\partial_ur|(u,v)\leq C(\mathcal{D}_{\rm o}+\mathcal{D}_{\rm i})|u|^{-2}+C|u|^{-2}\int_{-\infty}^u|u'|^2|D_u\phi|^2(u',v)\,du'
\end{equation*}
and hence, by \eqref{eq:bootstrapphi},
\begin{equation*}
|\partial_ur|(u,v)\leq C\mathcal A_{\phi}(\mathcal{D}_{\rm o}+\mathcal{D}_{\rm i})|u|^{-2}.
\end{equation*}
\textbf{Proof of \eqref{r.BS.improve}.} Using the fundamental theorem of calculus and integrating \eqref{dur.est} in $u$, we obtain
\begin{equation*}
\begin{split}
|r(u,v)-M|\leq&\: |r(-\infty,v)-M|+\int_{-\infty}^u |\partial_ur|(u',v)'\,du'
\lesssim \mathcal{D}_{\rm o}v^{-1}+\mathcal A_{\phi}(\mathcal{D}_{\rm o}+\mathcal{D}_{\rm i})|u|^{-1},
\end{split}
\end{equation*}
where in the last inequality we have used Lemma~\ref{lm:r.on.H0} and \eqref{dur.est}.
Combining this with the estimates for $r_0$ in Lemma~\ref{RN.est}, we thus obtain the estimate for $r-r_0$ in \eqref{r.est.2}.
In particular, for $\mathcal{D}_{\rm o}v_0^{-1}$, $(\mathcal{D}_{\rm o}+\mathcal{D}_{\rm i})|u_0|^{-1}$ suitably small, we obtain:
\begin{equation*}
|r(u,v)-M|<\frac{1}{4}M,
\end{equation*}
for all $(u,v)\in D_{u_0,v_{\infty}}$, which is the estimate \eqref{r.est}.

\textbf{Proof of \eqref{dvr.est}: the region $\{v\leq |u|\}$.} We first rewrite \eqref{eq:transpeqr} as follows:
\begin{equation*}
\begin{split}
\partial_u(r\partial_vr)=&\frac{1}{4}\Omega^2r^{-2}(Q^2-M^2)+ \frac{1}{4}\Omega^2r^{-2}(M^2-r^2)+\mathfrak{m}^2\pi r^2\Omega^2 |\phi|^2.
\end{split}
\end{equation*}
By \eqref{eq:Linftyphi} (for $\phi$), \eqref{r.est} (for $r-M$) and \eqref{eq:LinftyQ} (for $Q-M$), the $u$-integral of the RHS of the above equation can be estimated (up to a constant) by
\begin{equation*}
\int_{-\infty}^u \Omega^2 (\mathcal{D}_{\rm o} v^{-1} + \mathcal A_{\phi}(\mathcal{D}_{\rm o}+\mathcal D_{\rm i}) |u'|^{-1})\, du'.
\end{equation*}
Using \eqref{eq:LinftyOmegav2}, we have, in the region $\{v\leq |u|\}$,
$$\int_{-\infty}^u \Omega^2 \mathcal{D}_{\rm o} v^{-1} \, du'\ls \mathcal{D}_{\rm o} v^{-1}\int_{-\infty}^u (v+|u'|)^{-2}  \, du' \ls \mathcal{D}_{\rm o} v^{-1}(v+|u|)^{-1}$$
and 
\begin{equation*}
\begin{split}
\int_{-\infty}^u \Omega^2 \mathcal A_{\phi}(\mathcal{D}_{\rm o}+\mathcal D_{\rm i}) |u'|^{-1}\, \, du' \ls &\:\mathcal A_{\phi}(\mathcal{D}_{\rm o}+\mathcal D_{\rm i})\int_{-\infty}^u (v+|u'|)^{-2} |u'|^{-1} \, du' \\
\ls &\:\mathcal A_{\phi}(\mathcal{D}_{\rm o}+\mathcal D_{\rm i}) |u|^{-2}.
\end{split}
\end{equation*}
Together with the bound on $\rd_v r$ on the event horizon in Lemma~\ref{lm:r.on.H0}, this implies that when $v\leq |u|$, we have the following estimate:
$$|\rd_v r|(u,v)\ls \mathcal{D}_{\rm o}v^{-2}+\mathcal A_{\phi}(\mathcal{D}_{\rm o}+ \mathcal D_{\rm i})\min\{v^{-2},|u|^{-2}\}.$$

\textbf{Proof of \eqref{dvr.est}: the region $\{v\geq |u|\}$.} Notice that if we estimate in this region the same manner as before, we lose a factor of $\log v$ in the bound. So instead of \eqref{eq:transpeqr}, we will use the Raychaudhuri's equation \eqref{eq:raychv}\footnote{On the other hand, in the region $\{v\leq |u|\}$ that we considered above, it also does not seem that Raychaudhuri can give the desired bound.}. More precisely, given $(u,v)$ with $v\geq |u|$, we integrate \eqref{eq:raychv} along a constant-$u$ curve starting from its intersection with the curve $\{|u|=v\}$ to obtain
$$\left|\f{\rd_v r}{\Omg^2}\right|(u,v)\ls \left|\f{\rd_v r}{\Omg^2}\right|(u,-u) + \int_{-u}^v \Omg^{-2} |D_v\phi|^2(u,v') \, dv'.$$
By the estimates for $\rd_v r$ in the previous step and \eqref{eq:LinftyOmegav4}, we have
$$\left|\f{\rd_v r}{\Omg^2}\right|(u,-u)\ls \mathcal A_\phi(\mathcal{D}_{\rm o}+ \mathcal D_{\rm i}).$$
Since $v'\geq |u|$ in the domain of integration, we have, after using \eqref{eq:LinftyOmegav4} and \eqref{eq:bootstrapphi}, that
$$\int_{-u}^v \Omg^{-2} |D_v\phi|^2(u,v') \, dv'\ls \int_{-u}^v (v'+|u|)^2 |D_v\phi|^2(u,v') \, dv'\ls \int_{-u}^v v^2 |D_v\phi|^2(u,v') \, dv'\ls \mathcal A_\phi(\mathcal{D}_{\rm o}+\mathcal D_{\rm i}).$$
Combining the three estimates above with \eqref{eq:LinftyOmegav2} yields that when $v\geq |u|$, 
$$|\rd_v r|(u,v) \ls \mathcal A_\phi(\mathcal{D}_{\rm o}+ \mathcal D_{\rm i})\min\{v^{-2},|u|^{-2}\}.$$
Together with the previous step, we have thus completed the proof of \eqref{dvr.est}. \qedhere

\end{proof}

\section{Energy estimates}\label{sec:energy}

In this section, we prove the energy estimates for (derivatives of) $\phi$ and $\Om$. In particular, we will improve our bootstrap assumptions \eqref{eq:bootstrapOmega} and \eqref{eq:bootstrapphi}. As we discussed in the introduction, the argument leading to energy estimates for $\phi$ will go through the introduction of a \emph{renormalised energy}, the analysis of which forms the most technical part of the paper.

In \textbf{Section~\ref{sec:renorm.en}}, we will motivate the introduction of our renormalised energy for the matter field by considering the stress-energy-momentum tensor associated to the matter field. In \textbf{Section~\ref{sec:coercive.phi}}, we show that the renormalised energy we introduce is coercive; and in \textbf{Section~\ref{sec:eephi}}, we show that one can bound this renormalised energy. Combining these facts yields the desired control for the matter field.

Finally, in \textbf{Section~\ref{sec:ee.Om}}, we prove the energy estimates for $\f{\Om}{\Om_0}$.

\subsection{The stress energy tensor and the renormalised energy fluxes}\label{sec:renorm.en}
The null components of the stress-energy-momentum tensor $\mathbb{T}_{\mu\nu}$ corresponding to the scalar field and electromagnetic tensor are given by:
\begin{align*}
\mathbb{T}_{uv}=\:\frac{1}{4}\Omega^2\left(\mathfrak{m}^2|\phi|^2+\frac{1}{4\pi}r^{-4}Q^2\right),\quad \mathbb{T}_{uu}=\:|D_u\phi|^2,\quad
\mathbb{T}_{vv}=\: |D_v\phi|^2.
\end{align*}
The above expressions suggest that the natural energy fluxes along the null hypersurfaces of constant $u$ and $v$ are obtained by integrating the contraction $\mathbb{T}(X,\partial_{v})$ and $\mathbb{T}(X,\partial_u)$, respectively.  With the choice $X=u^2\partial_u+v^2\partial_v$, this gives:
\begin{align*}
&\int_{v_1}^{v_2} v^2r^2|D_v\phi|^2+ \frac{1}{4}u^2r^2\Omega^2\left(\mathfrak{m}^2|\phi|^2+\frac{1}{4\pi}r^{-4}Q^2\right)\,dv,\\
&\int_{u_1}^{u_2} u^2r^2|D_u\phi|^2+ \frac{1}{4}v^2r^2\Omega^2\left(\mathfrak{m}^2|\phi|^2+\frac{1}{4\pi}r^{-4}Q^2\right)\,du,
\end{align*}
with $-\infty<u_1<u_2\leq u_0$, $v_0 \leq v_1<v_2<\infty$.

However, with the above energy fluxes, the initial energy flux is not finite even initially on the event horizon with the chosen initial data. We therefore introduce the following \emph{renormalised} energy fluxes.\footnote{Notice that in the renormalisation, not only have we ``added an infinity term'' to each of the fluxes, we have also ``replaced several factors of $r$ by factors of $M$''. The replacement of $r$ by $M$ is strictly speaking not necessary to make the renormalised energy fluxes finite, but this simplifies the computations below.}
\begin{align}
\label{Ev.def} E_v(u):=&\:\int_{v_0}^{v_{\infty}} v^2M^2|D_v\phi|^2(u,v)+ \frac{1}{4}u^2M^2\Omega^2\left(\mathfrak{m}^2|\phi|^2+\frac{1}{4\pi}M^{-4}\left(Q^2-M^2\right)\right)(u,v)\,dv,\\
\label{Eu.def} E_u(v):=&\:\int_{-\infty}^{u_0} u^2M^2|D_u\phi|^2(u,v)+ \frac{1}{4}v^2M^2\Omega^2\left(\mathfrak{m}^2|\phi|^2+\frac{1}{4\pi}M^{-4}\left(Q^2-M^2\right)\right)(u,v)\,du.
\end{align}

\subsection{Coercivity of the renormalised energy flux}\label{sec:coercive.phi}

Our goal in this subsection is to prove that the renormalised energy flux we defined in \eqref{Ev.def} and \eqref{Eu.def} is coercive and controls the quantity on the LHS of \eqref{eq:bootstrapphi}. The statement of the main result can be found in Proposition~\ref{prop:coercenergy} at the end of the subsection. This will be achieved in a number of steps. We will first need the following preliminary results:
\begin{enumerate}
\item We need an improved version of \eqref{eq:LinftyQ}, which keeps track of the constants appearing in the leading-order terms (\textbf{Lemma~\ref{eq:preciseestQ}})
\item We then show that $\int_{v_0}^v \left(\frac{|u|}{v'+|u|}\right)^2|\phi|^2(u,v')\,dv'$ can be controlled by the LHS of \eqref{eq:bootstrapphi} (\textbf{Lemma~\ref{lm:Hardyv}}).
\item Similarly, we show that $\int^{u_0}_{-\infty} \left(\frac{v}{v+|u'|}\right)^2|\phi|^2(u',v)\,du'$ can be controlled by the LHS of \eqref{eq:bootstrapphi} (\textbf{Lemma~\ref{lm:Hardyu}}).
\end{enumerate}
Both 2. and 3. above are based on a Hardy-type inequality; see \textbf{Lemma~\ref{lm:hardy}}.
After these preliminary steps, we turn to the terms in the renormalised energy (cf.~\eqref{Eu.def}, \eqref{Ev.def}) which are not manifestly non-negative. Precisely, we control
\begin{itemize}
\item $\frac{1}{16\pi}\int_{v_0}^{v_{\infty}} u^2M^{-2}\Omega^2\left(Q^2-M^2\right)(u,v)\,dv$ in \textbf{Proposition~\ref{prop:coercive.error.v}}, and
\item $\frac{1}{16\pi}\int_{-\infty}^{u_0} v^2M^{-2}\Omega^2\left(Q^2-M^2\right)(u,v)\,du$ in \textbf{Proposition~\ref{prop:coercive.error.u}}.
\end{itemize}
Putting all these together, we thus obtain the main result in \textbf{Proposition~\ref{prop:coercenergy}}.

We now turn to the details, beginning with the following lemma:
\begin{lemma}
\label{eq:preciseestQ}
Let $\eta>0$ be as in \eqref{eta.def}. Then there exists $C>0$ such that
\begin{equation}
\label{eq:pointestQ2minM2}
\begin{split}
|Q^2-M^2|(u,v)\leq&\: 8\pi M|\mathfrak{e}|(1+\eta)  |u|^{-1}\int_{-\infty}^uu'^2M^2|D_u\phi|^2(u',v)\,du'+C\mathcal{D}_{\rm o}v^{-1}+C\mathcal A_\phi(\mathcal{D}_{\rm o}+\mathcal{D}_{\rm i})|u|^{-2}.
\end{split}
\end{equation}
\end{lemma}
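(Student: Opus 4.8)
The starting point is equation \eqref{eq:Q1}, $\partial_u Q = 2\pi i r^2 \mfe(\phi\overline{D_u\phi}-\overline{\phi}D_u\phi)$, which I integrate in $u$ from $-\infty$ (where $Q-M\to 0$ by \eqref{eq:Qlimit}) to obtain
$$|Q-M|(u,v) \leq |Q-M|(-\infty,v) + 4\pi|\mfe|\int_{-\infty}^u r^2 |\phi||D_u\phi|(u',v)\,du'.$$
The idea is \emph{not} to use Cauchy--Schwarz symmetrically (which would produce $|u|^{-1}$ with a non-optimal constant), but rather to peel off the pointwise factor of $|\phi|$ using \eqref{eq:Linftyphi} and then bound $\int_{-\infty}^u |u'|\cdot|u'||D_u\phi|\,du'$ by $|u|^{-1/2}(\int u'^2|D_u\phi|^2)^{1/2}$, keeping track of where the constant $1+\eta$ is spent. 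More precisely, from \eqref{eq:Linftyphi} we have $r^2|\phi|(u',v) \leq M^2(\sqrt{\mathcal D_{\rm o}}\,v^{-1/2-\alpha/2} + C\mathcal A_\phi^{1/2}\sqrt{\mathcal D_{\rm o}+\mathcal D_{\rm i}}\,|u'|^{-1/2})$ once $r$ is close to $M$ by \eqref{eq:bootstrapr} (and we can afford the replacement $r^2 \leq (1+\eta/2)M^2$ for $|u_0|$ large, $v_0$ large — this is where one factor of $(1+\eta)^{1/2}$ goes). The remaining Cauchy--Schwarz step $\int_{-\infty}^u |u'|^{-1/2}\cdot|u'||D_u\phi|\,du' \leq |u|^{-1/2}(\int_{-\infty}^u u'^2|D_u\phi|^2)^{1/2}$ is exact (this is the computation $\int_{-\infty}^u |u'|^{-1}du' = |u|^{-1}$, no loss), contributing nothing beyond what is already in the integral. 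Squaring and using $(a+b)^2 \leq (1+\delta)a^2 + (1+\delta^{-1})b^2$ with $\delta$ small absorbs the second factor of $(1+\eta)^{1/2}$.

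The key algebraic point is that $|Q^2-M^2| = |Q-M|\cdot|Q+M| \leq |Q-M|\cdot(2M + |Q-M|) \leq |Q-M|\cdot 2M(1+\epsilon)$ once $|Q-M|$ is small (true by \eqref{eq:LinftyQ} for $v_0$, $|u_0|$ large), so combining with the $|Q-M|$ bound gives a leading coefficient $4\pi|\mfe|M \cdot 2M \cdot \frac{1}{M}$... — let me instead be careful: writing out $|Q^2-M^2|(u,v) \leq 2M(1+\epsilon)\cdot 4\pi|\mfe| M \cdot |u|^{-1/2}\cdot M^{-1}\cdot|u|^{-1/2}(\int u'^2 M^2|D_u\phi|^2)^{1/2}\cdot M^{-1}$ — the bookkeeping of the $M$'s should be arranged so the final coefficient is exactly $8\pi M|\mfe|(1+\eta)$ as stated, with all the $\epsilon$'s and $\delta$'s and the $r^2$-vs-$M^2$ replacement absorbed into the single factor $(1+\eta)$. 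The error terms: $|Q-M|(-\infty,v)$ is controlled by \eqref{eq:oinitQ} by $C\mathcal D_{\rm o}v^{-1-\alpha} \leq C\mathcal D_{\rm o}v^{-1}$; the cross term from the data part of $|\phi|$ (the $v^{-1/2-\alpha/2}$ piece) paired against $\int u'^2|D_u\phi|^2$ via Young's inequality splits into something like $C\mathcal D_{\rm o}v^{-1-\alpha}\cdot(\ldots) $ plus $C\mathcal A_\phi(\mathcal D_{\rm o}+\mathcal D_{\rm i})|u|^{-2}$; and the purely $|u|^{-1}$-decaying cross terms not carrying the full energy integral get thrown into the $C\mathcal A_\phi(\mathcal D_{\rm o}+\mathcal D_{\rm i})|u|^{-2}$ bucket after one more use of \eqref{eq:bootstrapphi} to turn an energy into a constant (losing a power of $|u|^{-1}$, hence $|u|^{-2}$).

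The main obstacle — really the only subtle point — is the \textbf{precise constant}: one must verify that the replacement $r^2 \rightsquigarrow M^2$, the replacement $|Q+M| \rightsquigarrow 2M$, and the Young/Cauchy--Schwarz quasi-symmetrization can all be arranged to cost only the single prescribed factor $(1+\eta)$ rather than, say, $(1+\eta)^2$, since downstream (in the Hardy estimate of Section~\ref{sec:coercive.phi}) the coercivity threshold \eqref{par.con.thm} is presumably saturated and there is no room to spare. Concretely this means choosing the auxiliary smallness parameters ($\delta$ in Young, $\epsilon$ in $|Q+M|\leq 2M(1+\epsilon)$, and the margin in $r^2 \leq (1+\cdot)M^2$) so that their product with the "honest" constants stays below $1+\eta$; this is possible because \eqref{eta.def} gives strict inequality, i.e.\ $\eta$ itself has a little slack, and because each of these parameters can be made as small as desired by taking $v_0$, $|u_0|$ large (using \eqref{const.adm}) and $\eta$ fixed in advance. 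Everything else is routine: integrate the transport equation, apply the already-established pointwise bounds \eqref{eq:Linftyphi}, \eqref{eq:LinftyQ}, \eqref{eq:oinitQ}, and collect error terms.
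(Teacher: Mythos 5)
Your overall strategy is the paper's: integrate \eqref{eq:Q1} from the event horizon, factor $Q^2-M^2=(Q+M)(Q-M)$, control $\int_{-\infty}^u r^2|\phi||D_u\phi|\,du'$ by an $L^\infty\times L^1$ splitting plus Cauchy--Schwarz, and spend the factor $(1+\eta)$ on the harmless replacements. But there is a genuine gap in the one step that the lemma is actually about. You propose to bound $|\phi|(u',v)$ \emph{pointwise inside the integral} by \eqref{eq:Linftyphi}, i.e.\ by $C\mathcal A_\phi^{1/2}\sqrt{\mathcal{D}_{\rm o}+\mathcal D_{\rm i}}\,|u'|^{-1/2}+\sqrt{\mathcal{D}_{\rm o}}\,v^{-1/2-\alpha/2}$. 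That estimate has already replaced the energy flux by its bootstrap ceiling $\mathcal A_\phi(\mathcal{D}_{\rm o}+\mathcal D_{\rm i})$ and carries an unspecified implicit constant. After the remaining Cauchy--Schwarz (which, note, is $\int_{-\infty}^u|u'|^{-1/2}|D_u\phi|\,du'\le(\int|u'|^{-3}du')^{1/2}(\int u'^2|D_u\phi|^2du')^{1/2}$; your parenthetical ``$\int_{-\infty}^u|u'|^{-1}du'=|u|^{-1}$'' is a divergent integral and must mean $\int|u'|^{-2}$), your main term is $C\,\mathcal A_\phi^{1/2}(\mathcal{D}_{\rm o}+\mathcal D_{\rm i})^{1/2}\,|u|^{-1}\bigl(\int_{-\infty}^uu'^2|D_u\phi|^2\bigr)^{1/2}$ --- a single power of the square root of the energy, exactly as in your own ``writing out''. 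This is not comparable to the target $8\pi M|\mfe|(1+\eta)|u|^{-1}\int u'^2M^2|D_u\phi|^2$ (when the flux is small, the square root dominates the flux), and converting it costs you: bounding the stray $(\int)^{1/2}$ by $\mathcal A_\phi^{1/2}(\mathcal{D}_{\rm o}+\mathcal D_{\rm i})^{1/2}$ yields $C\mathcal A_\phi(\mathcal{D}_{\rm o}+\mathcal D_{\rm i})|u|^{-1}$, which has the wrong power of $|u|$ for the error bucket $C\mathcal A_\phi(\mathcal{D}_{\rm o}+\mathcal D_{\rm i})|u|^{-2}$ and, after the multiplication by $\int u^2\Omega^2\,dv\sim|u|$ in Propositions~\ref{prop:coercive.error.v}--\ref{prop:coercive.error.u}, produces an $\mathcal A_\phi$-dependent error that destroys the coercivity argument (since $\mathcal A_\phi$ is chosen large only at the end). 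No choice of the auxiliary parameters $\delta,\epsilon$ repairs this; it is a structural mismatch, not bookkeeping.

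The fix is to not use \eqref{eq:Linftyphi} here but the \emph{exact} inequality \eqref{eq:fundcalc1} (equivalently \eqref{eq:fundthmcalcphi1} plus Cauchy--Schwarz), which has constant exactly $1$ and keeps the energy as an energy: $\sup_{u'\le u}|\phi|(u',v)\le|\phi|(-\infty,v)+|u|^{-1/2}\bigl(\int_{-\infty}^uu'^2|D_u\phi|^2\bigr)^{1/2}$. Pairing this sup with $\int_{-\infty}^u|D_u\phi|\,du'\le|u|^{-1/2}\bigl(\int_{-\infty}^uu'^2|D_u\phi|^2\bigr)^{1/2}$ reproduces the full flux $|u|^{-1}\int u'^2|D_u\phi|^2$ with constant $1$; the only remaining losses are $r^2\rightsquigarrow M^2$, $|Q+M|\rightsquigarrow 2M$, and one $\eta$-weighted Young inequality on the data cross term $|\phi|(-\infty,v)\cdot|u|^{-1/2}(\int)^{1/2}$ --- whose energy half is absorbed into the $(1+\eta)$ coefficient of the main term (not, as you assert, into the $|u|^{-2}$ bucket; using the bootstrap there only gives $|u|^{-1}$), and whose data half is $C_\eta\mathcal{D}_{\rm o}v^{-1-\alpha}$ by \eqref{eq:oinitphi}. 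With that substitution your argument becomes the paper's.
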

\begin{proof}
We compute
\begin{equation*}
\begin{split}
(Q^2-M^2)(u,v)=&\:(Q(u,v)+M)(Q(u,v)-M)\\
=&\:(Q(u,v)+M)\left(Q(-\infty,v)-M+\int_{-\infty}^u\partial_uQ(u',v)\,du'\right)
\end{split}
\end{equation*}
and use \eqref{eq:Q1} to estimate
\begin{equation*}
|Q^2-M^2|(u,v)\leq |Q(u,v)+M|\left(|Q(-\infty,v)-M|+4\pi|\mathfrak{e}|\int_{-\infty}^ur^2|\phi||D_u\phi|\,du'\right).
\end{equation*}
We further use \eqref{eq:fundcalc1}  to estimate:
\begin{equation}
\label{eq:mainestQ1}
\begin{split}
&\:\int_{-\infty}^ur^2|\phi||D_u\phi|(u',v)\,du'\\
\leq&\: \left[M^2+\sup_{-\infty<u'\leq u}(r^2-M^2)\right]\cdot \sup_{-\infty<u'\leq u}|\phi|(u',v)\cdot \int_{-\infty}^u|D_u\phi|(u',v)\,du'\\
\leq& \: \left[M^2+\sup_{-\infty<u'\leq u}(r^2-M^2)\right]\cdot |u|^{-\frac{1}{2}}|\phi|(-\infty,v)\sqrt{\int_{-\infty}^uu'^2|D_u\phi|^2(u',v)\,du'}\\
& +\left[M^2+\sup_{-\infty<u'\leq u}(r^2-M^2)\right]\cdot |u|^{-\frac{1}{2}}\int_{-\infty}^u |D_u \phi|(u',v)\,du'\cdot \sqrt{\int_{-\infty}^uu'^2|D_u\phi|^2(u',v)\,du'}\\
\leq& \: \left[M^2+\sup_{-\infty<u'\leq u}(r^2-M^2)\right]\cdot |u|^{-\frac{1}{2}}|\phi|(-\infty,v)\sqrt{\int_{-\infty}^uu'^2|D_u\phi|^2(u',v)\,du'}\\
& +\left[M^2+\sup_{-\infty<u'\leq u}(r^2-M^2)\right]\cdot |u|^{-1} \int_{-\infty}^uu'^2|D_u\phi|^2(u',v)\,du'.
\end{split}
\end{equation}
Using \eqref{r.est}, it follows that
\begin{equation*}
|r^2-M^2|(u,v)=|r-M+2M|\cdot |r-M|\lesssim \mathcal{D}_{\rm o}v^{-1}+\mathcal A_\phi(\mathcal{D}_{\rm o}+ \mathcal{D}_{\rm i})|u|^{-1},
\end{equation*}
where we have taken $\mathcal{D}_{\rm o}v_0^{-1}$ and $(\mathcal{D}_{\rm o}+\mathcal D_{\rm i})|u_0|^{-1}$ to be suitably small.

Therefore, we can further estimate the right-hand side of \eqref{eq:mainestQ1} to obtain
\begin{equation}
\label{eq:mainestQ}
\begin{split}
\int_{-\infty}^ur^2|\phi||D_u\phi|(u',v)\,du'\leq&\: \frac{1}{4}\eta^{-1} M^2|\phi|^2(-\infty,v)+(1+\eta)M^2|u|^{-1} \int_{-\infty}^uu'^2|D_u\phi|^2(u',v)\,du'\\
&+C\mathcal{D}_{\rm o}v^{-1}|u|^{-1}+C\mathcal A_\phi(\mathcal{D}_{\rm o}+\mathcal D_{\rm i})|u|^{-2},
\end{split}
\end{equation}
where we moreover applied Young's inequality with an $\eta$ weight, where $\eta>0$ is as in the statement of the proposition.

By applying \eqref{eq:LinftyQ} and \eqref{eq:mainestQ} together with the initial data estimate \eqref{eq:oinitphi} and \eqref{eq:oinitQ}, we therefore obtain:
\begin{equation*}
\begin{split}
|Q^2-M^2|(u,v)\leq&\: |Q(u,v)+M|\Bigg(|Q(-\infty,v)-M|+\pi \eta^{-1}M^2|\mathfrak{e}| |\phi|^2(-\infty,v)\\
&+4\pi|\mathfrak{e}|(1+\eta)|u|^{-1} \int_{-\infty}^uu'^2M^2|D_u\phi|^2(u',v)\,du'+C\mathcal{D}_{\rm o}v^{-1}|u|^{-1}+C\mathcal A_\phi(\mathcal{D}_{\rm o}+\mathcal D_{\rm i})|u|^{-2}\Bigg)\\
\leq &\: \left(2M+C\mathcal{D}_{\rm o}v^{-1-\alp}+C\mathcal A_\phi(\mathcal{D}_{\rm o}+\mathcal D_{\rm i}) |u|^{-1}\right)\\
\cdot&\:\Bigg(4\pi|\mathfrak{e}|(1+\eta)|u|^{-1} \int_{-\infty}^uu'^2M^2|D_u\phi|^2(u',v)\,du'+C\mathcal{D}_{\rm o}v^{-1}+C\mathcal A_\phi(\mathcal{D}_{\rm o}+\mathcal D_{\rm i})|u|^{-2}\Bigg)\\
\leq &\: 8\pi M|\mathfrak{e}|(1+\eta)  |u|^{-1}\int_{-\infty}^uu'^2M^2|D_u\phi|^2(u',v)\,du'+C\mathcal{D}_{\rm o}v^{-1}+C\mathcal A_\phi(\mathcal{D}_{\rm o}+\mathcal D_{\rm i})|u|^{-2}.
\end{split}
\end{equation*}
\end{proof}

In order to estimate the $|\phi|^2$ integral, we will make use of a Hardy inequality:
\begin{lemma}
\label{lm:hardy}
Let $f:\R\to \R$ be in $H^1_{\rm loc}(\R)$ and let $c\geq 0$ be a constant. Then for any $x_1,x_2\in \R$ with $0<x_1<x_2<\infty$:
\begin{align*}
\int_{x_1}^{x_2} (x+c)^pf^2(x)\,dx\leq &\:2 (x_2+c)^{p+1}f^2(x_2)+\frac{4}{(p+1)^2}\int_{x_1}^{x_2} (x+c)^{p+2}f'^2(x)\,dx\quad \textnormal{if}\: p>-1\\
\int_{x_1}^{x_2} (x+c)^pf^2(x)\,dx\leq &\:2 (x_1+c)^{p+1}f^2(x_1)+\frac{4}{(p+1)^2}\int_{x_1}^{x_2} (x+c)^{p+2}f'^2(x)\,dx\quad \textnormal{if}\: p<-1.
\end{align*}
\end{lemma}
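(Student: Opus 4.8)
The plan is to prove this by a direct integration-by-parts argument, exploiting the monotonicity of the weight $(x+c)^p$ together with the boundary behaviour dictated by the sign of $p+1$. I would first treat the case $p > -1$. Write $g(x) = (x+c)^{p+1}/(p+1)$, so that $g'(x) = (x+c)^p$, and integrate by parts:
\begin{equation*}
\int_{x_1}^{x_2} (x+c)^p f^2(x)\,dx = \left[\frac{(x+c)^{p+1}}{p+1} f^2(x)\right]_{x_1}^{x_2} - \frac{2}{p+1}\int_{x_1}^{x_2} (x+c)^{p+1} f(x) f'(x)\,dx.
\end{equation*}
Since $p+1 > 0$, the weight $(x+c)^{p+1}$ is increasing and the boundary term at $x_1$ has a favourable sign (it is subtracted with a positive coefficient $1/(p+1)$, hence contributes negatively), so we may drop it and retain only $\frac{1}{p+1}(x_2+c)^{p+1} f^2(x_2)$. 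For the bulk term, apply Cauchy--Schwarz followed by Young's inequality: for any $\epsilon > 0$,
\begin{equation*}
\left|\frac{2}{p+1}\int_{x_1}^{x_2} (x+c)^{p+1} f f'\,dx\right| \leq \frac{\epsilon}{p+1}\int_{x_1}^{x_2}(x+c)^p f^2\,dx + \frac{1}{\epsilon(p+1)}\int_{x_1}^{x_2}(x+c)^{p+2} f'^2\,dx.
\end{equation*}
Choosing $\epsilon = (p+1)/2$ absorbs $\tfrac12\int (x+c)^p f^2$ into the left-hand side; rearranging and using $\tfrac{1}{p+1}\le \tfrac{2}{p+1}$ on the boundary term gives the stated inequality with constants $2$ and $4/(p+1)^2$.

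For the case $p < -1$, the argument is identical except that now $p+1 < 0$, so the weight $(x+c)^{p+1}$ is decreasing; consequently it is the boundary term at $x_2$ (which now comes with a negative prefactor $1/(p+1)$) that has the good sign and can be discarded, leaving the contribution $\tfrac{1}{|p+1|}(x_1+c)^{p+1}f^2(x_1)$ at the left endpoint. The same Cauchy--Schwarz/Young step absorbs the bulk term, using $(p+1)^2 = |p+1|^2$ so that the constant $4/(p+1)^2$ is unchanged. One should note that the integration by parts is justified because $f \in H^1_{\rm loc}$ makes $f^2$ absolutely continuous on $[x_1,x_2]$, and the integrals of the non-negative weighted quantities are either finite or the inequality is vacuous.

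I do not anticipate a genuine obstacle here — this is a textbook weighted Hardy inequality and the only points requiring mild care are (i) correctly tracking which boundary term has the favourable sign in each regime, which is governed entirely by $\operatorname{sign}(p+1)$, and (ii) choosing the Young parameter so that exactly half of the left-hand integral is absorbed, which fixes the constant $4/(p+1)^2$. The borderline exponent $p = -1$ is legitimately excluded since there $g(x) = \log(x+c)$ and the argument degenerates; this is why the application later (in Lemmas~\ref{lm:Hardyv} and~\ref{lm:Hardyu}) must use exponents bounded away from $-1$.
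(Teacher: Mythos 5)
Your argument is correct and is in substance the same as the paper's: the paper completes the square directly, expanding $\int_{x_1}^{x_2}(x+c)^p\bigl(\frac{p+1}{2}f+(x+c)f'\bigr)^2\,dx\ge 0$ and integrating the cross term by parts, which is algebraically identical to your integration by parts followed by Young's inequality with the optimal choice $\epsilon=(p+1)/2$; both routes produce the same bulk constant $4/(p+1)^2$ and the same sign analysis of the boundary terms. One caveat, which applies equally to the paper's own proof: after absorbing half of the left-hand side and doubling, the surviving boundary term carries the coefficient $2/|p+1|$, not $2$, and your remark that ``$1/(p+1)\le 2/(p+1)$'' does not close this gap. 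The coefficient $2/|p+1|$ is $\le 2$ only when $|p+1|\ge 1$; for $-1<p<-1/2$ the lemma as literally stated is actually false (take $f\equiv 1$, $c=0$, $x_1$ small, so the left side tends to $(p+1)^{-1}(x_2+c)^{p+1}>2(x_2+c)^{p+1}$ while $f'\equiv 0$). Since the lemma is only ever invoked with $p=0$ and $p=-2$, where $2/|p+1|=2$ exactly, nothing downstream is affected, but you should state the boundary constant as $2/|p+1|$ rather than claim the constant $2$ for all $p\ne -1$.
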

\begin{proof}
We will only prove the inequality in the case $p>-1$. The other inequality is similar. We compute
\begin{equation*}
\begin{split}
0\leq &\:\int_{x_1}^{x_2} (x+c)^p (\f{p+1}2 f+ (x+c)f')^2(x)\,dx \\
= &\: \int_{x_1}^{x_2} \f{(p+1)^2}4 (x+c)^p f^2 +  \f{p+1}2 (x+c)^{p+1} \f{d}{dx} (f^2) +  (x+c)^{p+2} f'^2(x)\,dx\\
=& \: -\f{(p+1)^2}4 \int_{x_1}^{x_2} (x+c)^p f^2(x)\, dx + \f{p+1}2 (x_2+c)^{p+1} f^2(x_2) \\
&\:- \f{p+1}2 (x_1+c)^{p+1} f^2(x_1)+ \int_{x_1}^{x_2} (x+c)^{p+2} f'^2(x)\,dx.
\end{split}
\end{equation*}
Rearranging and dropping a manifestly non-negative term yield the conclusion.
\end{proof}
Using Lemma~\ref{lm:hardy}, we prove a Hardy-type estimate on a constant-$u$ hypersurface in our setting.
\begin{lemma}\label{lm:Hardyv}
Given $\eta>0$ as in \eqref{eta.def}, there exists $C>0$ \textbf{independent of $\mathcal A_\phi$} such that the following holds:
\begin{equation*}
\begin{split}
&\:\int_{v_0}^v \left(\frac{|u|}{v'+|u|}\right)^2|\phi|^2(u,v')\,dv' \\
\leq &\: C\mathcal{D}_{\rm o}+ 4 \int_{v_0}^{v} v'^2 |D_v \phi|^2(u,v')\,dv'+ 3(1+\eta)\int_{-\infty}^u |u'|^2|D_u\phi|^2(u',|u|)\,du'.
\end{split}
\end{equation*}
\end{lemma}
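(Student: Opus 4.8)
The plan is to apply the Hardy inequality of Lemma~\ref{lm:hardy} on the constant-$u$ hypersurface with the weight $(v'+|u|)^{-2}$ to convert the $L^2$ bound on $|\phi|$ into an $L^2$ bound on $|D_v\phi|$ (which is exactly what the renormalised energy controls) plus a boundary term at $v=v_0$. Concretely, I would take $f(v') = |\phi|(u,v')$ (noting $f\in H^1_{\rm loc}$ with $|f'|\le |D_v\phi|$ by the gauge-invariance/fundamental theorem of calculus of Lemma~\ref{lm:fundthmcalc}), $p=-2<-1$, and $c=|u|$. The $p<-1$ branch of Lemma~\ref{lm:hardy} gives
$$\int_{v_0}^v (v'+|u|)^{-2} |\phi|^2(u,v')\,dv' \le 2(v_0+|u|)^{-1}|\phi|^2(u,v_0) + 4\int_{v_0}^v (v'+|u|)^{0}|D_v\phi|^2(u,v')\,dv'.$$
Multiplying through by $|u|^2$ turns the left side into the quantity we want. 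For the bulk term, since $|u|^2 \le (v'+|u|)^2$ is false in general but $|u|^2(v'+|u|)^0 = |u|^2 \le$ ... — here I need to be careful: the factor $|u|^2$ times $1$ is just $|u|^2$, and I want to compare it to $v'^2$. Since the domain is $v'\ge v_0$ and there is no a priori bound $|u|\le v'$, I instead keep $|u|^2 \le (v'+|u|)^2 \le 2v'^2 + 2|u|^2$, which is circular. The correct move is: $4|u|^2\int_{v_0}^v |D_v\phi|^2\,dv' = 4\int_{v_0}^v \left(\frac{|u|}{v'}\right)^2 v'^2 |D_v\phi|^2\,dv'$, and for $v'\ge |u|$ this is $\le 4\int v'^2|D_v\phi|^2$, while for $v'\le |u|$ one splits the region $v_0\le v'\le |u|$ off and uses the pointwise estimate \eqref{eq:Linftyphi} on $|\phi|$ directly on the original integral. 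Let me reconsider.

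A cleaner approach: split the $v'$-integration at $v'=|u|$. On the region $v'\ge |u|$ (if nonempty), $\left(\frac{|u|}{v'+|u|}\right)^2 \le \frac{|u|^2}{(v'+|u|)^2}\le 1$ and in fact $\le 4\frac{v'^2}{(v'+|u|)^4}\cdot\frac{(v'+|u|)^2}{v'^2}$... — simplest is to just apply Hardy on $[|u|,v]$ with $p=0$, $c=|u|$, which is the $p>-1$ branch and produces a boundary term at the \emph{right} endpoint $v$, undesirable. So instead apply Hardy on $[v_0,v]$ with $c=|u|$, $p=-2$ as above, giving the boundary term at $v_0$; then bound $|u|^2 \cdot 4\int_{v_0}^v |D_v\phi|^2\,dv'$ by splitting: for $v'\ge|u|$, $|u|^2\le v'^2$ so this piece is $\le 4\int_{v_0}^v v'^2|D_v\phi|^2\,dv'$; for $v_0\le v'\le|u|$, bound $\int_{v_0}^{|u|}|D_v\phi|^2\,dv'$ using the fundamental-theorem-of-calculus relation $\int_{v_0}^{|u|}|D_v\phi|^2 = \int_{v_0}^{|u|} v'^{-2}\cdot v'^2|D_v\phi|^2 \le v_0^{-2}\int_{v_0}^{v}v'^2|D_v\phi|^2$, and $|u|^2 v_0^{-2}$ is NOT small. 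This still fails, which tells me the $3(1+\eta)\int_{-\infty}^u|u'|^2|D_u\phi|^2(u',|u|)\,du'$ term in the statement is essential and must come in through the boundary term $|\phi|^2(u,v_0)$, not through the bulk.

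So the key step I expect to be the main obstacle is controlling the boundary term $2|u|^2(v_0+|u|)^{-1}|\phi|^2(u,v_0)$: one uses \eqref{eq:fundcalc1} to write $|\phi|(u,v_0)\le |\phi|(-\infty,v_0) + |u|^{-1/2}(\int_{-\infty}^u u'^2|D_u\phi|^2(u',v_0)\,du')^{1/2}$, squares, and — crucially — replaces the integral along $\{v=v_0\}$ by the integral along $\{v=|u|\}$ using the energy estimate that will relate these (or by a bootstrap/monotonicity argument on $E_u$); the factor $(1+\eta)$ and the constant $3$ are exactly what is needed to absorb the cross term from squaring and the error in transporting the $u$-energy from $v_0$ to $|u|$, while the $C\mathcal{D}_{\rm o}$ term absorbs $|\phi|^2(-\infty,v_0)\lesssim \mathcal{D}_{\rm o}$ via \eqref{eq:oinitphi} together with the bulk $D_v\phi$ contributions from the region $v'\le|u|$ handled via the pointwise bound \eqref{eq:Linftyphi} on $|\phi|$ (which gives $\int_{v_0}^{|u|}(\frac{|u|}{v'+|u|})^2|\phi|^2\,dv' \lesssim \int_{v_0}^{|u|}(\mathcal{D}_{\rm o}v'^{-1-\alpha} + \mathcal A_\phi(\mathcal{D}_{\rm o}+\mathcal D_{\rm i})|u|^{-1})\,dv'$ — but the second piece is $\sim\mathcal A_\phi(\mathcal{D}_{\rm o}+\mathcal D_{\rm i})$, too large). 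Given the delicacy, the honest plan is: apply Lemma~\ref{lm:hardy} with $p=-2$, $c=|u|$ on all of $[v_0,v]$; for the resulting bulk term use that on $\{v'\ge|u|\}$ one has $|u|^2\le v'^2$ and on $\{v'\le|u|\}$ one integrates by parts once more (or applies Hardy a second time with $p=0$ shifted) to trade it against the boundary contribution; for the boundary term at $v_0$ use \eqref{eq:fundcalc1}, Young's inequality with weight $\eta$, the initial bound \eqref{eq:oinitphi}, and the fact that the $\{v=v_0\}$ slice of the $u$-energy is controlled by the $\{v=|u|\}$ slice up to the bulk $u$-integral of $\partial_v(u'^2|D_u\phi|^2)$ terms, which are handled by the wave equation \eqref{eq:phi1} and are lower order. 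This is precisely where the restriction on $|\mathfrak e|M$ (through $\eta$) enters later when this lemma is combined in Proposition~\ref{prop:coercive.error.v}, but in the lemma itself the role of $\eta$ is merely as a free small parameter in Young's inequality, so no parameter restriction is needed yet.
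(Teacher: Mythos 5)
Your proposal correctly identifies the two ingredients the paper uses --- Lemma~\ref{lm:hardy} applied along the constant-$u$ hypersurface, and the conversion of a boundary term $|u||\phi|^2$ into the $u$-energy via \eqref{eq:fundcalc1} plus Young's inequality with weight $\eta$ --- but the concrete plan you settle on has a genuine gap. Applying Lemma~\ref{lm:hardy} with $p=-2$, $c=|u|$ on all of $[v_0,v]$ places the boundary term at $v'=v_0$, i.e.\ you must control $2|u|^2(v_0+|u|)^{-1}|\phi|^2(u,v_0)\sim |u|\,|\phi|^2(u,v_0)$. By \eqref{eq:fundcalc1} and \eqref{eq:oinitphi} this is bounded by $(1+\eta^{-1})|u|\,|\phi|^2(-\infty,v_0)+(1+\eta)\int_{-\infty}^u|u'|^2|D_u\phi|^2(u',v_0)\,du'$, and the first term is $\lesssim \mathcal D_{\rm o}\,|u|\,v_0^{-1-\alpha}$, which grows linearly in $|u|$ and cannot be absorbed into $C\mathcal D_{\rm o}$. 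Your second difficulty --- the bulk term $4|u|^2\int_{v_0}^{|u|}|D_v\phi|^2\,dv'$ on the region $v'\le|u|$ --- is also not resolved by your suggestion to ``integrate by parts once more'' or ``apply Hardy a second time'': Lemma~\ref{lm:hardy} converts integrals of $|\phi|^2$ into integrals of $|D_v\phi|^2$, not the reverse, so there is nothing to trade against. Finally, transporting the $u$-energy from the slice $\{v=v_0\}$ to $\{v=|u|\}$ via the wave equation would reintroduce $\mathcal A_\phi$- and $\mathcal D_{\rm i}$-dependent errors, defeating the point of a lemma whose constant must be independent of $\mathcal A_\phi$.

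The missing idea is to split the interval at $v'=|u|$ \emph{before} applying Hardy, and to use the two different branches of Lemma~\ref{lm:hardy} so that \emph{both} boundary terms land at the interior splitting point $v'=|u|$ rather than at $v_0$. On $[v_0,|u|]$ one bounds the weight $\bigl(\tfrac{|u|}{v'+|u|}\bigr)^2\le 1$ and applies the $p>-1$ branch with $p=0$, whose boundary term sits at the \emph{right} endpoint $|u|$ and whose bulk is exactly $4\int v'^2|D_v\phi|^2$; on $[|u|,v]$ one applies the $p<-1$ branch with $p=-2$, $c=|u|$, whose boundary term sits at the \emph{left} endpoint $|u|$ and whose bulk $4|u|^2\int_{|u|}^v|D_v\phi|^2$ is $\le 4\int v'^2|D_v\phi|^2$ since $v'\ge|u|$ there. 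The combined boundary contribution is $3|u|\,|\phi|^2(u,|u|)$, and evaluated at $v=|u|$ the data part of \eqref{eq:fundcalc1} gives $|u|\,|\phi|^2(-\infty,|u|)\lesssim\mathcal D_{\rm o}|u|^{-\alpha}\le C\mathcal D_{\rm o}$, which is precisely why the $u$-energy in the statement is taken on the slice $\{v=|u|\}$ and why the constant is $3(1+\eta)$.
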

\begin{proof}
We will apply the Hardy inequality Lemma~\ref{lm:hardy} on a fixed constant-$u$ hypersurface. For this purpose, it is useful to choose an auxiliary gauge for $A$ (cf.~Section~\ref{sec:global.gauge} and proof of Lemma~\ref{lm:fundthmcalc}) where $|\rd_v\phi|=|D_v\phi|$ (and we will perform all estimates in terms of the gauge invariant quantities $|\phi|$ and $|D_v\phi|$). Let $\gamma>0$ be a constant to be chosen. If $v>|u|$, we also partition the integration interval: $[v_0,v]=[v_0,\gamma |u|]\cup [\gamma |u|,v]$, so we can estimate:
\begin{equation}\label{lm:Hardyv.1}
\begin{split}
& \:\int_{v_0}^v \left(\frac{|u|}{v'+|u|}\right)^2|\phi|^2(u,v')\,dv'\\
=&\:\int_{v_0}^{\gamma|u|} \left(\frac{|u|}{v'+|u|}\right)^2|\phi|^2(u,v')\,dv'+\int_{\gamma|u|}^v \left(\frac{|u|}{v'+|u|}\right)^2|\phi|^2(u,v')\,dv'\\
\leq &\: \int_{v_0}^{\gamma|u|} |\phi|^2\,dv'+|u|^2\int_{\gamma|u|}^v(v'+|u|)^{-2}|\phi|^2(u,v')\,dv'\\
\leq &\: 4\int_{v_0}^{\gamma|u|}v'^2 |D_v \phi|^2(u,v')\,dv'+4|u|^2\int_{\gamma|u|}^v |D_v \phi|^2(u,v')\,dv'+ 2\gamma|u||\phi|^2(u,\gamma|u|)+\frac{2|u|}{(\gamma+1)}|\phi|^2(u,\gamma|u|)\\
\leq &\: 4 \max\{1,\gamma^{-2}\} \int_{v_0}^{v} v'^2 |D_v \phi|^2(u,v')\,dv'+\left(2\gamma+\f{2}{\gamma+1}\right)|u||\phi|^2(u,\gamma|u|),
\end{split}
\end{equation}
where we applied Lemma~\ref{lm:hardy} with $p=0$ and $p=-2$ respectively in the two intervals. By \eqref{eq:fundcalc1}, Young's inequality and \eqref{eq:oinitphi} (to control the initial data), it moreover holds that for any $\eta>0$, there exists $C_{\eta}>0$ so that the following is verified:
\begin{equation}
\label{eq:estphispacelikecurve}
|u||\phi|^2(u,v)\leq C_{\eta}\mathcal{D}_{\rm o}\frac{|u|}{v}+(1+\eta)\int_{-\infty}^u |u'|^2|D_u\phi|^2(u',v)\,du'.
\end{equation}
Using this to control the last term in \eqref{lm:Hardyv.1}, and setting $\gamma=1$, we obtain the desired conclusion.
\end{proof}

We have a similar Hardy-type estimate on a constant-$v$ hypersurface, with appropriate modifications of the weights.
\begin{lemma}\label{lm:Hardyu}
Let $\eta>0$ be as in \eqref{eta.def}. Then there exists $C>0$ \textbf{independent of $\mathcal A_\phi$} such that the following holds:
$$\int^{u_0}_{-\infty} \left(\frac{v}{v+|u'|}\right)^2|\phi|^2(u',v)\,du' \leq C \mathcal{D}_{\rm o} + 6(1+\eta) \int^{u_0}_{-\infty} u'^2 |D_u \phi|^2(u',v)\,du'. $$
\end{lemma}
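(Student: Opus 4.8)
The plan is to run the proof of Lemma~\ref{lm:Hardyv} with the two null directions interchanged and the weights adjusted. First I would fix the constant-$v$ hypersurface and, exactly as in the proof of Lemma~\ref{lm:fundthmcalc}, pass to an auxiliary gauge along it in which $A_u=0$, so that $|D_u\phi|=|\partial_u\phi|$. Then $x\mapsto|\phi|(-x,v)$ is an $H^1_{\rm loc}$ function of $x=|u'|$ whose $x$-derivative is bounded in absolute value by $|D_u\phi|(-x,v)$, and the quantity to be estimated becomes $\int_{|u_0|}^{\infty}\big(\f{v}{v+x}\big)^{2}|\phi|^2(-x,v)\,dx$. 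All the integrals appearing below are finite by the bootstrap assumption \eqref{eq:bootstrapphi}, which also bounds them independently of the data.

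The core step is to split this integral at $x=\gamma v$, where $\gamma$ is a large constant depending only on $\eta$ (hence only on $M$ and $\mfe$), to be fixed at the end. On the near piece $[|u_0|,\gamma v]$ (nonempty only when $|u_0|<\gamma v$) I bound the weight by $1$ and apply the Hardy inequality Lemma~\ref{lm:hardy} with $p=0$, $c=0$, which produces $4\int_{-\infty}^{u_0}u'^2|D_u\phi|^2\,du'$ together with a boundary term at $x=\gamma v$. On the far piece $[\max\{|u_0|,\gamma v\},\infty)$ the weight equals $v^2(v+x)^{-2}$, and I apply Lemma~\ref{lm:hardy} with $p=-2$, $c=v$ and the upper endpoint sent to $+\infty$ (legitimate, since for $p<-1$ there is no boundary contribution at the upper endpoint and $|\phi|$ is bounded by \eqref{eq:Linftyphi}); since $x\ge\gamma v$ forces $v^2\le\gamma^{-2}x^2$, the bulk term from this piece is $\le 4\gamma^{-2}\int_{-\infty}^{u_0}u'^2|D_u\phi|^2\,du'$, plus a boundary term of size $\lesssim\gamma^{-1}v\,|\phi|^2(-\gamma v,v)$ (or $\lesssim\gamma^{-1}v\,|\phi|^2(u_0,v)$ when $|u_0|\ge\gamma v$).

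To finish, I absorb the boundary terms just as in Lemma~\ref{lm:Hardyv}: applying \eqref{eq:fundcalc1} at $u=-\gamma v$ (respectively $u=u_0$), using the initial-data estimate \eqref{eq:oinitphi} and Young's inequality with weight $\eta$, gives $v\,|\phi|^2(-\gamma v,v)\le C_\eta\mathcal{D}_{\rm o}+(1+\eta)\gamma^{-1}\int_{-\infty}^{u_0}u'^2|D_u\phi|^2\,du'$, and an analogous bound, with an extra factor $\gamma^{-1}$ to spare, for the $u=u_0$ boundary term. Collecting everything, the coefficient of $\int_{-\infty}^{u_0}u'^2|D_u\phi|^2\,du'$ is at most $4+2(1+\eta)+c(\gamma)$ where $c(\gamma)\to0$ as $\gamma\to\infty$, so choosing $\gamma$ large in terms of $\eta$ makes it $\le 6(1+\eta)$; the remaining terms are $\le C\mathcal{D}_{\rm o}$ with $C=C(\gamma,\eta)$ independent of $\mathcal A_\phi$, as required.

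The one genuine difficulty is the bookkeeping of constants. Unlike in Lemma~\ref{lm:Hardyv}, where the Hardy bulk terms contribute to $\int v'^2|D_v\phi|^2$ and are disjoint from the boundary-generated $\int u'^2|D_u\phi|^2$, here both the bulk and the boundary contributions feed into the \emph{same} integral $\int_{-\infty}^{u_0}u'^2|D_u\phi|^2$. In particular the naive choice $\gamma=1$ gives a coefficient strictly larger than $6(1+\eta)$, and one must take the splitting radius $\gamma v$ large enough (in terms of $\eta$) for the far-region contributions to become negligible; everything else is routine.
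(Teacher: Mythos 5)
Your proposal is correct and follows essentially the same route as the paper: the paper also splits the integral at $|u'|=\sigma v$ with $\sigma$ large, applies Lemma~\ref{lm:hardy} with $p=0$ on the near piece and $p=-2$ (with $c=v$) on the far piece, and absorbs the boundary terms via the estimate \eqref{eq:estphispacelikecurve} (which is exactly the combination of \eqref{eq:fundcalc1}, \eqref{eq:oinitphi} and Young's inequality you describe), finally choosing $\sigma$ large so the total coefficient is $4\max\{1,\sigma^{-2}\}+4\sigma^{-2}+(1+\eta')(2+\tfrac{2}{\sigma(1+\sigma)})\le 6(1+\eta)$. Your closing observation — that here, unlike in Lemma~\ref{lm:Hardyv}, both bulk and boundary contributions feed the same $u$-energy, forcing the splitting radius to be large rather than $\gamma=1$ — is precisely the point of the paper's choice of $\sigma$.
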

\begin{proof}
Choose a gauge so that $|D_u\phi|=|\rd_u\phi|$. (As in Lemma~\ref{lm:Hardyv}, we will only estimate the gauge invariant quantities.)

Let $\sigma>0$ be a constant that will be chosen suitably later on. We partition the integration interval: $[-\infty,u_0]=[-\infty,\sigma v]\cup [\sigma v,u_0]$. For the sake of convenience, we will change our integration variable from $u$ to $-u$ and we will denote $-u$ by $|u|$, so we can estimate:
\begin{equation*}
\begin{split}
&\:\int_{|u_0|}^{\infty} \left(\frac{v}{v+|u'|}\right)^2|\phi|^2(u',v)\,d|u'|\\
=&\:\int_{|u_0|}^{\sigma v} \left(\frac{v}{v+|u'|}\right)^2|\phi|^2(u',v)\,d|u'|+\int_{\sigma v}^{\infty} \left(\frac{v}{v+|u'|}\right)^2|\phi|^2\,d|u'|\\
\leq &\: \int_{|u_0|}^{\sigma v} |\phi|^2(u',v)\,d|u'|+v^2\int_{\sigma v}^{\infty}(v+|u'|)^{-2}|\phi|^2(u',v)\,d|u'|\\
\leq &\: 4\int_{|u_0|}^{\sigma v}|u'|^2 |D_u\phi|^2(u',v)\,d|u'|+4v^2\int_{\sigma v}^{\infty}|D_u \phi|^2(u',v)\,d|u'|+2 \sigma v|\phi|^2(\sigma v,v)+2\frac{v^2}{\sigma v+v}|\phi|^2(\sigma v,v)\\
\leq &\: 4\max\{1,\sigma^{-2}\}\int^{u_0}_{-\infty} u'^2 |D_u \phi|^2(u',v)\,du'+\left(2+\frac{2}{\sigma(1+\sigma)}\right)\sigma v|\phi|^2(\sigma v,v),
\end{split}
\end{equation*}
where we applied Lemma~\ref{lm:hardy} with $p=0$ and $p=-2$ in the two intervals. We apply \eqref{eq:estphispacelikecurve} to conclude that for any $\eta',\,\sigma>0$, there exists $C_{\eta',\sigma}>0$ such that
\begin{equation*}
 \int_{u}^{u_0}\left(\frac{v}{v+|u'|}\right)^2|\phi|^2\,du'\leq \left(4\max\{1,\sigma^{-2}\}+(1+\eta')\left(2+\frac{2}{\sigma(1+\sigma)}\right)\right)\int^{u_0}_{-\infty}u'^2 |D_u \phi|^2\,du'+C_{\eta',\sigma}\mathcal{D}_{\rm o}.
\end{equation*}
Finally, choosing $\eta'$ sufficiently small and $\sigma$ sufficiently large, we can choose
$$4\max\{1,\sigma^{-2}\}+4\sigma^{-2}+(1+\eta')\left(2+\frac{2}{\sigma(1+\sigma)}\right) \leq 6(1+\eta),$$
which then gives the desired conclusion. \qedhere

\end{proof}

With Lemmas~\ref{lm:Hardyv} and \ref{lm:Hardyu} in place, we are now ready to control the terms in $E_v(u)$ and $E_u(v)$ which are not manifestly non-negative. These are the terms 
$$\frac{1}{16\pi}\int_{v_0}^{v_{\infty}} u^2M^{-2}\Omega^2\left(Q^2-M^2\right)(u,v)\,dv,\quad \frac{1}{16\pi}\int_{-\infty}^{u_0} v^2M^{-2}\Omega^2\left(Q^2-M^2\right)(u,v)\,du$$ in \eqref{Ev.def} and \eqref{Eu.def}, which will be handled in Propositions~\ref{prop:coercive.error.v} and \ref{prop:coercive.error.u} respectively.

\begin{proposition}\label{prop:coercive.error.v}
For $\eta>0$ as in \eqref{eta.def} and for $\kappa>0$ arbitrary, there exists a constant $C>0$ \textbf{independent of $\mathcal A_{\phi}$} (but dependent on $\kappa$ in addition to $M$, $\mfe$, $\mfm$, $\alp$ and $\eta$) such that 
\begin{equation*}
\begin{split}
&\:\left| \frac{1}{16 \pi}\int_{v_0}^v u^2M^{-2}\Omega^2\left(Q^2-M^2\right)(u,v')\,dv'\right|\\
\leq &\:(\kappa+4\kappa^{-1}) |\mathfrak{e}|M^3 \int_{v_0}^vv'^2|D_v\phi|^2\,dv'+(2+3\kappa^{-1})(1+\eta)|\mathfrak{e}|M^3 \int_{-\infty}^u |u'|^2|D_u\phi|^2\,du' + C(\mathcal{D}_{\rm o}+\mathcal D_{\rm i}).
\end{split}
\end{equation*}
\end{proposition}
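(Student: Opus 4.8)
The proof will combine the pointwise estimate of Lemma~\ref{eq:preciseestQ} with the Hardy-type estimate of Lemma~\ref{lm:Hardyv}. The plan is as follows. First I would insert the bound \eqref{eq:pointestQ2minM2} for $|Q^2-M^2|(u,v')$ into the integral. This produces three types of contributions: (i) a ``main term'' involving $|u'|^{-1}\int_{-\infty}^{u'} u''^2 M^2 |D_u\phi|^2(u'',v')\,du''$; (ii) a term proportional to $\mathcal{D}_{\rm o} v'^{-1}$; and (iii) a term proportional to $\mathcal A_\phi(\mathcal{D}_{\rm o}+\mathcal D_{\rm i})|u'|^{-2}$. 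For (ii) and (iii), one uses the estimate \eqref{eq:estrangeOmega} to bound $\Omega^2 \lesssim M^2(v'+|u|)^{-2}$, so that $u^2 \Omega^2 \lesssim u^2 M^2 (v'+|u|)^{-2} \leq M^2$, and then the $v'$-integrals $\int_{v_0}^\infty v'^{-1}(v'+|u|)^{-2}\,dv'$ and $\int_{v_0}^\infty (v'+|u|)^{-2}\,dv' |u|^{-2} \cdot u^2$ are easily seen to be bounded (after using $u^2(v'+|u|)^{-2}\le 1$), giving a contribution of size $C(\mathcal{D}_{\rm o}+\mathcal D_{\rm i})$; here one must be slightly careful that the $\mathcal A_\phi$ in (iii) is absorbed using the smallness \eqref{const.adm}, so that the final constant is genuinely independent of $\mathcal A_\phi$.

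The heart of the matter is term (i). Here I would again bound $u^2 \Omega^2 \lesssim u^2 M^2(v'+|u|)^{-2}$, so that after dividing by $M^2$ and multiplying by the constants, the contribution of (i) to the integral is controlled (up to the universal constant $\tfrac{1}{16\pi}$ and numerical factors $8\pi M|\mfe|(1+\eta)$, but crucially \emph{not} losing the $(1+\eta)$-sharpness) by a constant times
$$|\mfe|M \int_{v_0}^v \left(\frac{|u|}{v'+|u|}\right)^2 |u|^{-1}\left(\int_{-\infty}^{u} u''^2 M^2 |D_u\phi|^2(u'',v')\,du''\right) dv'.$$
Wait --- the inner integral is over $u''\le u$ but evaluated at $v'$, whereas Lemma~\ref{lm:Hardyv} gives control of $\int \left(\frac{|u|}{v'+|u|}\right)^2|\phi|^2(u,v')\,dv'$ in terms of $\int_{-\infty}^u|u'|^2|D_u\phi|^2(u',|u|)\,du'$ at the fixed $v=|u|$. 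So the right strategy is different: one recognizes that $|u|^{-1}\int_{-\infty}^u u''^2 M^2 |D_u\phi|^2(u'',v')\,du''$ is, by \eqref{eq:fundcalc1} run the other way, essentially $M^2|u||\phi|^2(u,v')$ up to data terms --- more precisely, one should go back a step and realize that the cleanest route is to \emph{not} fully expand via Lemma~\ref{eq:preciseestQ} but rather to re-derive, tracking that $\int_{-\infty}^u r^2|\phi||D_u\phi|\,du' \lesssim |u||\phi|^2(u,v') + |u|^{-1}\int u''^2|D_u\phi|^2$, and then the $|u||\phi|^2$ piece is exactly what Lemma~\ref{lm:Hardyv} and the pointwise bound \eqref{eq:estphispacelikecurve} control after integrating $\left(\frac{|u|}{v'+|u|}\right)^2$ in $v'$. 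Concretely: bound $u^2\Omega^2 M^{-2} |Q^2-M^2| \lesssim \left(\frac{|u|}{v'+|u|}\right)^2 |Q^2-M^2|$, use $|Q+M|\le 2M + (\text{small})$, and use $\int_{-\infty}^u r^2 |\phi||D_u\phi|(u',v')\,du' \le (1+\eta)M^2 |u|^{-1}\int_{-\infty}^u u'^2|D_u\phi|^2(u',v')\,du' + \tfrac14\eta^{-1}M^2|\phi|^2(-\infty,v') + (\text{small})$ from \eqref{eq:mainestQ}. The first of these, after multiplying by $\left(\frac{|u|}{v'+|u|}\right)^2$ and integrating in $v'$, needs $\int_{v_0}^v \left(\frac{|u|}{v'+|u|}\right)^2 |u|^{-1}\left(\int_{-\infty}^u u'^2|D_u\phi|^2(u',v')\,du'\right)dv'$, which is NOT directly of Hardy form either.

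Let me restart the description of step (i) more carefully, since this is the main obstacle. The correct approach: write $\left(\frac{|u|}{v'+|u|}\right)^2 |Q^2-M^2| = \left(\frac{|u|}{v'+|u|}\right)^2 |Q+M|\,|Q-M|$ and then apply Cauchy--Schwarz in a way that pairs one factor of $|\phi|$ with the weight. Using $|Q-M|(u,v') \le |Q-M|(-\infty,v') + 4\pi|\mfe|\int_{-\infty}^u r^2|\phi||D_u\phi|(u',v')\,du'$ and the pointwise bound $|\phi|(u',v') \le |\phi|(-\infty,v') + |u'|^{-1/2}(\int u''^2|D_u\phi|^2)^{1/2}$, one arrives after the dust settles at a bound for the full left-hand side of the Proposition by $|\mfe|M^3$ times a combination of $\int_{v_0}^v v'^2|D_v\phi|^2\,dv'$, $\int_{v_0}^v\left(\frac{|u|}{v'+|u|}\right)^2|\phi|^2(u,v')\,dv'$, and $\int_{-\infty}^u|u'|^2|D_u\phi|^2(u',|u|)\,du'$, plus $C(\mathcal D_{\rm o}+\mathcal D_{\rm i})$. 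At this point Lemma~\ref{lm:Hardyv} is applied to the middle term: it replaces $\int_{v_0}^v\left(\frac{|u|}{v'+|u|}\right)^2|\phi|^2(u,v')\,dv'$ by $4\int_{v_0}^v v'^2|D_v\phi|^2\,dv' + 3(1+\eta)\int_{-\infty}^u|u'|^2|D_u\phi|^2(u',|u|)\,du' + C\mathcal D_{\rm o}$. Collecting, with the various numerical factors bookkept using Young's inequality with weight $\kappa$ (so that cross terms split as $\kappa(\cdots) + \kappa^{-1}(\cdots)$), yields precisely the stated coefficients $(\kappa + 4\kappa^{-1})|\mfe|M^3$ on the $\int v'^2|D_v\phi|^2$ term and $(2+3\kappa^{-1})(1+\eta)|\mfe|M^3$ on the $\int |u'|^2|D_u\phi|^2$ term. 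The main obstacle is the careful bookkeeping of the numerical constants so that the $(1+\eta)$ factor (which ultimately feeds into the parameter restriction \eqref{par.con.thm}) is not degraded and that every error constant $C$ is verifiably independent of $\mathcal A_\phi$ --- the latter requiring that wherever $\mathcal A_\phi$ appears it is multiplied by a negative power of $v_0$ or $|u_0|$ that is taken small via \eqref{const.adm}. Throughout, one works with an auxiliary gauge in which $|\rd_v\phi| = |D_v\phi|$ and $|\rd_u\phi| = |D_u\phi|$, as in Lemmas~\ref{lm:fundthmcalc} and \ref{lm:Hardyv}, and only gauge-invariant quantities appear in the final estimates.
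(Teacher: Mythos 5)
You have correctly identified the two key ingredients (the expansion of $Q^2-M^2$ in the $u$-direction from Lemma~\ref{eq:preciseestQ}, and the Hardy-type estimate of Lemma~\ref{lm:Hardyv}), and you have even reverse-engineered where the coefficients $\kappa+4\kappa^{-1}$ and $(2+3\kappa^{-1})(1+\eta)$ must come from. But the proof as written has a genuine gap: the step that actually makes the argument close is an \emph{integration by parts in $v$}, and it never appears in your proposal. The paper writes
\begin{equation*}
\int_{v_0}^v u^2M^{-2}\Omega^2\,(Q^2-M^2)\,dv' = \Big(\int_{v_0}^{v}u^2M^{-2}\Omega^2\,dv''\Big)(Q^2-M^2)\Big|_{v'=v} - \int_{v_0}^v\Big(\int_{v_0}^{v'}u^2M^{-2}\Omega^2\,dv''\Big)\,\rd_v(Q^2-M^2)\,dv',
\end{equation*}
together with the auxiliary bound $\int_{v_0}^{v'}u^2M^{-2}\Omega^2\,dv''\leq 4|u|v'(v'+|u|)^{-1}+\dots$ coming from $\rd_v(|u|v(v+|u|)^{-1})=(|u|/(v+|u|))^2$. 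This does two things at once: (a) the factor $|Q^2-M^2|$ in the boundary term is evaluated at the \emph{fixed endpoint} $v'=v$, so \eqref{eq:pointestQ2minM2} produces exactly $2(1+\eta)|\mfe|M^3\int_{-\infty}^u u'^2|D_u\phi|^2(u',v)\,du'$ with the flux at the single value $v$ appearing in the statement; and (b) the bulk term now carries $|Q||\rd_vQ|\ls r^2|\phi||D_v\phi|$ via \eqref{eq:Q2}, which is the \emph{only} mechanism by which $|D_v\phi|$ enters, and which is then split by Young's inequality with weight $\kappa$ into $\kappa\int v'^2|D_v\phi|^2$ plus $\kappa^{-1}\int(|u|/(v'+|u|))^2|\phi|^2$, the latter being precisely what Lemma~\ref{lm:Hardyv} controls.

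Without this step, the obstruction you yourself flagged is fatal and your ``restart'' does not remove it: substituting \eqref{eq:pointestQ2minM2} (or equivalently \eqref{eq:mainestQ}) directly under the $v'$-integral leaves the term
\begin{equation*}
(1+\eta)\,|\mfe| M\int_{v_0}^v \Big(\f{|u|}{v'+|u|}\Big)^2|u|^{-1}\Big(\int_{-\infty}^u u'^2M^2|D_u\phi|^2(u',v')\,du'\Big)\,dv',
\end{equation*}
in which the inner flux is evaluated at the running variable $v'$. Since $\int_{v_0}^v(|u|/(v'+|u|))^2\,dv'\sim|u|$, the only available bound is $\sup_{v'}\int_{-\infty}^u u'^2|D_u\phi|^2(u',v')\,du'\leq \mathcal A_\phi(\mathcal{D}_{\rm o}+\mathcal D_{\rm i})$ from \eqref{eq:bootstrapphi}, which carries the forbidden factor $\mathcal A_\phi$ with no compensating negative power of $v_0$ or $|u_0|$; there is also no monotonicity allowing one to replace $v'$ by the fixed $v$. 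Your description of the fix (``apply Cauchy--Schwarz in a way that pairs one factor of $|\phi|$ with the weight \dots one arrives after the dust settles'') does not specify any manipulation that makes $|D_v\phi|$ appear or that freezes the $u$-flux at $v'=v$ --- indeed, expanding $Q-M$ only in the $u$-direction can never produce the $\int v'^2|D_v\phi|^2$ term you list. The integration by parts (equivalently, a Fubini rearrangement of the iterated integral) is the missing idea, and it is the heart of the proof, not bookkeeping.
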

\begin{proof}

\textbf{The main integration by parts.} We begin the argument by a simple integration by parts. 
\begin{equation}
\begin{split}
\label{eq:intbypartsv}
&\: \left| \frac{1}{16 \pi}\int_{v_0}^v u^2M^{-2}\Omega^2\left(Q^2-M^2\right)(u,v')\,dv'\right|\\
\leq &\:\left|-\frac{1}{16 \pi}\int_{v_0}^v\left(\int_{v_0}^{v'}u^2M^{-2}\Omega^2(u,v')\,dv''\right)\partial_v\left(Q^2-M^2\right)(u,v')\,dv'\right| \\
&\:+\left|\frac{1}{16 \pi}\left(\int_{v_0}^{v'}u^2M^{-2}\Omega^2(u,v'')\,dv''\right)\left(Q^2-M^2\right)(u,v')\bigg|^{v'=v}_{v'=v_0}\right|\\
\leq &\:\frac{1}{16 \pi}\left|\int_{v_0}^v u^2M^{-2}\Omega^2(u,v')\,dv'\right||Q^2-M^2|(u,v)\\
&\:+\frac{1}{8 \pi}\int_{{v_0}}^v \left|\int_{v_0}^{v'} u^2M^{-2}\Omega^2(u,v'')\,dv'''\right||Q||\partial_vQ|(u,v')\,dv' .
\end{split}
\end{equation}
Note that there are two terms that we need to estimate.

\textbf{An auxiliary computation.} Before we proceed, we first estimate the integral
$$\int_{v_0}^v u^2M^{-2}\Omega^2(u,v')\,dv'$$
which appears in \eqref{eq:intbypartsv}. From \eqref{eq:LinftyOmegav2} it follows that
\begin{equation}\label{eq:Linfty.Omega.cons}
M^{-2}\Omega^2\leq 4(v+|u|)^{-2}+C|u|^{-\frac{1}{2}}(v+|u|)^{-2},
\end{equation}
and hence,
\begin{equation*}
\begin{split}
\int_{v_0}^v u^2M^{-2}\Omega^2(u,v')\,dv'\leq&\:4\int_{v_0}^v \left(\frac{|u|}{v'+|u|}\right)^2\,dv'+C|u|^{-\frac{1}{2}}\int_{v_0}^v\left(\frac{|u|}{v'+|u|}\right)^2\,dv'.
\end{split}
\end{equation*}
Note that
\begin{equation}
\label{eq:uvderivatives}
\partial_v(|u|v(v+|u|)^{-1})=\left(\frac{|u|}{v+|u|}\right)^2.
\end{equation}
Hence,
\begin{equation}
\label{eq:auxintegralest}
\begin{split}
\int_{v_0}^v u^2M^{-2}\Omega^2(u,v')\,dv'\leq&\:4|u|v(v+|u|)^{-1}+C|u|^{\frac{1}{2}}v(v+|u|)^{-1}.
\end{split}
\end{equation}

\textbf{Estimating the boundary term.} We first estimate the boundary term in \eqref{eq:intbypartsv} above by using \eqref{eq:pointestQ2minM2} and \eqref{eq:auxintegralest} 
\begin{equation}\label{eq:coercive.bdryv}
\begin{split}
&\:\frac{1}{16 \pi}\left|\int_{v_0}^v u^2M^{-2}\Omega^2(u,v')\,dv'\right||Q^2-M^2|(u,v)\\
\leq&\: 2(1+\eta)M|\mathfrak{e}| \int_{-\infty}^uu^2|D_u\phi|^2(u',v)M^2\,du'+C\mathcal{D}_{\rm o}+C\mathcal A_{\phi}(\mathcal{D}_{\rm o}+\mathcal{D}_{\rm i})|u|^{-1}+C(\mathcal{D}_{\rm o}+\mathcal{D}_{\rm i})|u|^{-\frac{1}{2}}.
\end{split}
\end{equation}

\textbf{Estimating the remaining integral I: the error term.} Now, we estimate the remaining integral on the very right-hand side of \eqref{eq:intbypartsv} by applying \eqref{eq:Q2} and \eqref{eq:auxintegralest}:
\begin{equation}\label{remainingintegralv}
\frac{1}{8 \pi}\int_{v_0}^v \left|\int_{v_0}^{v'} u^2M^{-2}\Omega^2(u,v'')\,dv''\right||Q||\partial_vQ|(u,v')\,dv'\leq \underbrace{2M^3 |\mathfrak{e}|\int_{v_0}^v |u|v'(v'+|u|)^{-1}|\phi||D_v\phi|\,dv'}_{Main\,\,term}+\textnormal{Err},
\end{equation}
where
\begin{equation}\label{Err1.def}
\begin{split}
\textnormal{Err}=&\: C\int_{v_0}^v |u|v'(v'+|u|)^{-1}|Q-M ||\phi||D_v\phi|(u,v')\,dv'\\
&+C\int_{v_0}^v |M+(Q-M)| |u|^{\f 12}v' (v'+|u|)^{-1} |\phi||D_v\phi|(u,v')\,dv'\\
&+C\int_{v_0}^v |u|v'(v'+|u|)^{-1}|r^2-M^2||\phi||D_v\phi|(u,v')\,dv'.
\end{split}
\end{equation}
Let us start with the error term, beginning with the second term \eqref{Err1.def}, which is the hardest since the decay is the weakest. By \eqref{eq:Linftyphi},
\begin{equation*}
\begin{split}
&\: \int_{v_0}^v |M+(Q-M)| |u|^{\f 12} v (v+|u|)^{-1} |\phi||D_v\phi|(u,v')\,dv'\\
\ls & \:\sqrt{\mathcal{D}_{\rm o}} \int_{v_0}^v |u|^{\f 12} v^{\f 12-\f{\alpha}{2}} (v+|u|)^{-1} |D_v\phi|(u,v')\,dv' + \sqrt{\mathcal A_{\phi}(\mathcal{D}_{\rm o} + \mathcal D_{\rm i})} \int_{v_0}^v v (v+|u|)^{-1} |D_v\phi|(u,v')\,dv' \\
\ls & \:\sqrt{\mathcal A_{\phi}(\mathcal{D}_{\rm o} +\mathcal D_{\rm i})} (v_0+|u|)^{-\f 14} \left(\int_{v_0}^v  v'^{-\f 32}\,dv'\right)^{\f 12}\left(\int_{v_0}^v v'^2 |D_v\phi|^2 (u,v')\,dv'\right)^{\f 12} \\
\ls & \: \mathcal A_\phi(\mathcal{D}_{\rm o} +\mathcal D_{\rm i}) (v_0+|u|)^{-\f 14} v_0^{-\f 14}.
\end{split}
\end{equation*}

The first term in \eqref{Err1.def} can be treated similarly, with the only caveat that there is a contribution where $|\phi|$ and $|Q-M|$ only give $v$-decay, and therefore one does not get any smallness in $(v_0+|u|)^{-\f 14}$. Nevertheless, this term has a coefficient that depends only on $\mathcal{D}_{\rm o}$ (and not on $\mathcal D_{\rm i}$). More precisely, using \eqref{eq:Linftyphi} and \eqref{eq:LinftyQ}, we have
\begin{equation*}
\begin{split}
& \:\int_{v_0}^v |u|v'(v'+|u|)^{-1}|Q-M ||\phi||D_v\phi|(u,v')\,dv'\\
\ls & \:\mathcal{D}_{\rm o}^{\f 32} \int_{v_0}^v v'^{-\f 12}|D_v\phi|(u,v')\,dv' + \mathcal A_\phi^2(\mathcal{D}_{\rm o} +\mathcal D_{\rm i})^2 (v_0+|u|)^{-\f 14} v_0^{-\f 14}\\
\ls & \:\mathcal{D}_{\rm o}^{\f 32} v_0^{-\f 12}\left(\int_{v_0}^v v'^2|D_v\phi|^2(u,v')\,dv'\right)^{\f 12} + \mathcal A_\phi^2(\mathcal{D}_{\rm o} +\mathcal D_{\rm i})^2 (v_0+|u|)^{-\f 14} v_0^{-\f 14}\\
\ls & \: \mathcal A_\phi^{\f 12}\mathcal{D}_{\rm o}^2 v_0^{-\f 12} + \mathcal A_\phi^2(\mathcal{D}_{\rm o} +\mathcal D_{\rm i})^2 (v_0+|u|)^{-\f 14} v_0^{-\f 14}.
\end{split}
\end{equation*}
Finally, the third term in \eqref{Err1.def} can be handled in an identical manner as the first term, except for using \eqref{r.est} instead of \eqref{eq:LinftyQ}, so that we have
\begin{equation*}
\begin{split}
\int_{v_0}^v |u|v'(v'+|u|)^{-1}|r^2-M^2||\phi||D_v\phi|(u,v')(u,v')\,dv'
\ls & \: \mathcal A_\phi^{\f 12}\mathcal{D}_{\rm o}^2 v_0^{-\f 12} + \mathcal A_\phi^2(\mathcal{D}_{\rm o} +\mathcal D_{\rm i})^2 (v_0+|u|)^{-\f 14} v_0^{-\f 14}.
\end{split}
\end{equation*}
Putting all these together, choosing $|u_0|$ and $v_0$ appropriately large (where the largeness of $v_0$ does \emph{not} depend on $\mathcal D_{\rm i}$), and returning to \eqref{Err1.def}, we thus obtain
\begin{equation}\label{Errv}
\begin{split}
\textnormal{Err}\lesssim \mathcal{D}_{\rm o}+\mathcal{D}_{\rm i}.
\end{split}
\end{equation}

\textbf{Estimating the remaining integral II: the main term.} Now for the main term in \eqref{remainingintegralv}, we have, using Young's inequality,
\begin{equation}\label{eq:maintermv}
\begin{split}
&\: 2M^3 |\mathfrak{e}|\int_{v_0}^v |u|v'(v'+|u|)^{-1}|\phi||D_v\phi|\,dv'\\
\leq &\: \kappa |\mathfrak{e}| M^3 \int_{v_0}^vv'^2|D_v\phi|^2\,dv'+\kappa^{-1}|\mathfrak{e}| M^3 \int_{v_0}^v \left(\frac{|u|}{v'+|u|}\right)^2|\phi|^2\,dv'\\
\leq &\:  (\kappa+4\kappa^{-1}) |\mathfrak{e}|M^3\int_{v_0}^vv'^2|D_v\phi|^2\,dv'+3(1+\eta)\kappa^{-1}|\mathfrak{e}|M^3 \int_{-\infty}^u |u'|^2|D_u\phi|^2\,du' + C_{\kappa} (\mathcal{D}_{\rm o}+\mathcal D_{\rm i}),
\end{split}
\end{equation}
where in the last line we have used Lemma~\ref{lm:Hardyv}.

\textbf{Putting everything together.}  Combining \eqref{eq:intbypartsv}, \eqref{eq:coercive.bdryv}, \eqref{remainingintegralv}, \eqref{Errv}and \eqref{eq:maintermv}, we obtain the desired conclusion.
\end{proof}

We now turn to the analogue of Proposition~\ref{prop:coercive.error.v} on constant-$v$ hypersurfaces.
\begin{proposition}\label{prop:coercive.error.u}
For $\eta>0$ as in \eqref{eta.def}, there exists a constant $C>0$ \textbf{independent of $\mathcal A_{\phi}$} such that
\begin{equation*}
\begin{split}
&\:\left|\frac{1}{16 \pi}\int_{-\infty}^{u_0} v^2M^{-2}\Omega^2\left(Q^2-M^2\right)(u',v)\,du' \right|\\
\leq &\:2(1+\sqrt{6})|\mfe| M^3(1+\eta) \int_{-\infty}^{u_0} |u'|^2|D_u\phi|^2(u',v)\,du' + C(\mathcal{D}_{\rm o} + \mathcal D_{\rm i}).
\end{split}
\end{equation*}
\end{proposition}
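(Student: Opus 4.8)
The strategy mirrors the proof of Proposition~\ref{prop:coercive.error.v}, but now on a constant-$v$ hypersurface, where the geometry of the weights is slightly different. I would begin with the same integration by parts in $u$ that was used on the constant-$u$ slices: write
$$v^2 M^{-2}\Omega^2 (Q^2-M^2) = \partial_u\!\left(\int_{-\infty}^{u} v^2 M^{-2}\Omega^2(u'',v)\,du''\right)(Q^2-M^2),$$
integrate by parts in $u'$ over $(-\infty,u_0]$, and use \eqref{eq:Q1} to replace $\partial_u Q$. The boundary term at $u'=-\infty$ vanishes because $\Omega^2\to 0$ there (cf.\ \eqref{eq:estrangeOmega}), while the boundary term at $u'=u_0$ is controlled using the pointwise estimate \eqref{eq:pointestQ2minM2} for $|Q^2-M^2|$ and a bound on $\int_{-\infty}^{u_0} v^2 M^{-2}\Omega^2(u'',v)\,du''$.

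The key auxiliary computation is the analogue of \eqref{eq:auxintegralest}: using \eqref{eq:Linfty.Omega.cons} (valid on constant-$v$ slices exactly as stated), $\int_{-\infty}^{u} v^2 M^{-2}\Omega^2(u'',v)\,du'' \leq 4\int_{-\infty}^u (v/(v+|u''|))^2 \,d|u''| + (\textnormal{lower order})$, and since $\partial_u(-v\,|u|(v+|u|)^{-1}) = (v/(v+|u|))^2$ one gets $\int_{-\infty}^{u} v^2 M^{-2}\Omega^2 \,du'' \leq 4 v|u|(v+|u|)^{-1} + C|u|^{-\f12}v|u|(v+|u|)^{-1}$, hence in particular this integral is $\leq 4v + C$ as $|u|\to\infty$, i.e.\ it is bounded by (essentially) $4v$ globally on the slice. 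Plugging this into both the boundary term and the remaining bulk integral, the boundary term contributes $2(1+\eta)|\mfe|M^3\int_{-\infty}^{u_0}|u'|^2 |D_u\phi|^2\,du' + C(\mathcal D_{\rm o}+\mathcal D_{\rm i})$ (exactly as in \eqref{eq:coercive.bdryv}), and the bulk integral, after applying \eqref{eq:Q2} to write $|\partial_v Q| \ls |\mfe| r^2 M^{-2}\Omega^{-2}\cdot\Omega^2 |\phi||D_v\phi|$... wait — here I must be careful: on a constant-$v$ slice we are integrating in $u'$, and $\partial_v Q$ does not appear; rather the integration by parts in $u'$ produces $\partial_u Q$, which by \eqref{eq:Q1} is $2\pi i r^2 \mfe(\phi\overline{D_u\phi}-\overline\phi D_u\phi)$, so $|\partial_u Q| \leq 4\pi |\mfe| r^2 |\phi| |D_u\phi|$. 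Thus the bulk term becomes $\ls M^3|\mfe| \int_{-\infty}^{u_0} v|u'|(v+|u'|)^{-1} |\phi||D_u\phi|(u',v)\,du' + \textnormal{Err}$, with Err a sum of lower-order pieces (coming from $|Q-M|$, $|r^2-M^2|$, and the $|u'|^{-\f12}$ correction in $\Omega^2$) that are estimated using the pointwise bounds \eqref{eq:Linftyphi}, \eqref{eq:LinftyQ}, \eqref{r.est} together with Cauchy--Schwarz and \eqref{eq:bootstrapphi}, yielding $\textnormal{Err}\ls \mathcal D_{\rm o}+\mathcal D_{\rm i}$ after choosing $|u_0|,v_0$ large.

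For the main term, Young's inequality gives, for any $\kappa'>0$,
$$2M^3|\mfe|\int_{-\infty}^{u_0} v|u'|(v+|u'|)^{-1}|\phi||D_u\phi|\,du' \leq \kappa' |\mfe|M^3 \int_{-\infty}^{u_0}|u'|^2|D_u\phi|^2\,du' + \kappa'^{-1}|\mfe|M^3\int_{-\infty}^{u_0}\Bigl(\tfrac{v}{v+|u'|}\Bigr)^2|\phi|^2\,du',$$
and then Lemma~\ref{lm:Hardyu} bounds the last $|\phi|^2$-integral by $6(1+\eta)\int_{-\infty}^{u_0}|u'|^2|D_u\phi|^2\,du' + C\mathcal D_{\rm o}$. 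Collecting all contributions, the total coefficient of $|\mfe|M^3\int_{-\infty}^{u_0}|u'|^2|D_u\phi|^2\,du'$ is $2(1+\eta) + \kappa' + 6\kappa'^{-1}(1+\eta)$ (up to the harmless $(1+\eta)$ absorbing the near-$1$ constants), which is minimized over $\kappa'$ at $\kappa' = \sqrt{6(1+\eta)}$, giving $2(1+\eta) + 2\sqrt{6}\sqrt{1+\eta}\cdot\sqrt{1+\eta} = 2(1+\sqrt6)(1+\eta)$ after bounding $\sqrt{1+\eta}\leq 1+\eta$. This is exactly the claimed coefficient $2(1+\sqrt6)|\mfe|M^3(1+\eta)$, and everything else is collected into $C(\mathcal D_{\rm o}+\mathcal D_{\rm i})$.

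\textbf{Main obstacle.} The delicate point, as in Proposition~\ref{prop:coercive.error.v}, is tracking the \emph{sharp} constants: the $(1+\eta)$ from the pointwise estimate \eqref{eq:pointestQ2minM2}, the factor $6$ from the Hardy inequality Lemma~\ref{lm:Hardyu}, and the optimization in $\kappa'$ must combine to give precisely $2(1+\sqrt6)(1+\eta)$ and not something larger — this is what ultimately feeds into the parameter restriction \eqref{par.con.thm} via the constant $10+5\sqrt6 - 3\sqrt{9+4\sqrt6}$ appearing when this is combined with Proposition~\ref{prop:coercive.error.v} in Proposition~\ref{prop:coercenergy}. The error-term bookkeeping (showing every non-leading term is $\ls \mathcal D_{\rm o}+\mathcal D_{\rm i}$ with constants independent of $\mathcal A_\phi$, using the smallness of $\mathcal D_{\rm o} v_0^{-1/10}$ and $(\mathcal D_{\rm o}+\mathcal D_{\rm i})|u_0|^{-1/10}$) is routine but must be done carefully so as not to introduce hidden $\mathcal A_\phi$-dependence into the leading coefficient.
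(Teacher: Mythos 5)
Your overall architecture is the right one (integrate by parts in $u$, control the boundary term with \eqref{eq:pointestQ2minM2}, control the bulk term with the Hardy inequality of Lemma~\ref{lm:Hardyu}, and track the sharp constants), and your treatment of the main term by Young's inequality plus optimization in $\kappa'$ is a valid, equivalent alternative to the paper's direct Cauchy--Schwarz, yielding the same coefficient $2\sqrt6(1+\eta)$. However, there is a concrete error in your key auxiliary computation which breaks the argument. You anchor the antiderivative at the event horizon, $H(u')=\int_{-\infty}^{u'}v^2M^{-2}\Omega^2\,du''$, and claim $H(u')\le 4v|u'|(v+|u'|)^{-1}+(\textnormal{lower order})$. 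This is false: the primitive $-v|u|(v+|u|)^{-1}$ of $v^2(v+|u|)^{-2}$ tends to $-v$, not $0$, as $u\to-\infty$, so in fact
\begin{equation*}
\int_{-\infty}^{u'}\frac{v^2}{(v+|u''|)^2}\,du''=\frac{v^2}{v+|u'|},
\end{equation*}
and hence $H(u')\lesssim v^2(v+|u'|)^{-1}$, which is of size $v$ (not $|u'|$) whenever $|u'|\ll v$. This is fatal in both places where the weight is used. For your boundary term at $u'=u_0$ you need $H(u_0)\cdot|u_0|^{-1}\le 4+o(1)$ to pair with the $|u_0|^{-1}$ in \eqref{eq:pointestQ2minM2}; instead $H(u_0)|u_0|^{-1}\approx 4v^2\big((v+|u_0|)|u_0|\big)^{-1}\to\infty$ as $v\to\infty$ (recall $u_0$ is fixed while $v$ ranges over $[v_0,\infty)$), so the boundary term is not bounded by $2(1+\eta)|\mfe|M^3\int_{-\infty}^{u_0}|u'|^2|D_u\phi|^2\,du'+C(\mathcal{D}_{\rm o}+\mathcal D_{\rm i})$. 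For the bulk term, the correct weight $v^2(v+|u'|)^{-1}$ does not factor as $|u'|\cdot v(v+|u'|)^{-1}$, so Young/Cauchy--Schwarz no longer produces the quantity $\int\big(v/(v+|u'|)\big)^2|\phi|^2\,du'$ that Lemma~\ref{lm:Hardyu} controls; the leftover weight is too singular in the region $|u'|\le v$ to be absorbed without importing an $\mathcal A_\phi$-dependence into the leading coefficient.

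The fix is precisely the paper's choice of antiderivative: work on $[u,u_0]$ with $G(u')=\int_{u'}^{u_0}v^2M^{-2}\Omega^2\,du''$, which vanishes at $u_0$, and take $u\downarrow-\infty$ at the end. Then $G(u')\le 4|u'|v(v+|u'|)^{-1}+C|u'|^{\frac12}v^2(v+|u'|)^{-2}$ (this is \eqref{eq:estuintOmega}), the surviving boundary term sits at $u'=u$ where the factor $4|u|$ exactly cancels the $|u|^{-1}$ in \eqref{eq:pointestQ2minM2}, and the bulk weight factors as $\big(|u'||D_u\phi|\big)\cdot\big(v(v+|u'|)^{-1}|\phi|\big)$ as required for Lemma~\ref{lm:Hardyu}. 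With that single correction, the rest of your argument, including the error-term bookkeeping and the constant accounting, goes through and reproduces the stated coefficient $2(1+\sqrt6)|\mfe|M^3(1+\eta)$.
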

\begin{proof}
We will consider the integral
\begin{equation*}
\frac{1}{16 \pi}\int_{u}^{u_0} v^2M^{-2}\Omega^2\left(Q^2-M^2\right)(u',v)\,du'
\end{equation*}
and take the limit $u\downarrow -\infty$.

\textbf{The main integration by parts.} We integrate by parts as in \eqref{eq:intbypartsv}
\begin{equation}
\begin{split}
\label{eq:intbypartsu}
&\:\left|\frac{1}{16 \pi}\int_{u}^{u_0} v^2M^{-2}\Omega^2\left(Q^2-M^2\right)(u',v)\,du'\right|\\
\leq &\:\left|\frac{1}{16 \pi}\int_{u}^{u_0}\left(\int_{u'}^{u_0}v^2M^{-2}\Omega^2\,du''\right)\partial_u\left(Q^2-M^2\right)(u',v)\,du'\right|\\
&+\left|\frac{1}{16 \pi}\left(\int_{u'}^{u_0}v^2M^{-2}\Omega^2(u'',v)\,du''\right)\left(Q^2-M^2\right)(u',v)\bigg|^{u'=u_0}_{u'=u}\right|\\
\leq &\:\frac{1}{16 \pi}\int_{u}^{u_0}v^2M^{-2}\Omega^2(u',v)\,du' \cdot|Q^2-M^2|(u,v)\\
&+\frac{1}{8 \pi}\int_{u}^{u_0} \left[\int_{\infty}^{u'} v^2M^{-2}\Omega^2(u'',v)\,du''\right] |Q||\partial_uQ|(u',v)\,du' .
\end{split}
\end{equation}

\textbf{An auxiliary computation.} By \eqref{eq:Linfty.Omega.cons} and
$$-\rd_u(|u|v(v+|u|)^{-1}) = \left(\f{v^2}{v+|u'|}\right)^2,$$
we obtain
\begin{equation}\label{eq:estuintOmega}
\begin{split}
\int_{u}^{u_0} v^2M^{-2}\Omega^2(u',v)\,du' \leq &\: 4\int_{u}^{u_0} \left(\f{v^2}{v+|u'|}\right)^2\, du' + C v^2\int_{u}^{u_0} \f{1}{|u'|^{\f 12}(v+|u'|)^2}\, du'\\
\leq &\: 4|u|v(v+|u|)^{-1} + C|u|^{\f 12} v^2(v+|u|)^{-2}.
\end{split}
\end{equation}

\textbf{Estimating the boundary term.}
We now control the boundary term in \eqref{eq:intbypartsu}. By \eqref{eq:estuintOmega} combined with \eqref{eq:pointestQ2minM2}, we obtain 
\begin{equation}\label{eq:coercive.bdryu}
\begin{split}
&\:\frac{1}{16 \pi}\left|\int_{u}^{u_0} v^2M^{-2}\Omega^2(u',v)\,du'\right||Q^2-M^2|(u,v)\\
\leq&\: 2(1+\eta)M^3|\mathfrak{e}| \int_{-\infty}^u u'^2|D_u\phi|^2(u',v)\,du'+C\mathcal{D}_{\rm o}+C\mathcal{A}_{\phi}(\mathcal{D}_{\rm o}+\mathcal{D}_{\rm i})|u|^{-1}+C(\mathcal{D}_{\rm o}+\mathcal D_{\rm i})|u|^{-\frac{1}{2}}.
\end{split}
\end{equation}

\textbf{Estimating the remaining integral.} The remaining integral in \eqref{eq:intbypartsu} can be controlled as follows:
\begin{equation*}
\frac{1}{8 \pi}\int_{u}^{u_0} \left|\int_{u'}^{u_0} v^2M^{-2}\Omega^2(u'',v)\,du''\right||Q||\partial_uQ|(u',v)\,du'\leq \underbrace{2M^3 |\mathfrak{e}|\int_{u}^{u_0} |u'|v(v+|u'|)^{-1}|\phi||D_u\phi|(u',v)\,du'}_{Main\,\,term}+\textnormal{Err},
\end{equation*}
where
\begin{equation*}
\begin{split}
\textnormal{Err}=&\: C\int^{u_0}_u |u'|v(v+|u'|)^{-1}|Q-M ||\phi||D_u\phi|(u',v)\,du'\\
&+C\int^{u_0}_u |M+(Q-M)| |u'|^{\f 12} v (v+|u'|)^{-1} |\phi||D_u\phi|(u',v)\,du'\\
&+C\int^{u_0}_u |u'|v(v+|u'|)^{-1}|r^2-M^2||\phi||D_u\phi|(u',v)\,du'.
\end{split}
\end{equation*}
The error term can be estimated in essentially the same manner as the error term in the proof of Proposition~\ref{prop:coercive.error.v}, except we use \eqref{eq:bootstrapphi} for $\int^{u_0}_u u'^2|D_u\phi|^2(u',v)\,du'$ instead of $\int_{v_0}^v v'^2|D_v\phi|^2(u,v')\,dv'$. We omit the details and just record the following estimate:
\begin{equation}\label{eq:coercive.erru}
\textnormal{Err}\lesssim \mathcal{D}_{\rm o}+\mathcal{D}_{\rm i}.
\end{equation}
For the main term in \eqref{eq:intbypartsu}, we apply H\"older's inequality, Lemma~\ref{lm:Hardyu} and Young's inequality to obtain
\begin{equation}\label{eq:coercive.mainterm.u}
\begin{split}
&\: 2M^3 |\mathfrak{e}|\int_{u}^{u_0} |u'|v(v+|u'|)^{-1}|\phi||D_u\phi|(u',v)\,du'\\
\leq &\: 2|\mathfrak{e}|M^3 \left(\int_{u}^{u_0}|u'|^2|D_u\phi|^2(u',v)\,du'\right)^{\f 12}\left(\int_{u}^{u_0}\left(\frac{v}{v+|u'|}\right)^2|\phi|^2(u',v)\,du'\right)^{\f 12}\\
\leq &\: 2|\mathfrak{e}|M^3 \sqrt{6}(1+\eta)\int_{u}^{u_0}|u'|^2|D_u\phi|^2(u',v)\,du' + C\mathcal{D}_{\rm o}.
\end{split}
\end{equation}

\textbf{Putting everything together.} Putting together \eqref{eq:intbypartsu}, \eqref{eq:coercive.bdryu}, \eqref{eq:coercive.erru} and \eqref{eq:coercive.mainterm.u}, and taking the limit $u\downarrow -\infty$, we obtain the desired conclusion. \qedhere
\end{proof}

We can now prove the main result of this subsection, namely, the coercivity of the renormalised energy flux (up to controllable error terms).
\begin{proposition}
\label{prop:coercenergy}
We can estimate
\begin{equation}
\label{eq:coercenergy}
\begin{split}
\int_{-\infty}^{u_0} |u|^2M^2 |D_u\phi|^2\,du+ \int_{v_0}^{v_{\infty}} v^2M^2 |D_v\phi|^2\,dv\leq \:&\mu^{-1}\left(E_u(v)+E_v(u)\right)+C(\mathcal{D}_{\rm o}+\mathcal{D}_{\rm i}),
\end{split}
\end{equation}
with $\mu$ as in \eqref{mu.def}.
\end{proposition}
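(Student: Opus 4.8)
The plan is to combine the two renormalised fluxes $E_u(v)$ and $E_v(u)$, discard the manifestly non-negative contributions, and then invoke Propositions~\ref{prop:coercive.error.v} and \ref{prop:coercive.error.u} to control the only non-sign-definite pieces, namely those built out of $Q^2-M^2$. The parameter restriction \eqref{par.con.thm} is exactly what makes the final absorption step close.

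In detail, I would write $E_v(u)=A_v(u)+M_v(u)+G_v(u)$ and $E_u(v)=A_u(v)+M_u(v)+G_u(v)$, where $A_v(u)=\int_{v_0}^{v_\infty}v^2M^2|D_v\phi|^2(u,v)\,dv$ and $A_u(v)=\int_{-\infty}^{u_0}u^2M^2|D_u\phi|^2(u,v)\,du$ are the "good" fluxes appearing on the left of \eqref{eq:coercenergy}, $M_v(u)=\frac14\mfm^2M^2\int_{v_0}^{v_\infty}u^2\Om^2|\phi|^2\,dv\ge 0$ and $M_u(v)=\frac14\mfm^2M^2\int_{-\infty}^{u_0}v^2\Om^2|\phi|^2\,du\ge 0$, and $G_v(u)=\frac1{16\pi}\int_{v_0}^{v_\infty}u^2M^{-2}\Om^2(Q^2-M^2)\,dv$, $G_u(v)=\frac1{16\pi}\int_{-\infty}^{u_0}v^2M^{-2}\Om^2(Q^2-M^2)\,du$. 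Since $G_u+G_v+M_u+M_v=(E_u+E_v)-(A_u+A_v)$, since $M_u,M_v\ge 0$, and since $\mu>0$ by \eqref{eta.def}, an elementary rearrangement shows that \eqref{eq:coercenergy} follows once one establishes
\[
-\bigl(G_v(u)+G_u(v)\bigr)\ \le\ (1-\mu)\bigl(A_v(u)+A_u(v)\bigr)+C(\mathcal{D}_{\rm o}+\mathcal{D}_{\rm i}).
\]

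To prove this, bound $|G_v(u)|+|G_u(v)|$ using Proposition~\ref{prop:coercive.error.v} (in the limit $v\uparrow v_\infty$) and Proposition~\ref{prop:coercive.error.u}, choosing the free parameter $\kappa$ in the former to be $\kappa_0:=2+\sqrt6+\sqrt{9+4\sqrt6}$. The key is an elementary identity: rationalising gives $\kappa_0^{-1}=2+\sqrt6-\sqrt{9+4\sqrt6}$, hence
\[
\kappa_0+4\kappa_0^{-1}\ =\ 4+2\sqrt6+3\kappa_0^{-1}\ =\ 10+5\sqrt6-3\sqrt{9+4\sqrt6},
\]
which by \eqref{mu.def} equals $(1-\mu)\bigl((1+\eta)|\mfe|M\bigr)^{-1}$. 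Consequently the total coefficient of the $\int u^2M^2|D_u\phi|^2$-type fluxes produced by the two propositions is $\bigl[(2+3\kappa_0^{-1})+2(1+\sqrt6)\bigr](1+\eta)|\mfe|M=\bigl(10+5\sqrt6-3\sqrt{9+4\sqrt6}\bigr)(1+\eta)|\mfe|M=1-\mu$, while the coefficient of the $\int v^2M^2|D_v\phi|^2$-flux --- which enters only through Proposition~\ref{prop:coercive.error.v}, and without a factor $1+\eta$ --- is $(\kappa_0+4\kappa_0^{-1})|\mfe|M=\bigl(10+5\sqrt6-3\sqrt{9+4\sqrt6}\bigr)|\mfe|M<1-\mu$; all the remaining terms from both propositions are $\le C(\mathcal{D}_{\rm o}+\mathcal{D}_{\rm i})$. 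One bookkeeping subtlety here is that the $D_u\phi$-flux on the right of Proposition~\ref{prop:coercive.error.v} is supported partly on the auxiliary null slice $\{v'=|u|\}$ (via Lemma~\ref{lm:Hardyv}) and partly on $\{v'=v_\infty\}$ (the boundary term), so in each case I bound it by $\sup_{v'}\int_{-\infty}^{u_0}u^2M^2|D_u\phi|^2(u,v')\,du$, which is finite by \eqref{eq:bootstrapphi} and which the left-hand side of \eqref{eq:coercenergy} is understood to dominate.

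Substituting these bounds into the displayed inequality and moving the $(1-\mu)\bigl(A_v(u)+A_u(v)\bigr)$ contribution to the left leaves $\mu\bigl(A_v(u)+A_u(v)\bigr)$ with $\mu>0$; dividing through by $\mu$ and re-adding the non-negative mass fluxes $M_u,M_v$ recovers \eqref{eq:coercenergy}. The main obstacle is the numerology rather than any analytic difficulty: the constant $10+5\sqrt6-3\sqrt{9+4\sqrt6}$ in \eqref{par.con.thm} is exactly the value of $\kappa+4\kappa^{-1}$ at the unique $\kappa$ (necessarily larger than $2$, so that $\kappa\mapsto\kappa+4\kappa^{-1}$ is increasing there) for which $3\kappa^{-1}$ also exhausts the remaining budget $\bigl(10+5\sqrt6-3\sqrt{9+4\sqrt6}\bigr)-4-2\sqrt6$. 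Thus the two constraints extracted from Propositions~\ref{prop:coercive.error.v} and \ref{prop:coercive.error.u} are met with no slack beyond the sliver supplied by $\eta>0$, and it is here --- equivalently, in the positivity of $\mu$ --- that the hypothesis \eqref{par.con.thm} is used.
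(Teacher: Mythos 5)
Your proposal is correct and follows essentially the same route as the paper: drop the non-negative mass terms, bound the two $Q^2-M^2$ contributions via Propositions~\ref{prop:coercive.error.v} and \ref{prop:coercive.error.u} with $\kappa=2+\sqrt6+\sqrt{9+4\sqrt6}$, and check that the resulting coefficients match $1-\mu$ exactly (the paper's proof is precisely this, stated more tersely). Your extra remarks on the numerology and on the $v$-slice at which the $D_u\phi$-flux from Lemma~\ref{lm:Hardyv} lives are accurate and, if anything, slightly more careful than the paper's own bookkeeping.
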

\begin{proof}

Plugging in the estimates in Propositions~\ref{prop:coercive.error.v} and \ref{prop:coercive.error.u} into \eqref{Ev.def} and \eqref{Eu.def} respectively, and using $\mfm\geq 0$, we deduce that
\begin{equation*}
\begin{split}
E_u(v)+E_v(u)\geq&\: \left(1-\left(4+2\sqrt{6}+3\kappa^{-1}\right)(1+\eta)|\mathfrak{e}|M \right) \int_{-\infty}^{u_0} |u|^2r^2 |D_u\phi|^2\,du'\\
&+ \left(1- (\kappa+4\kappa^{-1})|\mathfrak{e}|M\right) \int_{v_0}^{v_{\infty}} v^2M^2 |D_v\phi|^2\,dv'-C(\mathcal{D}_{\rm o}+\mathcal{D}_{\rm i}).
\end{split}
\end{equation*}
Now we choose $\kappa=2 + \sqrt{6} + \sqrt{9 + 4 \sqrt{6}}$ to obtain the conclusion.
\end{proof}

\subsection{Energy estimates for $\phi$}\label{sec:eephi}
We define
\begin{align*}
E_u(v;u):=&\:\int_{-\infty}^{u} u^2M^2|D_u\phi|^2(u',v)+ \frac{1}{4}v^2M^2\Omega^2\left(\mathfrak{m}^2|\phi|^2+\frac{1}{4\pi}M^{-4}\left(Q^2-M^2\right)\right)(u',v)\,du',\\
E_v(u;v):=&\:\int_{v_0}^{v} v^2M^2|D_v\phi|^2(u,v')+ \frac{1}{4}u^2M^2\Omega^2\left(\mathfrak{m}^2|\phi|^2+\frac{1}{4\pi}M^{-4}\left(Q^2-M^2\right)\right)(u,v')\,dv'.
\end{align*}
With the above definitions, we have that $E_u(v)=E_u(v;u_0)$ and $E_v(u)=E_v(u;v_{\infty})$.

\begin{proposition}
\label{prop:eestimatephi}
$E_u(v;u)$ and $E_v(u;v)$ obey the following estimate:
\begin{equation*}
\sup_{v_0\leq v\leq v_{\infty}}E_u(v;u)+\sup_{-\infty<u<u_0}E_v(u;v)\leq C(\mathcal{D}_{\rm o}+\mathcal{D}_{\rm i}).
\end{equation*}
\end{proposition}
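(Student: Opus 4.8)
The goal is an energy estimate of the form $\sup_v E_u(v;u) + \sup_u E_v(u;v) \leq C(\mathcal D_{\rm o} + \mathcal D_{\rm i})$. The starting point should be the divergence identity for the current $\mathbb T_{\mu\nu}X^\nu$ with the multiplier $X = u^2\partial_u + v^2\partial_v$, which, after integrating over the characteristic rectangle $[u,u_0]\times[v_0,v]$, produces an identity relating $E_u(v;u)$, $E_v(u;v)$, the corresponding fluxes on the initial hypersurfaces $\{v=v_0\}$ and $\{u=-\infty\}$ (which are bounded by $\mathcal D_{\rm i}$ and $\mathcal D_{\rm o}$ respectively, using the data estimates in Section~\ref{sec:data} together with \eqref{eq:idataphi}, \eqref{eq:odataphi}), and a spacetime bulk integral coming from the deformation tensor of $X$ plus the inhomogeneous (mass and charge) terms. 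The essential structural point — the analogue of Footnote~\ref{footnote.cancellation} — is that in the bulk term the a priori worst weights (which individually would scale like $v\,\Omega^{-2}$, hence be logarithmically divergent after integration) cancel, leaving a bulk error of size $\lesssim \Omega^{-2}$, which combined with the $r^2$ and $\Omega^2$ factors is integrable. One then also needs that the deviation of $r$ and $\log\Omega$ from their extremal Reissner--Nordström values is small and polynomially decaying, which is exactly the content of the pointwise estimates \eqref{eq:LinftyOmegav2}, \eqref{r.est}, \eqref{dur.est}, \eqref{dvr.est}.

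Concretely, I would: (1) write the energy identity and isolate the bulk terms, grouping them into (a) terms from $\nabla^\mu X^\nu + \nabla^\nu X^\mu$ acting on $\mathbb T$, (b) the Klein--Gordon mass terms, and (c) the charge/Maxwell coupling terms $\propto \mathfrak e$; (2) verify the weight cancellation in (a) using the estimates of Lemma~\ref{RN.est} and \eqref{eq:LinftyOmegav2} (the $\partial_v(v^2\Omega_0^2)+\partial_u(u^2\Omega_0^2)$ bound in \eqref{eq:weightestOmega0} is the extremal-RN version of exactly this cancellation, and one transfers it to the dynamical metric via \eqref{eq:LinftyOmegav4}); (3) bound each resulting spacetime integral by $C(\mathcal D_{\rm o}+\mathcal D_{\rm i})$ up to a term that can be absorbed — here the $|\phi|^2$ bulk terms are handled by the Hardy-type Lemmas~\ref{lm:Hardyv} and \ref{lm:Hardyu}, and the $\mathfrak e$-coupling terms by Lemma~\ref{eq:preciseestQ} together with the already-established Propositions~\ref{prop:coercive.error.v} and \ref{prop:coercive.error.u}; (4) combine with the coercivity Proposition~\ref{prop:coercenergy} to convert the flux quantities $E_u, E_v$ back into the manifestly non-negative $L^2$ quantities on the LHS of \eqref{eq:bootstrapphi}, and close via a continuity/bootstrap argument in $v$ (using the smallness of $v_0^{-1}$ and $|u_0|^{-1}$, and \eqref{const.adm}, to absorb the terms that carry the bootstrap constant $\mathcal A_\phi$, which appear only with a negative power of $v_0$ or $|u_0|$).

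The one subtlety beyond bookkeeping is that the charge terms in the renormalised flux are not sign-definite, so one cannot simply drop the bulk terms and read off the estimate; instead, the borderline $\mathfrak e$-dependent contributions must be tracked with sharp constants and shown to be strictly dominated by $\mu$ (defined in \eqref{mu.def}) — this is precisely why the parameter restriction \eqref{par.con.thm} enters, and it is already packaged in Proposition~\ref{prop:coercenergy}. So the main obstacle is not a single hard inequality but the careful arrangement of a Grönwall-type (or, as the authors note in Footnote~\ref{fn:LR}-adjacent remarks, Grönwall-free, using the smallness parameters) closing argument: one must show that the spacetime bulk errors are bounded by $C(\mathcal D_{\rm o}+\mathcal D_{\rm i})$ plus a small multiple of $\sup(E_u + E_v)$ itself, and then absorb. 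I would implement this by first deriving the differential inequality for $\mathcal E(v) := \sup_{u}\big(E_u(v;u) + E_v(u;v)\big)$, showing $\mathcal E(v) \lesssim \mathcal D_{\rm o} + \mathcal D_{\rm i} + \int_{v_0}^v (\text{weight})\,\mathcal E(v')\,dv'$ with an integrable weight, or — better, exploiting the smallness from $v_0$ — showing directly $\mathcal E(v) \leq C(\mathcal D_{\rm o}+\mathcal D_{\rm i}) + \tfrac12 \mathcal E(v)$ after choosing $v_0, |u_0|$ large, whence the conclusion.
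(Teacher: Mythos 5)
Your proposal is correct and follows essentially the same route as the paper: an energy identity for the renormalised fluxes with the multiplier $u^2\partial_u+v^2\partial_v$, the key cancellation $\partial_v(v^2\Omega_0^2)+\partial_u(u^2\Omega_0^2)$ from \eqref{eq:weightestOmega0} to avoid the logarithmic divergence, and direct bounds on the bulk errors by $C(\mathcal{D}_{\rm o}+\mathcal{D}_{\rm i})$ using the pointwise estimates, the bootstrap assumptions, and the smallness of $v_0^{-1}$, $|u_0|^{-1}$ via \eqref{const.adm}. The only (harmless) organizational difference is that the Hardy lemmas and Propositions~\ref{prop:coercive.error.v}--\ref{prop:coercive.error.u} belong to the separate coercivity step (Proposition~\ref{prop:coercenergy}, combined with this proposition only in Corollary~\ref{cor:phi}), whereas the paper's proof of this proposition bounds the bulk $|\phi|^2$ and charge terms directly from \eqref{eq:Linftyphi} and \eqref{eq:LinftyQ}.
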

\begin{proof}
In order to simplify the notation, in this proof, we omit the arguments in the integrals, which will typically be taken as $(u',v')$. For any $u\in (-\infty, u_0]$ and $v\in [v_0,v_{\infty}]$, we decompose:
\begin{equation*}
\begin{split}
[E_u(v;u)-E_u(v_0;u)]+[E_v(u;v)-E_v(-\infty;v)]=&\:\int_{v_0}^v \partial_v E_u(v';u)\,dv'+\int_{-\infty}^u \partial_uE_v(u';v)\,du'\\
=&\:J_1+J_2+J_3+J_4+J_5+J_6,
\end{split}
\end{equation*}
where
\begin{align*}
J_1=&\:M^2\int_{v_0}^v \int_{-\infty}^u u'^2  \partial_v(|D_u\phi|^2)\,du'dv',\\
J_2=&\:\frac{1}{4} M^2 \mathfrak{m}^2\int_{v_0}^v \int_{-\infty}^u  v'^2 \Omega^2 \partial_v(|\phi|^2)+\partial_v(v'^2\Omega^2)\cdot |\phi|^2\,du'dv',\\
J_3=&\: \frac{1}{16\pi} M^{-2}\int_{v_0}^v \int_{-\infty}^u 2v'^2\Omega^2Q\partial_vQ+\partial_v(v'^2\Omega^2)(Q^2-M^2)\,du'dv',\\
J_4=&\:M^2\int_{v_0}^v \int_{-\infty}^u v'^2 M^2 \partial_u(|D_v\phi|^2)\,du'dv',\\
J_5=&\:\frac{1}{4} M^2 \mathfrak{m}^2\int_{v_0}^v \int_{-\infty}^u u'^2 \Omega^2 \partial_u(|\phi|^2)+\partial_u(u'^2\Omega^2)\cdot |\phi|^2\,du'dv',\\
J_6=&\: \frac{1}{16\pi} M^{-2}\int_{v_0}^v \int_{-\infty}^u 2u'^2\Omega^2Q\partial_uQ+\partial_u(u'^2\Omega^2)(Q^2-M^2)\,du'dv'.
\end{align*}

We first use equations \eqref{eq:phi1} and \eqref{eq:phi2} to rewrite the integral $J_1$ in terms of expressions that are zeroth- or first-order derivatives of the variables $\phi,\Omega,r$. For this, we use that we have the following identity for complex-valued functions $f$:
\begin{equation*}
\partial_v(|f|^2)=(D_vf-i\mathfrak{e} A_vf)\bar{f}+f\overline{(D_vf-i\mathfrak{e} A_vf)}=\bar{f}D_vf+f\overline{D_vf}
\end{equation*}
and similarly
\begin{equation*}
\partial_u(|f|^2)=\bar{f}D_uf+f\overline{D_uf}.
\end{equation*}

We therefore obtain:
\begin{equation*}
\begin{split}
J_1=&\:\int_{v_0}^v \int_{-\infty}^u u'^2 M^2\left(\overline{D_u\phi}D_vD_u\phi+D_u\phi \overline{D_vD_u \phi}\right)\,du'dv'\\
=&\:-\frac{1}{2}\int_{v_0}^v \int_{-\infty}^u \frac{1}{2}u'^2 M^2\mathfrak{m}^2\Omega^2(\phi\overline{D_u\phi}+\bar{\phi}{D_u\phi} )+2M^2r^{-1}u'^2\partial_ur(\overline{D_u\phi}D_v\phi+\overline{D_v\phi}D_u\phi)\\
&+4M^2r^{-1}u'^2\partial_vr|D_u\phi|^2+\frac{1}{2} i\mathfrak{e} M^2r^{-2}u'^2\Omega^2Q(\phi \overline{D_u\phi}-\bar{\phi}{D_u\phi}) \,du'dv'
\end{split}
\end{equation*}
and similarly,
\begin{equation*}
\begin{split}
J_4=&\:\int_{v_0}^v \int_{-\infty}^u v'^2 M^2\left(\overline{D_v\phi}D_uD_v\phi+D_v\phi \overline{D_uD_v \phi}\right)\,du'dv'\\
=&\:-\frac{1}{2}\int_{v_0}^v \int_{-\infty}^u \frac{1}{2}v'^2 M^2\mathfrak{m}^2\Omega^2(\phi\overline{D_v\phi}+\bar{\phi}{D_v\phi} )+2M^2r^{-1}v'^2\partial_vr(\overline{D_u\phi}D_v\phi+\overline{D_v\phi}D_u\phi)\\
&+4M^2r^{-1}v'^2\partial_ur|D_v\phi|^2-\frac{1}{2} i\mathfrak{e}M^2r^{-2}\Omega^2v'^2 Q(\phi \overline{D_v\phi}-\bar{\phi}{D_v\phi}) \,du'dv'.
\end{split}
\end{equation*}
We also rewrite $J_2$ and $J_5$ to obtain
\begin{align*}
J_2=&\:\frac{1}{4} M^2 \mathfrak{m}^2\int_{v_0}^v \int_{-\infty}^u  v'^2 \Omega^2 (\bar{\phi}D_v\phi+\phi\overline{D_v\phi})+\partial_v(v'^2\Omega^2)\cdot |\phi|^2\,du'dv',\\
J_5=&\:\frac{1}{4} M^2 \mathfrak{m}^2\int_{v_0}^v \int_{-\infty}^u  u'^2 \Omega^2 (\bar{\phi}D_u\phi+\phi\overline{D_u\phi})+\partial_u(u'^2\Omega^2)\cdot |\phi|^2\,du'dv'.
\end{align*}
Finally, we use equations \eqref{eq:Q1} and \eqref{eq:Q2} to rewrite $J_3$ and $J_6$:
\begin{align*}
J_3=&\:\int_{v_0}^v \int_{-\infty}^u -\frac{1}{4}i\mathfrak{e}M^{-2}r^2v'^2\Omega^2Q(\phi \overline{D_v\phi}-\bar{\phi}D_v\phi)+\frac{1}{16\pi} M^{-2}\partial_v(v'^2\Omega^2)(Q^2-M^2)\,du'dv',\\
J_6=&\:\int_{v_0}^v \int_{-\infty}^u \frac{1}{4}i\mathfrak{e}M^{-2}r^2u'^2\Omega^2Q(\phi \overline{D_u\phi}-\bar{\phi}D_u\phi)+\frac{1}{16\pi} M^{-2}\partial_u(u'^2\Omega^2)(Q^2-M^2)\,du'dv'.
\end{align*}

By incorporating the cancellations in the terms in $J_i$, we can write:
\begin{equation}\label{E.main.formula}
\begin{split}
[E_u(v;u)-E_u(v_0;u)]+[E_v(u;v)-E_v(-\infty;v)]=&\:\sum_{i=1}^7F_i,
\end{split} 
\end{equation}
with
\begin{align*}
F_{1}=&\:-M^2 \int_{v_0}^v \int_{-\infty}^u r^{-1}(u'^2\partial_ur+v'^2\partial_vr)(\overline{D_u\phi}D_v\phi+\overline{D_v\phi}D_u\phi)\,du'dv',\\
F_{2}=&\:-2M^2\int_{v_0}^v \int_{-\infty}^u r^{-1}\partial_vr\cdot u'^2|D_u\phi|^2+r^{-1}\partial_ur\cdot v'^2|D_v\phi|^2\,du'dv',\\
F_{3}=&\: \frac{1}{4} M^2 \mathfrak{m}^2\int_{v_0}^v \int_{-\infty}^u[\partial_v(v'^2\Omega_0^2)+\partial_u(u'^2\Omega_0^2)]|\phi|^2\,du'dv',\\
F_{4}=&\: \frac{1}{16\pi}M^{-2}\int_{v_0}^v \int_{-\infty}^u [\partial_v(v'^2\Omega_0^2)+\partial_u(u'^2\Omega_0^2)](Q^2-M^2)\,du'dv' ,\\
F_{5}=&\:\frac{1}{4}\int_{v_0}^v \int_{-\infty}^u [\partial_v(v'^2\cdot (\Omega^2-\Omega_0^2)+\partial_u(u'^2\cdot (\Omega^2-\Omega_0^2))](M^2 \mathfrak{m}^2|\phi|^2+\frac{1}{4\pi}M^{-2}(Q^2-M^2))\,du'dv',\\
F_{6}=&\: \frac{1}{4}i\mathfrak e M^{-2}\int_{v_0}^v \int_{-\infty}^u r^{-2}(r^4-M^4)u'^2Q\Omega^2(\phi\overline{D_u\phi}-\overline{\phi}D_u\phi)\,du'dv',\\
F_{7}=&\: \frac{1}{4}i\mathfrak e M^{-2}\int_{v_0}^v \int_{-\infty}^u r^{-2}(M^4-r^4)v'^2Q\Omega^2(\phi\overline{D_v\phi}-\overline{\phi}D_v\phi)\,du'dv'.
\end{align*}
We estimate using Cauchy--Schwarz inequality, Young's inequality, Proposition~\ref{prop:rest} and \eqref{eq:bootstrapphi}
\begin{equation*}
\begin{split}
|F_1|\lesssim &\:\int_{v_0}^v \int_{-\infty}^u  (u'^2|\partial_ur|+v'^2|\partial_vr|)|D_u\phi||D_v\phi|\,du'dv'\\
\lesssim&\: \int_{v_0}^v \int_{-\infty}^u \mathcal{A}_{\phi}(\mathcal{D}_{\rm o}+\mathcal{D}_{\rm i})|D_u\phi||D_v\phi|\,du'dv'\\
\lesssim&\: \int_{v_0}^v \int_{-\infty}^u \mathcal{A}_{\phi}(\mathcal{D}_{\rm o}+\mathcal{D}_{\rm i}) v'^{-2}u'^{\f 32}|D_u\phi|^2\,du'dv'+\int_{v_0}^v \int_{-\infty}^u \mathcal{A}_{\phi}(\mathcal{D}_{\rm o}+\mathcal{D}_{\rm i})u'^{-\f 32} v'^{2}|D_v\phi|^2\,du'dv'\\
\lesssim&\:\mathcal{A}_{\phi}(\mathcal{D}_{\rm o}+\mathcal{D}_{\rm i})v_0^{-1}|u_0|^{-\f 12}\cdot \sup_{v_0\leq v\leq v_{\infty}}  \int_{-\infty}^u u'^{2}|D_u\phi|^2\,du'\\
&+ \mathcal{A}_{\phi}(\mathcal{D}_{\rm o}+\mathcal{D}_{\rm i})|u_0|^{-\f 12}\cdot \sup_{-\infty<u<u_0}  \int_{v_0}^{v_{\infty}} v'^{2}|D_v\phi|^2\,dv'\\
\lesssim&\: \mathcal A_\phi^2(\mathcal{D}_{\rm o}+\mathcal{D}_{\rm i})^2|u_0|^{-\f 12}.
\end{split}
\end{equation*}
We can similarly estimate
\begin{equation*}
\begin{split}
|F_2|\lesssim &\:\int_{v_0}^v \int_{-\infty}^u  |\partial_vr|\cdot u'^2|D_u\phi|^2\,du'dv'+\int_{v_0}^v \int_{-\infty}^u  |\partial_ur|\cdot v'^2|D_v\phi|^2\,du'dv'\\
\lesssim &\:\mathcal{A}_{\phi}(\mathcal{D}_{\rm o}+\mathcal{D}_{\rm i})\left(\int_{v_0}^v \int_{-\infty}^u  v'^{-\f32}\cdot u'^{\f32}|D_u\phi|^2\,du'dv'+\int_{v_0}^v \int_{-\infty}^u u'^{-2}\cdot v'^2|D_v\phi|^2\,du'dv'\right)\\
&+\mathcal{D}_{\rm o}\int_{v_0}^v \int_{-\infty}^u  v'^{-2}\cdot u'^{2}|D_u\phi|^2\,du'dv'\\
\lesssim &\:[\mathcal{D}_{\rm o} v_0^{-1} +\mathcal{A}_{\phi} (\mathcal{D}_{\rm o} +\mathcal{D}_{\rm i})(v_0^{-\frac 12}|u_0|^{-\frac 12} +|u_0|^{-1})] \mathcal{A}_{\phi}(\mathcal{D}_{\rm o}+\mathcal{D}_{\rm i}).
\end{split}
\end{equation*}
In order to estimate $|F_3|$, we use \eqref{eq:weightestOmega0} (with a constant depending on $\beta>0$) and \eqref{eq:Linftyphi} to get
\begin{equation*}
\begin{split}
|F_3|\lesssim &\:\int_{v_0}^v \int_{-\infty}^u |\partial_v(v^2\Omega_0^2)+\partial_u(u^2\Omega_0^2)|\cdot |\phi|^2\,du'dv'\\
\lesssim&\: \int_{v_0}^v \int_{-\infty}^u (v'+|u'|)^{-2+\beta} (\mathcal{D}_{\rm o}v'^{-1}+\mathcal{A}_{\phi}(\mathcal{D}_{\rm o}+ \mathcal D_{\rm i})|u'|^{-1})\,du'dv'\\
\lesssim&\: \mathcal{A}_{\phi}(\mathcal{D}_{\rm o}+ \mathcal D_{\rm i})(v_0+|u_0|)^{-1+2\beta}.
\end{split}
\end{equation*}
Similarly, using \eqref{eq:weightestOmega0} and \eqref{eq:LinftyQ},
\begin{equation*}
\begin{split}
|F_4|\lesssim &\:\int_{v_0}^v \int_{-\infty}^u |\partial_v(v^2\Omega_0^2)+\partial_u(u^2\Omega_0^2)|\cdot|Q-M||Q+M|\,du'dv'\\
\lesssim&\: \int_{v_0}^v \int_{-\infty}^u (v'+|u'|)^{-2+\beta} (\mathcal{D}_{\rm o}v'^{-1}+\mathcal{A}_{\phi}(\mathcal{D}_{\rm o}+ \mathcal D_{\rm i})|u'|^{-1})\,du'dv'\\
\lesssim&\:\mathcal{A}_{\phi}(\mathcal{D}_{\rm o}+ \mathcal D_{\rm i})(v_0+|u_0|)^{-1+2\beta}.
\end{split}
\end{equation*}
Before we estimate $|F_5|$, it is convenient to rewrite the following expression:
\begin{equation*}
\begin{split}
\partial_v(v^2(\Omega^2-\Omega^2_0))+\partial_u(u^2(\Omega^2-\Omega^2_0))=&\:\partial_v\left(v^2\Omega_0^2\left(\frac{\Omega^2}{\Omega_0^2}-1\right)\right)+\partial_u\left(u^2\Omega_0^2\left(\frac{\Omega^2}{\Omega_0^2}-1\right)\right)\\
=&\: \left(\partial_v(v^2\Omega_0^2)+\partial_u(u^2\Omega_0^2)\right)\left(\frac{\Omega^2}{\Omega_0^2}-1\right)\\
&+v^2\Omega_0^2\partial_v\left(\frac{\Omega^2}{\Omega_0^2}-1\right)+u^2\Omega_0^2\partial_u\left(\frac{\Omega^2}{\Omega_0^2}-1\right)\\
=&\: \left(\partial_v(v^2\Omega_0^2)+\partial_u(u^2\Omega_0^2)\right)\left(\frac{\Omega^2}{\Omega_0^2}-1\right)\\
&+2v^2\Omega^2\partial_v\left(\log\frac{\Omega}{\Omega_0}\right)+2u^2\Omega^2\partial_u\left(\log\frac{\Omega}{\Omega_0}\right).
\end{split}
\end{equation*}
Therefore,
\begin{equation*}
\begin{split}
|F_5|\lesssim &\:\int_{v_0}^v \int_{-\infty}^u |\partial_v(v'^2\Omega_0^2)+\partial_u(u'^2\Omega_0^2)|\left|\frac{\Omega^2}{\Omega_0^2}-1\right|\cdot(|\phi|^2+|Q-M||Q+M|)\,du'dv'\\
&+ \int_{v_0}^v \int_{-\infty}^u v'^2\Omega^2\left|\partial_v\left(\log\frac{\Omega}{\Omega_0}\right)\right|\cdot(|\phi|^2+|Q-M||Q+M|)\,du'dv'\\
&+ \int_{v_0}^v \int_{-\infty}^u u'^2\Omega^2\left|\partial_u\left(\log\frac{\Omega}{\Omega_0}\right)\right|\cdot(|\phi|^2+|Q-M||Q+M|)\,du'dv'=:F_{5,1}+F_{5,2}+F_{5,3}.
\end{split}
\end{equation*}
Using \eqref{eq:weightestOmega0}, \eqref{eq:Linftyphi}, \eqref{eq:LinftyQ}, \eqref{eq:LinftyOmegav3}, we can estimate $|F_{5,1}|$ in the same way as $|F_3|$ and $|F_4|$ to obtain
\begin{equation*}
\begin{split}
|F_{5,1}|\lesssim&\: \mathcal{A}_{\phi}(\mathcal{D}_{\rm o}+ \mathcal D_{\rm i})|u_0|^{-\frac{1}{2}} (v_0+|u_0|)^{-1+2\beta}.
\end{split}
\end{equation*}
For $|F_{5,2}|$, we use \eqref{eq:Linftyphi}, \eqref{eq:LinftyQ} and \eqref{eq:bootstrapOmega} to estimate
\begin{equation*}
\begin{split}
&\:|F_{5,2}|\\
\lesssim&\:\int_{v_0}^v \int_{-\infty}^u (\mathcal{D}_{\rm o}v'^{-1}+\mathcal{A}_{\phi}(\mathcal{D}_{\rm o}+\mathcal D_{\rm i})|u'|^{-1})v'^2\Omega^2\left|\partial_v\left(\log\frac{\Omega}{\Omega_0}\right)\right|\,du'dv'\\
\lesssim&\: \mathcal{D}_{\rm o}\left(\sup_{u'\in (-\infty,u_0]} \int_{v_0}^v v'^2\left(\partial_v\left(\log\frac{\Omega}{\Omega_0}\right)\right)^2(u',v')\,dv'\right)^{\f 12}\int_{-\infty}^u\left(\int_{v_0}^v  (v'+|u'|)^{-4} \,dv'\right)^{\f 12}\,du'\\
&+ \mathcal{A}_{\phi}(\mathcal{D}_{\rm o}+ \mathcal D_{\rm i})\left(\sup_{u'\in (-\infty,u_0]} \int_{v_0}^v v'^2\left(\partial_v\left(\log\frac{\Omega}{\Omega_0}\right)\right)^2(u',v')\,dv'\right)^{\f 12}\int_{-\infty}^u\left(\int_{v_0}^v  \f{v'^2}{|u'|^2(v'+|u'|)^{4}} \,dv'\right)^{\f 12}\,du'\\
\lesssim&\: \mathcal{D}_{\rm o}(v_0+|u_0|)^{-\f 12}+\mathcal{A}_{\phi}(\mathcal{D}_{\rm o}+ \mathcal D_{\rm i})|u_0|^{-\f 12},
\end{split}
\end{equation*}
where in the last line we have evaluated an integral as follows: (We only include this estimate for completeness. In what follows, we will bound similar integrals in analogous manner without spelling out the full details.)
\begin{equation*}
\begin{split}
&\: \int_{-\infty}^u\left(\int_{v_0}^v  \f{v'^2}{|u'|^2(v'+|u'|)^{4}} \,dv'\right)^{\f 12}\,du'\\
\ls &\: \int_{-\infty}^u\left(\int_{v_0}^{|u|}  \f{v'^2}{|u'|^2(v'+|u'|)^{4}} \,dv' + \int_{|u|}^v  \f{v'^2}{|u'|^2(v'+|u'|)^{4}} \,dv'\right)^{\f 12}\,du' \\
\ls &\: \int_{-\infty}^u\left(\int_{v_0}^{|u|}  \f{v'^2}{|u'|^6} \,dv' + \int_{|u|}^v  \f{1}{|u'|^2 v'^2} \,dv'\right)^{\f 12}\,du'
\ls  \int_{-\infty}^u |u'|^{-\f 32}\,du' \ls |u_0|^{-\f 12}.
\end{split}
\end{equation*}
For $|F_{5,3}|$, we similarly use \eqref{eq:Linftyphi}, \eqref{eq:LinftyQ} and \eqref{eq:bootstrapOmega} to estimate as follows:
\begin{equation*}
\begin{split}
|F_{5,3}|\lesssim&\:\int_{v_0}^v \int_{-\infty}^u (\mathcal{D}_{\rm o}v'^{-1}+\mathcal{A}_{\phi}(\mathcal{D}_{\rm o}+ \mathcal D_{\rm i})|u'|^{-1})u'^2\Omega^2\left|\partial_u\left(\log\frac{\Omega}{\Omega_0}\right)\right|\,du'dv'\\
\lesssim&\: \mathcal{D}_{\rm o}\left(\sup_{u'\in(-\infty,u_0]}\int_{-\infty}^u u'^2\left(\partial_u\left(\log\frac{\Omega}{\Omega_0}\right)\right)^2(u',v')\,du'\right)^{\f 12}\int_{v_0}^v\left( \int_{-\infty}^u \frac{u'^2}{v'^2(v'+|u'|)^{4}}\,du'\right)^{\f 12}\,dv'\\
&+ \mathcal{A}_{\phi}(\mathcal{D}_{\rm o}+\mathcal D_{\rm i})\left(\sup_{u'\in(-\infty,u_0]}\int_{-\infty}^u u'^2\left(\partial_u\left(\log\frac{\Omega}{\Omega_0}\right)\right)^2(u',v')\,du'\right)^{\f 12}\int_{v_0}^v\left( \int_{-\infty}^u (v'+|u'|)^{-4}\,du'\right)^{\f 12}\,dv'\\
\lesssim&\: \mathcal{D}_{\rm o} v_0^{-\frac{1}{2}}+\mathcal{A}_{\phi}(\mathcal{D}_{\rm o}+\mathcal D_{\rm i})(v_0+|u_0|)^{-\frac{1}{2}}.
\end{split}
\end{equation*}
Thus, combining the estimates for $F_{5,1}$, $F_{5,2}$ and $F_{5,3}$, we obtain
$$|F_5|\ls \mathcal{D}_{\rm o} v_0^{-\frac{1}{2}}+\mathcal{A}_{\phi}(\mathcal{D}_{\rm o}+\mathcal D_{\rm i})|u_0|^{-\frac{1}{2}}.$$
We are left with $|F_6|$ and  $|F_7|$, which are slightly easier because more decay is available. For $F_6$, we use Cauchy--Schwarz inequality, \eqref{eq:Linftyphi}, \eqref{eq:LinftyOmegav2}, \eqref{r.est}, and \eqref{eq:bootstrapphi} to obtain
\begin{equation*}
\begin{split}
|F_6|\lesssim &\:\int_{v_0}^v \int_{-\infty}^u |r-M|u'^2\Omega^2|\phi||D_u\phi|\,du'dv'\\
\lesssim&\:\int_{v_0}^v \int_{-\infty}^u (\mathcal{D}_{\rm o}v'^{-1}+\mathcal A_\phi(\mathcal{D}_{\rm o}+\mathcal D_{\rm i})|u'|^{-1})(\sqrt{\mathcal{D}_{\rm o}}v'^{-\frac{1}{2}}+\sqrt{\mathcal{A}_{\phi}(\mathcal{D}_{\rm o}+\mathcal D_{\rm i})}|u'|^{-\frac{1}{2}})u'^2\Omega^2|D_u\phi|\,du'dv'\\
\lesssim&\: \mathcal{D}_{\rm o}^{\f 32}\left(\sup_{v'\in [v_0,v)}\int_{-\infty}^u u'^2|D_u\phi|^2(u',v')\,du'\right)^{\f 12}\int_{v_0}^v \left( \int_{-\infty}^u v'^{-3} u'^2 (v'+|u'|)^{-4}\,du' \right)^{\f 12}\,dv' \\
&\:+ \mathcal A_\phi^{\f 32}(\mathcal{D}_{\rm o}+\mathcal D_{\rm i})^{\f 32}\left(\sup_{v'\in [v_0,v)} \int_{-\infty}^u u'^2|D_u\phi|^2(u',v')\,du'\right)^{\f 12} \int_{v_0}^v\left( \int_{-\infty}^u |u'|^{-1}(v'+|u'|)^{-4}\,du' \right)^{\f 12}\, dv'\\
\ls & \: \mathcal A_{\phi}^{\f 12}\mathcal{D}_{\rm o}^{\f 32}(\mathcal{D}_{\rm o}+\mathcal D_{\rm i})^{\f 12}v_0^{-\f 12}|u_0|^{-\f 12} + \mathcal A_\phi^2(\mathcal{D}_{\rm o}+\mathcal D_{\rm i})^2|u_0|^{-\f 12}(v_0+|u_0|)^{-\f 12}.   
\end{split}
\end{equation*}
Similarly, we use Cauchy--Schwarz inequality, \eqref{eq:Linftyphi}, \eqref{eq:LinftyOmegav2}, \eqref{r.est}, and \eqref{eq:bootstrapphi} to obtain
\begin{equation*}
\begin{split}
|F_7|\lesssim &\:\int_{v_0}^v \int_{-\infty}^u |r-M|v'^2\Omega^2|\phi||D_v\phi|\,du'dv'\\
\lesssim&\:\int_{v_0}^v \int_{-\infty}^u (\mathcal{D}_{\rm o}v'^{-1}+\mathcal A_\phi(\mathcal{D}_{\rm o}+\mathcal D_{\rm i})|u'|^{-1})(\sqrt{\mathcal{D}_{\rm o}}v'^{-\frac{1}{2}}+\sqrt{\mathcal{A}_{\phi}(\mathcal{D}_{\rm o}+\mathcal D_{\rm i})}|u'|^{-\frac{1}{2}})v'^2\Omega^2|D_v\phi|\,du'dv'\\
\lesssim&\:\mathcal{D}_{\rm o}^{\f 32} \left(\sup_{u'\in  (-\infty,u_0]}\int_{v_0}^v v'^2|D_v\phi|^2(u',v')\,dv'\right)^{\f 12} \int_{-\infty}^u\left(\int_{v_0}^v v'^{-1} (v'+|u'|)^{-4}\,dv'\right)^{\f 12}\,du'\\
&+\mathcal A_{\phi}^{\f 32}(\mathcal{D}_{\rm o}+\mathcal D_{\rm i})^{\f 32} \left(\sup_{u'\in  (-\infty,u_0]}\int_{v_0}^v v'^2|D_v\phi|^2(u',v')\,dv'\right)^{\f 12}\int_{-\infty}^u \left(\int_{v_0}^v v'^2 |u'|^{-3} (v'+|u'|)^{-4}\,dv'\right)^{\f 12} \,du'\\
\lesssim&\: \mathcal A_{\phi}^{\f 12}\mathcal{D}_{\rm o}^{\f 32}(\mathcal{D}_{\rm o}+\mathcal D_{\rm i})^{\f 12} v_0^{-\f 12}(v_0+|u_0|)^{-\f 12}+ \mathcal A_{\phi}^2(\mathcal{D}_{\rm o}+\mathcal D_{\rm i})^2 |u_0|^{-\f 12}(v_0+|u_0|)^{-\f 12}.
\end{split}
\end{equation*}
Choosing $v_0$ and $|u_0|$ large in a manner allowed by \eqref{const.adm}, we obtain
$$|F_1|+\dots + |F_7|\ls \mathcal{D}_{\rm o}+\mathcal{D}_{\rm i}.$$
Finally, noting that the initial data contributions $E_u(v_0;u)$ and $E_v(-\infty;v)$ are by definition bounded by $\mathcal{D}_{\rm o}+\mathcal{D}_{\rm i}$, and returning to \eqref{E.main.formula}, we obtain
$$\sup_{v_0\leq v< v_{\infty}}E_u(v;u)+\sup_{-\infty<u<u_0}E_v(u;v)\ls \mathcal{D}_{\rm o}+\mathcal{D}_{\rm i},$$
which is to be proved.
\end{proof}

Combining Propositions~\ref{prop:coercenergy} and \ref{prop:eestimatephi}, we obtain the following estimate. In particular, this is an improvement over the bootstrap assumption \eqref{eq:bootstrapphi} for $\mathcal A_{\phi}$ sufficiently large depending on $M$, $\mfm$, $\mfe$ and $\eta$. 
\begin{corollary}\label{cor:phi}
Choosing $\mathcal A_\phi$ sufficiently large (depending on $M$, $\mfm$ and $\mfe$), the following estimate holds:
\begin{align*}
\sup_{v\in [v_0,v_{\infty})}\int_{-\infty}^{u_0} u^2 |D_u \phi |^2(u,v)\,du+\sup_{u\in (-\infty,u_0)}\int_{v_0}^{v_{\infty}}v^2|D_v\phi|^2(u,v)\,dv\leq &\: C(\mathcal{D}_{\rm o}+\mathcal D_{\rm i}) \leq \f{A_{\phi}}{2}(\mathcal{D}_{\rm o}+\mathcal D_{\rm i}).
\end{align*}
\end{corollary}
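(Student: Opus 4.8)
The plan is to deduce this corollary directly by chaining together Propositions~\ref{prop:coercenergy} and \ref{prop:eestimatephi}, so essentially no new analytic input is needed. First I would apply Proposition~\ref{prop:eestimatephi} with $u=u_0$ and $v=v_{\infty}$: since by construction $E_u(v)=E_u(v;u_0)$ and $E_v(u)=E_v(u;v_{\infty})$, that proposition immediately yields the uniform bound
$$\sup_{v\in[v_0,v_{\infty})}E_u(v)+\sup_{u\in(-\infty,u_0)}E_v(u)\le C(\mathcal{D}_{\rm o}+\mathcal D_{\rm i}),$$
where --- and this is the point --- the constant $C$ depends only on $M$, $\mfm$, $\mfe$, $\alpha$ and $\eta$, and \emph{not} on $\mathcal A_\phi$ (all the constituent estimates in Propositions~\ref{prop:coercive.error.v}, \ref{prop:coercive.error.u} and \ref{prop:eestimatephi} were explicitly stated to be independent of $\mathcal A_\phi$, precisely so that this step does not become circular).

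Next I would substitute this into Proposition~\ref{prop:coercenergy}. For each fixed admissible pair $(u,v)$, that proposition bounds the first-derivative fluxes by $\mu^{-1}(E_u(v)+E_v(u))+C(\mathcal{D}_{\rm o}+\mathcal D_{\rm i})$; combined with the display above, and recalling that $\mu=\mu(\mfe,M)$ and $\eta=\eta(\mfe,M)$, this yields
$$\int_{-\infty}^{u_0}|u'|^2M^2|D_u\phi|^2(u',v)\,du'+\int_{v_0}^{v_{\infty}}v'^2M^2|D_v\phi|^2(u,v')\,dv'\le C'(\mathcal{D}_{\rm o}+\mathcal D_{\rm i})$$
for all $(u,v)$, with $C'$ still depending only on $M$, $\mfm$, $\mfe$ and $\alpha$. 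Now the first integral depends only on $v$ and the second only on $u$, and both are non-negative, so one may take the supremum over $v$ in the first and over $u$ in the second, then divide by $M^2$; this costs only a harmless factor of $2$ and produces
$$\sup_{v\in[v_0,v_{\infty})}\int_{-\infty}^{u_0}u^2|D_u\phi|^2(u,v)\,du+\sup_{u\in(-\infty,u_0)}\int_{v_0}^{v_{\infty}}v^2|D_v\phi|^2(u,v)\,dv\le C''(\mathcal{D}_{\rm o}+\mathcal D_{\rm i}),$$
which is the first inequality in the statement.

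Finally, to obtain the second inequality I would simply \emph{choose} $\mathcal A_\phi\ge 2C''$. Since $C''$ depends only on $M$, $\mfm$, $\mfe$ (the dependence on $\alpha$ is permitted, and $\eta$ is itself a function of $\mfe$ and $M$), this is consistent with the convention in Section~\ref{sec:bootstrap} that $\mathcal A_\phi$ may be taken large depending on these parameters but not on $\mathcal{D}_{\rm o}$ or $\mathcal D_{\rm i}$, and it improves the bootstrap assumption \eqref{eq:bootstrapphi}. I do not expect any genuine obstacle at this stage: all the analytic work --- the renormalisation, the Hardy inequalities of Lemmas~\ref{lm:hardy}--\ref{lm:Hardyu}, and the bulk-term cancellations exploited in Proposition~\ref{prop:eestimatephi} --- has already been done. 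The only thing requiring care is the bookkeeping, namely verifying that no constant entering the argument secretly depends on $\mathcal A_\phi$, so that the choice of $\mathcal A_\phi$ can legitimately be deferred to the very end.
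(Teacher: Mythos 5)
Your proposal is correct and is exactly the paper's argument: the paper's entire proof of this corollary is the single sentence ``Combining Propositions~\ref{prop:coercenergy} and \ref{prop:eestimatephi}, we obtain the following estimate,'' and your chaining of the two propositions, together with the observation that the constants there are independent of $\mathcal A_\phi$ (so that $\mathcal A_\phi\geq 2C''$ can be chosen at the end, consistently with the conventions of Section~\ref{sec:bootstrap}), fills in precisely the intended bookkeeping.
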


\textbf{At this point we fix $\mathcal A_{\phi}$} so that Corollary~\ref{cor:phi} holds.

\subsection{Energy estimates for $\log \frac{\Omega}{\Omega_0}$}\label{sec:ee.Om}
Finally, we carry out the energy estimates for $\log \frac{\Omega}{\Omega_0}$. As we noted in the introduction, the essential point is to establish that $\log \frac{\Omega}{\Omega_0}$ obeys an equation of the form \eqref{Om.Klein.Gordon} up to lower order terms. More precisely, starting with \eqref{eq:waveqOmega}, the estimates that we have obtained so far show that the $D_u\phi \overline{D_v\phi}$ and $\rd_u r \rd_v r$ terms have better decay properties, and that $r$ and $Q$ both decay to $M$. Therefore, \eqref{eq:waveqOmega} can indeed be thought of as \eqref{Om.Klein.Gordon}.

We split the proof of the energy estimates into two parts. First, in Lemma~\ref{lem:Om.eqn}, we consider an energy inspired by the form \eqref{Om.Klein.Gordon} and write down the error terms that arise when controlling this energy. Then, in Proposition~\ref{prop:eestimateOmega}, we will then bound all the error terms arising in Lemma~\ref{lem:Om.eqn} to obtain the desired estimate for $\log \frac{\Omega}{\Omega_0}$.

\begin{lemma}\label{lem:Om.eqn}
The following identity holds for any $u\in (-\infty,u_0)$ and $v\in [v_0,v_\infty)$:
\begin{equation*}
\begin{split}
\int_{-\infty}^u &u'^2\left(\partial_v \log \left(\frac{\Omega}{\Omega_0}\right)\right)^2+\frac{1}{8}M^{-4}u'^2\Omega^{-2}\left(\Omega^2-\Omega_0^2\right)^2(u',v)\,du'\\
&+\int_{v_0}^vv'^2\left(\partial_v \log \left(\frac{\Omega}{\Omega_0}\right)\right)^2+\frac{1}{8}M^{-4}v'^2\Omega^{-2}\left(\Omega^2-\Omega_0^2\right)^2(u,v')\,dv'\\
=&\: \sum_{i=1}^{6}O_{i},
\end{split}
\end{equation*}
where
\begin{align*}
O_1=&\:-4\pi \int_{v_0}^v \int_{-\infty}^u (D_u\phi\overline{D_v\phi}+\overline{D_u\phi}D_v\phi)\left(v'^2\partial_v \log \left(\frac{\Omega}{\Omega_0}\right)+u'^2\partial_u \log \left(\frac{\Omega}{\Omega_0}\right)\right)\,du'dv',\\
O_2=&\:\frac{1}{8}M^{-4}\int_{v_0}^v \int_{-\infty}^u \left[\partial_v\left(v'^2\Omega_0^2\right)+\partial_u\left(u'^2\Omega_0^2\right)\right]\frac{\Omega_0^2}{\Omega^2}\left(\frac{\Omega^2}{\Omega_0^2}-1\right)^2\,du'dv',\\
O_3=&\:-\frac{1}{4}M^{-4}\int_{v_0}^v \int_{-\infty}^u \Omega^{-2}\left(\Omega^2-\Omega_0^2\right)^2\left(v'^2\partial_v \log \left(\frac{\Omega}{\Omega_0}\right)+u'^2\partial_u \log \left(\frac{\Omega}{\Omega_0}\right)\right)\,du'dv',\\
O_4=&\:\int_{v_0}^v \int_{-\infty}^u \Omega_0^2\left[Q^2(r_0^{-4}-r^{-4})+r_0^{-4}(M^{2}-Q^2)-\frac{1}{2}(r_0^{-2}-r^{-2})\right]\\
&\cdot\left(v'^2\partial_v \log \left(\frac{\Omega}{\Omega_0}\right)+u'^2\partial_u \log \left(\frac{\Omega}{\Omega_0}\right)\right)\,du'dv',\\
O_5=&\:\int_{v_0}^v \int_{-\infty}^u (\Omega^2-\Omega_0^2)\left[r^{-4}(M^2-Q^2)+\frac{1}{2}(r^{-2}-M^{-2})-M^2(r^{-4}-M^{-4})\right]\\
&\cdot\left(v'^2\partial_v \log \left(\frac{\Omega}{\Omega_0}\right)+u'^2\partial_u \log \left(\frac{\Omega}{\Omega_0}\right)\right)\,du'dv',\\
O_6=&\:2\int_{v_0}^v \int_{-\infty}^u\left( r^{-2}\partial_ur\partial_vr-r_0^{-2}\partial_ur_0\partial_vr_0\right)\left(v'^2\partial_v \log \left(\frac{\Omega}{\Omega_0}\right)+u'^2\partial_u \log \left(\frac{\Omega}{\Omega_0}\right)\right)\,du'dv',
\end{align*}
where, as in Proposition~\ref{prop:eestimatephi}, we have suppressed the argument $(u',v')$ in the integrand in the $O_i$ terms.
\end{lemma}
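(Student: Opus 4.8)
The plan is to derive the stated identity purely by manipulating the wave equation \eqref{eq:waveqOmega} for $\log\Omega$, subtracting off the corresponding identity for $\log\Omega_0$, and then performing a standard multiplier/integration-by-parts computation with the multiplier $X = u^2\partial_u + v^2\partial_v$. Concretely, the first step is to write down the equation satisfied by $\log\frac{\Omega}{\Omega_0}$. Dividing \eqref{eq:waveqOmega} by $r^2$ and using that $\Omega_0$ solves the same equation with $\phi=0$, $Q=M$, $r=r_0$, we get
\begin{equation*}
\partial_u\partial_v\log\frac{\Omega}{\Omega_0} = -2\pi(D_u\phi\overline{D_v\phi}+\overline{D_u\phi}D_v\phi) - \tfrac14\Omega^2 r^{-4}Q^2 + \tfrac14\Omega_0^2 r_0^{-4}M^2 + \tfrac14\Omega^2 r^{-2}\cdot\tfrac{-1}{?}\dots
\end{equation*}
— more carefully, one isolates the ``Klein--Gordon'' piece $-\tfrac14 M^{-4}(\Omega^2-\Omega_0^2)$ (coming from the combination of the $-\tfrac12\Omega^2 r^{-2}Q^2$ and $\tfrac14\Omega^2$ terms evaluated at $r=Q=M$), and collects everything else into remainder terms. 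The point is that $-\tfrac12\Omega^2 r^{-2}Q^2 + \tfrac14\Omega^2$, when $r=Q=M$, equals $-\tfrac14 M^{-2}\Omega^2$, and the same expression for $\Omega_0$ gives $-\tfrac14 M^{-2}\Omega_0^2$; so the leading difference is $-\tfrac14 M^{-2}(\Omega^2-\Omega_0^2)$, and the discrepancies ($Q\neq M$, $r\neq M$, and $r_0\neq M$ in the background equation) are exactly the sources of $O_4$, $O_5$, $O_6$. I would organize the algebra by writing $r^{-4}Q^2 = M^{-4}M^2 + [\text{corrections}]$ and similarly for the other coefficients, grouping the corrections into the bracketed expressions appearing in $O_4$ and $O_5$, with $O_6$ absorbing the $\partial_u r\,\partial_v r$ versus $\partial_u r_0\,\partial_v r_0$ difference.

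The second step is the energy identity itself. Multiply the equation for $\log\frac{\Omega}{\Omega_0}$ by $2\big(v^2\partial_v\log\frac{\Omega}{\Omega_0} + u^2\partial_u\log\frac{\Omega}{\Omega_0}\big)$ and integrate over the rectangle $(-\infty,u]\times[v_0,v]$. The key structural observation, exactly analogous to Footnote~\ref{footnote.cancellation}, is that $2u^2(\partial_u\partial_v\psi)(\partial_u\psi) + 2v^2(\partial_u\partial_v\psi)(\partial_v\psi) = \partial_u(u^2(\partial_v\psi)^2) + \partial_v(v^2(\partial_u\psi)^2) - (\partial_u u^2)(\partial_v\psi)^2 - (\partial_v v^2)(\partial_u\psi)^2$ where $\psi=\log\frac{\Omega}{\Omega_0}$; wait — more precisely one uses $2u^2\partial_u\partial_v\psi\,\partial_u\psi = \partial_v(u^2(\partial_u\psi)^2)$ and $2v^2\partial_u\partial_v\psi\,\partial_v\psi = \partial_u(v^2(\partial_v\psi)^2)$, so after integration the bulk produces precisely $\int u'^2(\partial_v\psi)^2\,du' + \int v'^2(\partial_v\psi)^2\,dv'$ on the two outgoing/ingoing boundaries (the boundary contributions at $u=-\infty$ and $v=v_0$ vanish by the gauge choice \eqref{eq:gauge.cond}, which gives $\psi(-\infty,v)=0$ and $\psi(u,v_0)=0$, hence also $\partial_v\psi(u,v_0)=0$ and $\partial_u\psi(-\infty,v)=0$). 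The ``potential'' term $-\tfrac14 M^{-2}(\Omega^2-\Omega_0^2)$ on the right-hand side, when multiplied by the multiplier, must be rewritten: write $\Omega^2-\Omega_0^2 = \Omega^2(1-\Omega_0^2/\Omega^2)$ and relate $\partial_{u,v}\log\frac{\Omega}{\Omega_0}$ to $\partial_{u,v}(\Omega^2/\Omega_0^2)$; integration by parts then converts $\int (\Omega^2-\Omega_0^2)\,v^2\partial_v\log\frac{\Omega}{\Omega_0}$ into the manifestly-signed bulk term $\tfrac18 M^{-4}\int v^2\Omega^{-2}(\Omega^2-\Omega_0^2)^2$ plus the boundary/weight-derivative term $O_2$ (which carries the factor $\partial_v(v^2\Omega_0^2)+\partial_u(u^2\Omega_0^2)$, controllable by \eqref{eq:weightestOmega0}) and the term $O_3$ (the genuinely nonlinear correction where the weight derivatives hit $\Omega^{-2}(\Omega^2-\Omega_0^2)^2$ itself). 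This is the computation that produces the exact list $O_1,\dots,O_6$, with $O_1$ coming from the $D\phi\cdot D\phi$ source.

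The main obstacle — really the only delicate point — is getting the coefficient bookkeeping exactly right in the integration by parts for the potential term, so that the positive bulk term comes out with the stated constant $\tfrac18 M^{-4}$ and the leftover pieces are precisely $O_2$ and $O_3$ and nothing else. The identity
\begin{equation*}
\partial_v\log\frac{\Omega}{\Omega_0} = \tfrac12\,\partial_v\log\frac{\Omega^2}{\Omega_0^2} = \tfrac12\,\frac{\Omega_0^2}{\Omega^2}\,\partial_v\!\left(\frac{\Omega^2}{\Omega_0^2}\right)
\end{equation*}
is what converts the multiplier contraction into a total derivative of $(\Omega^2/\Omega_0^2 - 1)^2$ weighted by $v^2\Omega_0^2$, and then one distributes the $\partial_v$ across the weight $v^2\Omega_0^2$ (giving $O_2$), across $\Omega^{-2}$ inside the square after re-expanding (this mixing is what $O_3$ records), and the remaining genuinely-positive piece is the claimed bulk term. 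Everything else — expanding $r^{-4}Q^2$, etc. — is routine algebra of the kind already carried out in the proof of Proposition~\ref{prop:eestimatephi} via the $J_i\mapsto F_i$ regrouping, and the vanishing of boundary terms is immediate from \eqref{eq:gauge.cond}. No estimates are needed for this lemma; it is a pure identity, and the proof is essentially ``compute, using the cancellation structure of Footnote~\ref{footnote.cancellation} and the algebraic rearrangement above.''
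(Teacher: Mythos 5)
Your proposal is correct and follows essentially the same route as the paper: rewrite \eqref{eq:waveqOmega} as an equation for $\log(\Omega/\Omega_0)$ with the potential term $-\tfrac{1}{4}M^{-2}(\Omega^2-\Omega_0^2)$ isolated and the $Q\neq M$, $r\neq M$, $r\neq r_0$ discrepancies collected into $O_4$, $O_5$, $O_6$; contract with $2\big(v^2\partial_v+u^2\partial_u\big)\log(\Omega/\Omega_0)$ so that the principal part becomes $\partial_v\big(u^2(\partial_u\log\tfrac{\Omega}{\Omega_0})^2\big)+\partial_u\big(v^2(\partial_v\log\tfrac{\Omega}{\Omega_0})^2\big)$; and integrate by parts on the potential term via $(\Omega^2-\Omega_0^2)\,\partial_v\log\tfrac{\Omega}{\Omega_0}=\tfrac{1}{4}\Omega^{-2}\Omega_0^4\,\partial_v\big((\Omega^2/\Omega_0^2-1)^2\big)$ to produce the coercive bulk term together with $O_2$ and $O_3$, exactly as in the paper. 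One small correction: the boundary fluxes on $\{v=v_0\}$ and $\{u=-\infty\}$ involve the \emph{tangential} derivatives $\partial_u\log\tfrac{\Omega}{\Omega_0}(\cdot,v_0)$ and $\partial_v\log\tfrac{\Omega}{\Omega_0}(-\infty,\cdot)$ --- not the transversal ones you cite --- and it is precisely these tangential derivatives that vanish by the gauge condition \eqref{eq:gauge.cond}.
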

\begin{proof}
By \eqref{eq:waveqOmega} we have that

\begin{equation*}
\begin{split}
\partial_u\partial_v \log \left(\frac{\Omega}{\Omega_0}\right)=&\:-2\pi (D_u\phi\overline{D_v\phi}+\overline{D_u\phi}D_v\phi)+r^{-2}\partial_ur\partial_vr-r_0^{-2}\partial_ur_0\partial_vr_0\\
&-\frac{1}{2}\Omega^2r^{-4}Q^2+\frac{1}{2}\Omega_0^2r_0^{-4}M^2+\frac{1}{4}\Omega^2r^{-2}-\frac{1}{4}\Omega_0^2r_0^{-2}\\
=&\:-2\pi (D_u\phi\overline{D_v\phi}+\overline{D_u\phi}D_v\phi)+r^{-2}\partial_u(r-r_0)\partial_vr+r^{-2}\partial_ur_0\partial_v(r-r_0)\\
&+\partial_ur_0\partial_vr_0(r^{-2}-r_0^{-2})\\
&-\frac{1}{2}(\Omega^2-\Omega^2_0)r^{-4}M^2+\frac{1}{2}(\Omega^2-\Omega_0^2)r^{-4}(M^2-Q^2)+\frac{1}{4}(\Omega^2-\Omega^2_0)M^{-2}\\
&+\frac{1}{4}(\Omega^2-\Omega^2_0)(r^{-2}-M^{-2})\\
&+\frac{1}{2}\Omega_0^2Q^2(r_0^{-4}-r^{-4})+\frac{1}{2}\Omega_0^2r_0^{-4}(M^{2}-Q^2)-\frac{1}{4}\Omega_0^2(r_0^{-2}-r^{-2})\\
=&\:-2\pi (D_u\phi\overline{D_v\phi}+\overline{D_u\phi}D_v\phi)+r^{-2}\partial_u(r-r_0)\partial_vr+r^{-2}\partial_ur_0\partial_v(r-r_0)\\
&+\partial_ur_0\partial_vr_0(r^{-2}-r_0^{-2})\\
&-\frac{1}{2}(\Omega^2-\Omega_0^2)M^{-2}-\frac{1}{2}(\Omega^2-\Omega_0^2)M^2(r^{-4}-M^{-4})+\frac{1}{2}(\Omega^2-\Omega_0^2)r^{-4}(M^2-Q^2)\\
&+\frac{1}{4}(\Omega^2-\Omega^2_0)M^{-2}+\frac{1}{4}(\Omega^2-\Omega^2_0)(r^{-2}-M^{-2})\\
&+\frac{1}{2}\Omega_0^2Q^2(r_0^{-4}-r^{-4})+\frac{1}{2}\Omega_0^2r_0^{-4}(M^{2}-Q^2)-\frac{1}{4}\Omega_0^2(r_0^{-2}-r^{-2}).
\end{split}
\end{equation*}
Using the above equation, we obtain
\begin{equation*}
\begin{split}
&\partial_u\left(v^2\left(\partial_v \log \left(\frac{\Omega}{\Omega_0}\right)\right)^2\right)\\
=&\:2v^2\partial_u\partial_v \log \left(\frac{\Omega}{\Omega_0}\right)\cdot \partial_v \log \left(\frac{\Omega}{\Omega_0}\right)\\
=&\:-4\pi v^2(D_u\phi\overline{D_v\phi}+\overline{D_u\phi}D_v\phi)\partial_v \log \left(\frac{\Omega}{\Omega_0}\right)+2r^{-2}\partial_u(r-r_0)\partial_vrv^2\partial_v \log \left(\frac{\Omega}{\Omega_0}\right)\\
&+2r^{-2}\partial_ur_0\partial_v(r-r_0)v^2\partial_v \log \left(\frac{\Omega}{\Omega_0}\right)+2\partial_ur_0\partial_vr_0(r^{-2}-r_0^{-2})v^2\partial_v \log \left(\frac{\Omega}{\Omega_0}\right)\\
&-\frac{1}{2}M^{-2}v^2\left(\Omega^2-\Omega_0^2\right)\partial_v \log \left(\frac{\Omega}{\Omega_0}\right)-(\Omega^2-\Omega_0^2)M^2(r^{-4}-M^{-4})v^2\partial_v \log \left(\frac{\Omega}{\Omega_0}\right)\\
&+(\Omega^2-\Omega_0^2)r^{-4}(M^2-Q^2)v^2\partial_v \log \left(\frac{\Omega}{\Omega_0}\right)+\frac{1}{2}(\Omega^2-\Omega^2_0)(r^{-2}-M^{-2})v^2\partial_v \log \left(\frac{\Omega}{\Omega_0}\right)\\
&+\Omega_0^2Q^2(r_0^{-4}-r^{-4})v^2\partial_v \log \left(\frac{\Omega}{\Omega_0}\right)+\Omega_0^2r_0^{-4}(M^{2}-Q^2)v^2\partial_v \log \left(\frac{\Omega}{\Omega_0}\right)\\
&-\frac{1}{2}\Omega_0^2(r_0^{-2}-r^{-2})v^2\partial_v \log \left(\frac{\Omega}{\Omega_0}\right).
\end{split}
\end{equation*}
Note that we can write
\begin{equation*}
\begin{split}
\left(\Omega^2-\Omega_0^2\right)\partial_v \log \left(\frac{\Omega}{\Omega_0}\right)=&\:\frac{1}{2}(\Omega^2-\Omega_0^2)\partial_v\left(\frac{\Omega^2}{\Omega_0^2}-1\right)\frac{\Omega_0^2}{\Omega^2}
=\: \frac{1}{4}\frac{\Omega_0^4}{\Omega^2}\partial_v\left(\left(\frac{\Omega^2}{\Omega_0^2}-1\right)^2\right).
\end{split}
\end{equation*}
Hence,
\begin{equation*}
\begin{split}
-\frac{1}{2}M^{-4}v^2\left(\Omega^2-\Omega_0^2\right)\partial_v \log \left(\frac{\Omega}{\Omega_0}\right)=&\:-\frac{1}{8}v^2M^{-4}\frac{\Omega_0^4}{\Omega^2}\partial_v\left(\left(\frac{\Omega^2}{\Omega_0^2}-1\right)^2\right)\\
=&\:-\partial_v\left(\frac{1}{8}M^{-4}v^2\Omega^{-2}\left(\Omega^2-\Omega_0^2\right)^2\right)+\frac{1}{8}M^{-4}\partial_v\left(v^2\Omega_0^2\frac{\Omega_0^2}{\Omega^2}\right)\left(\frac{\Omega^2}{\Omega_0^2}-1\right)^2\\
=&\:-\partial_v\left(\frac{1}{8}M^{-4}v^2\Omega^{-2}\left(\Omega^2-\Omega_0^2\right)^2\right)\\
&+\frac{1}{8}M^{-4}\partial_v\left(v^2\Omega_0^2\right)\frac{\Omega_0^2}{\Omega^2}\left(\frac{\Omega^2}{\Omega_0^2}-1\right)^2\\
&-\frac{1}{4}M^{-4}v^2\Omega^{-2}\partial_v \log \left(\frac{\Omega}{\Omega_0}\right)\left(\Omega^2-\Omega_0^2\right)^2.
\end{split}
\end{equation*}
We similarly consider $\partial_v\left(u^2\left(\partial_u \log \left(\frac{\Omega}{\Omega_0}\right)\right)^2\right)$ and use Leibniz rule (with $u$ replacing the role of $v$). Noting also that by the gauge condition \eqref{eq:gauge.cond}, $\log\f{\Om}{\Om_0}=0$ on the initial hypersurfaces, this yields the statement of the Lemma.
\end{proof}

\begin{proposition}
\label{prop:eestimateOmega}
The following estimate holds
\begin{equation*} 
\sup_{v\in [v_0,v_\infty]}\int_{-\infty}^{u_0} u'^2 \left(\partial_u\left(\log \frac{\Omega}{\Omega_0} \right)\right)^2(u',v)\,du'+\sup_{u\in(-\infty,u_0]}\int_{v_0}^{v_{\infty}} v'^2 \left(\partial_v\left(\log \frac{\Omega}{\Omega_0} \right)\right)^2(u,v')\,dv'\leq \f {M}2.
\end{equation*}
\end{proposition}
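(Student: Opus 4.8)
The plan is to start from the identity in Lemma~\ref{lem:Om.eqn}, which expresses the sum of the two energy fluxes for $\log\f{\Om}{\Om_0}$ (together with a non-negative spacetime bulk term $\f 18 M^{-4}\int\int u'^2\Om^{-2}(\Om^2-\Om_0^2)^2$ and its $v$-analogue) as $\sum_{i=1}^6 O_i$, and to bound each $O_i$ so that the total is $\le \f M 4$ (say), which is then an improvement of the bootstrap assumption \eqref{eq:bootstrapOmega}. The key point is that every $O_i$ is a spacetime integral of a product of a ``source'' factor with $v'^2\rd_v\log\f{\Om}{\Om_0}+u'^2\rd_u\log\f{\Om}{\Om_0}$, except $O_2$ which is already manifestly controllable (it has an extra $(\f{\Om^2}{\Om_0^2}-1)^2$ and a good weight from \eqref{eq:weightestOmega0}, together with the pointwise bound \eqref{eq:LinftyOmegav3}--\eqref{eq:LinftyOmegav4} giving smallness). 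For the remaining $O_i$ the strategy is uniform: apply Cauchy--Schwarz in spacetime, putting $v'^2(\rd_v\log\f{\Om}{\Om_0})^2$ (resp.\ $u'^2(\rd_u\log\f{\Om}{\Om_0})^2$) on one side — which after the $v'$- (resp.\ $u'$-) integration is bounded by the very flux appearing on the left-hand side of the identity, i.e.\ by $M$ via \eqref{eq:bootstrapOmega} — and the square of the source factor on the other side, which must be shown to be integrable with a small constant.

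Concretely, I would handle the $O_i$ as follows. For $O_1$, the source is $4\pi(D_u\phi\overline{D_v\phi}+\overline{D_u\phi}D_v\phi)$; by Cauchy--Schwarz and Young one pairs $v'^2(\rd_v\log\f{\Om}{\Om_0})^2$ with $|D_u\phi|^2$ and $u'^2(\rd_u\log\f{\Om}{\Om_0})^2$ with $|D_v\phi|^2$, then inserts weights $v'^{-2}u'^{\f 32}$ etc.\ exactly as in the $F_1$ estimate of the proof of Proposition~\ref{prop:eestimatephi}, so that one of the two factors produces an $|u_0|^{-\f 12}$ or $v_0^{-\f 12}$ smallness after using Corollary~\ref{cor:phi} and \eqref{eq:bootstrapphi} (note $\mathcal A_\phi$ is now fixed). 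For $O_3$, the source is $-\f 14 M^{-4}\Om^{-2}(\Om^2-\Om_0^2)^2$: one pairs it against the flux term and against the good bulk term $\f 18 M^{-4}u'^2\Om^{-2}(\Om^2-\Om_0^2)^2$ on the left-hand side, using the pointwise smallness of $\Om^{-1}|u|^{-\f 12}(\text{stuff})$ from \eqref{eq:LinftyOmegav4} and \eqref{eq:estrangeOmega} to absorb; alternatively one just uses \eqref{eq:LinftyOmegav4} directly to get $|\Om^2-\Om_0^2|\ls |u'|^{-\f 12}(v'+|u'|)^{-2}$ and a weight count. For $O_4$, the source involves $Q^2(r_0^{-4}-r^{-4})$, $r_0^{-4}(M^2-Q^2)$, $r_0^{-2}-r^{-2}$, all multiplied by $\Om_0^2\sim (v'+|u'|)^{-2}$; using \eqref{eq:LinftyQ}, \eqref{r.est}, \eqref{r.est.2} and Lemma~\ref{RN.est} these decay like $(v'+|u'|)^{-2}$ times $(\mathcal D_{\rm o}v'^{-1}+\mathcal A_\phi(\mathcal D_{\rm o}+\mathcal D_{\rm i})|u'|^{-1}+(v'+|u'|)^{-1})$, so the square integrates to something small after Cauchy--Schwarz against the flux. $O_5$ is analogous with $\Om^2-\Om_0^2$ in place of $\Om_0^2$, which only improves matters by \eqref{eq:LinftyOmegav4}. $O_6$ has source $r^{-2}\rd_u r\rd_v r - r_0^{-2}\rd_u r_0\rd_v r_0$, which by \eqref{dur.est}, \eqref{dvr.est} and Lemma~\ref{RN.est} decays like $|u'|^{-2}v'^{-2}\times(\text{constants})$ plus Reissner--Nordstr\"om background terms of size $(v'+|u'|)^{-4}$, again square-integrable with smallness. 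Throughout, the $(u',v')$-integrations of the pure-weight factors are elementary (of the type already carried out in Proposition~\ref{prop:eestimatephi}, e.g.\ $\int_{-\infty}^u\big(\int_{v_0}^v v'^2|u'|^{-2}(v'+|u'|)^{-4}\,dv'\big)^{\f 12}du'\ls |u_0|^{-\f 12}$), and I would not spell them out in detail.

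Collecting all bounds gives $\sum_{i=1}^6|O_i|\ls (\mathcal D_{\rm o}+\mathcal D_{\rm i})\,(v_0^{-\f 12}+|u_0|^{-\f 12})$ up to a factor $M^{\f 12}$ coming from the flux bound $\le M^{\f 12}$ in \eqref{eq:bootstrapOmega} (note the $O_i$ are \emph{linear}, not quadratic, in the unknown flux, because the Klein--Gordon structure \eqref{Om.Klein.Gordon} is essentially linear once $\phi$, $r$, $Q$ are regarded as known), so choosing $v_0$ large and $|u_0|$ large — in a way consistent with the conventions in Section~\ref{sec:bootstrap}, and crucially with the largeness of $v_0$ not depending on $\mathcal D_{\rm i}$ — makes the right-hand side $\le \f M 2$. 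Since the left-hand side of the identity in Lemma~\ref{lem:Om.eqn} dominates $\sup_v\int u'^2(\rd_u\log\f{\Om}{\Om_0})^2 + \sup_u\int v'^2(\rd_v\log\f{\Om}{\Om_0})^2$, this yields the claimed bound and improves \eqref{eq:bootstrapOmega}.

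The main obstacle I anticipate is not any single hard estimate but making the Cauchy--Schwarz splitting \emph{consistently} close: for each $O_i$ one must choose how to distribute the weights $v'^2$, $u'^2$ and the decay of the source so that (i) the ``flux side'' reduces exactly to the quantity bounded by \eqref{eq:bootstrapOmega} (not a worse weighted version), and (ii) the ``source side'' is integrable in spacetime and, after integrating, produces a genuinely small prefactor rather than a merely bounded one. The borderline cases are the terms whose source decays only like $v'^{-1}$ (from $\mathcal D_{\rm o}v'^{-1}$ contributions in $r-M$ and $Q-M$): these have the weakest $v$-decay and one must verify that the extra $(v'+|u'|)^{-2}$ from $\Om_0^2$, or the cross-weight from the $u'$-integration, is enough — this is precisely the place where the cancellation of weights discussed in Footnote~\ref{footnote.cancellation} (here encoded in the fact that in $O_2$--$O_6$ the ``bad'' $\rd_v(v'^2\Om_0^2)$, $\rd_u(u'^2\Om_0^2)$ combinations appear only through \eqref{eq:weightestOmega0}) is doing the work. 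A secondary subtlety is bookkeeping the dependence of the largeness of $v_0$ on $\mathcal D_{\rm i}$: one must check that every contribution that fails to carry a $|u_0|^{-\f 12}$ factor (and therefore relies on $v_0$-smallness) has a coefficient depending only on $\mathcal D_{\rm o}$, exactly as in the analogous discussion in Propositions~\ref{prop:coercive.error.v} and \ref{prop:eestimatephi}.
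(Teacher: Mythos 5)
Your treatment of $O_2$ through $O_6$ matches the paper's: pointwise decay of the source factors plus Cauchy--Schwarz against the flux controlled by \eqref{eq:bootstrapOmega}, with weight bookkeeping of the type \eqref{eq:Om.block.1}--\eqref{eq:Om.block.3}. The gap is in $O_1$, which you dispatch as being ``exactly as in the $F_1$ estimate''. The analogy fails. In $F_1$ the third factor is $u'^2\partial_u r$ (resp.\ $v'^2\partial_v r$), which is bounded \emph{pointwise} by \eqref{dur.est}--\eqref{dvr.est}; after pulling it out one is left with a bilinear expression $\iint|D_u\phi||D_v\phi|$ that splits by Young's inequality into two flux integrals with integrable transverse weights. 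In $O_1$ the third factor is $v'^2\partial_v\log(\Om/\Om_0)$ (resp.\ $u'^2\partial_u\log(\Om/\Om_0)$), for which no pointwise bound is available --- only the $L^2$ flux \eqref{eq:bootstrapOmega}. So $O_1$ is genuinely trilinear in quantities controlled only in flux norms, and any Cauchy--Schwarz/Young splitting (including the one you describe, which produces a quartic term $|D_u\phi|^2(\partial_v\log(\Om/\Om_0))^2$) leaves you needing either a pointwise bound on $|D_u\phi|$ or the mixed quantities $\int_{v_0}^v v'^2\sup_{u'}|D_v\phi|^2\,dv'$ and $\int_{-\infty}^u u'^2\sup_{v'}|D_u\phi|^2\,du'$, with the supremum \emph{inside} the integral --- which is not what \eqref{eq:bootstrapphi} or Corollary~\ref{cor:phi} give you.

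The missing idea is to control these mixed norms by writing $\sup_{v'}|D_u\phi|^2(u',v')\leq |D_u\phi|^2(u',v_0)+\int_{v_0}^{v}|\partial_v(|D_u\phi|^2)|\,dv''$ and substituting the wave equation \eqref{eq:phi1}--\eqref{eq:phi2} for $\partial_vD_u\phi$ (and symmetrically for $\partial_u(|D_v\phi|^2)$). This reduces matters to spacetime integrals, most of which are of the $F_1$/$F_2$ type, but it also produces a genuinely new term of the form $\iint u'^{\f 32}\Omega^2|\phi||D_u\phi|$ coming from the zeroth-order terms of the wave equation, with no analogue in Proposition~\ref{prop:eestimatephi}; it must be estimated separately using \eqref{eq:Linftyphi} and \eqref{eq:estrangeOmega}. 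Without this step your bound for $O_1$ does not close. The rest of your argument --- including the observation that the error terms are linear in the unknown flux and the bookkeeping of which smallness factors may depend on $\mathcal D_{\rm i}$ --- is sound.
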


\begin{proof}

In order to obtain the stated estimates, we need to bound each of the terms in Lemma~\ref{lem:Om.eqn}. The basic idea is to use the bootstrap assumption \eqref{eq:bootstrapOmega} to control $\rd_v \log \left(\frac{\Omega}{\Omega_0}\right)$ and $\rd_u \log \left(\frac{\Omega}{\Omega_0}\right)$ and to use the estimates that we have previously obtained to deduce the decay and smallness of these terms.

We begin with the estimates for $O_1$. This turns out to be the most difficult term since we do not have any kind of pointwise estimates for $|D_v\phi|$, $|D_u\phi|$, $|\rd_v \log (\frac{\Omega}{\Omega_0})|$ and $|\rd_u \log (\frac{\Omega}{\Omega_0})|$. We bound it as follows using \eqref{eq:bootstrapOmega}:
\begin{equation}\label{eq:O1.est}
\begin{split}
|O_1|\lesssim &\: \int_{v_0}^v \int_{-\infty}^u |D_u\phi|\cdot|{D_v\phi}|\cdot \left(v'^2\left|\partial_v \log \left(\frac{\Omega}{\Omega_0}\right)\right|+u'^2\left|\partial_u \log \left(\frac{\Omega}{\Omega_0}\right)\right|\right)\,du'dv'\\
\lesssim &\: \left(\int_{v_0}^v v'^{2}\sup_{u' \in (-\infty,u]}|D_v\phi|^2(u',v')\,dv'\right)^{\f 12} \cdot \left(\int_{-\infty}^u u'^2 \sup_{v'\in [v_0, v]}|D_u\phi|^2(u',v')\,du'\right)^{\f 12}\\
&\times\left( |u_0|^{-\f 12}\left(\sup_{u' \in (-\infty,u]}\int_{v_0}^v v'^2\left|\partial_v \log \left(\frac{\Omega}{\Omega_0}\right)\right|^2(u',v')\,dv'\right)^{\f 12} \right.\\
&\:\left.\qquad+ v_0^{-\f 12}\left(\sup_{v' \in [v_0,v]}\int_{-\infty}^u u'^2\left|\partial_u \log \left(\frac{\Omega}{\Omega_0}\right)\right|^2(u',v')\,du'\right)^{\f 12}\right)\\
\lesssim &\: (|u_0|^{-\f 12}+v_0^{-\f 12})\left(\int_{v_0}^v \sup_{u'\in (-\infty, u]}\left[\int_{-\infty}^{u'}v'^{2}\partial_u(|D_v\phi|^2)(u'',v')\,du''\right]\,dv'\right)^{\f 12}\\
&\:\times\left(\int_{-\infty}^u\sup_{v'\in [v_0, v]}\left[\int_{v_0}^{v'}|u'|^{2}\partial_v(|D_u\phi|^2)(u',v'')\,dv''\right]\,du'\right)^{\f 12}.
\end{split}
\end{equation}
We first consider the last factor in \eqref{eq:O1.est}. From the computation for $J_1$ in the proof of Proposition~\ref{prop:eestimatephi} (which used \eqref{eq:phi1} and \eqref{eq:phi2}), it follows that
\begin{equation}\label{eq:O1.est.uu}
\begin{split}
\int_{-\infty}^u&\sup_{v'\in [v_0, v]}\left[\int_{v_0}^{v'}|u'|^{\f 32}\partial_v(|D_u\phi|^2)\,dv''\right]\,du'\\
\lesssim &\: \int_{v_0}^v \int_{-\infty}^{u}u'^{\f 32} \Omega^2|\phi||{D_u\phi}|+u'^{\f 32}|\partial_ur||D_u\phi|D_v\phi|+u'^{\f 32}|\partial_vr||D_u\phi|^2\,du'dv'\\
=: &\: O_{1,1} + O_{1,2} + O_{1,3}.
\end{split}
\end{equation}
The terms $O_{1,2}$ and $O_{1,3}$ have already been controlled in the proof of Proposition \ref{prop:eestimatephi}. More precisely, estimating as the terms $F_1$ and $F_2$ in the proof of Proposition \ref{prop:eestimatephi}, and noting that $O_{1,2}$ and $O_{1,3}$ have an additional $u'^{-\f 12}$ weight compared to $F_1$ and $F_2$, we have
$$O_{1,2}+O_{1,3}\ls \mathcal A_\phi^2(\mathcal{D}_{\rm o}+\mathcal{D}_{\rm i})^2|u_0|^{-1}+ \mathcal A_{\phi}\mathcal{D}_{\rm o} (\mathcal{D}_{\rm o}+\mathcal{D}_{\rm i})v_0^{-1}|u_0|^{-\f 12}\ls  \mathcal A_\phi^2(\mathcal{D}_{\rm o}+\mathcal{D}_{\rm i})^2|u_0|^{-\f 12}.$$
It thus remains to bound $O_{1,1}$, which has no analogue in Proposition~\ref{prop:eestimatephi}. To control this term, we use \eqref{eq:Linftyphi}, \eqref{eq:estrangeOmega}, the Cauchy--Schwarz inequality and the bootstrap assumption \eqref{eq:bootstrapphi} to obtain
\begin{equation*}
\begin{split}
O_{1,1} \ls &\:\sqrt{\mathcal A_\phi(\mathcal{D}_{\rm o}+\mathcal D_{\rm i})}\int_{v_0}^v \left(\int_{-\infty}^{u}u'^2 |{D_u\phi}|^2 \, du'\right)^{\f 12} \left(\int_{-\infty}^{u} (v'+|u'|)^{-4} \, du'\right)^{\f 12} \,dv'\\
 & + \sqrt{\mathcal{D}_{\rm o}} \int_{v_0}^v \left(\int_{-\infty}^{u}u'^2 |{D_u\phi}|^2 \, du'\right)^{\f 12} \left(\int_{-\infty}^{u}\f{|u'|}{v'^{1+\alp}} (v'+|u'|)^{-4} \, du'\right)^{\f 12} \,dv'\\
\ls & \:\sqrt{\mathcal A_\phi(\mathcal{D}_{\rm o}+\mathcal D_{\rm i})}\left(\sup_{v'\in [v_0, v]} \int_{-\infty}^{u}u'^2 |{D_u\phi}|^2(u',v') \, du'\right)^{\f 12} \left( \int_{v_0}^v (v'+|u|)^{-\f 32} dv'+ \int_{v_0}^v v'^{-\f 12-\f{\alp}{2}}(v'+|u|)^{-1} dv'\right)\\
\ls & \:\mathcal A_\phi(\mathcal{D}_{\rm o}+\mathcal D_{\rm i}) (v_0+|u_0|)^{-\f 12}.
\end{split}
\end{equation*}
Combining all these and plugging back into \eqref{eq:O1.est.uu}, we obtain
\begin{equation}\label{eq:O1.est.uu.1}
\begin{split}
\int_{-\infty}^u\sup_{v'\in [v_0, v]}\left[\int_{v_0}^{v'}|u'|^{\f 32}\partial_v(|D_u\phi|^2)\,dv''\right]\,du'
\ls &\mathcal A_\phi^2(\mathcal{D}_{\rm o}+\mathcal{D}_{\rm i})^2|u_0|^{-\f 12} + \mathcal A_\phi(\mathcal{D}_{\rm o}+\mathcal{D}_{\rm i})(v_0+|u_0|)^{-\f 12}\\
\ls & \: \mathcal A_\phi^2(\mathcal{D}_{\rm o}+\mathcal{D}_{\rm i})^2|u_0|^{-\f 12}.
\end{split}
\end{equation}

For the other factor in \eqref{eq:O1.est}, we estimate similarly by
\begin{equation*}
\begin{split}
\int_{v_0}^v &\sup_{-\infty\leq u'\leq u}\left[\int_{-\infty}^{u'}v'^{\f 32}\partial_u(|D_v\phi|^2)\,du''\right]\,dv'\\
\lesssim&\: \int_{v_0}^v \int_{-\infty}^{u}v'^2 \Omega^2|\phi||{D_v\phi}|+v'^2|\partial_{v}r||D_{u}\phi|D_v\phi|+v'^2|\partial_ur||D_v\phi|^2\,du'dv'=:O_{1,4}+O_{1,5}+O_{1,6}.
\end{split}
\end{equation*}
The terms $O_{1,5}$ and $O_{1,6}$, just as $O_{1,2}$ and $O_{1,3}$, can be bounded above by $\mathcal A_\phi^2(\mathcal{D}_{\rm o}+\mathcal{D}_{\rm i})^2|u_0|^{-\f 12}$ as the terms $F_1$ and $F_2$ in the proof of Proposition~\ref{prop:eestimatephi}. For the term $O_{1,4}$, we have, using \eqref{eq:Linftyphi}, \eqref{eq:estrangeOmega}, the Cauchy--Schwarz inequality and the bootstrap assumption \eqref{eq:bootstrapphi},

\begin{equation*}
\begin{split}
& \int_{v_0}^v \int_{-\infty}^{u}v'^{\f 32} \Omega^2|\phi||{D_v\phi}| \, du'dv'\\
\ls & \sqrt{\mathcal A_\phi(\mathcal{D}_{\rm o}+\mathcal D_{\rm i})} \int_{-\infty}^{u} (\int_{v_0}^v v'^2 |{D_v\phi}|^2 \, dv')^{\f 12}(\int_{v_0}^v |u'|^{-1} v' (v'+|u'|)^{-4} dv')^{\f 12} \,du'\\
& + \sqrt{\mathcal{D}_{\rm o}} \int_{-\infty}^{u} (\int_{v_0}^v v'^2 |{D_v\phi}|^2\, dv')^{\f 12}(\int_{v_0}^v v'^{-2\alp} (v'+|u'|)^{-4} dv')^{\f 12} \,du'\\
\ls & \sqrt{\mathcal A_\phi(\mathcal{D}_{\rm o}+\mathcal D_{\rm i})}(\sup_{u'\in (-\infty, u]}(\int_{v_0}^v v'^2 |{D_v\phi}|^2(u',v') \, dv')^{\f 12})(\int_{-\infty}^u |u'|^{-\f 12} (v_0+|u'|)^{-1}\ du' + \int_{-\infty}^u (v_0+|u'|)^{-\f 32}\, du')\\
\ls & \mathcal A_\phi(\mathcal{D}_{\rm o}+\mathcal D_{\rm i})|u_0|^{-\f 12}.
\end{split}
\end{equation*}
Combining, we obtain
\begin{equation}\label{eq:O1.est.vv}
\begin{split}
\int_{v_0}^v &\sup_{-\infty\leq u'\leq u}\left[\int_{-\infty}^{u'}v'^{\f 32}\partial_u(|D_v\phi|^2)\,du''\right]\,dv'
\ls  \mathcal A_\phi^2(\mathcal{D}_{\rm o}+\mathcal D_{\rm i})^2|u_0|^{-\f 12}.
\end{split}
\end{equation}

Combining \eqref{eq:O1.est}, \eqref{eq:O1.est.uu.1} and \eqref{eq:O1.est.vv}, we can therefore conclude that
\begin{equation*}
|O_1|\lesssim \mathcal A_\phi^2(\mathcal{D}_{\rm o}+\mathcal D_{\rm i})^2|u_0|^{-\f 12}(|u_0|^{-\f 12} + v_0^{-\f 12}).
\end{equation*}

We estimate $|O_2|$ by applying \eqref{eq:weightestOmega0} and \eqref{eq:LinftyOmegav3}. (Here, as before, the implicit constant may depend on $\beta$ for $\beta>0$.)
\begin{equation*}
\begin{split}
|O_2|\lesssim &\: \int_{v_0}^v \int_{-\infty}^u \left| \partial_v\left(v'^2\Omega_0^2\right)+\partial_u\left(u'^2\Omega_0^2\right)\right|\frac{\Omega_0^2}{\Omega^2}\left(\frac{\Omega^2}{\Omega_0^2}-1\right)^2\,du'dv'\\
\lesssim &\: \int_{v_0}^v \int_{-\infty}^u (v'+|u'|)^{-2+\beta}\left(\frac{\Omega^2}{\Omega_0^2}-1\right)^2\,du'dv'\\
\lesssim &\: \int_{v_0}^v \int_{-\infty}^u  (v'+|u'|)^{-2+\beta} |u'|^{-1}\,du'dv'\\
\lesssim &\: v_0^{-1+\beta}+|u_0|^{-1+\beta}.
\end{split}
\end{equation*}

It turns out that the remaining terms have a similar structure and is convenient to bound them in the same way. The following are the three basic estimates. First, using Cauchy--Schwarz inequality and \eqref{eq:bootstrapOmega}, we have
\begin{equation}\label{eq:Om.block.1}
\begin{split}
&\:\int_{v_0}^v \int_{-\infty}^u (v'+|u'|)^{-2}|u'|^{-1}\left(v'^2\left|\partial_v \log \left(\frac{\Omega}{\Omega_0}\right)\right|+u'^2\left|\partial_u \log \left(\frac{\Omega}{\Omega_0}\right)\right|\right)\,du'dv' \\
\ls & \: \left(\int_{v_0}^v \int_{-\infty}^u (v'+|u'|)^{-4}|u'|^{-\f 12}v'^2\,du'dv' \right)^{\f 12} \left(\int_{v_0}^v \int_{-\infty}^u |u'|^{-\f 32} v'^2\left|\partial_v \log \left(\frac{\Omega}{\Omega_0}\right)\right|^2 \,du'dv' \right)^{\f 12}\\
& \: +\left(\int_{v_0}^v \int_{-\infty}^u (v'+|u'|)^{-4} v'^{\f 32}\,du'dv' \right)^{\f 12} \left(\int_{v_0}^v \int_{-\infty}^u v'^{-\f 32} u'^2\left|\partial_u \log \left(\frac{\Omega}{\Omega_0}\right)\right|^2\,du'dv' \right)^{\f 12}\\
\ls & \: \left((|u_0|^{-\f 14} +v_0^{-\f 14})|u_0|^{-\f 14} + (v_0+|u_0|)^{-\f 14} v_0^{-\f 14}\right) \ls |u_0|^{-\f 14}.
\end{split}
\end{equation}
Again, using Cauchy--Schwarz inequality and \eqref{eq:bootstrapOmega}, we have
\begin{equation}\label{eq:Om.block.2}
\begin{split}
&\:\int_{v_0}^v \int_{-\infty}^u (v'+|u'|)^{-2}v'^{-1}\left(v'^2\left|\partial_v \log \left(\frac{\Omega}{\Omega_0}\right)\right|+u'^2\left|\partial_u \log \left(\frac{\Omega}{\Omega_0}\right)\right|\right)\,du'dv' \\
\ls & \: \left(\int_{v_0}^v \int_{-\infty}^u (v'+|u'|)^{-4}|u'|^{\f 32}\,du'dv' \right)^{\f 12} \left(\int_{v_0}^v \int_{-\infty}^u |u'|^{-\f 32} v'^2\left|\partial_v \log \left(\frac{\Omega}{\Omega_0}\right)\right|^2 \,du'dv' \right)^{\f 12}\\
& \: +\left(\int_{v_0}^v \int_{-\infty}^u (v'+|u'|)^{-4} v'^{-\f 12}|u'|^2\,du'dv' \right)^{\f 12} \left(\int_{v_0}^v \int_{-\infty}^u v'^{-\f 32} u'^2\left|\partial_u \log \left(\frac{\Omega}{\Omega_0}\right)\right|^2\,du'dv' \right)^{\f 12}\\
\ls & \: ((v_0+|u_0|)^{-\f 14}|u_0|^{-\f 14} + (v_0^{-\f 14} + |u_0|^{-\f 14}) v_0^{-\f 14}) \ls v_0^{-\f 14}.
\end{split}
\end{equation}
Thirdly, we have a another slight variant of the above estimates, for which we again use Cauchy--Schwarz inequality and \eqref{eq:bootstrapOmega}:
\begin{equation}\label{eq:Om.block.3}
\begin{split}
&\:\int_{v_0}^v \int_{-\infty}^u v'^{-2}|u'|^{-2} \left(v'^2\left|\partial_v \log \left(\frac{\Omega}{\Omega_0}\right)\right|+u'^2\left|\partial_u \log \left(\frac{\Omega}{\Omega_0}\right)\right|\right)\,du'dv' \\
\ls & \: \left(\int_{v_0}^v \int_{-\infty}^u v'^{-2}|u'|^{-\f 52}\,du'dv' \right)^{\f 12} \left(\int_{v_0}^v \int_{-\infty}^u |u'|^{-\f 32} v'^2\left|\partial_v \log \left(\frac{\Omega}{\Omega_0}\right)\right|^2 \,du'dv' \right)^{\f 12}\\
& \: +\left(\int_{v_0}^v \int_{-\infty}^u v'^{-\f 52}|u'|^{-2}\,du'dv' \right)^{\f 12} \left(\int_{v_0}^v \int_{-\infty}^u v'^{-\f 32} u'^2\left|\partial_u \log \left(\frac{\Omega}{\Omega_0}\right)\right|^2\,du'dv' \right)^{\f 12}\\
\ls & \: (v_0^{-\f 12}|u_0|^{-1} + v_0^{-1}|u_0|^{-\f 12}) \ls |u_0|^{-\f 14}.
\end{split}
\end{equation}

Using these basic estimates, we now estimate $|O_3|,\dots,|O_6|$. Using \eqref{eq:LinftyOmegav4} and \eqref{eq:estrangeOmega} to bound $\Omg^{-2}(\Omg^2-\Omg_0^2)^2$, we bound $O_3$ via \eqref{eq:Om.block.1}
\begin{equation*}
\begin{split}
|O_3|\lesssim &\: \int_{v_0}^v \int_{-\infty}^u \Omega^{-2}\left(\Omega^2-\Omega_0^2\right)^2\left(v'^2\left|\partial_v \log \left(\frac{\Omega}{\Omega_0}\right)\right|+u'^2\left|\partial_u \log \left(\frac{\Omega}{\Omega_0}\right)\right|\right)\,du'dv'\\
\lesssim &\:\int_{v_0}^v \int_{-\infty}^u (v'+|u'|)^{-2}|u'|^{-1}\left(v'^2\left|\partial_v \log \left(\frac{\Omega}{\Omega_0}\right)\right|+u'^2\left|\partial_u \log \left(\frac{\Omega}{\Omega_0}\right)\right|\right)\,du'dv'\lesssim \:|u_0|^{-\f 14}.
\end{split}
\end{equation*}
Similarly, using \eqref{eq:LinftyQ}, \eqref{r.est}, \eqref{r.est.2}, \eqref{eq:Om.block.1} and \eqref{eq:Om.block.2} to obtain the following estimate for $|O_4|$:
\begin{equation*}
\begin{split}
|O_4|\lesssim &\: \int_{v_0}^v \int_{-\infty}^u \Omega_0^{2}(|r-M|+|r-r_0|+|Q-M|)\left(v'^2\left|\partial_v \log \left(\frac{\Omega}{\Omega_0}\right)\right|+u'^2\left|\partial_u \log \left(\frac{\Omega}{\Omega_0}\right)\right|\right)\,du'dv'\\
\lesssim &\:\int_{v_0}^v \int_{-\infty}^u (v'+|u'|)^{-2}(\mathcal{A}_{\phi}(\mathcal{D}_{\rm o}+\mathcal D_{\rm i})|u'|^{-1}+\mathcal{D}_{o}v'^{-1}+(v'+|u'|)^{-1})\\
&\qquad\cdot\left(v'^2\left|\partial_v \log \left(\frac{\Omega}{\Omega_0}\right)\right|+u'^2\left|\partial_u \log \left(\frac{\Omega}{\Omega_0}\right)\right|\right)\,du'dv'\\
\lesssim &\: (\mathcal{A}_{\phi}(\mathcal{D}_{\rm o}+\mathcal D_{\rm i})+1)|u_0|^{-\f 14} + \mathcal{D}_{o} v_0^{-\f 14}.
\end{split}
\end{equation*}
By Lemma~\ref{RN.est} and \eqref{eq:estrangeOmega}, $|\Omega^2-\Omega_0^{2}|(u',v')\ls (v'+|u'|)^{-2}$. Hence $|O_5|$ can be controlled in a similar manner as $O_4$ as follows:
\begin{equation*}
\begin{split}
|O_5|\lesssim &\: \int_{v_0}^v \int_{-\infty}^u |\Omega^2-\Omega_0^{2}|(|r-M|+|r-r_0|+|Q-M|)\left(v'^2\left|\partial_v \log \left(\frac{\Omega}{\Omega_0}\right)\right|+u'^2\left|\partial_u \log \left(\frac{\Omega}{\Omega_0}\right)\right|\right)\,du'dv'\\
\lesssim &\:\int_{v_0}^v \int_{-\infty}^u (v'+|u'|)^{-2}(\mathcal{A}_{\phi}(\mathcal{D}_{\rm o}+\mathcal D_{\rm i})|u'|^{-1}+\mathcal{D}_{o}v'^{-1}+(v'+|u'|)^{-1})\\
&\qquad\cdot\left(v'^2\left|\partial_v \log \left(\frac{\Omega}{\Omega_0}\right)\right|+u'^2\left|\partial_u \log \left(\frac{\Omega}{\Omega_0}\right)\right|\right)\,du'dv'\\
\lesssim &\:(\mathcal{A}_{\phi}(\mathcal{D}_{\rm o}+\mathcal D_{\rm i})+1)|u_0|^{-\f 14} + \mathcal{D}_{o} v_0^{-\f 14}.
\end{split}
\end{equation*}
Finally, we estimate $|O_6|$. For this we use Lemma~\ref{RN.est}, \eqref{dur.est}, \eqref{dvr.est} and \eqref{eq:Om.block.3} to obtain
\begin{equation*}
\begin{split}
|O_6|\lesssim &\: \int_{v_0}^v \int_{-\infty}^u(|\partial_vr||\partial_ur|+|\partial_vr_0||\partial_ur_0|)\left(v'^2\left|\partial_v \log \left(\frac{\Omega}{\Omega_0}\right)\right|+u'^2\left|\partial_u \log \left(\frac{\Omega}{\Omega_0}\right)\right|\right)\,du'dv'\\
\lesssim &\:\int_{v_0}^v \int_{-\infty}^u [(v'+|u'|)^{-4}+\mathcal{A}_{\phi}^2(\mathcal{D}_{\rm o}+\mathcal D_{\rm i})^2|u'|^{-2}v'^{-2}]\\
&\cdot\left(v'^2\left|\partial_v \log \left(\frac{\Omega}{\Omega_0}\right)\right|+u'^2\left|\partial_u \log \left(\frac{\Omega}{\Omega_0}\right)\right|\right)\,du'dv'\\
\lesssim &\:\mathcal{A}_{\phi}^2(\mathcal{D}_{\rm o}+\mathcal D_{\rm i})^2|u_0|^{-\f 14}.
\end{split}
\end{equation*}

Hence, choosing $v_0$ and $|u_0|$ large in a manner allowed by \eqref{const.adm}, we obtain
\begin{equation*}
\begin{split}
&\:\sup_{v\in [v_0,v_{\infty}]} \int_{-\infty}^{u_0} u^2 \left(\partial_u\left(\log \frac{\Omega}{\Omega_0} \right)\right)^2(u,v)\,du+\sup_{u\in (-\infty,u_0]}\int_{v_0}^{v_{\infty}} v^2 \left(\partial_v\left(\log \frac{\Omega}{\Omega_0} \right)\right)^2(u,v)\,dv\leq \f {M}2,
\end{split}
\end{equation*}
which is to be proved.
\end{proof}

\section{Stability of the Cauchy horizon of extremal Reissner--Nordstr\"om}\label{sec:stability}
We now conclude the bootstrap argument and show that the solution exists and remains regular for all $v\geq v_0$ and that certain estimates hold. More precisely, we have
\begin{proposition}\label{prop:final.est}
There exists a smooth solution $(\phi,r,\Om,A)$ to \eqref{eq:transpeqr}--\eqref{eq:raychv} in the rectangle (cf.~Figure~\ref{fig:fullspacetime})
$$D_{u_0,v_0}=\{(u,v')\,|\, -\infty \leq u\leq u_0, \,v_0\leq v< \infty\}$$
with the prescribed initial data.
Moreover, all the estimates in Sections~\ref{sec:pointwise} and \ref{sec:energy} hold in $D_{u_0,v_0}$.
\end{proposition}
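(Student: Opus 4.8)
The plan is a standard continuity (``bootstrap'') argument in the $v$--direction, of exactly the type advertised in Section~\ref{sec:bootstrap}. The two ingredients are: a local well-posedness and continuation theory for the symmetry-reduced system \eqref{eq:transpeqr}--\eqref{eq:raychv}, and the improvements of the bootstrap assumptions \eqref{eq:bootstrapOmega}--\eqref{eq:bootstrapr} already established in Sections~\ref{sec:pointwise} and \ref{sec:energy}. Throughout one works in the $(U,v)$ coordinates of Section~\ref{sec:coord.EH} (regular at the event horizon), with the gauge \eqref{eq:gauge.cond} and, say, $A_U(U,v_0)=0$, $A_v(0,v)=0$; note that in these coordinates the metric coefficient $\hat{\Omega}$ stays bounded above and below by positive constants on any region with $v$ bounded, by \eqref{Omhat0.bounds} together with the pointwise control of $\log\frac{\Omega}{\Omega_0}$ obtained in Section~\ref{sec:pointwise}. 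The local theory we invoke states that the characteristic initial value problem with the smooth data of Section~\ref{sec:data} is well-posed, and that a smooth solution on $D_{u_0,[v_0,v_\infty)}$ extends smoothly to $D_{u_0,[v_0,v_\infty+\epsilon)}$ for some $\epsilon>0$ provided that, on $D_{u_0,[v_0,v_\infty)}$: (i) $r$ is bounded below by a positive constant and $\hat{\Omega}^2$ is bounded above and below by positive constants; (ii) $\phi$, $Q$, $\partial_u r$, $\partial_v r$ are bounded; and (iii) the null-flux quantities on the left-hand sides of \eqref{eq:bootstrapOmega} and \eqref{eq:bootstrapphi} are finite. This is the level of regularity at which well-posedness holds in spherical symmetry away from the axis (cf.\ \cite{Kommemi} and the discussion in the introduction); alternatively, since the data are smooth, higher regularity can be propagated by commuting \eqref{eq:transpeqr}--\eqref{eq:raychv} with $\partial_u$ and $\partial_v$ and running energy estimates using the bounds (i)--(iii).

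One then lets $\mathcal V$ be the set of $v_\infty\in(v_0,\infty]$ for which there exists a smooth solution $(\phi,r,\Omega,A)$ to \eqref{eq:transpeqr}--\eqref{eq:raychv} on $D_{u_0,[v_0,v_\infty)}$ attaining the prescribed data, satisfying \eqref{eq:gauge.cond}, and obeying the bootstrap assumptions \eqref{eq:bootstrapOmega}, \eqref{eq:bootstrapphi}, \eqref{eq:bootstrapr}. Restriction to a sub-rectangle shows $\mathcal V$ is an interval with left endpoint $v_0$, and it is nonempty: at $v_\infty=v_0$ the left-hand side of \eqref{eq:bootstrapOmega} vanishes (since $\log\frac{\Omega}{\Omega_0}\equiv 0$ on both initial hypersurfaces by \eqref{eq:gauge.cond}), the left-hand side of \eqref{eq:bootstrapphi} reduces to $\int_{-\infty}^{u_0}u'^2|D_u\phi|^2(u',v_0)\,du'$, which is $\ls\mathcal D_{\rm i}$ by \eqref{eq:idataphi} and the relation \eqref{U.def} and hence $<\mathcal A_\phi(\mathcal D_{\rm o}+\mathcal D_{\rm i})$ once $\mathcal A_\phi$ is fixed large, and \eqref{eq:bootstrapr} at $v_0$ holds by Lemma~\ref{lm:r.on.Hb0}. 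All three bootstrap inequalities being strict at $v_0$, local existence and continuity in $v$ show they persist on $D_{u_0,[v_0,v_0+\epsilon)}$ for some $\epsilon>0$.

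The heart of the argument is then: set $v_*:=\sup\mathcal V$ and suppose, for contradiction, $v_*<\infty$. On $D_{u_0,[v_0,v_*)}=\bigcup_{v_0<v_\infty<v_*}D_{u_0,[v_0,v_\infty)}$ there is a smooth solution satisfying \eqref{eq:bootstrapOmega}--\eqref{eq:bootstrapr}, so all of Sections~\ref{sec:pointwise} and \ref{sec:energy} applies there --- with constants depending only on $M,\mfm,\mfe,\alpha,\eta,\mathcal D_{\rm o},\mathcal D_{\rm i},\mathcal A_\phi$ and \emph{not} on $v_*$, since $|u_0|$ and $v_0$ were fixed large at the outset independently of any interval of existence. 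In particular, \eqref{eq:estrangeQ}, \eqref{eq:estrangeOmega}, \eqref{r.est}, \eqref{dur.est} and \eqref{dvr.est} give the hypotheses (i)--(ii) of the continuation criterion on $D_{u_0,[v_0,v_*)}$ ($\hat{\Omega}^2$ being two-sidedly bounded on the region $v\leq v_*$), and (iii) holds by \eqref{eq:bootstrapOmega}--\eqref{eq:bootstrapphi}; so the solution extends smoothly to $D_{u_0,[v_0,v_*+\epsilon)}$ for some $\epsilon>0$. Moreover \eqref{r.BS.improve}, Corollary~\ref{cor:phi} and Proposition~\ref{prop:eestimateOmega} improve \eqref{eq:bootstrapr}, \eqref{eq:bootstrapphi}, \eqref{eq:bootstrapOmega} to \emph{strict} inequalities on $D_{u_0,[v_0,v_*)}$ (right-hand sides $M/4$, $\tfrac{\mathcal A_\phi}{2}(\mathcal D_{\rm o}+\mathcal D_{\rm i})$, $M/2$); since the left-hand sides are continuous in $v_\infty$, the bootstrap assumptions \eqref{eq:bootstrapOmega}--\eqref{eq:bootstrapr} still hold on $D_{u_0,[v_0,v_*+\epsilon')}$ for some $0<\epsilon'\leq\epsilon$, so $v_*+\epsilon'\in\mathcal V$, contradicting $v_*=\sup\mathcal V$. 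Hence $v_*=\infty$, i.e.\ the solution exists on $D_{u_0,[v_0,\infty)}=D_{u_0,v_0}$, satisfies the bootstrap assumptions, and therefore obeys all the estimates of Sections~\ref{sec:pointwise} and \ref{sec:energy}.

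Almost all the analytic content is already in Sections~\ref{sec:pointwise}--\ref{sec:energy}, so the remaining argument is soft; the one step I expect to require care is the invocation of the local existence/continuation statement at the low regularity we actually propagate --- we control $D_u\phi,D_v\phi$ and $\partial_u\log\frac{\Omega}{\Omega_0},\partial_v\log\frac{\Omega}{\Omega_0}$ only in $L^2$ along characteristics, not pointwise --- so the continuation criterion must be taken at that regularity (valid for spherically symmetric, or more generally angularly regular, solutions), or else one must supply the higher-regularity propagation sketched in the first paragraph. A secondary point to verify is that none of the constants in Sections~\ref{sec:pointwise}--\ref{sec:energy} degenerates as $v_*\to\infty$, which is immediate since they depend only on the fixed parameters listed above.
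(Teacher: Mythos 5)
Your proposal is correct and follows essentially the same route as the paper: a continuity argument in $v$ (the paper phrases it as showing the set of good $v$ is nonempty, open and closed rather than taking a supremum and deriving a contradiction, but these are the same argument), resting on local existence plus propagation of regularity for the continuation step and on the strict improvements \eqref{r.BS.improve}, Corollary~\ref{cor:phi} and Proposition~\ref{prop:eestimateOmega} to preserve the bootstrap assumptions. Your added care about the regularity level of the continuation criterion and the uniformity of constants matches the paper's (briefer) appeal to ``standard propagation of regularity'' and local well-posedness.
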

\begin{proof}

For every $v\in [v_0,\infty)$, consider the following conditions:
\begin{enumerate}[(A)]
\item A smooth solution $(\phi,r,\Om,A)$ to \eqref{eq:transpeqr}--\eqref{eq:raychv} exists in the rectangle 
$$D_{u_0,[v_0,v)}=\{(u,v')\,|\, -\infty \leq u\leq u_0, \,v_0\leq v'< v\}$$ 
with the prescribed initial data.
\item The estimates \eqref{eq:bootstrapOmega}, \eqref{eq:bootstrapphi} and \eqref{eq:bootstrapr} hold in $D_{u_0,[v_0,v)}$.
\end{enumerate}
Consider the set $\mathfrak I\subset [v_0,\infty)$ defined by
$$\mathfrak I := \{ v\in [v_0,\infty) : \mbox{(A) and (B) are both satisfied for all $v'\in [v_0, v)$}\}.$$
We will show that $\mathfrak I$ is non-empty, closed and open, which implies that $\mathfrak I = [v_0,\infty)$. Standard local existence implies that $\mathfrak I$ is non-empty. Closedness of $\mathfrak I$ follows immediately from the definition of $\mathfrak I$. The most difficult property to verify is the openness of $\mathfrak I$. For this, suppose $v\in \mathfrak I$. We then argue as follows:
\begin{itemize}
\item Under the bootstrap assumptions \eqref{eq:bootstrapOmega}, \eqref{eq:bootstrapphi} and \eqref{eq:bootstrapr}, all the estimates in Sections~\ref{sec:pointwise} and \ref{sec:energy} hold in $D_{u_0,[v_0,v)}$. A standard propagation of regularity result shows that the solution can be extended smoothly up to 
$$D_{u_0,[v_0,v]}=\{(u,v')\,|\, -\infty \leq u\leq u_0, \,v_0\leq v'\leq  v\}.$$
Hence, one can apply a local existence result for the characteristic initial value problem to show that there exists $\de>0$ such that a smooth solution $(\phi,r,\Om,A)$ to \eqref{eq:transpeqr}--\eqref{eq:raychv} exists in $D_{u_0,[v_0,v+\de)}$.
\item The estimates in \eqref{r.BS.improve}, Corollary~\ref{cor:phi} and Proposition~\ref{prop:eestimateOmega} \emph{improve} those in \eqref{eq:bootstrapOmega}, \eqref{eq:bootstrapphi} and \eqref{eq:bootstrapr}. Hence, by continuity, after choosing $\de>0$ smaller if necessary, \eqref{eq:bootstrapOmega}, \eqref{eq:bootstrapphi} and \eqref{eq:bootstrapr} hold in $D_{u_0,[v_0,v+\de)}$.
\end{itemize}
Combining the two points above, we deduce that after choosing $\de>0$ smaller if necessary, $(v-\de,v+\de)\subset \mathfrak I$. This proves the openness of $\mathfrak I$.
By connectedness of $[v_0,\infty)$, we deduce that $\mathfrak I = [v_0,\infty)$. This implies the existence of a smooth solution in $D_{u_0,v_0}$. Moreover, this implies the assumptions \eqref{eq:bootstrapOmega}, \eqref{eq:bootstrapphi} and \eqref{eq:bootstrapr} that are used in Sections~\ref{sec:pointwise} and \ref{sec:energy} in fact hold throughout $D_{u_0,v_0}$. Therefore, indeed all the estimates in Sections~\ref{sec:pointwise} and \ref{sec:energy} hold in $D_{u_0,v_0}$.
\end{proof}

We have therefore shown the existence of a solution in the whole region $D_{u_0,v_0}$. Since we have now closed our bootstrap argument, \textbf{in the remainder of the paper, we will suppress any dependence on $\mathcal A_\phi$} (which in turn depends only on $M$, $\mfm$ and $\mfe$). 

In the remainder of this section, we show that one can attach a \emph{Cauchy horizon} to the solution and prove regularity of the solution \emph{up to} the Cauchy horizon. More precisely, define $V$ to be a function of $v$ in exactly the same manner as in Section~\ref{sec:coord.CH}, i.e. 
\begin{equation}\label{Vvrelation}
\f{dV}{dv} = \Omega_0^2(1,v),\quad V(\infty) = 0.
\end{equation}
We will use also the convention that
$$V_0:=V(v_0).$$
Define moreover (as in Section~\ref{sec:coord.CH}) the \emph{Cauchy horizon} $\CH$ as the boundary $\{V=0\}$ in the $(u,V,\theta,\varphi)$ coordinate system. Note that this induces a natural differential structure on $D_{u_0,v_0}\cup \CH$. In the new coordinate system, in order to distinguish the ``new $\Om$'', we follow the convention in Section~\ref{sec:coord.CH} and denote $g(\rd_u,\rd_V) = -\f 12\widetilde{\Om}^{2}$ instead. We show that the solution $(\phi,r,\widetilde{\Om},A)$ (after choosing an appropriate gauge for $A$) extends to the Cauchy horizon continuously, and that in fact their derivatives are in $L^2_{loc}$ up to the Cauchy horizon. (In fact, as we will show in Section~\ref{sec:extension}, there are non-unique extensions as spherically symmetric solutions to \eqref{EMCSFS} beyond the Cauchy horizon.)

We begin by restating some of the estimates we have obtained in this new coordinate system.
\begin{lemma}\label{lm:reg}
In the $(u,V)$ coordinate system, $\phi$, $r$ and $\widetilde{\Om}$ satisfy the following estimates:
\begin{equation}\label{eq:reg.Om.rdVr}
\f{1}{|u|^2} \ls \widetilde{\Om}^{2}(u,V)\ls 1,\quad |\rd_V r(u,V)|\ls \mathcal{D}_{\rm o}+\mathcal D_{\rm i},
\end{equation}
\begin{equation}\label{eq:reg.phi}
\int_{V_0}^0 |\rd_V\phi|^2(u,V')\, dV' + \int_{-\infty}^{u_0} u^2|\rd_u\phi|^2(u',V)\, du' \ls \mathcal{D}_{\rm o}+\mathcal D_{\rm i},
\end{equation}
\begin{equation}\label{eq:reg.Om}
\int_{V_0}^0 |\rd_V\log\widetilde{\Om}|^2(u,V')\, dV' + \int_{-\infty}^{u_0} u^2|\rd_u\log\widetilde{\Om}|^2(u',V)\, du' \ls 1,
\end{equation}
\begin{equation}\label{eq:reg.r}
\int_{V_0}^0 |\rd_V r|^2(u,V')\, dV' + \int_{-\infty}^{u_0} u^2|\rd_ur|^2(u',V)\, du' \ls \mathcal{D}_{\rm o}+\mathcal D_{\rm i}.
\end{equation}
\end{lemma}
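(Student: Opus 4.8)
The plan is to translate each of the estimates already established in the $(u,v)$ coordinate system (in Sections~\ref{sec:pointwise} and \ref{sec:energy}, now valid throughout $D_{u_0,v_0}$ by Proposition~\ref{prop:final.est}) into the $(u,V)$ coordinate system via the change of variables \eqref{Vvrelation}, keeping careful track of the Jacobian factor $\f{dV}{dv} = \Omega_0^2(1,v)$. The key elementary input is the asymptotic behaviour of this Jacobian: by \eqref{eq:Om0.est} (or Lemma~\ref{RN.est}), $\Omega_0^2(1,v) \sim 4M^2 v^{-2}$ as $v\to\infty$, so that $V-0 = -\int_v^\infty \Omega_0^2(1,v')\,dv' \sim -4M^2 v^{-1}$, i.e.\ $|V| \sim v^{-1}$, and hence $v \sim |V|^{-1}$ and $dv \sim |V|^{-2}\,dV$ near the Cauchy horizon. (Near $v=v_0$ the map $v\mapsto V$ is a smooth diffeomorphism onto $[V_0,V(v_0+1)]$ with non-vanishing derivative, so all quantities there are trivially comparable; the only issue is the behaviour as $V\to 0^-$.)

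First I would record the relation between $\widetilde\Omega$ and $\Omega$: from the definition in Section~\ref{sec:coord.CH}, $\widetilde\Omega^2(u,V) = \Omega_0^{-2}(1,v(V))\,\Omega^2(u,v(V))$. Combining \eqref{eq:estrangeOmega}, which gives $\Omega^2 \sim M^2(v+|u|)^{-2}$, with $\Omega_0^2(1,v) \sim M^2 v^{-2}$, yields $\widetilde\Omega^2(u,V) \sim v^2(v+|u|)^{-2}$, which lies between $\sim |u|^{-2}$ (when $v\sim|u|$... more precisely when $v$ is comparable to a fixed multiple of $|u|$, the worst case being $v=v_0$ fixed and $|u|$ large) and $\sim 1$ (as $v\to\infty$ with $u$ fixed); this gives the first bound in \eqref{eq:reg.Om.rdVr}. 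For $\rd_V r$, the chain rule gives $\rd_V r = \f{dv}{dV}\rd_v r = \Omega_0^{-2}(1,v)\,\rd_v r$, and \eqref{dvr.est} combined with $\Omega_0^{-2}(1,v) \sim M^{-2} v^2$ gives $|\rd_V r| \ls v^2 \cdot (\mathcal D_{\rm o}+\mathcal D_{\rm i}) v^{-2} \ls \mathcal D_{\rm o}+\mathcal D_{\rm i}$, which is the second bound in \eqref{eq:reg.Om.rdVr}.

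For the three $L^2$ bounds \eqref{eq:reg.phi}--\eqref{eq:reg.r}, the $\rd_u$ pieces are immediate: the $(u,V)$ coordinate of a constant-$v$ slice is the same curve, and the $u$-integrals in Corollary~\ref{cor:phi}, Proposition~\ref{prop:eestimateOmega} and the $\rd_u r$ part of \eqref{eq:reg.r} (using \eqref{dur.est}, which gives $\int u^2|\rd_u r|^2 \ls \int u^2 \cdot u^{-4}\,du < \infty$) transfer verbatim. For the $\rd_V$ pieces one uses $\rd_V(\cdot) = \Omega_0^{-2}(1,v)\rd_v(\cdot)$ and $dV = \Omega_0^2(1,v)\,dv$, so $\int_{V_0}^0 |\rd_V f|^2\,dV = \int_{v_0}^\infty \Omega_0^{-2}(1,v)|\rd_v f|^2\,dv \sim M^{-2}\int_{v_0}^\infty v^2 |\rd_v f|^2\,dv$; applying this with $f = \phi$ (using Corollary~\ref{cor:phi}, after noting $|D_v\phi| = |\rd_v\phi|$ in the gauge $A_v=0$), with $f = \log\widetilde\Omega = \log(\Omega/\Omega_0(1,\cdot))$ so that $\rd_v\log\widetilde\Omega = \rd_v\log(\Omega/\Omega_0) + \rd_v\log(\Omega_0/\Omega_0(1,\cdot))$ — the first summand controlled by Proposition~\ref{prop:eestimateOmega} and the second by the Reissner--Nordstr\"om bounds \eqref{eq:weightestOmega0}, \eqref{eq:Om0.est} which give $\int v^2 (\rd_v\log\Omega_0)^2\,dv <\infty$ — and with $f = r$ (using \eqref{dvr.est}, which gives $\int v^2|\rd_v r|^2 \ls \int v^2\cdot v^{-4}\,dv<\infty$) yields \eqref{eq:reg.phi}, \eqref{eq:reg.Om}, \eqref{eq:reg.r} respectively. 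The main (though still routine) obstacle is bookkeeping: correctly pairing each weight $v^2$ or $v^{-2}$ from the Jacobian with the right estimate and verifying that the borderline cases (the behaviour as $V\to 0$ and the uniformity in $u$) close, together with the minor point that the pointwise-in-$u$ suprema in the bootstrap estimates survive the coordinate change unchanged since $u$ is untouched.
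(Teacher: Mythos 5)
Your overall strategy --- transferring the $(u,v)$ estimates of Sections~\ref{sec:pointwise} and \ref{sec:energy} to the $(u,V)$ system via the Jacobian $\f{dV}{dv}=\Omega_0^2(-1,v)\sim 4M^2v^{-2}$ --- is exactly the paper's proof, and your treatment of \eqref{eq:reg.Om.rdVr}, \eqref{eq:reg.phi} and \eqref{eq:reg.r} matches it step for step (including the identification of the worst case $v=v_0$ for the lower bound on $\widetilde\Om^2$, and the gauge remark needed to replace $D\phi$ by $\rd\phi$).

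The one step where your justification does not go through as written is the background contribution to \eqref{eq:reg.Om}. You decompose $\rd_v\log\widetilde\Om=\rd_v\log(\Om/\Om_0)+\rd_v\log\bigl(\Om_0(u,\cdot)/\Om_0(-1,\cdot)\bigr)$ and dispose of the second summand by claiming $\int v^2(\rd_v\log\Om_0)^2\,dv<\infty$. This is false: from \eqref{def:r*} one computes $\rd_v\log\Om_0(u,v)=-\f{M(M-r_0)}{2r_0^3}\sim-(v+|u|)^{-1}$, so that integral diverges. What actually controls the term is the cancellation in the difference, $\rd_v\log\Om_0(u,v)-\rd_v\log\Om_0(-1,v)\sim\f{|u|-1}{(v+|u|)(v+1)}$; inserting this into $\int_{V_0}^0|\cdot|^2\,dV=\int_{v_0}^\infty\Om_0^{-2}(-1,v)|\cdot|^2\,dv$ gives a contribution of size $(|u|-1)^2(v_0+|u|)^{-1}\sim|u|$, not $O(1)$. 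The same issue is hidden in your claim that the $\rd_u$ half ``transfers verbatim'': Proposition~\ref{prop:eestimateOmega} controls $\rd_u\log(\Om/\Om_0)$, not $\rd_u\log\widetilde\Om=\rd_u\log(\Om/\Om_0)+\rd_u\log\Om_0$, and $\int_{-\infty}^{u_0}u'^2(\rd_u\log\Om_0)^2\,du'\sim\int_{-\infty}^{u_0}u'^2(v+|u'|)^{-2}\,du'$ diverges. Indeed \eqref{eq:reg.Om} with a $u$-uniform constant over the full ranges fails already for the exact background $\Om=\Om_0$; the statement that is both true and sufficient for Proposition~\ref{prop:C0.ext} (which only needs $W^{1,2}_{loc}$) is the local-in-$u$ version, or equivalently the estimate for $\log(\widetilde\Om/\widetilde\Om_0)$. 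To be fair, the paper's own one-line proof of \eqref{eq:reg.phi}--\eqref{eq:reg.Om} glosses over precisely the same point, so your write-up is no worse than the original; but if you want a complete argument you should either restrict the $u$-integral in \eqref{eq:reg.Om} to compact sets (accepting a constant that grows linearly in $|u|$) or renormalise by $\widetilde\Om_0$ before taking the energy.
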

\begin{proof}
\textbf{Proof of estimates for $\widetilde{\Om}$ in \eqref{eq:reg.Om.rdVr}.} By \eqref{eq:LinftyOmegav2} and \eqref{Vvrelation},
\begin{equation*}
\begin{split}
\left|\widetilde{\Om}^2(u,V)-\frac{4M^2}{(v(V)+|u|)^2}\Omega^{-2}_0(-1,v(V))\right| = &\:\Omega^{-2}_0(-1,v(V)) \left| \Om^2(u,v(V))-\frac{4M^2}{(v+|u|)^2}\right|\\
\ls &\: |u|^{-\f 12}\bigg(\f{v(V)+1}{v(V)+|u|}\bigg)^2,
\end{split}
\end{equation*}
which implies both the upper and lower bounds for $\widetilde{\Om}$ in \eqref{eq:reg.Om.rdVr}.

\textbf{Proof of estimate for $\rd_V r$ in \eqref{eq:reg.Om.rdVr}.} The estimate for $\rd_V r$ in \eqref{eq:reg.Om.rdVr} follows from \eqref{dvr.est} and \eqref{Vvrelation}.

\textbf{Proof of \eqref{eq:reg.phi} and \eqref{eq:reg.Om}.} These follow from Corollary~\ref{cor:phi}, Proposition~\ref{prop:eestimateOmega}, \eqref{Vvrelation} and \eqref{eq:Om0.est}.

\textbf{Proof of \eqref{eq:reg.r}.} Finally, \eqref{eq:reg.r} can be obtained by directly integrating the pointwise estimates in \eqref{dur.est} (for $\rd_u r$) and \eqref{eq:reg.Om.rdVr} (for $\rd_V r$).
\end{proof}

We also have the following $W^{1,2}$ estimate for the charge $Q$.
\begin{lemma}\label{lm:Q}
In the $(u,V)$ coordinate system, $Q$ satisfies the following estimate:
$$\int_{V_0}^{0} |\rd_V Q|^2(u,V')\, dV' + \int_{-\infty}^{u_0} u^2|\rd_u Q|^2(u',V)\, du' \ls \mathcal{D}_{\rm o} + \mathcal D_{\rm i}.$$
\end{lemma}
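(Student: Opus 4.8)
The plan is to use the charge equations \eqref{eq:Q1} and \eqref{eq:Q2} to express $\rd_u Q$ and $\rd_V Q$ in terms of the scalar field and its derivatives, and then to reduce the desired bounds to the energy estimates for $\phi$ that we have already established in Corollary~\ref{cor:phi} (after passing to the $(u,V)$ coordinate system).

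First I would record the pointwise consequences of \eqref{eq:Q1}: since $D_u Q$ does not appear, we have directly $\rd_u Q = 2\pi i r^2\mfe(\phi\overline{D_u\phi} - \overline{\phi}D_u\phi)$, so $|\rd_u Q|(u,V)\ls r^2 |\phi| |D_u\phi|(u,V)$. Using the bootstrap bound \eqref{eq:bootstrapr} (now established) to control $r^2 \ls M^2$ and the pointwise bound \eqref{eq:Linftyphi} to control $|\phi|(u,V)\ls \sqrt{\mathcal{D}_{\rm o}} v(V)^{-\f12-\f{\alp}{2}} + \sqrt{\mathcal{D}_{\rm o}+\mathcal D_{\rm i}}|u|^{-\f12}$, I then estimate
\[
\int_{-\infty}^{u_0} u^2 |\rd_u Q|^2(u',V)\, du' \ls M^4 \int_{-\infty}^{u_0} u^2 |\phi|^2 |D_u\phi|^2(u',V)\, du' \ls (\mathcal{D}_{\rm o}+\mathcal D_{\rm i}) \int_{-\infty}^{u_0} u^2 |D_u\phi|^2(u',V)\, du' \ls (\mathcal{D}_{\rm o}+\mathcal D_{\rm i})^2,
\]
where in the second inequality I used $\sup_{u'}|\phi|^2(u',V) \ls \mathcal{D}_{\rm o} + \mathcal D_{\rm i}$ and in the last I used \eqref{eq:reg.phi} (equivalently Corollary~\ref{cor:phi}). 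Up to the harmless factor $(\mathcal{D}_{\rm o}+\mathcal D_{\rm i})$ (which is bounded by a constant times $\mathcal{D}_{\rm o}+\mathcal D_{\rm i}$ under the smallness conventions of Section~\ref{sec:bootstrap}), this is the claimed bound for the $u$-integral.

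For the $V$-integral I would proceed analogously using \eqref{eq:Q2}: $\rd_v Q = -2\pi i r^2 \mfe(\phi\overline{D_v\phi} - \overline{\phi}D_v\phi)$, hence $|\rd_v Q|(u,v)\ls M^2 |\phi||D_v\phi|(u,v)$. Changing variables via \eqref{Vvrelation}, $\rd_V Q = \Omega_0^{-2}(1,v(V)) \rd_v Q$, and combining with the bound $|\rd_V\phi| = \Omega_0^{-2}(1,v)|\rd_v\phi|$ (and the gauge-invariant version for $D_V\phi$) together with \eqref{eq:Om0.est}, I get
\[
\int_{V_0}^0 |\rd_V Q|^2(u,V')\, dV' \ls \Big(\sup_{V'}|\phi|^2(u,V')\Big)\int_{V_0}^0 |\rd_V\phi|^2(u,V')\, dV' \ls (\mathcal{D}_{\rm o}+\mathcal D_{\rm i})\cdot(\mathcal{D}_{\rm o}+\mathcal D_{\rm i}) \ls \mathcal{D}_{\rm o}+\mathcal D_{\rm i},
\]
using \eqref{eq:reg.phi} again for the $\rd_V\phi$ integral. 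Adding the two contributions yields the statement.

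The only mild subtlety — and the step I would be most careful with — is the bookkeeping of the weight conversions between the $(u,v)$ and $(u,V)$ coordinate systems: one must check that $\Omega_0^{-2}(1,v(V))\, dV = dv$ and that the factors $\Omega_0^{-2}(1,v(V))$ appearing when rewriting $\rd_V\phi$ and $\rd_V Q$ cancel correctly, exactly as was done in the proof of Lemma~\ref{lm:reg}. Since $Q$ satisfies a transport equation (not a wave equation), no energy estimate or integration by parts is needed here, so there is no genuine obstacle; this lemma is essentially a corollary of the pointwise estimate \eqref{eq:Linftyphi} and the scalar-field energy bound in Lemma~\ref{lm:reg}.
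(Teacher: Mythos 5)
Your proof follows the same route as the paper's: express $\rd_u Q$ and $\rd_V Q$ via \eqref{eq:Q1}--\eqref{eq:Q2}, bound $r$ and $|\phi|$ pointwise, and invoke the energy bounds of Corollary~\ref{cor:phi} / Lemma~\ref{lm:reg} in the $(u,V)$ coordinates. The one slip is the final absorption of the extra factor: $(\mathcal{D}_{\rm o}+\mathcal D_{\rm i})^2\ls \mathcal{D}_{\rm o}+\mathcal D_{\rm i}$ does \emph{not} follow from the conventions of Section~\ref{sec:bootstrap} (the data constants are never assumed small, only $v_0$ and $|u_0|$ are taken large relative to them); instead one should keep the decay weights in \eqref{eq:Linftyphi}, which give $\sup|\phi|^2\ls \mathcal{D}_{\rm o}v_0^{-1-\alp}+(\mathcal{D}_{\rm o}+\mathcal D_{\rm i})|u_0|^{-1}\ll 1$, so that the $|\phi|^2$ factor costs only a constant --- which is exactly how the paper proceeds.
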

\begin{proof}
\textbf{Estimate for $\rd_V Q$.} By \eqref{eq:Q2} (adapted to the $(u,V)$ coordinate system), $\rd_V Q = -2\pi i r^2\mfe (\phi \overline{D_V\phi} - \overline{\phi} D_V\phi).$ Therefore, using \eqref{eq:Linftyphi}, \eqref{r.est} and \eqref{eq:Linftyphi},
$$\int_{V_0}^{0} |\rd_V Q|^2(u,V')\, dV' \ls \int_{V_0}^{0} |D_V \phi|^2(u,V')\, dV' \ls \mathcal{D}_{\rm o} + \mathcal D_{\rm i}.$$

\textbf{Estimate for $\rd_u Q$.} By \eqref{eq:Q1}, we have $\partial_uQ = 2\pi i r^2\mathfrak{e} (\phi \overline{D_u\phi}-\overline{\phi}D_u\phi)$. The desired estimate hence follows similarly as above using \eqref{eq:Linftyphi}, \eqref{r.est} and \eqref{eq:reg.phi}. 
\end{proof}

In order to consider the extension, we will also need to choose a gauge for $A_\mu$. We will fix $A$ such that 
\begin{equation}\label{A.gauge.con}
\mbox{$A_u=0$ everywhere and $A_V=0$ on the null hypersurface $\{u=u_0\}$.}
\end{equation}
To see that this is an acceptable gauge choice, simply notice that given any $\tilde{A}_u$, $\tilde{A}_V$, we can define
$$\chi(u,V) = \int_{u_0}^u \tilde{A}_u(u,V) \, du'+ \int_{V_0}^V \tilde{A}_V(u_0,V')\, dV',$$
where $V_0=V(v_0)$. This implies
$$A_u(u,V) = \tilde{A}_u(u,V) - (\rd_u\chi)(u,V) = 0, \,\forall u,\forall V,\quad A_V(u_0,V)=\tilde{A}_V(u,V)-(\rd_v\chi)(u_0,V)=0,\,\forall V.$$
Now in the gauge \eqref{A.gauge.con}, we have the following estimates:
\begin{lemma}\label{lm:AV}
Suppose $A$ satisfies the gauge condition above. Then $A_V$, $\rd_u A_V$ and $\rd_V A_V$ obey the following estimates
$$\sup_{u\in (-\infty,u_0],\,V\in [V_0,0)} |A_V(u,V)|\ls (u_0-u),\quad \sup_{V\in [V_0,0)}\int_{u}^{u_0} |\rd_uA_V|^2(u',V)\, du'\ls (u_0-u),$$
$$\sup_{u\in (-\infty,u_0]}\int_{V_0}^0 |\rd_V A_V|^2(u,V')\, dV' \ls (\mathcal{D}_{\rm o} + \mathcal D_{\rm i})(u_0-u).$$
\end{lemma}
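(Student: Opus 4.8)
The strategy is to use the gauge condition \eqref{A.gauge.con} to express $A_V$ by integration in $u$, and then estimate the derivatives directly from the Maxwell equations. First I would derive a transport equation for $A_V$. Since $A_u = 0$ everywhere, the formula \eqref{QintermsofA} adapted to the $(u,V)$ coordinate system gives $Q = 2r^2\widetilde{\Om}^{-2}\rd_u A_V$, i.e.
\begin{equation*}
\rd_u A_V = \f{Q\widetilde{\Om}^2}{2r^2}.
\end{equation*}
Using $A_V(u_0,V) = 0$ from the gauge condition, I would integrate this in $u$ from $u_0$:
\begin{equation*}
A_V(u,V) = -\int_u^{u_0} \f{Q\widetilde{\Om}^2}{2r^2}(u',V)\, du'.
\end{equation*}
By the pointwise bounds $|Q|\ls M$ (from \eqref{eq:estrangeQ}), $\widetilde{\Om}^2\ls 1$ (from \eqref{eq:reg.Om.rdVr}) and $r\geq M/2$ (from \eqref{r.BS.improve}), the integrand is $\ls 1$, so $|A_V(u,V)|\ls (u_0-u)$, which is the first estimate. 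The second estimate follows immediately: $|\rd_u A_V|\ls 1$ pointwise by the same bounds, so $\int_u^{u_0}|\rd_u A_V|^2(u',V)\, du'\ls (u_0-u)$.

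For the third estimate, the bound on $\rd_V A_V$, I would differentiate the integral representation of $A_V$ in $V$ and commute the $V$-derivative under the $u'$-integral:
\begin{equation*}
\rd_V A_V(u,V) = -\int_u^{u_0} \rd_V\!\left(\f{Q\widetilde{\Om}^2}{2r^2}\right)(u',V)\, du'.
\end{equation*}
Expanding the $V$-derivative by the Leibniz rule produces three kinds of terms: one with $\rd_V Q$, one with $\rd_V\widetilde{\Om}^2 = 2\widetilde{\Om}^2\,\rd_V\log\widetilde{\Om}$, and one with $\rd_V r$. Using the pointwise bounds for $Q$, $\widetilde{\Om}^2$, $r$ as above, together with $|\rd_V r|\ls \mathcal{D}_{\rm o}+\mathcal D_{\rm i}$ from \eqref{eq:reg.Om.rdVr}, I would bound $|\rd_V A_V|(u,V)\ls \int_u^{u_0}\big(|\rd_V Q| + |\rd_V\log\widetilde{\Om}| + |\rd_V r|\big)(u',V)\, du'$. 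Then by Cauchy--Schwarz in $u'$ and squaring,
\begin{equation*}
|\rd_V A_V|^2(u,V)\ls (u_0-u)\int_u^{u_0}\big(|\rd_V Q|^2 + |\rd_V\log\widetilde{\Om}|^2 + |\rd_V r|^2\big)(u',V)\, du'.
\end{equation*}
Finally I would integrate this in $V$ over $[V_0,0)$ and apply Fubini, so that the $V$-integrals of $|\rd_V Q|^2$, $|\rd_V\log\widetilde{\Om}|^2$ and $|\rd_V r|^2$ are controlled for each fixed $u'$ by $\ls \mathcal{D}_{\rm o}+\mathcal D_{\rm i}$, $\ls 1$ and $\ls \mathcal{D}_{\rm o}+\mathcal D_{\rm i}$ respectively, using Lemma~\ref{lm:Q}, \eqref{eq:reg.Om} and \eqref{eq:reg.r}. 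This yields $\int_{V_0}^0 |\rd_V A_V|^2(u,V')\, dV'\ls (\mathcal{D}_{\rm o}+\mathcal D_{\rm i})(u_0-u)$, as claimed.

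The calculation is essentially routine given the earlier estimates; the only mild subtlety — the main point to check carefully — is the commutation of $\rd_V$ with the $u'$-integral and the justification that $\rd_V Q$, $\rd_V r$, $\rd_V\log\widetilde{\Om}$ are the right quantities controlled in $L^2_u L^2_V$ (rather than in a mixed norm), which is exactly what Lemma~\ref{lm:Q}, \eqref{eq:reg.Om}, \eqref{eq:reg.r} provide after an application of Fubini. One should also note the linear growth in $(u_0-u)$ is harmless since we only work on a region where $u$ is bounded below (or, for the extension, on compact sets up to $\CH$), so these estimates are more than enough to conclude $A_V\in W^{1,2}_{loc}$ up to the Cauchy horizon.
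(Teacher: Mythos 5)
Your proposal is correct and follows essentially the same route as the paper: the pointwise and $L^2_u$ bounds come from the transport equation $\rd_u A_V=\tfrac12\widetilde{\Om}^2Q/r^2$ together with the pointwise bounds on $Q$, $r$, $\widetilde{\Om}$, and the $\rd_V A_V$ bound comes from differentiating this relation in $V$ and using the $L^2_V$ control of $\rd_V Q$, $\rd_V\log\widetilde{\Om}$, $\rd_V r$ from Lemma~\ref{lm:Q}, \eqref{eq:reg.Om} and \eqref{eq:reg.r}. Note only that your Cauchy--Schwarz-plus-Fubini chain (like the paper's own final display) actually produces $(\mathcal{D}_{\rm o}+\mathcal D_{\rm i})(u_0-u)^2$ rather than $(\mathcal{D}_{\rm o}+\mathcal D_{\rm i})(u_0-u)$, a discrepancy shared with the paper and harmless for the local applications.
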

\begin{proof}
\textbf{Pointwise estimate for $A_V$.} By \eqref{QintermsofA} (adapted to the $(u,V)$ coordinate system),
\begin{equation}\label{du.av}
\rd_u A_V = \f 12 \f{\widetilde{\Om}^2 Q}{r^2}.
\end{equation}
Using \eqref{eq:LinftyQ}, \eqref{r.est} and \eqref{eq:reg.Om.rdVr}, and the fact that $A_V(u_0,V)=0$, we obtain that for any $u\leq u_0$,
$$|A_V(u,V)|\ls \int_{u}^{u_0} \, du' = (u_0-u).$$

\textbf{$L^2$ estimate for $\rd_u A_V$.} To obtain the desired $L^2_u$ estimate for $\rd_uA_V$, we simply use the fact that the RHS of \eqref{du.av} is bounded (as shown above using \eqref{eq:LinftyQ}, \eqref{r.est} and \eqref{eq:reg.Om.rdVr}) and integrate it up in $u$.

\textbf{$L^2$ estimate for $\rd_V A_V$.} To estimate $\rd_V A_V$, we differentiate \eqref{du.av} in $V$ to obtain
\begin{equation}\label{du.dv.av}
\rd_u \rd_V A_V = \f 12 \rd_V (\f{\widetilde{\Om}^2 Q}{r^2}).
\end{equation}
Using the pointwise bounds in \eqref{eq:LinftyQ}, \eqref{r.est} and \eqref{eq:reg.Om.rdVr}, and the $L^2_V$ estimates in \eqref{eq:reg.phi}, \eqref{eq:reg.r} and \eqref{eq:reg.Om}, we obtain
$$\sup_{u\in (-\infty,u_0]}\int_{V_0}^{0} \left|\rd_V \left(\f{\widetilde{\Om}^2 Q}{r^2}\right)\right|^2(u,V')\, dV' \ls \mathcal{D}_{\rm o} + \mathcal D_{\rm i}.$$
Now since $\rd_V A_V(u_0,V)=0$ for all $V$, we have, for any $u\leq u_0$,
$$\int_{V_0}^0 |\rd_V A_V|^2(u,V')\, dV' \leq \int_u^{u_0} |\rd_u\rd_V A_V|^2(u',V')\, dV'\, du'\ls (\mathcal{D}_{\rm o} + \mathcal D_{\rm i})(u_0-u).$$
\end{proof}

\begin{proposition}\label{prop:C0.ext}
Let $V$ be as in \eqref{Vvrelation} and $A$ satisfy the gauge condition \eqref{A.gauge.con}. Then in the $(u,V)$ coordinate system, 
\begin{itemize}
\item $\phi$, $r$, $\widetilde{\Om}$, $A_V$ and $Q$ (as functions of $(u,V)\in (-\infty,u_0]\times [V_0,0)$) can be continuously extended to the Cauchy horizon $\{V=0\}$.
\item The extensions of $\phi$, $r$, $\widetilde{\Om}$, $A_V$ and $Q$ (as functions of $(u,V)\in (-\infty,u_0]\times [V_0,0]$) are all in $C^{0,\f 12}\cap W^{1,2}_{loc}$.
\item The Hawking mass $m$ (as a function of $(u,V)\in (-\infty,u_0]\times [V_0,0)$) can be continuously extended to the Cauchy horizon $\{V=0\}$.
\end{itemize}
\end{proposition}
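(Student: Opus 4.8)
The plan is to deduce everything from the estimates collected in Lemmas~\ref{lm:reg}, \ref{lm:Q} and \ref{lm:AV}, using only the one-dimensional Sobolev embedding and the structure of the equations. I would organize the proof around the following observation: for a function $f$ on $(-\infty,u_0]\times [V_0,0)$, if $\sup_u \int_{V_0}^0 |\rd_V f|^2(u,V')\,dV' <\infty$ and $\sup_V \int_{-\infty}^{u_0} |u'|^2|\rd_u f|^2(u',V)\,du'<\infty$ (or more simply if $\rd_u f$ is pointwise integrable in $u$ uniformly in $V$), then $f$ extends continuously to $\{V=0\}$ and the extension is $C^{0,\f12}\cap W^{1,2}_{loc}$ on the closed rectangle. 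The continuous extension in the $V$-direction comes from the fundamental theorem of calculus: for fixed $u$, $|f(u,V)-f(u,V'')|\le \int_{V}^{V''}|\rd_V f|(u,W)\,dW\le |V-V''|^{\f12}\big(\int_{V_0}^0|\rd_V f|^2\big)^{\f12}$, so $V\mapsto f(u,V)$ is uniformly $\f12$-Hölder and hence extends; uniformity in $u$ (to get joint continuity and the $u$-regularity of the limit) comes from the $L^2_u$-type bounds on $\rd_u f$. Then $C^{0,\f12}$ on the closed rectangle follows from these two one-dimensional Hölder bounds combined (the $u$-direction Hölder bound coming from Cauchy--Schwarz applied to $\int |u'|^{-1}\cdot |u'||\rd_u f|$, exactly as in the proof of \eqref{eq:Linftyphi}), and $W^{1,2}_{loc}$ is immediate since $\rd_u f$ and $\rd_V f$ are in $L^2_{loc}$ of the closed rectangle by the same lemmas (here one uses that the relevant domains are bounded away from $u=-\infty$ when localizing).

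Concretely, I would first dispatch $\phi$, $r$, $\widetilde\Om$ (equivalently $\log\widetilde\Om$, then exponentiate using $\f1{|u|^2}\ls\widetilde\Om^2\ls 1$ from \eqref{eq:reg.Om.rdVr} to transfer the bounds), $Q$, and $A_V$ one at a time. For $\phi$: \eqref{eq:reg.phi} gives $\int_{V_0}^0|\rd_V\phi|^2\ls\mathcal D_{\rm o}+\mathcal D_{\rm i}$ and $\int|u'|^2|\rd_u\phi|^2\ls\mathcal D_{\rm o}+\mathcal D_{\rm i}$; note one should work with the gauge-invariant $|\rd_V\phi|=|D_V\phi|$ (choosing $A_V=0$ momentarily as in the proof of Lemma~\ref{lm:fundthmcalc}), or else note that in the gauge \eqref{A.gauge.con} one has $A_u=0$ so $D_u\phi=\rd_u\phi$ and $|D_V\phi|\le|\rd_V\phi|+|\mfe||A_V||\phi|$ with $A_V$ bounded by Lemma~\ref{lm:AV}, so the $L^2$ control transfers to $\rd\phi$ itself. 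For $r$: \eqref{eq:reg.r}. For $\widetilde\Om$: \eqref{eq:reg.Om} plus the pointwise bounds. For $Q$: Lemma~\ref{lm:Q}. For $A_V$: Lemma~\ref{lm:AV} gives the pointwise bound, the $L^2_u$ bound on $\rd_u A_V$ and the $L^2_V$ bound on $\rd_V A_V$ (all with constants that blow up like $(u_0-u)$ as $u\to-\infty$, but that is harmless since we only need $W^{1,2}_{loc}$ and continuity on compact subsets of $(-\infty,u_0]\times[V_0,0]$). In each case the abstract lemma above applies verbatim.

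For the Hawking mass, I would use the formula \eqref{Hawking.mass}, $m=\f r2\big(1+\f{4\rd_u r\,\rd_V r}{\widetilde\Om^2}\big)$ (rewritten in $(u,V)$ coordinates), and argue that the right-hand side extends continuously: $r$ extends continuously and is bounded below away from $0$ by \eqref{eq:bootstrapr}/\eqref{r.BS.improve}; $\widetilde\Om^2$ extends continuously and is bounded below by $|u|^{-2}$ on each compact $u$-interval; $\rd_V r$ is bounded by \eqref{eq:reg.Om.rdVr}; and $\rd_u r$ is bounded by \eqref{dur.est} transferred to $(u,V)$ coordinates (recall $\rd_u$ is unchanged since only $v\mapsto V$ is rescaled), giving $|\rd_u r|\ls|u|^{-2}$, so the quotient $\f{4\rd_u r\,\rd_V r}{\widetilde\Om^2}$ is bounded. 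Continuity up to $\{V=0\}$ then follows once we know $\rd_u r$ and $\rd_V r$ themselves extend continuously in $V$: $\rd_V r$ extends by the abstract lemma (using \eqref{eq:reg.r} and differentiating Raychaudhuri \eqref{eq:raychv} is not even needed — just the $L^2_V$ bound on $\rd_V r$ together with an $L^2_u$ bound on $\rd_u\rd_V r$, which one gets from \eqref{eq:transpeqr} rewritten as $\rd_u(r\rd_V r)=\dots$ and the pointwise estimates already proved), and $\rd_u r$ extends in $V$ by integrating $\rd_V\rd_u r$, again read off from \eqref{eq:transpeqr}. The main obstacle, and the step requiring the most care, is precisely this last point: establishing that the \emph{first derivatives} $\rd_u r$, $\rd_V r$ extend continuously to $\{V=0\}$ (so that $m$ does), since $m$ is not itself one of the fields to which the clean $W^{1,2}$ package applies — one must go through the wave/constraint equations for $r$ to bound the mixed derivative $\rd_u\rd_V r$ in an $L^1$ or $L^2$ sense and then apply the fundamental-theorem-of-calculus argument a second time. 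Everything else is a routine bookkeeping of the estimates already in hand.
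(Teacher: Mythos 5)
For the first two bullets (continuous extension, $C^{0,\f12}$ and $W^{1,2}_{loc}$ of $\phi$, $r$, $\widetilde{\Om}$, $A_V$, $Q$) your argument is the same as the paper's: fundamental theorem of calculus in each null direction plus Cauchy--Schwarz against the weighted $L^2$ fluxes of Lemmas~\ref{lm:reg}, \ref{lm:Q} and \ref{lm:AV}, with the same handling of the gauge issue for $\phi$ (bounding $|\rd_V\phi|$ by $|D_V\phi|+|\mfe||A_V||\phi|$). Nothing to add there.

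For the Hawking mass you take a genuinely different route, and it is here that there is a gap. The paper never shows that $\rd_u r$ and $\rd_V r$ individually extend continuously: it works with $m$ directly, using the transport equations \eqref{eq:dum}--\eqref{eq:dvm} (in $(u,V)$ coordinates), whose right-hand sides are manifestly in $L^1_u$ and $L^1_V$ by the estimates already in hand; continuity of $m$ then follows from the fundamental theorem of calculus, with no need to differentiate $r$ twice. Your route through the algebraic formula \eqref{Hawking.mass} forces you to prove the stronger statement that $\rd_V r(u,\cdot)$ converges as $V\to 0$, and your proposed justification for this does not work as stated: an $L^2_V$ bound on $\rd_V r$ together with an $L^2_u$ bound on $\rd_u\rd_V r$ controls the size of $\rd_V r$ and its variation in $u$, but says nothing about its oscillation in $V$, so it cannot rule out that $\rd_V r(u,V)$ fails to have a limit as $V\to 0$. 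What you actually need is $\rd_V^2 r\in L^1_V$, and this is precisely what Raychaudhuri \eqref{eq:raychv} gives you (the step you claim is ``not even needed''): expanding $\rd_V(\widetilde{\Om}^{-2}\rd_V r)=-4\pi r\widetilde{\Om}^{-2}|D_V\phi|^2$ yields $\rd_V^2 r=2(\rd_V\log\widetilde{\Om})\,\rd_V r-4\pi r|D_V\phi|^2$, which is in $L^1_V$ by \eqref{eq:reg.Om}, \eqref{eq:reg.Om.rdVr} and \eqref{eq:reg.phi}. With that correction (and your argument for $\rd_u r$ via the pointwise-bounded right-hand side of \eqref{eq:transpeqr}, which is fine), your route closes; but the paper's argument via $\rd_u m$, $\rd_V m$ is shorter and avoids second derivatives of $r$ altogether.
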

\begin{proof}
\textbf{Continuous extendibility and H\"older estimates.} Let us first consider in detail the estimates for $\phi$. As we will explain, the estimates for $r$, $\widetilde{\Om}$, $A_V$ and $Q$ are similar. Consider two points $(u',V')$ and $(u'',V'')$. Denote $v'=v(V')$, $v''=v(V'')$, where $v$ is the inverse function of $v\mapsto V$ above. Then we have, using the fundamental theorem of calculus and the Cauchy--Schwarz inequality,
\begin{equation}\label{phi.C12}
\begin{split}
&\: |\phi(u',V')-\phi(u'',V'')|\\
\leq & \: \left|\int_{u'}^{u''} |\rd_u\phi|(u''',V')\, du'''\right| + \left| \int_{V'}^{V''} |\rd_V\phi|(u'',V''')\, dV'''\right|\\
\leq & \: \left|\int_{u'}^{u''} |D_u\phi|(u''',V')\, du'''\right| + \left| \int_{V'}^{V''} |D_V\phi|(u'',V''')+ |A_V||\phi|(u'',V''')\, dV'''\right|\\
\ls & \:|u'-u''|^{\f 12} \left(\int_{u'}^{u''} |D_u\phi|^2(u''',v')\, du'''\right)^{\f 12} + |V'-V''|^{\f 12}\left( \int_{v'}^{v''} (v'''+1)^2|D_v\phi|^2(u'',v''')\, dv'''\right)^{\f 12}\\
&\: + (\mathcal{D}_{\rm o}^{\f 12}+ \mathcal D_{\rm i}^{\f 12})(u_0-u'')|V'-V''|\\
\ls & \: (\mathcal{D}_{\rm o} + \mathcal D_{\rm i}) (|u'-u''|^{\f 12}+ |V'-V''|^{\f 12}) + (\mathcal{D}_{\rm o}^{\f 12}+ \mathcal D_{\rm i}^{\f 12})|V'-V''|,
\end{split}
\end{equation}
where in the last two lines we have used \eqref{eq:Linftyphi}, \eqref{eq:reg.phi} and Lemma~\ref{lm:AV}.

In a similar manner, using Lemmas~\ref{lm:reg}, \ref{lm:Q} and \ref{lm:AV} instead, $r$, $\widetilde{\Om}$, $A_V$ and $Q$ can be estimated as follows\footnote{In fact, the estimates for $r$, $\widetilde{\Om}$ $A_V$ and $Q$ are simpler as we do not need to handle the difference between $\rd_V$ and $D_V$.}: (To simplify the exposition, we suppress the discussion on the explicit dependence of the constant on $\mathcal{D}_{\rm o}$ and $\mathcal D_{\rm i}$.)
\begin{equation}\label{other.C12}
\begin{split}
&\:|r(u',V')-r(u'',V'')|+|\widetilde{\Om}(u',V')-\widetilde{\Om}(u'',V'')|\\
&\:\quad+|A_V(u',V')-A_V(u'',V'')|+|Q(u',V')-Q(u'',V'')|
\ls_{\mathcal{D}_{\rm o},\mathcal D_{\rm i}} (|u'-u''|^{\f 12}+ |V'-V''|^{\f 12}).
\end{split}
\end{equation}
Define the extension of $(\phi,r,\widetilde{\Om},A_V,Q)$ by
$$\phi(u,V=0) := \lim_{V\to 0}\phi(u,V),\quad r(u,V=0) := \lim_{V\to 0}r(u,V),$$
$$\widetilde{\Om}(u,V=0) := \lim_{V\to 0}\widetilde{\Om}(u,V),\quad A_V(u,V=0) := \lim_{V\to 0}A_V(u,V),\quad Q(u,V=0) := \lim_{V\to 0}Q(u,V).$$
The estimates in \eqref{phi.C12} and \eqref{other.C12} above show that the extensions are well-defined and that the extensions of $(\phi,r,\widetilde{\Om},A_V,Q)$ is indeed $C^{0,\f 12}$.

\textbf{$W^{1,2}_{loc}$ estimates.} Now that we have constructed an extension of $(\phi,r,\widetilde{\Om},A_V,Q)$ to $D_{u_0,v_0}\cup \CH$, it follows immediately from Lemmas~\ref{lm:reg}, \ref{lm:AV} and \ref{lm:Q} that the extension is in $W^{1,2}_{loc}$.

\textbf{$C^0$ extendibility of the Hawking mass.} Finally, we prove the $C^0$ extendibility of the Hawking mass (whose definition we recall from \eqref{Hawking.mass}). By \eqref{eq:dum} and \eqref{eq:dvm} (appropriately adapted in the $(u,V)$ coordinate system), we have
\begin{equation}\label{eq:dum.1}
\rd_u m = -8\pi \f{r^2(\rd_V r)}{\widetilde{\Om}^2}|D_u\phi|^2+2(\rd_u r)\mfm^2 \pi r^2|\phi|^2+\f 12 \f{(\rd_u r)Q^2}{r^2},
\end{equation}
\begin{equation}\label{eq:dvm.1}
\rd_V m = -8\pi \f{r^2(\rd_u r)}{\widetilde{\Om}^2}|D_V\phi|^2+2(\rd_V r)\mfm^2 \pi r^2|\phi|^2+\f 12 \f{(\rd_V r)Q^2}{r^2}.
\end{equation}
It now follows from \eqref{eq:LinftyQ}, \eqref{r.est}, \eqref{dvr.est} and Lemma~\ref{lm:reg} that the RHS of \eqref{eq:dum.1} is bounded in $L^1_u$ and the RHS of \eqref{eq:dvm.1} is bounded in $L^1_V$. This implies the following $L^1$ estimates
\begin{equation}\label{m.cont.1}
\int_{-\infty}^{u_0} |\rd_u m|(u',V)\, du' + \int_{V_0}^{0} |\rd_V m|(u,V')\, dV'\ls_{\mathcal{D}_{\rm o},\,\mathcal D_{\rm i}} 1.
\end{equation}
On the other hand, by the fundamental theorem of calculus,
\begin{equation}\label{m.cont.2}
\begin{split}
|m(u',V')-m(u'',V'')|
\leq & \: \left|\int_{u'}^{u''} |\rd_um|(u''',V')\, du'''\right| + \left| \int_{V'}^{V''} |\rd_Vm|(u'',V''')\, dV'''\right|.
\end{split}
\end{equation}
Combining \eqref{m.cont.1} and \eqref{m.cont.2}, we see that 
\begin{enumerate}
\item $m$ can be extended to $\CH$ by
$$m(u,0)= \lim_{V\to 0} m(u,V),$$
and that
\item the extension is continuous up to $\CH$,
\end{enumerate}
which concludes the proof of the proposition. (Let us finally note that since we only have $L^1$ (as opposed to $L^2$) estimates for $\rd_u m$ and $\rd_V m$, we only show that $m$ is continuous, but do \emph{not} obtain any H\"older estimates.) \qedhere
\end{proof}

\begin{remark}[$C^{0,\f 12}\cap W^{1,2}_{loc}$ regularity in the $(3+1)$-dimensional spacetime]
In Proposition~\ref{prop:C0.ext}, we proved that the extensions of $\phi$, $r$, $\widetilde{\Om}$, $A_V$ and $Q$ are $C^{0,\f 12}\cap W^{1,2}_{loc}$ on the $(1+1)$-dimensional quotient manifold $\mathcal Q$ (cf.~notations in Section~\ref{doublenull}). It easily follows that these functions, when considered as functions on $\mathcal M = \mathcal Q\times \mathbb S^2$ are also in $C^{0,\f 12}\cap W^{1,2}_{loc}$. As a consequence, in the coordinate system $(u,v,\theta,\varphi)$, the spacetime metric, the scalar field and the electromagnetic potential all extend to the Cauchy horizon in a manner that is in the $(3+1)$-dimensional spacetime norm $C^{0,\f 12}\cap W^{1,2}_{loc}$.
\end{remark}

\section{Constructing extensions beyond the Cauchy horizon}\label{sec:extension}

In this section, we prove that the solution can be extended locally beyond the Cauchy horizon \textbf{as a spherically symmetric $W^{1,2}$ solution to \eqref{EMCSFS}} (in a non-unique manner). Together with Propositions~\ref{prop:final.est} and \ref{prop:C0.ext}, this completes the proof of Theorem~\ref{thm:main}.

The idea behind the construction of the extension is that the system \eqref{EMCSFS} is \emph{locally well-posed} in spherical symmetry for data such that $\rd_V\phi$, $\rd_V r$ and $\rd_V\log\widetilde{\Omega}$ are merely in $L^2$ (when $r$ and $\Om$ are bounded away from $0$). This follows from the well-known fact that $(1+1)$-dimensional wave equations are locally well-posed with $W^{1,2}$ data. Related results in the context of general relativity can be found throughout the literature; see for instance \cite{CGNS1, LR2, LeSt}. For completeness, we give a proof in our specific setting.

The section is organized as follows. We first discuss a general local well-posedness result on $(1+1)$-dimensional wave equation (cf.~Definition~\ref{def:sol} and Proposition~\ref{prop:gen.wave}). We then apply the wave equation result in our setting to construct extensions to our spacetime solutions by solving appropriate characteristic initial value problems. In particular, since we will be able to prescribe data for the construction of the extensions, there are (infinitely many) non-unique extensions.

We begin by considering a general class of $(1+1)$-dimensional wave equation and introduce the following notion of solution, which makes sense when the derivative of $\Psi$ is only in $L^2$ in one of the null directions.
\begin{definition}\label{def:sol}
Let $k\in \mathbb N$. Consider a wave equation\footnote{For $k>1$, this should be thought of as a system of wave equations.} for $\Psi:[0,\ep)\times [0,\ep) \to \mathcal V$ (where $\mathcal V\subset \mathbb R^k$ is an open subset) of the form
\begin{equation}\label{eq:gen.wave}
\rd_u \rd_v \Psi_{A} = f_A(\Psi)+ N_A^{BC}(\Psi)\rd_u\Psi_B \rd_v\Psi_C + K_A^{BC}(\Psi) \rd_u \Psi_B \rd_u \Psi_C + L_A^B(\Psi)\rd_u\Psi_B + R_A^B(\Psi)\rd_v\Psi_B,
\end{equation}
where $\Psi_A$ denotes the components of $\Psi$, $f_A, N_A^{BC}, K_A^{BC}, L_A^B, R_A^B:\mathcal V\to \mathbb R$ are smooth, and we sum over all repeated capital Latin indices.

We say that a continuous function $\Psi:[0,\ep)\times [0,\ep)\to \mathcal V$ satisfying $\rd_v\Psi \in L^2_v(C^0_u)$ and $\rd_u\Psi \in C^0_u C^0_v$ is a \textbf{solution in the integrated sense} if
$$(\rd_v \Psi_A)(u,v) = (\rd_v \Psi_A)(0,v) + \int_0^u \mbox{(RHS of \eqref{eq:gen.wave})}(u',v)\, du' ,\quad \mbox{for all $u\in [0,\ep)$ and for a.e.~$v\in [0,\ep)$}$$
and
$$(\rd_u \Psi_A)(u,v) = (\rd_u \Psi_A)(u,0) + \int_0^v \mbox{(RHS of \eqref{eq:gen.wave})}(u,v')\, dv' ,\quad \mbox{for all $v\in [0,\ep)$ and for a.e.~$u\in [0,\ep)$}.$$
\end{definition}

\begin{remark}
Given a solution $\Psi$ in the sense of Definition~\ref{def:sol}, it is also a weak solution in the following sense: for any $\chi\in C_c^\infty$, 
$$\iint (\rd_u \chi)(u,v)(\rd_v\Psi)(u,v) \, du dv = -\iint \chi(u,v)\mbox{(RHS of \eqref{eq:gen.wave})}(u,v)\, du dv$$
and
$$\iint (\rd_v \chi)(u,v)(\rd_u\Psi)(u,v) \, du dv = -\iint \chi(u,v)\mbox{(RHS of \eqref{eq:gen.wave})}(u,v)\, du dv.$$
\end{remark}

The following is a general local existence result for $(1+1)$-D wave equation where $\rd_v\Psi$ is initially only in $L^2_v$. We construct local solutions in the sense of Definition~\ref{def:sol}. (Let us note that the following wave equation result holds for rougher data where $\rd_v\Psi$ is only in $L^1_v$. This will however be irrelevant to our problem; see Remark~\ref{rmk:L1sol}.)
\begin{proposition}\label{prop:gen.wave}
Consider the setup in Definition~\ref{def:sol}. Let $\mathcal K\subset\mathcal V$ be a compact subset. Given initial data to the wave equation \eqref{eq:gen.wave} on two transversely intersecting characteristic curves $\{(u,v):u=0,v\in [0,v_*]\}\cup \{(u,v):v=0,u\in [0,u_*]\}$ such that 
\begin{itemize}
\item $\Psi$ takes value in $\mathcal K$; and
\item the following estimates hold for the derivatives of $\Psi$ for some $C_{wave}>0$:
$$\int_0^{v_*} |\rd_v\Psi|^2(0,v')\, dv'\leq C_{wave},\quad \sup_{u\in [0,u_*]}|\rd_u\Psi|^2(u',0)\leq C_{wave}.$$ 
\end{itemize}
Then, there exist $\ep_{wave}>0$ depending on $\mathcal K$ and $C_{wave}$ (and the equation) such that there exists a unique solution to \eqref{eq:gen.wave} in the sense of Definition~\ref{def:sol} in the region
$$(u,v)\in \{(u,v): u\in [0,\ep_{wave}),\, v\in [0,\ep_{wave})\}$$
which achieves the prescribed initial data.
\end{proposition}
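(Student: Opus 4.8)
The argument is a standard Picard iteration / contraction-mapping scheme adapted to the low-regularity setting of Definition~\ref{def:sol}, where the only subtlety is that $\rd_v\Psi$ is merely in $L^2_v$ (uniformly in $u$). The plan is to first reformulate \eqref{eq:gen.wave} as a coupled system of integral equations for the triple $(\Psi,\rd_u\Psi,\rd_v\Psi)$, solve it by iteration on a small characteristic rectangle, and then verify that the fixed point is indeed a solution in the integrated sense achieving the prescribed data. Throughout, the fact that $\mathcal K\subset\mathcal V$ is compact and $f_A,N_A^{BC},K_A^{BC},L_A^B,R_A^B$ are smooth means that on a slightly larger compact neighbourhood $\mathcal K'\subset\mathcal V$ of $\mathcal K$, all of these coefficient functions and their first derivatives are bounded by some constant $C_{\mathcal K'}$; this will be used repeatedly.

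\textbf{Step 1: The integral formulation and the iteration space.} Write $p := \rd_u\Psi$, $q := \rd_v\Psi$. Given data $\Psi(0,\cdot)$, $q(0,\cdot)=\rd_v\Psi(0,\cdot)$ on $\{u=0\}$ and $\Psi(\cdot,0)$, $p(\cdot,0)=\rd_u\Psi(\cdot,0)$ on $\{v=0\}$ (which must of course be compatible at the corner and consistent with the constraint that $\rd_u\Psi(0,0)=p(0,0)$, $\rd_v\Psi(0,0)=q(0,0)$), the unknowns satisfy
\begin{align*}
q(u,v) &= q(0,v) + \int_0^u \mathcal{N}[\Psi,p,q](u',v)\, du', \\
p(u,v) &= p(u,0) + \int_0^v \mathcal{N}[\Psi,p,q](u,v')\, dv', \\
\Psi(u,v) &= \Psi(0,v) + \int_0^u p(u',v)\, du',
\end{align*}
where $\mathcal{N}[\Psi,p,q] := f_A(\Psi)+N_A^{BC}(\Psi)p_B q_C + K_A^{BC}(\Psi)p_B p_C + L_A^B(\Psi)p_B + R_A^B(\Psi)q_B$ denotes the right-hand side of \eqref{eq:gen.wave} (with the index $A$ suppressed). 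We set up the iteration in the Banach space
$$X_\ep := \left\{(\Psi,p,q) : \Psi\in C^0([0,\ep]^2;\mathbb{R}^k),\ p\in C^0([0,\ep]^2;\mathbb{R}^k),\ q\in C^0_u([0,\ep];L^2_v([0,\ep];\mathbb{R}^k))\right\}$$
with norm $\|(\Psi,p,q)\|_{X_\ep} := \|\Psi\|_{C^0} + \|p\|_{C^0} + \sup_{u\in[0,\ep]}\|q(u,\cdot)\|_{L^2_v}$, and we work in the closed subset $\mathcal{B}_\ep$ of triples whose $\Psi$-component takes values in $\mathcal K'$ and whose norm is bounded by a constant $C_1$ (to be fixed in terms of $C_{wave}$, $\mathcal K$, $\mathcal K'$ and the equation). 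The key point in choosing $C_1$ is that initially $\sup_u\|q(0,\cdot)\|_{L^2_v}\le C_{wave}^{1/2}$, $\|p(\cdot,0)\|_{C^0}\le C_{wave}^{1/2}$, and $\Psi$ stays in $\mathcal K$ on the initial curves, so for $\ep$ small the free terms leave room to spare.

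\textbf{Step 2: Self-map and contraction.} Define $\Phi:(\Psi,p,q)\mapsto(\tilde\Psi,\tilde p,\tilde q)$ by the right-hand sides of the three integral equations above. The main estimates are: (i) for $\tilde q$, one uses $\|\tilde q(u,\cdot)-q(0,\cdot)\|_{L^2_v}\le \int_0^u \|\mathcal{N}[\Psi,p,q](u',\cdot)\|_{L^2_v}\, du'$, and crucially the term $R_A^B(\Psi)q_B$ is linear in $q$, hence bounded in $L^2_v$ by $C_{\mathcal K'}\|q(u',\cdot)\|_{L^2_v}$, while $N_A^{BC}(\Psi)p_B q_C$ is $L^\infty\cdot L^2$ hence $\le C_{\mathcal K'}\|p\|_{C^0}\|q(u',\cdot)\|_{L^2_v}$, and all remaining terms are in $L^\infty_v\subset L^2_v([0,\ep])$ with an extra $\ep^{1/2}$; thus $\|\tilde q(u,\cdot)\|_{L^2_v}\le C_{wave}^{1/2} + \ep\cdot(\text{poly in }C_1, C_{\mathcal K'})$; (ii) for $\tilde p$, the integral is over $v$ of $\mathcal{N}$, and here $\|R_A^B(\Psi)q_B(u,\cdot)\|_{L^1_v([0,v])}\le v^{1/2}\|q(u,\cdot)\|_{L^2_v}$ by Cauchy--Schwarz, so $\tilde p$ gains a factor $\ep^{1/2}$ beyond the data term; (iii) for $\tilde\Psi$, trivially $\|\tilde\Psi - \Psi(0,\cdot)\|_{C^0}\le \ep\|p\|_{C^0}$, which keeps $\tilde\Psi$ within $\mathcal K'$ for $\ep$ small. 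Choosing $\ep = \ep_{wave}$ small enough, $\Phi$ maps $\mathcal{B}_\ep$ to itself. For contraction, one computes $\Phi(\Psi_1,p_1,q_1)-\Phi(\Psi_2,p_2,q_2)$ and uses the Lipschitz bounds on the coefficients (finite since $\mathcal K'$ is compact and the coefficients are $C^1$) together with the same Cauchy--Schwarz trick; every difference term carries either an explicit $\ep$ (from the $u$-integral in $\tilde q$, $\tilde\Psi$) or an $\ep^{1/2}$ (from the $v$-integral in $\tilde p$), so for $\ep_{wave}$ small $\Phi$ is a contraction on $\mathcal{B}_{\ep_{wave}}$. Banach fixed point then yields a unique $(\Psi,p,q)\in\mathcal{B}_{\ep_{wave}}$.

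\textbf{Step 3: The fixed point is a solution; consistency $p=\rd_u\Psi$, $q=\rd_v\Psi$.} From the third integral equation, $\Psi$ is absolutely continuous in $u$ with $\rd_u\Psi = p$ a.e., and since $p$ is continuous this holds everywhere and $\rd_u\Psi\in C^0_uC^0_v$. Differentiating the third equation in $v$ (justified since $q(0,\cdot)\in L^2_v$ and $p\in C^0$, using the first equation to identify $\rd_v\Psi(0,v)=q(0,v)$ and Fubini to move $\rd_v$ under the $u$-integral, which produces $\int_0^u \mathcal{N}\,du' = q(u,v)-q(0,v)$) gives $\rd_v\Psi = q$, so indeed $\rd_v\Psi\in L^2_v(C^0_u)$. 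The first and second integral equations are then precisely the two identities in Definition~\ref{def:sol}, so $\Psi$ is a solution in the integrated sense, and it achieves the prescribed data by construction (evaluate the integral equations at $u=0$ and $v=0$). For uniqueness among \emph{all} solutions in the sense of Definition~\ref{def:sol} (not just those in $\mathcal{B}_{\ep_{wave}}$): any such solution, restricted to a possibly smaller rectangle, lies in some ball of $X_\ep$, and a Gr\"onwall-type argument on the difference of two solutions — using the same linear-in-$q$ structure of the $R$-term and Cauchy--Schwarz for the $v$-integrals — forces them to coincide, after which one bootstraps up to the full rectangle by a continuity/connectedness argument in the same spirit as Proposition~\ref{prop:final.est}.

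\textbf{Main obstacle.} The only genuinely non-routine point is bookkeeping the $L^2_v$ regularity of $q=\rd_v\Psi$ consistently: one must check that $\mathcal{N}[\Psi,p,q]$, which contains the term $R_A^B(\Psi)\rd_v\Psi$ linear in $q$ and the term $N_A^{BC}(\Psi)\rd_u\Psi\rd_v\Psi$ bilinear with one $q$-factor, stays in $L^2_v$ uniformly in $u$ — this works precisely because there is at most one factor of $\rd_v\Psi$ in every term and no $(\rd_v\Psi)^2$ term (which is why Definition~\ref{def:sol} is stated with $K_A^{BC}\rd_u\Psi\rd_u\Psi$ but no analogous $\rd_v\rd_v$ term), and because the $v$-integral defining $\tilde p$ turns $L^2_v$ into $L^\infty_v$ via Cauchy--Schwarz at the cost of an $\ep^{1/2}$, which also supplies the smallness needed both for the self-map property and for the contraction. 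Everything else is the standard $(1+1)$-dimensional semilinear local existence argument.
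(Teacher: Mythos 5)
Your proposal is correct and follows essentially the same route as the paper: the paper explicitly treats this as a ``standard iteration argument'' and only writes out the a priori estimates (via a bootstrap on $\sup|\rd_u\Psi|$, H\"older and Young), resting on exactly the structural point you isolate — every term of the nonlinearity contains at most one factor of $\rd_v\Psi$, and the $v$-integration defining $\rd_u\Psi$ converts $L^2_v$ into $L^1_v$ with a gain of $\ep^{1/2}$. The only bookkeeping item to make explicit is that your iteration space carries the $C^0_u(L^2_v)$ norm while Definition~\ref{def:sol} asks for $\rd_v\Psi\in L^2_v(C^0_u)$; as you indicate in Step~3, this stronger bound is recovered a posteriori from the integral equation for $q$ via Minkowski's inequality, since $\sup_u|q(u,v)|\le|q(0,v)|+\int_0^{\ep}|\mathcal N(u',v)|\,du'$ and the right-hand side lies in $L^2_v$.
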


\begin{proof}
We directly work with the formulation in Definition~\ref{def:sol} and prove the existence and uniqueness of integral solutions. This proposition can be proven via a standard iteration argument. In order to illustrate the main idea and the use of the structure of the nonlinearity, we will only discuss below the proof of \emph{a priori estimates}.

By a bootstrap argument, we assume that 
\begin{equation}\label{wave.BA}
\sup_{\substack{u'\in [0,\ep_{wave})\\v'\in [0,\ep_{wave})}}|\rd_u\Psi|(u',v')\leq 4C_{wave}.
\end{equation}
Let $\mathcal K'\subset \mathcal V$ be a fixed compact set such that $\mathcal K\subset \mathring{\mathcal{K}'}$. We estimate $\Psi$ using the fundamental theorem of calculus as follows:
\begin{equation*}
\begin{split}
\sup_{\substack{u'\in [0,\ep_{wave})\\v'\in [0,\ep_{wave})}}|\Psi(u',v')-\Psi(0,v')| \leq &\sup_{v'\in [0,\ep_{wave})}\int_0^{\ep_{wave}} |\rd_u\Psi|(u',v') \,du' \\
\leq &\ep_{wave} \sup_{\substack{u'\in [0,\ep_{wave})\\v'\in [0,\ep_{wave})}}|\rd_u\Psi|(u',v') \leq 4C_{wave}\ep_{wave} .
\end{split}
\end{equation*}
Using the compactness of $\mathcal K$, we can choose $\ep_{wave}$ sufficiently small so that $\Psi(u,v)\in \mathcal K'$ for all $u\in [0,\ep)$
Now that we have estimated $\Psi$, since $\mathcal K'$ is compact, it follows that $f_A(\Psi), N_A^{BC}(\Psi), K_A^{BC}(\Psi), L_A^B(\Psi), R_A^B(\Psi)$ are all bounded. From now on, we will use these bounds and write $C$ for constants that are allowed to depend on $\sup_{x\in \mathcal K'}f_A(x)$, etc.

We now turn to the estimates for the derivatives of $\Psi$. First, we bound $\rd_v\Psi$ using (the integral form of) \eqref{eq:gen.wave} and H\"older's inequality and Young's inequality:
\begin{equation*}
\begin{split}
& \int_0^{\ep_{wave}} \sup_{u'\in [0,\ep_{wave})}|\rd_v\Psi|^2(u',v')\, dv'\\
\leq & C_{wave} + C \int_0^{\ep_{wave}} \int_0^{\ep_{wave}} |\rd_v\Psi| (1+|\rd_v\Psi|+|\rd_u\Psi|+|\rd_v\Psi||\rd_u\Psi|+|\rd_u\Psi|^2)(u',v') \, du'dv' \\
\leq & C_{wave} + C \left(1+\int_0^{\ep_{wave}} \sup_{u'\in [0,\ep_{wave})}|\rd_v\Psi|^2(u',v')\, dv'\right)\left(\ep_{wave} + \ep_{wave} \sup_{\substack{u'\in [0,\ep_{wave})\\v'\in [0,\ep_{wave})}}|\rd_u\Psi|(u',v') \right)\\
&\: + C_{wave}\ep_{wave}^2\sup_{\substack{u'\in [0,\ep_{wave})\\v'\in [0,\ep_{wave})}}|\rd_u\Psi|(u',v').
\end{split}
\end{equation*}
For $\rd_u\Psi$, we again use (the integral form of) \eqref{eq:gen.wave} and H\"older's inequality and Young's inequality to get
\begin{equation*}
\begin{split}
 &\: \sup_{\substack{u'\in [0,\ep_{wave})\\v'\in [0,\ep_{wave})}}|\rd_u\Psi|(u',v')\\
\leq &\: C_{wave} + C \sup_{u'\in [0,\ep_{wave})} \int_0^\ep  (1+|\rd_v\Psi|+|\rd_u\Psi|+|\rd_v\Psi||\rd_u\Psi|+|\rd_u\Psi|^2)(u',v'))(u',v') \, dv' \\
\leq &\: C_{wave} + C \left(1+\int_0^\ep \sup_{u'\in [0,\ep_{wave}]}|\rd_v\Psi|^2(u',v')\, dv'\right)\left(\ep_{wave}^{\f 12} + \ep_{wave} \sup_{\substack{u'\in [0,\ep_{wave})\\v'\in [0,\ep_{wave})}}|\rd_u\Psi|^{2}(u',v')\right)\\
\leq &\: C_{wave} + C \left(1+\int_0^\ep \sup_{u'\in [0,\ep_{wave}]}|\rd_v\Psi|^2(u',v')\, dv'\right)\left(\ep_{wave}^{\f 12} + \ep_{wave} C_{wave} \sup_{\substack{u'\in [0,\ep_{wave})\\v'\in [0,\ep_{wave})}}|\rd_u\Psi|(u',v')\right).
\end{split}
\end{equation*}
Summing the above two estimates and choosing $\ep_{wave}$ sufficient small (depending on $C_{wave}$ and $\mathcal K'$), it follows that 
$$\int_0^\ep \sup_{u'\in [0,\ep_{wave})}|\rd_v\Psi|^2(u',v')\, dv'+\sup_{\substack{u'\in [0,\ep_{wave})\\v'\in [0,\ep_{wave})}}|\rd_u\Psi|(u',v')\leq 2C_{wave}.$$
This in particular improves the bootstrap assumption \eqref{wave.BA} so that we conclude the argument.\qedhere
\end{proof}

We now use Proposition~\ref{prop:gen.wave} to solve \eqref{EMCSFS}. In particular, this allows us to extend the solution in $D_{u_0,v_0}$ (in infinitely many ways!) beyond the Cauchy horizon \emph{as a spherically symmetric strong solution} to \eqref{EMCSFS}. Before we proceed, let us define a notion of spherically symmetric strong solutions to \eqref{EMCSFS} (using Definition~\ref{def:sol}) appropriate for our setting. For simplicity, in our notion of spherically symmetric strong solutions, we will already fix a gauge so that $A_u=0$.
\begin{definition}\label{def:strong.solutions}
Let $(\phi,\Om,r,A_v,Q)$ be continuous functions on $\{(u,v): u\in [u_0,u_0+\ep),\, v\in [v_0,v_0+\ep)\}$ for some $\ep>0$ with $\phi$ complex-valued, $(\Om,r,A_v,Q)$ real-valued and $\Om,\,r>0$. We say that $(\phi,\Om,r,A_v,Q)$ is a spherically symmetric strong solution to \eqref{EMCSFS} if the following hold\footnote{We remark that \eqref{eq:phi2} is not explicitly featured below. Note however that \eqref{eq:phi2} follows as an immediate consequence of \eqref{eq:lastdef.Av}.}:
\begin{itemize}
\item $(\phi,\Om,r,A_v,Q)$ are in the following regularity classes: 
$$\rd_v \phi,\,\rd_v\log\Om \in L^2_v(C^0_u),\quad \rd_u \phi,\,\rd_u\log\Om,\,\rd_u r,\,\rd_v r,\,\rd_u A_v \in C^0_uC^0_v.$$
\item \eqref{eq:transpeqr}, \eqref{eq:waveqOmega} and \eqref{eq:phi1} are satisfied as wave equations in the integrated sense as in Definition~\ref{def:sol} after replacing $D_v \mapsto \rd_v + i \mathfrak{e} A_v$, $D_u \mapsto \rd_u$.
\item \eqref{eq:Q1}, \eqref{eq:Q2}, \eqref{eq:raychu} and \eqref{eq:raychv} are all satisfied in the integrated sense as follows, again with the understanding that $D_v \mapsto \rd_v + i \mathfrak{e} A_v$, $D_u \mapsto \rd_u$:
\begin{align}
\label{eq:intconstraint1}
Q(u,v)=&\: Q(0,v)+\int_{u_0}^u \left[2\pi i r^2 \mfe(\phi \overline{D_u\phi}-\overline{\phi}D_u\phi)\right] (u',v)\,du',\\
\label{eq:intconstraint2}
Q(u,v)=&\: Q(u,0)-\int_{v_0}^v \left[2\pi i r^2\mathfrak{e} (\phi \overline{D_v\phi}-\overline{\phi}D_v\phi)\right] (u,v')\,dv',\\
\label{eq:intconstraint3}
r\partial_u r(u,v)=&\: r\partial_u r(0,v)+\int_{u_0}^u \left[2 r\partial_u r\partial_u\log \Omega+(\partial_ur)^2-4\pi r^2|D_u\phi|^2 \right](u',v)\,du',\\
\label{eq:intconstraint4}
r\partial_v r(u,v)=&\: r\partial_v r(u,0)+\int_{v_0}^v \left[2 r\partial_v r\partial_v\log \Omega+ (\partial_v r)^2-4\pi r^2|D_v\phi|^2 \right](u,v')\,dv',
\end{align}
for all $(u,v) \in \{(u,v): u\in [u_0,u_0+\ep),\, v\in [v_0,v_0+\ep)\}$.
\item \eqref{QintermsofA} is satisfied classically everywhere with $A_u = 0$, i.e.
\begin{equation}\label{eq:lastdef.Av}
\rd_u A_v = \f{Q\Om^2}{2r^2}.
\end{equation}
\end{itemize}

\end{definition}

We emphasize again that a spherically symmetric strong solution to \eqref{EMCSFS} in the sense of Definition~\ref{def:strong.solutions} is a fortiori a weak solution to \eqref{EMCSFS} in the sense of Remark~\ref{rmk:low.regularity}.

We now construct extensions to the solutions given by Proposition~\ref{prop:final.est} beyond the Cauchy horizon as spherically symmetric strong solutions to \eqref{EMCSFS}: 
\begin{proposition}\label{prop:extension}
For every $u_{ext}\in (-\infty,u_0)$, there exists $\ep_{ext}>0$ such that there are infinitely many inequivalent extensions $(\phi,\widetilde{\Om},r,A_V,Q)$ to the region $D_{u_0,v_0}\cup\CH\cup\{(u,V):u\in [u_{ext},u_{ext}+\ep_{ext}],\,V\in [0,\ep_{ext})\}$, each of which is a spherically symmetric strong solution to \eqref{EMCSFS} (cf.~Definition~\ref{def:strong.solutions}).
\end{proposition}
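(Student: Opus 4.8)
The plan is to obtain each extension as the solution of a characteristic initial value problem for \eqref{EMCSFS} in spherical symmetry, posed on a short segment of the Cauchy horizon $\CH$ --- where the data are forced to agree with the solution of Proposition~\ref{prop:final.est} --- together with a transverse outgoing null cone on which the data are \emph{freely prescribed}; the local solvability is supplied by Proposition~\ref{prop:gen.wave} and the non-uniqueness is built into the free data. Fix $u_{ext}\in(-\infty,u_0)$ and work in the gauge $A_u=0$. On the segment $\CH\cap\{u\in[u_{ext},u_{ext}+\ep_{ext}]\}$ the data for $(\phi,r,\widetilde\Om,Q,A_V)$ is the trace of the solution of Proposition~\ref{prop:final.est}: by Proposition~\ref{prop:C0.ext} these functions are continuous there with $r,\widetilde\Om$ bounded below, and --- crucially because the $u$-interval is \emph{compact} and bounded away from $i^+$ --- their $\rd_u$-derivatives are moreover bounded and extend continuously to $\CH$ (for $\rd_ur$ and $\rd_uA_V$ this is immediate from \eqref{dur.est} and \eqref{eq:lastdef.Av}; for $\rd_u\phi$, $\rd_u\log\widetilde\Om$, $\rd_uQ$ it follows by integrating the relevant equations in $V$ and using the smoothness of the initial data on $\{v=v_0\}$, the $L^2$-bounds of Lemma~\ref{lm:reg}, and a Gr\"onwall argument on the compact slab). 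On the cone $\{u=u_{ext},\,V\in[0,\ep_{ext})\}$ I prescribe $\phi$ and $\log\widetilde\Om$ freely and smoothly, subject only to matching the $\CH$-trace at the corner $(u_{ext},0)$, and then determine $r,\rd_Vr,\rd_ur,Q,A_V$ along this cone by integrating the constraints \eqref{eq:raychv}, \eqref{eq:transpeqr}, \eqref{eq:Q2}, \eqref{eq:lastdef.Av} in $V$ from their corner values; for $\ep_{ext}$ small everything stays in a fixed compact set with $r,\widetilde\Om$ bounded below, and the $\rd_V$-derivatives are bounded.

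\textbf{Reduction to the form \eqref{eq:gen.wave}.} I take $\Psi=(r,\log\widetilde\Om,\phi,Q,A_V)$ (with $\phi$ handled through its real and imaginary parts) and close the system: \eqref{eq:transpeqr} and \eqref{eq:waveqOmega} give the wave equations for $r$ and $\log\widetilde\Om$; adding \eqref{eq:phi1} and \eqref{eq:phi2} and using \eqref{eq:lastdef.Av} to write $D_uD_V\phi=\rd_u\rd_V\phi+\f{i\mfe Q\widetilde\Om^2}{2r^2}\phi+i\mfe A_V\rd_u\phi$ gives the wave equation for $\phi$; $\rd_V$-differentiating \eqref{eq:Q1} and substituting the $\phi$-equation for the resulting $\rd_u\rd_V\phi$ gives the wave equation for $Q$; and $\rd_V$-differentiating \eqref{eq:lastdef.Av} and using \eqref{eq:Q2} for $\rd_VQ$ gives the wave equation for $A_V$. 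With the identification in which the ambient $u$-direction plays the role of the \emph{regular} direction (the $u$ of Definition~\ref{def:sol} and Proposition~\ref{prop:gen.wave}, matching the fact that on the compact $\CH$-segment the $\rd_u$-derivatives are bounded) and the ambient $V$-direction the \emph{rough} one, the key structural point is that no nonlinearity quadratic in $\rd_u\Psi$ occurs in these wave equations: indeed the only quadratic derivative terms present are the mixed products $D_u\phi\,\overline{D_V\phi}$ and $\rd_ur\,\rd_Vr$ (and products built from them), while the genuinely $\rd_u$-quadratic quantities $|D_u\phi|^2$ and $(\rd_ur)^2$ appear only in the Raychaudhuri relation \eqref{eq:raychu}, which is imposed as a constraint rather than evolved. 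Proposition~\ref{prop:gen.wave} then produces, for $\ep_{ext}$ small, a unique solution $\Psi$ in the integrated sense of Definition~\ref{def:sol} on $\{u\in[u_{ext},u_{ext}+\ep_{ext}],\,V\in[0,\ep_{ext})\}$ attaining the data, lying in the regularity class of Definition~\ref{def:sol}; a further bootstrap using that the right-hand side of \eqref{eq:transpeqr} contains no $L^2$-only term upgrades $\rd_Vr$ to be continuous, so the full regularity list of Definition~\ref{def:strong.solutions} is met.

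\textbf{Recovering a strong solution and concluding.} It remains to verify the integrated constraints of Definition~\ref{def:strong.solutions} (the wave equations \eqref{eq:transpeqr}, \eqref{eq:waveqOmega}, \eqref{eq:phi1} holding by construction). Since the $A_V$- and $Q$-equations of the system were obtained precisely by $\rd_V$-differentiating \eqref{eq:lastdef.Av} and \eqref{eq:Q1}, the ``defects'' $\rd_uA_V-\f{Q\widetilde\Om^2}{2r^2}$ and $\rd_uQ-2\pi ir^2\mfe(\phi\overline{D_u\phi}-\overline{\phi}D_u\phi)$ satisfy homogeneous linear transport equations in $V$; they vanish on $\CH$ by continuity from Proposition~\ref{prop:final.est}, hence vanish identically, giving \eqref{eq:lastdef.Av} and \eqref{eq:intconstraint1}. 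The identity \eqref{eq:intconstraint2} holds on $\{u=u_{ext}\}$ by the data prescription and propagates in $u$ (compatibility of the two transport equations for $Q$ following from the $\phi$- and $r$-equations); \eqref{eq:intconstraint3} holds on $\CH$ by continuity and its defect satisfies a homogeneous transport equation in $V$ by the $r$- and $\log\widetilde\Om$-equations, hence vanishes; \eqref{eq:intconstraint4} holds on $\{u=u_{ext}\}$ by construction and propagates in $u$ similarly. The solution so obtained agrees with that of Proposition~\ref{prop:final.est} on the common null segment $\CH\cap\{u\in[u_{ext},u_{ext}+\ep_{ext}]\}$, so the two glue to a spherically symmetric strong solution on $D_{u_0,v_0}\cup\CH\cup\{(u,V):u\in[u_{ext},u_{ext}+\ep_{ext}],\,V\in[0,\ep_{ext})\}$. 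Finally, letting the free outgoing data vary --- e.g.\ prescribing $\phi(u_{ext},V)=\phi(u_{ext},0)+cV+O(V^2)$ with $c$ ranging over a nonempty open set --- produces a continuum of such extensions, pairwise inequivalent since the gauge-invariant modulus $|\phi|$ of the scalar field has a genuinely different profile in the extended region.

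\textbf{Main obstacle.} The substantive points are (i) the bookkeeping of the reduction --- confirming that \eqref{EMCSFS} in spherical symmetry really closes into a system of the form \eqref{eq:gen.wave} with \emph{no} term quadratic in the rough derivatives, which works only because the sole such candidate, $|D_u\phi|^2$, sits in the Raychaudhuri relation treated as a constraint --- and (ii) the low-regularity constraint propagation, where, $\Psi$ being merely $W^{1,2}$ in the rough direction, the defect-vanishing arguments must be carried out throughout at the level of the integrated identities \eqref{eq:intconstraint1}--\eqref{eq:intconstraint4} and \eqref{eq:lastdef.Av}. A secondary point requiring care is verifying that the $\CH$-trace of the solution of Proposition~\ref{prop:final.est} has the $C^0$ (not merely $L^2$) control of $\rd_u$-derivatives needed to feed Proposition~\ref{prop:gen.wave}; this relies on the compactness of the chosen $u$-interval.
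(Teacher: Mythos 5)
Your proposal follows the same overall strategy as the paper: reduce the extension problem to a characteristic initial value problem for a system of the form \eqref{eq:gen.wave}, solve it with Proposition~\ref{prop:gen.wave}, recover the constraints \eqref{eq:intconstraint1}--\eqref{eq:intconstraint4} by a propagation argument carried out at the level of the integrated identities, and read off non-uniqueness from the freely prescribed outgoing data. There are two genuine differences of implementation. First, you pose data directly on the Cauchy horizon $\{V=0\}$, which obliges you to prove that the traces of the $\rd_u$-derivatives of all unknowns on $\CH$ exist, are continuous and bounded (your Gr\"onwall-in-$V$ argument on the compact $u$-slab), and then to glue the new solution to the old one across $\CH$ by matching the integrated identities from both sides. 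The paper avoids both steps by solving a sequence of problems with data on $\{V=V_n\}$, $V_n\uparrow 0$, posed strictly inside the already-constructed solution, and exploiting that the existence time $\ep_0$ of Proposition~\ref{prop:gen.wave} is uniform in $n$; the new solution then automatically overlaps the old one on $\{V_n\le V<0\}$ and coincides with it there by uniqueness. (Even the paper's route needs a uniform-in-$n$ pointwise bound on $\rd_u\Psi(\cdot,V_n)$, i.e., essentially the same Gr\"onwall input you describe, so this is extra bookkeeping on your side rather than a defect of either argument.) Second, you evolve $Q$ as an additional wave unknown obtained by $\rd_V$-differentiating \eqref{eq:Q1}, whereas the paper keeps only $A_V$ in the system and defines $Q:=2r^2\widetilde{\Om}^{-2}\rd_uA_V$ a posteriori; both are workable.

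One point in your structural discussion is garbled and should be corrected, although it does not invalidate the construction. With your (correct) identification --- ambient $u$ regular, ambient $V$ rough --- the class \eqref{eq:gen.wave} \emph{does} admit terms quadratic in $\rd_u\Psi$: that is precisely the $K_A^{BC}$ slot, which the paper's system uses for $(\rd_uA_V)^2$ in \eqref{final.wave.r}--\eqref{final.wave.Om}. What must be absent is any term quadratic in the rough derivatives $\rd_V\Psi$, so the dangerous candidates are $|D_V\phi|^2$ and $(\rd_Vr)^2$ from \eqref{eq:raychv} --- not $|D_u\phi|^2$ and $(\rd_ur)^2$ from \eqref{eq:raychu}, as your ``Main obstacle'' paragraph asserts. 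The conclusion you need (the evolved equations contain only mixed products $\rd_u\Psi\,\rd_V\Psi$ and terms at most linear in $\rd_V\Psi$, the $\rd_V$-quadratics being confined to the constraint \eqref{eq:raychv}) is true, but the reason you give points at the wrong Raychaudhuri equation and at a restriction the framework does not actually impose.
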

\begin{proof}
Let us focus the discussion on constructing \emph{one} such extension. It will be clear at the end that the argument indeed gives infinitely many inequivalent extensions.

\textbf{Setting up the initial data.}  Extend the constant $u$ curve $\{u=u_{ext}\}$ up to the Cauchy horizon. We will consider a sequence of characteristic initial problems with initial data given on $\{u=u_{ext}\}$ and $\{V=V_n\}$ where $V_n$ approaches the Cauchy horizon, i.e.~$V_n\to 0$. For a fixed $n\in \mathbb N$, the data on $\{V=V_n\}$ are simply induced by the solution that we have constructed in Proposition~\ref{prop:final.est}. On $\{u=u_{ext}\}$, the data when $V\in [V_n,0)$ are induced by the solution, but we prescribe data for $V\geq 0$ (i.e.~beyond the Cauchy horizon) by the following procedure:
\begin{itemize}
\item (\textbf{Data for $\widetilde{\Om}$.}) As we showed in \eqref{eq:reg.Om.rdVr}, \eqref{eq:reg.Om} and Proposition~\ref{prop:C0.ext}, for a fixed $u_{ext}$, $\widetilde{\Omg}(u_{ext},V)$ is continuous up to $\{V=0\}$, is bounded away from $0$, and $\rd_V\widetilde{\Omg}(u_{ext},V)\in L^2_V$. We can therefore extend $\Omg$ to $\{(u_{ext},V):V\geq 0\}$ so that it is continuous and bounded away from $0$ and that $\rd_V\widetilde{\Omg}(u_{ext},V) \in L^2_V$. 
\item (\textbf{Data for $\phi$.}) As we showed in \eqref{eq:reg.phi} and Proposition~\ref{prop:C0.ext}, $\phi(u_{ext},V)$ is continuous up to $\{V=0\}$ and $D_V\phi(u_{ext},V) \in L^2_V$. Since by Lemma~\ref{lm:AV}, $|A_V|(u_{ext},V)\ls (u_0-u_{ext})$ for $V\leq 0$, this also implies that $\rd_V\phi(u_{ext},V) \in L^2_V$. We can therefore extend $\phi$ to $\{(u_{ext},V):V\geq 0\}$ so that it is continuous and $\rd_V\phi(u_{ext},V) \in L^2_V$. 
\item (\textbf{Data for $A_V$.}) Next, by Lemma~\ref{lm:AV}, $A_V(u_{ext},V)$ is continuous up to $\{V=0\}$ and $\rd_V\phi(u_{ext},V) \in L^2_V$. Thus, just like $\widetilde{\Omg}$ and $\phi$, we can extend $A_V$ to $\{(u_{ext},V):V\geq 0\}$ so that it is continuous and $\rd_VA_V(u_{ext},V) \in L^2_V$. 
\item (\textbf{Data for $r$.}) Finally, we prescribe $r$. Note that this is the only piece of the initial data which is not free, but instead is required to satisfy constraints. First we note that by \eqref{r.BS.improve}, \eqref{eq:reg.r} and Proposition~\ref{prop:C0.ext}, for $V\leq 0$, $r(u_{ext},V)$ is continuous up to $\{V=0\}$, bounded away from $0$ and $(\rd_V r)(u_{ext},V) \in L^2_V$. Moreover, using \eqref{eq:raychv} (and the also estimates \eqref{eq:reg.Om.rdVr} and \eqref{eq:reg.phi}), it can be deduced that $(\rd_V r)(u_{ext},V)$ can be extended continuously up to $\{V=0\}$. Now we extend $r$ and $\rd_V r$ beyond the Cauchy horizon $\{V=0\}$ by solving the equation \eqref{eq:raychv}. Since $\rd_V\phi \in L^2_V$ and $\log \Om$ is bounded (by the choices above), provided that we only solve slightly beyond the Cauchy horizon (i.e.~for $V$ sufficiently small), both $r$ and $|\rd_V r|$ are continuous, bounded above, and $r$ is also bounded away from $0$.
\end{itemize}

\textbf{Formulating the problem as a system of wave equations.} Now apply Proposition~\ref{prop:gen.wave} to solve the following system of wave equations for $\Psi = (r,\log \widetilde{\Om}, Re(\phi), Im(\phi), A_V)$:
\begin{align}
\label{final.wave.r} r\partial_u\partial_V r=&\:-\frac{1}{4}\widetilde{\Omega}^2-\partial_ur \partial_Vr+\mathfrak{m}^2\pi r^2 \widetilde{\Omega}^2 |\phi|^2+\f{r^2}{\widetilde{\Omega}^2}(\rd_uA_V)^2,\\
\label{final.wave.Om} r^2\partial_u\partial_V\log \widetilde{\Omega}=&\:-2\pi r^2(\rd_u\phi \overline{(\rd_V+ i\mfe A_V)\phi}+\overline{\rd_u\phi}(\rd_V+ i\mfe A_V)\phi)\\
\nonumber &\:-2\f{r^2}{\widetilde{\Omega}^2}(\rd_uA_V)^2
+\frac{1}{4}\widetilde{\Omega}^2+\partial_ur\partial_Vr,\\
\label{final.wave.phi} \rd_u ((\rd_V+ i\mfe A_V)\phi)+(\rd_V+ i\mfe A_V)\rd_u\phi=&\:-\frac{1}{2}\mathfrak{m}^2\widetilde{\Omega}^2\phi-2r^{-1}(\partial_ur (\rd_V+ i\mfe A_V)\phi+\partial_V r \rd_u \phi),\\
\label{final.wave.AV} \partial_V\left(\f{r^2}{\widetilde{\Omega}^2}\rd_uA_V\right) = &\:-\pi i r^2\mathfrak{e} (\phi \overline{D_V\phi}-\overline{\phi}D_V\phi).
\end{align}
It is easy to check that this system of equations indeed has the structure as in \eqref{eq:gen.wave}.

\textbf{Solving the system of wave equations.} By Proposition~\ref{prop:gen.wave}, there exists $\ep_0>0$ (independent of $n$) such that for every $V_n$, a unique solution to the above system of equation exists for $(u,V)\in \{(u,V): u\in [u_{ext},u_{ext}+\ep_0),\,V\in [V_n,V_n+\ep_0)\}$. In particular, since $V_n\to 0$, we can choose $n\in \mathbb N$ sufficiently large so that $V_n+\ep_0>0$. Now fix such an $n$ and choose $\ep_{ext}>0$ sufficiently small such that $\ep_{ext}<V_n+\ep_0$. We have therefore constructed a solution $(r,\log \Om, Re(\phi), Im(\phi), A_V)$ to \eqref{final.wave.r}--\eqref{final.wave.AV} in $D_{u_0,v_0}\cup\CH\cup\{(u,V):u\in [u_{ext},u_{ext}+\ep_{ext}],\,V\in [0,\ep_{ext})\}$.

\textbf{Definition of $Q$ and equation \eqref{eq:intconstraint2}.} Define $Q = 2r^2\widetilde{\Om}^{-2}\rd_uA_V$. By definition $Q$ is continuous and \eqref{eq:lastdef.Av} is satisfied classically. Moreover, since \eqref{final.wave.AV} is satisfied in an integrated sense, it also follows that \eqref{eq:intconstraint2} is satisfied.

Plugging in the definition of $Q$ into \eqref{final.wave.r}--\eqref{final.wave.phi}, we also obtain that $r$, $\widetilde{\Om}$ and $\phi$ respectively satisfy \eqref{eq:transpeqr}, \eqref{eq:waveqOmega} and \eqref{eq:phi1} as wave equations in the integrated sense as in Definition~\ref{def:sol}.

\textbf{Propagation of constraints and equations \eqref{eq:intconstraint1}, \eqref{eq:intconstraint3} and \eqref{eq:intconstraint4}.} Next, we check that \eqref{eq:intconstraint1}, \eqref{eq:intconstraint3} and \eqref{eq:intconstraint4} are satisfied. This involves a propagation of constraints argument, which is standard except that we need to be slightly careful about regularity issues.

First, we note that since the equations are satisfied classically at $(u,V_n)$ for all $u\in [u_{ext},u_{ext}+\ep_0)$, \eqref{eq:intconstraint1} and \eqref{eq:intconstraint3} are satisfied on $\{V=V_n\}$. Moreover, by the construction of the data for $r$ above, \eqref{eq:intconstraint4} is also satisfied on $\{u=u_{ext}\}$.

Therefore, it follows that \eqref{eq:intconstraint1}, \eqref{eq:intconstraint3} and \eqref{eq:intconstraint4} are equivalent respectively to the following equations:
\begin{align}
\label{intconstraint1v2}
(Q(u,V)&-Q(u,V_n))-(Q(u_{ext},V)-Q(u_{ext},V_n))= \int_{u_{ext}}^u \left[2\pi i r^2 \mfe (\overline{D_u\phi}-\overline{\phi}D_u\phi)\right] (u',V)\,du'\\ \nonumber
&- \int_{u_{ext}}^u \left[2\pi i r^2 \mfe (\overline{D_u\phi}-\overline{\phi}D_u\phi)\right] (u',V_n)\,du',\\
\label{intconstraint2v2}
(r\partial_u r(u,V)&-r\partial_u r(u,V_n))-(r\partial_u r(u_{ext},V)-r\partial_u r(u_{ext},V_n))\\ \nonumber
=&\: \int_{u_{ext}}^u \left( \left[2 r\partial_u r\partial_u\log \widetilde{\Omega}+(\partial_ur)^2-4\pi r^2|D_u\phi|^2 \right](u',V) - [\cdots](u',V_n)\right)\,du',\\
\label{intconstraint3v2}
(r\partial_V r(u,V)&-r\partial_V r(u_{ext},V))-(r\partial_V r(u,V_n)-r\partial_V r(u_{ext},V_n))\\ \nonumber
=&\: \int_{V_n}^V \left(\left[2 r\partial_V r\partial_V\log \widetilde{\Omega}+(\partial_Vr)^2-4\pi r^2|D_V\phi|^2 \right](u,V')- [\cdots](u_{ext},V')\right)\,dV',
\end{align}
where $[\cdots]$ means that we take exactly the same expression as inside the previous pair of square brackets.

To proceed, observe now that we have the following integrated version of the Leibniz rule: let $f,\,g:[0,T]\to \mathbb R$, $f\in C^0$, $g\in C^1$. Assume that there exists an $F:[0,T] \to \mathbb R$ in $L^1$ such that $f(t)-f(0) = \int_0^t F(s)\,ds$ for all $t\in [0,T]$. Then by Fubini's theorem and the fundamental theorem of calculus,
\begin{equation}
\begin{split}
&\: \int_0^t F(s) g(s) \, ds \\
= &\: g(0) \int_0^t F(s) \,ds + \int_0^t \int_0^s F(s) g'(\tau) \,d\tau\,ds=  f(t) g(0) - f(0) g(0) + \int_0^t \int_\tau^t F(s) g'(\tau) \,ds\,d\tau\\
= &\: f(t) g(0) - f(0) g(0) + \int_0^t [f(t) g'(\tau) - f(\tau) g'(\tau)] \,d\tau=  f(t) g(t) - f(0) g(0) - \int_0^t f(s) g'(s) \,ds.
\end{split}
\end{equation}
In other words, supposed $\Psi_i$ satisfies $\rd_u\rd_v\Psi_i = F_i$ (for some $F_i \in L^1_v C^0_u$), the following integrated versions of the Leibniz rule hold:
\begin{align}
\label{Leibniz1}\partial_u\Psi_i(u,V) \Psi_j (u,V)=&\:\partial_u\Psi_i(u,V_n)\Psi_j (u,V_n)+ \int_{V_n}^V \left[\Psi_j  F_i+  \partial_v\Psi_j  \partial_u\Psi_i\right] (u,V')\,dV',\\
\label{Leibniz2}\partial_v\Psi_i (u,V)\Psi_j (u,V)=&\:\partial_u\Psi_i (u_{ext},V)\Psi_j (u_{ext},V)+ \int_{u_{ext}}^u \left[\Psi_j F_i+  \partial_u\Psi_j  \partial_v\Psi_i\right] (u',V)\,du'.
\end{align}

Let us now show that \eqref{eq:intconstraint1}, or equivalently \eqref{intconstraint1v2}, holds. Since we have already established that \eqref{eq:intconstraint2} holds, it follows that \eqref{intconstraint1v2} is equivalent to
\begin{equation}
\label{intconstraint1v2.5}
\begin{split}
&- \int_{V_n}^V \left([2\pi i r^2 \mfe (\phi\overline{D_v\phi} - \overline{\phi}D_v\phi)](u,V') - [\cdots](u_{ext},V') \right)\, dV' \\
= &\int_{u_{ext}}^u \left( \left[2\pi i r^2 \mfe (\phi\overline{D_u\phi}-\overline{\phi}D_u\phi)\right] (u',V)-[\cdots] (u', V_n)\right)\,du'. \end{split}
\end{equation}
By \eqref{Leibniz1} and \eqref{Leibniz2} above, it follows that we need to check
\begin{equation}\label{intconstraint1v3}
\int_{u_{ext}}^u \int_{V_n}^V \left(\rd_u\left(2\pi i r^2 \mfe (\phi\overline{D_v\phi} - \overline{\phi}D_v\phi)\right) + \partial_V\left(2\pi i r^2 \mfe (\phi\overline{D_u\phi}-\overline{\phi}D_u\phi)\right)\right) (u',V')\,du'dV' = 0,
\end{equation}
where expressions such as $\rd_u D_v\phi$ and $\rd_V D_u\phi$ are to be understood \emph{after plugging in the appropriate inhomogeneous terms arising from \eqref{final.wave.phi}.} On the other hand, after plugging in the appropriate expressions from \eqref{final.wave.phi}, it is easy to check that the integrand in \eqref{intconstraint1v3} vanishes almost everywhere. Therefore, \eqref{intconstraint1v3} indeed holds, which then implies that \eqref{eq:intconstraint1} holds.

Next, we consider \eqref{eq:intconstraint3}, or equivalently \eqref{intconstraint2v2}. Since we have already established \eqref{final.wave.r} in an integrated sense, using the definition of $Q$ above, it follows from \eqref{Leibniz1} that \eqref{intconstraint2v2} is equivalent to
\begin{equation}\label{intconstraint2v2.5}
\begin{split}
&\: \int_{V_n}^V \left(\left[-\frac{1}{4}\widetilde{\Omega}^2+\mathfrak{m}^2\pi r^2 \widetilde{\Omega}^2 |\phi|^2+\frac{1}{4}\f{\widetilde{\Om}^2}{r^2}Q^2\right](u,V') - [\cdots](u_{ext},V')\right)\,dV' \\
=&\: \int_{u_{ext}}^u \left( \left[2 r\partial_u r\partial_u\log \widetilde{\Omega}+(\partial_ur)^2-4\pi r^2|D_u\phi|^2 \right](u',V) - [\cdots](u',V_n)\right)\,du'
\end{split}
\end{equation}
Using again the integrated Leibniz's rule \eqref{Leibniz1} and \eqref{Leibniz2}, it then follows that \eqref{intconstraint2v2.5} is equivalent to
\begin{equation}\label{intconstraint2v3}
\begin{split}
&\: \int_{u_{ext}}^u \int_{V_n}^V \rd_u \left(\left[-\frac{1}{4}\widetilde{\Omega}^2+\mathfrak{m}^2\pi r^2 \widetilde{\Omega}^2 |\phi|^2+\frac{1}{4}\f{\widetilde{\Omega}^2}{r^2}Q^2\right](u',V') \right)\,dV'\,du' \\
-&\: \int_{V_n}^{V}\int_{u_{ext}}^u \rd_V\left( \left[2 r\partial_u r\partial_u\log \widetilde{\Omega}+(\partial_ur)^2-4\pi r^2|D_u\phi|^2 \right](u',V')\right)\,du'\,dV' = 0, 
\end{split}
\end{equation}
where (in a similar manner as \eqref{intconstraint1v3}) expressions $\rd_V\rd_u r$, $\rd_V\rd_u\log\widetilde{\Omega}$ and $\rd_VD_u\phi$ are to be understood \emph{after plugging in the appropriate inhomogeneous terms arising from \eqref{final.wave.r}, \eqref{final.wave.Om} and \eqref{final.wave.phi} respectively}, and $\rd_uQ$ is to be understood as $\rd_uQ = 2\pi i r^2 \mfe(\phi \overline{D_u\phi}-\overline{\phi}D_u\phi)$ (cf.~\eqref{eq:intconstraint1}). Direct algebraic manipulations (using in particular $Q=2r^2 \widetilde{\Omega}^{-2} \partial_uA_V$) then show that the integrand in \eqref{intconstraint2v3} vanishes almost everywhere. This verifies \eqref{eq:intconstraint3}.

Finally, we need to check \eqref{eq:intconstraint4}, or equivalently \eqref{intconstraint3v2}. This can be argued in a very similar manner as \eqref{eq:intconstraint3}; we omit the details.

\textbf{Checking the regularity of the functions.} We have now checked that all the equations are appropriately satisfied. To conclude that we have a solution in the sense of \eqref{def:strong.solutions}, it remains to check that $\rd_V r$ is continuous. (A priori, using Proposition~\ref{prop:gen.wave}, we only know that $\rd_V r \in L^2_V(C^0_u)$.) That $\rd_{V} r$ is continuous is an immediate consequence of \eqref{intconstraint3v2}, the fact that the data for $\rd_V r$ are continuous on $\{u=u_{ext}\}$, and the regularity properties of all the other functions.

We have thus shown how to construct one extension of the solution (as a spherically symmetric strong solution in the sense of Definition~\ref{def:strong.solutions}). Since the procedure involves prescribing arbitrary data, one concludes that in fact there are infinitely many inequivalent extensions. \qedhere

\end{proof}

\begin{remark}\label{rmk:L1sol}
Notice that in spherical symmetry, one can solve the wave equations with data such that one only requires $\rd_V\phi,\,\rd_V r,\,\rd_v\log\widetilde{\Om} \in L^1_V$. However, if $\rd_V\phi\notin L^2_V$, we have $\rd_V r\to -\infty$ along a constant $u$ hypersurface, and one cannot make sense of \eqref{eq:raychv} beyond the singularity. In other words, if $\rd_V\phi\notin L^2_V$, we cannot find appropriate data to the system of the wave equations so as to guarantee that the solution indeed corresponds to a solution to \eqref{EMCSFS}.
\end{remark}

\section{Improved estimates for massless and chargeless scalar field}\label{sec:imp.est}

\begin{proof}[Proof of Theorem~\ref{thm:Lipschitz}]
We will prove that
$$\sup_{u\in (-\infty,u_0],\, v\in [v_0,\infty)}(|u|^2|\rd_u\phi|(u,v)+v^2|\rd_v\phi|(u,v))<\infty.$$
Recalling the relation between $v$ and the regular coordinate $V$ in \eqref{Vvrelation}, this then implies the desired conclusion.

We prove the above bounds with a bootstrap argument. Assume that
\begin{equation}\label{imp.BA}
\sup_{u\in (-\infty,u_0],\, v\in [v_0,\infty)}|u|^2|\rd_u\phi|(u,v)\leq \mathcal A_{imp}.
\end{equation}
In the following argument, we will allow the implicit constant in $\ls$ to depend on all the constants in the previous sections, as well as the size of the LHS of \eqref{eq:extra.assump}. $\mathcal A_{imp}$ will then be thought of as larger than all these constants. We will show that for appropriate $|u_0|$, the estimate in \eqref{imp.BA} can be improved.

To proceed, note that when $\mfm = \mfe = 0$, \eqref{eq:phi1} can be written as
\begin{equation}\label{eq:wave.rewrite.1}
\rd_u(r\rd_v\phi) = -(\rd_vr)(\rd_u\phi) 
\end{equation}
and
\begin{equation}\label{eq:wave.rewrite.2}
\rd_v(r\rd_u\phi) = -(\rd_ur)(\rd_v\phi).
\end{equation}
Using \eqref{eq:wave.rewrite.1}, we estimate 
\begin{equation}\label{rdvphi.imp.est}
v^2|\rd_v\phi|(u,v)\ls 1+ \mathcal A_{imp} \int_{-\infty}^u v^2|u'|^{-2}(v+|u'|)^{-2}\, du'\ls 1+\mathcal A_{imp} |u_0|^{-1}.
\end{equation}
Using \eqref{eq:wave.rewrite.2} and the estimate \eqref{rdvphi.imp.est} that we just established, we have
\begin{equation}\label{rduphi.imp.est}
|u|^2|\rd_u\phi|(u,v)\ls 1+ (1+\mathcal A_{imp}|u_0|^{-1}) \int_{v_0}^{\infty} |u|^2|v'|^{-2}(v'+|u|)^{-2}\, dv'\ls 1+(1+\mathcal A_{imp}|u_0|^{-1}) v_0^{-1}.
\end{equation}
Choosing $\mathcal A_{imp}$ sufficiently large and $u_0$ sufficiently negative (in that order), we have improved the bootstrap assumption \eqref{imp.BA}. Then by \eqref{rdvphi.imp.est} and \eqref{rduphi.imp.est}, 
$$\sup_{u,V}(|\rd_V\phi|(u,V) + |u|^2|\rd_u\phi|(u,V)) \ls \sup_{u,v}(v^2|\rd_v\phi|(u,v) + |u|^2|\rd_u\phi|(u,v))<\infty,$$
from which the conclusion follows.
\end{proof}

\bibliographystyle{hplain}
\bibliography{Extremal}
\end{document}